\def\doi{8(3:3)2012}
\newcommand{\rel}[1]{\ \texttt{#1}\ }
\newcommand{\nak}[1]{\tilde{#1}}
\newcommand{\Bag}[1]{[#1]} 
\newcommand{\Proj}[1]{\pi_{#1}}
\newcommand{\eval}{\mathrm{ev}}
\newcommand{\restr}{\hspace{-4pt}\upharpoonright}
\newcommand{\dom}{\mathrm{dom}}
\newcommand{\curry}{\Uplambda}
\newcommand{\size}[1]{\mathrm{size}(#1)}
\newcommand\msize[1]{\mathrm{size_m}(#1)}
\newcommand{\mge}{>_{\mathrm{m}}}
\newcommand{\nat}{\mathbf{N}}
\newcommand{\one}{\mathbf{1}}
\newcommand{\bool}{\mathbf{2}}
\newcommand{\st}{ \mid }
\newcommand{\perm}[1]{\mathfrak{S}_{#1}}
\newcommand{\card}[1]{\# #1}
\newcommand{\bag}[1]{[#1]}
\newcommand{\Omegatuple}[1]{\Mfin{#1}^{(\omega)}}
\newcommand{\MRel}{\bold{M\hspace{-1pt}Rel}}
\newcommand{\Pow}[1]{\cP(#1)}
\newcommand{\Powf}[1]{\cP_{\mathrm{f}}(#1)}
\newcommand{\Id}[1]{\mathrm{Id}_{#1}}
\newcommand{\comp}{\circ}
\newcommand{\With}[2]{{#1}\with{#2}}
\newcommand{\Termobj}{\mathbbm{1}}
\newcommand{\App}{\mathrm{Ap}}
\newcommand{\Abs}{\uplambda}
\newcommand{\Funint}[2]{[{#1}\!\imp\!{#2}]}
\newcommand{\dlam}{\ensuremath{\partial\lambda}}
\newcommand{\dzlam}{\ensuremath{\partial_0\lambda}}
\newcommand{\lam}{\ensuremath{\lambda}}
\newcommand{\bang}{\oc}
\renewcommand{\hole}[1]{\llparenthesis #1\rrparenthesis}
\newcommand{\paral}{\vert}
\newcommand{\FV}{\mathrm{FV}}
\newcommand{\Set}[1]{\Lambda^{#1}}
\newcommand{\FSet}[1]{\Lambda^{#1}_{\bang}}
\newcommand{\ContSet}{\Set{\gt}_{\hole{\cdot}}}
\newcommand{\FContSet}{\Set{\gt\bang}_{\hole{\cdot}}}
\newcommand{\sums}[1]{\bool\langle\Set{#1}\rangle}
\newcommand{\Fsums}[1]{\bool\langle\FSet{#1}\rangle}
\newcommand{\msto}{\twoheadrightarrow}
\newcommand{\toh}{\to_{h}}
\newcommand{\mstoh}{\msto_{h}}
\newcommand{\lsubst}[2]{\langle #2 / #1 \rangle}	
\newcommand{\subst}[2]{\{ #2 / #1 \}}	
\newcommand{\dg}[2]{\mathrm{deg}_{#1}(#2)} 
\newcommand{\obsle}{\sqsubseteq_{\mathcal{O}}}
\newcommand{\obseq}{\approx_{\mathcal{O}}}
\newcommand{\Tobsle}{\sqsubseteq^{\gt}_{\mathcal{O}}}
\newcommand{\Tobseq}{\approx^{\gt}_{\mathcal{O}}}
\newcommand{\Fobsle}{\sqsubseteq^{\gt\bang}_{\mathcal{O}}}
\newcommand{\Fobseq}{\approx^{\gt\bang}_{\mathcal{O}}}
\newcommand{\TE}[1]{#1^{\circ}} 
\newcommand{\at}{\!::\!}
\newcommand{\rank}{\mathrm{rk}}
\newcommand{\len}{\ell}
\newcommand{\Int}[1]{\llbracket #1\rrbracket} 
\newcommand{\trm}[1]{#1^{\textrm{--}}}
\newcommand{\cont}[2]{#1^{+}\hole{#2}}
\newcommand{\Mfin}[1]{\mathcal{M}_{\mathrm{f}}(#1)}
\newcommand{\mcup}{\uplus}
\newcommand{\Pair}[2]{\langle{#1},{#2}\rangle}
\newcommand{\gramm}{\mathrel{::=}}
\newcommand{\ass}{:=}
\renewcommand{\iff}{\Leftrightarrow}
\newcommand{\seq}[1]{\vec{#1}}
\newcommand{\imp}{\Rightarrow}
\newcommand{\sA}{\mathbb{A}}
\newcommand{\sB}{\mathbb{B}}
\newcommand{\sM}{\mathbb{M}}
\newcommand{\sN}{\mathbb{N}}
\newcommand{\sL}{\mathbb{L}}
\newcommand{\sH}{\mathbb{H}}
\newcommand{\sP}{\mathbb{P}}
\newcommand{\sQ}{\mathbb{Q}}
\newcommand{\sV}{\mathbb{V}}
\newcommand{\sW}{\mathbb{W}}
\newcommand{\ga}{\alpha}
\newcommand{\gb}{\beta}
\newcommand{\gt}{\ensuremath{\tau}}
\newcommand{\gto}{\ensuremath{\bar\tau}}
\newcommand\lab{\mathrm{lab}}
\newcommand{\cA}{\mathcal{A}}
\newcommand{\scC}{\mathcal{C}}
\newcommand{\cD}{\mathcal{D}}
\newcommand{\cE}{\mathcal{E}}
\newcommand{\cF}{\mathcal{F}}
\newcommand{\cG}{\mathcal{G}}
\newcommand{\cI}{\mathcal{I}}
\newcommand{\cL}{\mathcal{L}}
\newcommand{\cN}{\mathcal{N}}
\newcommand{\cP}{\mathcal{P}}
\newcommand{\cR}{\mathcal{R}}
\newcommand{\cV}{\mathcal{V}}
\newcommand{\pts}{.\,.\,}
\newcommand{\cOnv}[1]{#1\!\downarrow} 
\newcommand{\module}[1]{\bool\langle #1 \rangle}
\begin{document}

\title[Full Abstraction for Resource Calculi]{Full Abstraction for the Resource Lambda Calculus with Tests, through Taylor Expansion\rsuper*}

\author[A.~Bucciarelli]{Antonio Bucciarelli\rsuper a}	
\address{{\lsuper a}Univ Paris Diderot, Sorbonne Paris Cit\'e, PPS, UMR 7126, CNRS, F-75205 Paris, France}	
\email{antonio.bucciarelli@pps.jussieu.fr}  

\author[A.~Carrano]{Alberto Carraro\rsuper b}	
\address{{\lsuper b}Universit\`a Ca'Foscari,  Via Torino 155, 30172 Mestre, Venice, Italia}	
\email{acarraro@dsi.unive.it}  

\author[T.~Ehrhard]{Thomas Ehrhard\rsuper c}	
\address{{\lsuper c}CNRS, PPS, UMR 7126, Univ Paris Diderot, Sorbonne Paris Cit\'e, F-75205 Paris, France}	
\email{thomas.ehrhard@pps.jussieu.fr}  

\author[G.~Manzonetto]{Giulio Manzonetto\rsuper d}	
\address{{\lsuper d}Univ Paris 13, Sorbonne Paris Cit\'e, LIPN, UMR 7030, CNRS, F-93430 Villetaneuse, France}	
\email{giulio.manzonetto@lipn.univ-paris13.fr}  


\keywords{Resource lambda calculus, relational semantics, full abstraction, differential linear logic}
\subjclass{F.4.1}
\titlecomment{{\lsuper*}This paper is an extended version of ``Full Abstraction for Resource Calculus with Tests'' appeared in the proceedings of the 20\textsuperscript{th} Conference on Computer Science Logic (CSL'11) \cite{BucciarelliCEM11}.
Work partially supported by the international ANR-NSFC Project \emph{Locali}.}


\begin{abstract}
 \noindent  We study the semantics of a resource-sensitive extension of the lambda calculus in a canonical reflexive object of a category of sets and relations, a relational version of Scott's
original model of the pure lambda calculus. This calculus is related to Boudol's resource calculus and is derived from Ehrhard and Regnier's differential extension of Linear Logic and of the lambda calculus.  We extend it with new constructions, to be understood as implementing a very simple exception mechanism, and with a ``must'' parallel composition. These new operations allow to associate a context of this calculus with any point of the model and to prove full abstraction for the finite sub-calculus where ordinary lambda calculus application is not allowed. The result is then extended to the full calculus by means of a Taylor Expansion formula.
As an intermediate result we prove that the exception mechanism is not essential in the finite sub-calculus.
\end{abstract}

\maketitle

\section{Introduction}

In concurrent calculi like CCS \cite{Milner80}, guarded processes are resources
that can be used only once by other processes. This fundamental 
linearity of resources leads naturally to non-determinism,
since several agents  (senders and receivers) 
can interact on the same channel. In general, 
various synchronization scenarios are possible, giving  rise to 
different behaviours. On the other hand in the $\lambda$-calculus \cite{Bare}, a
function (receiver) can duplicate its argument (sender) arbitrarily. 
Thanks to this asymmetry, the $\lambda$-calculus enjoys a
strong determinism (namely, the Church-Rosser Theorem), but for the same reason it 
lacks any form of control on resource handling. 
\medskip

{\bf Resource Lambda Calculi.} Resource $\lambda$-calculi stem from an attempt to combine 
the functionality of the $\lambda$-calculus and 
the resource-sensitivity of process calculi.
Boudol has been the first to design a resource-conscious functional programming language, 
the \emph{resource $\lambda$-calculus}  \cite{Boudol93}, extending the usual one 
along two directions. 
First, a function is not necessarily applied to a single argument, but can also be
applied to a multiset of arguments called \emph{resources}.
Second, a resource can be either linear (it must be used exactly once) or reusable (it can be used \emph{ad libitum}). 
In this context, the evaluation of a function applied to a multiset of resources gives rise to several possible choices, 
corresponding to the different possibilities of distributing the resources in the multiset among the occurrences of the formal parameter.
From the viewpoint of concurrent programming, this  was a natural step to take since one
of the main features of this programming setting is the consumption of
resources which cannot be copied. Milner's
$\pi$-calculus \cite{Milner93} features this phenomenon in great generality, and Boudol's
calculus keeps track of it in a functional setting.

Together with Regnier, Ehrhard observed that this idea of resource consumption can be
understood as resulting from a \emph{differential} extension of
$\lambda$-calculus (and of Linear Logic)~\cite{EhrhardR03}. 
Instead of considering two kinds of resources, they defined two kinds of applications: the \emph{ordinary} application and a \emph{linear} one.
In a simply typed setting, linear application of a term $M:A\to B$ to a multiset made of $n$ terms
$N_1,\dots,N_n:A$, combined with ordinary application to a term $N:A$,
corresponds to computing $M^{(n)}(N)(N_1,\dots,N_n)$, where $M^{(n)}$ is the
$n$-th derivative of $M$ which is of type $A\to(A^n\to B)$ and associates a
symmetric $n$-linear map with any element of $A$. The symmetry of this
multilinear map corresponds to Schwarz's Theorem and
is implemented in the resource $\lambda$-calculus by the use of multisets for
representing linear applications. 
A notable advantage of this approach is that it allows to apply powerful methods from differential calculus in the context of \lam-calculus.
For instance, iterated differentiation yields very naturally a Taylor expansion formula, which consists in expanding the ordinary application into
several linear applications of the differential \lam-calculus. 
More precisely, if $M : A\to B$ and $N : A$ are \lam-terms, then the Taylor expansion of $MN$ is given by
$$
	\TE{(MN)} =\sum_{n=0}^\infty \frac{1}{n!} M^{(n)}(0)(\underbrace{N,\dots,N}_{n\textrm{ times}})
$$
in analogy with the standard Taylor formula of the entire functions.
The Taylor expansion has been studied in \cite{EhrhardR08} where the authors relate it to the B\"ohm tree of a \lam-term, giving the intuition that the former is a resource conscious improvement of the latter.

The main difference between Boudol's resource $\lambda$-calculus and Ehrhard and Regnier's
differential $\lambda$-calculus is that the first is lazy --- this means that in many cases linear substitutions must be delayed. 
To that effect, the calculus features a linear explicit substitution mechanism. 
Moreover, it implements a fixed reduction strategy similar to linear head reduction.
Therefore, Boudol's calculus is not an extension of the ordinary 
\lam-calculus.
Also, the resource \lam-calculus is rather affine than linear, since depletable resources cannot be duplicated but can be erased.
Another difference lies in the respective origins of these calculi: 
the resource \lam-calculus originates from syntactical considerations related to the theory 
of concurrent processes, while the differential one arises from denotational models of linear logic where the existence of differential
operations has been observed. These models are based on the well-known relational model
of Linear Logic \cite{Girard88}, and the interpretation of the new differential constructions
is as natural and simple as the interpretation of the ordinary Linear Logic
constructions. 

In this paper we work with a resource-sensitive \lam-calculus because our techniques depend on the linear logic structure underlying the calculus and on the presence
 of a Taylor expansion formula. Two main syntaxes have been proposed for the differential $\lambda$-calculus: Ehrhard and Regnier's original one \cite{EhrhardR03},
 simplified by Vaux in~\cite{Vaux07}, and Tranquilli's \emph{resource calculus} of \cite{Tranquilli10} whose syntax  is close to Boudol's one.
 These calculi share a common semantical backbone as well as similar connections with differential Linear Logic and proof nets. 
We adopt roughly Tranquilli's syntax and call our calculus  \emph{\dlam-calculus}. To avoid the problem of handling the coefficients introduced by the Taylor formula 
 we conveniently suppose that the formal sum in the calculus is idempotent; this amounts to saying that we only check whether a term appears in a result, not how many
 times it appears. This is very reasonable when studying convergency properties since $M + M$ converges exactly when $M$ does.

\medskip
{\bf Full Abstraction.} 
A natural problem when a new calculus is introduced is to characterize when two programs are operationally equivalent, 
namely when one can be replaced by the other in every context without noticing any difference with respect to a given observational equivalence. 
In this paper we prove a full abstraction result (a semantical characterization of operational equivalence)
for the \dlam-calculus in the spirit of~\cite{BoudolCL99}. As in that paper, we extend the 
language with a convergence testing mechanism. 
Implicitly, this extension already appears in~\cite{EhrhardL10}, in a differential linear logic setting: it corresponds to the
$0$-ary tensor and par cells. To implement the corresponding extension of the 
$\lambda$-calculus, we
introduce two sorts of expressions: the \emph{terms} (variable, application,
abstraction, ``throw'' $\gto(V)$ where  $V$ is a test) and the \emph{tests}
(empty test, parallel composition of tests and 
``catch'' $\tau(M)$ where  $M$ is a term).
Parallel composition allows to combine tests in such a way that
the combination succeeds if and only if each test succeeds. 
Outcomes of tests (convergence or divergence) are the only observations
allowed in our calculus, and the corresponding contextual equivalence and preorder on terms
constitute our main object of study.

This extended \dlam-calculus, that we call \emph{\dlam-calculus with tests}, has a natural denotational
interpretation in a model of the pure $\lambda$-calculus introduced by
Bucciarelli, Ehrhard and Manzonetto in~\cite{BucciarelliEM07},
which is indeed a denotational model of the differential pure nets
of~\cite{EhrhardL10} as one can check easily. This model is a reflexive
object $\cD$   in the Kleisli category of the linear logic model of sets and relations where $!X$
is the set of all finite multisets over $X$. An element of $\cD$ can be described as a finite
tree which alternates two kinds of layers: \emph{multiplicative layers} where
subtrees are indexed by natural numbers and \emph{exponential layers} where
subtrees are organized as non-empty multisets. To be more precise, $\parr-?$
(negative) pairs of layers alternate with $\otimes-!$ (positive) pairs,
respecting a strict polarity discipline very much in the spirit of
Ludics~\cite{Girard03}. The empty positive multiplicative tree corresponds to the
empty tensor cell and the negative one to the empty par cell.  The
corresponding constructions $\tau$, $\gto$ are therefore quite easy to
interpret.

We use this logical interpretation to turn the elements of $\cD$ into 
\dlam-calculus terms with tests. 
More precisely, with each element $\alpha$ of $\cD$, we
associate a test $\cont{\alpha}{\cdot}$ with a hole $\hole{\cdot}$ for a term, and we show that
$\alpha$ belongs to the interpretation of a (closed) term $M$ iff the test
$\cont{\alpha}{M}$ converges. From this fact, we derive  a full abstraction
result for the fragment of the \dlam-calculus with tests in
which all ordinary applications are trivial, that we call \dzlam-calculus with tests.
To extend this
result to the \dlam-calculus with tests, we
use the Taylor formula introduced in~\cite{EhrhardR03} which
allows to turn any ordinary application into a sum of infinitely many linear
applications of all possible arities. One exploits then the fact that the
Taylor formula holds in the model, as well as a simulation lemma
which relates the head reduction of a term with the head reduction of its
Taylor expansion.

\medskip \textbf{Contributions.}
In Section~\ref{sec:thomas} we provide the abstract categorical framework which is 
needed to interpret the $\dlam$-calculus and its extension with tests.
The syntax and operational semantics of the $\dzlam$-calculus with tests (which is the promotion-free fragment)
are presented in Section~\ref{sec:dzlam-calculus+tests}, while its relational model $\cD$ is described concretely in Section~\ref{sec:rela-sem}.
The definability of the elements of $\cD$ in the \dzlam-calculus with tests is
the main conceptual contribution of this paper --- it shows that, in this setting, the standard syntax \emph{versus} semantics dichotomy is
essentially meaningless.  
From definability it follows easily that the relational model is fully abstract for the $\dzlam$-calculus with tests, as shown in Section~\ref{sec:FA-dzlam-with-Tests}.
This result is analyzed further in Section~\ref{sec:FA-notests-nobang}, where it is proved that in the absence of promotion the test operators do not add any discriminatory 
power to the contexts, thus showing that $\cD$ is also fully abstract for the $\dzlam$-calculus without tests. 

We then focus on the full $\dlam$-calculus with tests.
Section~\ref{sec:FullRC} is devoted to present its syntax, operational
semantics and relational semantics.  In Section~\ref{sec:TaylorExp} we
consider the use of Taylor expansions to reduce the full abstraction
problem for $\dlam$ to its ``\dzlam'' version, thus introducing an
original and promising reduction technique.

\renewcommand{\phi}{\varphi}
\renewcommand\epsilon{\varepsilon}
\renewcommand\sharp{\#}
\newcommand\Eqref[1]{(\ref{#1})}

\newcommand{\Iff}{\quad\hbox{iff}\quad}
\newcommand{\Implies}{\Rightarrow}
\newcommand\Equiv{\Leftrightarrow}
\newcommand{\St}{\mid}

\newcommand{\Ro}{\circ}
\newcommand{\Inf}{\bigwedge}
\newcommand{\Infi}{\wedge}
\newcommand{\Sup}{\bigvee}
\newcommand{\Supi}{\vee}

\newcommand{\arrow}{\rightarrow}
\renewcommand{\Bot}{{\mathord{\perp}}}
\newcommand{\Top}{\top}

\newcommand\Seqempty{\langle\rangle}

\newcommand\Fini{{\mathrm{fin}}}

\def\frownsmile{%
\mathrel{\vbox{\hbox{${\frown}$}\vspace{-2ex}\hbox{${\smile}$}\vspace{-.5ex}}}}
\def\smilefrown{%
\mathrel{\vbox{\hbox{${\smile}$}\vspace{-2ex}\hbox{${\frown}$}\vspace{-.5ex}}}}

\newcommand\Part[1]{{\cal P}({#1})}

\newcommand\Union{\bigcup}

\newcommand{\Linarrow}{\multimap}

\def\frownsmile{%
\mathrel{\vbox{\hbox{${\frown}$}\vspace{-2ex}\hbox{${\smile}$}\vspace{-.5ex}}}}
\def\smilefrown{%
\mathrel{\vbox{\hbox{${\smile}$}\vspace{-2ex}\hbox{${\frown}$}\vspace{-.5ex}}}}

\newcommand\CScoh[3]{{#2}\mathrel{\frownsmile_{{#1}}}{#3}}
\newcommand\CScohs[3]{{#2}\mathrel{{\frown}_{#1}}{#3}}
\newcommand\CScohstr[3]{\CScohs{#1}{#2}{#3}}
\newcommand\CSincoh[3]{{#2}\mathrel{\smilefrown_{{#1}}}{#3}}
\newcommand\CSincohs[3]{{#2}\mathrel{{\smile}_{#1}}{#3}}
\newcommand\CSeq[3]{{#2}\mathrel{{=}_{#1}}{#3}}

\newcommand\Myleft{}
\newcommand\Myright{}

\newcommand\Web[1]{\Myleft|{#1}\Myright|}
\newcommand\Supp[1]{\Myleft|{#1}\Myright|}

\newcommand\Emptymset{[]}
\newcommand\Mset[1]{[{#1}]}
\newcommand\Sset[1]{\langle{#1}\rangle}

\newcommand\MonP{P}
\newcommand\MonPZ{\MonP_0}

\newcommand\Cl[1]{\mbox{\textrm{Cl}}({#1})}
\newcommand\ClP[1]{\mbox{\textrm{Cl}}_{\MonP}({#1})}

\newcommand\Star{\star}
\newcommand\CohName{\mbox{\textbf{Coh}}}
\newcommand\COH[1]{\CohName(#1)}
\newcommand\COHR[2]{\CohName(#1,#2)}

\newcommand\Par[2]{{#1}\parr{#2}}
\newcommand\ITens{\mathrel\otimes}
\newcommand\Tens[2]{{#1}\ITens{#2}}
\newcommand\Tensp[2]{({#1}\ITens{#2})}
\newcommand\Tensexp[2]{{#1}^{\mathord\otimes{#2}}}
\newcommand\IWith{\mathrel{\&}}
\newcommand\BWith{\bigwith}
\newcommand\IPlus{\oplus}
\newcommand\Plus[2]{{#1}\IPlus{#2}}
\newcommand\ILfun{\multimap}
\newcommand\Lfun[2]{{#1}\ILfun{#2}}
\newcommand\Orth[1]{{#1}^\Bot}
\newcommand\Orthr[2]{{#1}^{\Bot(#2)}}
\newcommand\Biorthr[3]{{#1}^{\Bot(#2)\Bot(#3)}}
\newcommand\Triorthr[4]{{#1}^{\Bot(#2)\Bot(#3)\Bot(#4)}}
\newcommand\Orthw[1]{{#1}^\Bot}

\newcommand\Scal[2]{\langle{#1}\mid{#2}\rangle}

\newcommand\Inj{\lambda}

\newcommand\GlobalIndex{I}
\newcommand\Index{\GlobalIndex}
\newcommand\Relbot{\Bot}
\newcommand\MonPZI{{\MonPZ}^\Index}
\newcommand\ClPI[1]{{\ClP{{#1}}}^\Index}
\newcommand\WebI[1]{\Web{#1}^\Index}

\newcommand\Scalb[2]{\Scal{\Bar{#1}}{\Bar{#2}}}

\newcommand\Ortho[2]{{#1}\mathrel{\bot}{#2}}
\newcommand\Orthob[2]{\Bar{#1}\mathrel{\bot}\Bar{#2}}

\newcommand\Biorth[1]{{#1}^{\Bot\Bot}}
\newcommand\Triorth[1]{{#1}^{\Bot\Bot\Bot}}

\newcommand\Relpretens[2]{\cR_{#1}\bar\ITens\cR_{#2}}
\newcommand\Relpreplus[2]{\cR_{#1}\bar\IPlus\cR_{#2}}


\newcommand\RelP[1]{\widetilde{#1}}

\newcommand\Eqw[2]{\delta({#1},{#2})}
\newcommand\Eqwb[2]{\Eqw{\Bar{#1}}{\Bar{#2}}}

\newcommand\PFacts[1]{\cF({#1})}

\newcommand\Facts{\cF(\MonPZI)}

\newcommand\RelL[1]{\overline{#1}}

\newcommand\PRel[1]{R_{#1}}

\newcommand\PFamb[2]{[\Bar{#1},\Bar{#2}]}

\newcommand\Fplus[2]{\Bar{#1}+\Bar{#2}}

\newcommand\Char[1]{\epsilon_{#1}}

\newcommand\Fproj[2]{\pi_{#1}(\Bar{#2})}

\newcommand\One{1}

\newcommand\Pbot[1]{\Bot_{#1}}
\newcommand\PBot[1]{\Bot_{#1}}
\newcommand\PRBot[1]{\Bot_{#1}}
\newcommand\PROne[1]{1_{#1}}

\newcommand\Pproj[1]{\pi_{#1}}

\newcommand\Zext[1]{\zeta_{\Bar{#1}}}
\newcommand\Aext[1]{\bar\zeta_{\Bar{#1}}}

\newcommand\Mall{\hbox{\textsf{MALL}}}
\newcommand\RMall{\Mall(\Index)}
\newcommand\RMallr[1]{\Mall({#1})}
\newcommand\FDom[1]{d({#1})}

\newcommand\RBot[1]{\Bot_{#1}}
\newcommand\ROne[1]{\One_{#1}}

\newcommand\Seq[1]{\vdash{#1}}
\newcommand\Seqpolr[1]{\vdash_\POLS{#1}}
\newcommand\Seqv[1]{\Seq{#1}}
\newcommand\Seqt[3]{\vdash_{\mathsf{t}}{#1}:{#2}\mid{#3}}
\newcommand\Seqn[2]{\vdash_{\mathsf{n}}{#1\mid#2}}
\newcommand\Seqnfree[2]{\vdash_{\mathsf{1}}{#1\mid#2}}
\newcommand\Seql[2]{\vdash_{\mathsf{0}}{#1\mid#2}}
\newcommand\Seqnp[3]{\vdash_{\mathsf{p}}{#1:#2\mid#3}}
\newcommand\Seqtfree[1]{\vdash_1{#1}}
\newcommand\Ccup{,}

\newcommand\RSeq[2]{\vdash_{#1}{#2}}

\newcommand\Restr[2]{{#1}|_{#2}}
\newcommand\FRestr[2]{{#1}|_{#2}}

\newcommand\FSem[1]{{#1}^{*}}
\newcommand\PSem[1]{{#1}^{*}}

\newcommand\FFamb[2]{{#1}\langle\Bar{#2}\rangle}

\newcommand\Premskip{\hskip1cm}

\newcommand\Forg[1]{\underline{#1}}

\newcommand\Phase[1]{{#1}^\bullet}

\newcommand\Punit[1]{1^{#1}}

\newcommand\Reg[1]{R_{#1}}

\newcommand\Cont[1]{{#1}^\circ}

\newcommand\Neutral{e}
\newcommand\RNeutral[1]{\Neutral_{#1}}

\newcommand\POne{1}

\newcommand\relstack[2]{\underset{#2}{#1}}
\newcommand\RPlus[4]{{#3}\relstack\IPlus{{#1},{#2}}{#4}}
\newcommand\RWith[4]{{#3}\relstack\IWith{{#1},{#2}}{#4}}

\newcommand\PReq[1]{=_{#1}}

\newcommand\Fam[2]{\hbox{\textrm{Fam}}_{#1}(#2)}
\newcommand\Famfunc[1]{\hbox{\textrm{Fam}}_{#1}}

\newcommand\Pval[1]{\rho_{#1}}

\newcommand\Fcoh{\mathsf C}
\newcommand\Fincoh{\overline\Fcoh}
\newcommand\Feq{\mathsf E}

\newcommand\RZero{\Zero}
\newcommand\RTop{\Top}
\newcommand\EmptyFam{\emptyset}
\newcommand\Partial[1]{\Omega_{#1}}

\newcommand\PMall{\hbox{\textsf{MALL}}_\Omega}
\newcommand\PRMall{\Mall_\Omega(\Index)}
\newcommand\Homo[1]{H(#1)}
\newcommand\RProd[2]{\Lambda_{#1}{({#2})}}
\newcommand\FProd[1]{\widetilde{#1}}

\newcommand\CltoPCl[1]{{#1}^P}

\newcommand\Partfin[1]{{\cP_\Fini}({#1})}

\newcommand\RLL{\LL(\Index)}
\newcommand\RLLP{\LL^+(\Index)}
\newcommand\RLLext{\LL^\mathrm{ext}(\Index)}

\newcommand\IExcl{{\mathord{!}}}

\newcommand\RExcl[1]{\IExcl_{#1}}
\newcommand\RInt[1]{\IInt_{#1}}

\newcommand\Fempty[1]{0_{#1}}

\newcommand\FAct[1]{{#1}_*}

\newcommand\Excl[1]{\IExcl{#1}}
\newcommand\Excls[1]{\IExcl_{\mathsf s}{#1}}

\newcommand\Noindex[1]{\langle{#1}\rangle}

\newcommand\Card[1]{\#{#1}}
\newcommand\Multi[1]{\mathsf m({#1})}

\newcommand\FamRestr[2]{{#1}|_{#2}}

\newcommand\Pact[1]{{#1}_*}

\newcommand\Pinj[1]{\zeta_{#1}}

\newcommand\Locun[1]{1^J}

\newcommand\Isom\simeq

\newcommand\FGraph[1]{\mathsf{g}({#1})}
\newcommand\GPFact[1]{\mathsf f_0({#1})}
\newcommand\GFact[1]{\mathsf f({#1})}

\newcommand\NUCS{non-uniform coherence space}
\newcommand\NUCSb{non-uniform coherence space }
\newcommand\NUCSs{non-uniform coherence spaces}
\newcommand\NUCSsb{non-uniform coherence spaces }
\newcommand\Nucs{\mathbf{nuCS}}

\newcommand\Comp{\mathrel\circ}

\newcommand\Funinv[1]{#1^{-1}}

\newcommand\Reindex[1]{{#1}^*}

\newcommand\Locbot[1]{\Bot({#1})}
\newcommand\Locone[1]{1({#1})}
\newcommand\LocBot{\Locbot}
\newcommand\LocOne{\Locone}

\newcommand\INJ[1]{\mathcal I({#1})}
\newcommand\COHP[1]{\mathcal C({#1})}
\newcommand\FuncFam[1]{\mathrm{Fam}_{#1}}
\newcommand\SET{\mathbf{Set}}
\newcommand\Locmod[2]{{#1}({#2})}
\newcommand\FuncPhase[1]{\mathcal{F}_{#1}}
\newcommand\Trans[2][]{{\widehat{#2}}_{#1}}
\newcommand\Useful[1]{\Web{#1}_{\mathrm u}}

\newcommand\Limpl[2]{{#1}\Linarrow{#2}}

\newcommand\ConstFam[2]{{#1}^{#2}}

\newcommand\Center[1]{\mathcal C({#1})}

\newcommand\Derel[1]{\mathrm{d}_{#1}}

\newcommand\Myid{\operatorname{Id}}

\newcommand\Mspace[2]{{#2}^*_{#1}}

\newcommand\CStoMS[1]{{#1}^+}
\newcommand\MStoCS[1]{{#1}^-}

\newcommand\PhaseCoh[1]{{\mathbf{Coh}}_{#1}}

\newcommand\Nat{{\mathbb{N}}}
\newcommand\Natnz{{\Nat^+}}

\newcommand\Rien{\Omega}

\newcommand\DenEq[2]{{#1}\sim{#2}}

\newcommand\VMon[2]{\langle{#1},{#2}\rangle}
\newcommand\VMonc[3]{\langle{#1},{#2},{#3}\rangle}

\newcommand\Multind[1]{\begin{substack}#1\end{substack}}
\newcommand\Biind[2]{\genfrac{}{}{0pt}{1}{#1}{#2}}

\newcommand\Adapt[1]{\cA_{#1}}

\newcommand\Rllp[1]{\widetilde{#1}}

\newcommand\Emptyfun[1]{0_{#1}}

\newcommand\Multisetfin[1]{\mathcal{M}_{\mathrm{fin}}({#1})}

\newcommand\Myparag[1]{\noindent\textbf{#1.}\ }

\newcommand\Webinf[1]{\Web{#1}_\infty}
\newcommand\Fin[1]{\mathsf{F}({#1})}

\newcommand\Phasefin{\mathbb{F}}

\newcommand\Fspace[1]{\mathrm f{#1}}
\newcommand\Space[1]{\mathrm m{#1}}

\newcommand\Prom[1]{{#1}^!}
\newcommand\Promp[1]{{(#1)}^!}
\newcommand\Promm[1]{{#1}^{!!}}

\newcommand\Ssupp[1]{\Supp{#1}}

\newcommand\FIN{\mathbf{Fin}}
\newcommand\FINH[2]{\FIN({#1},{#2})}
\newcommand\FINF[1]{\FIN[{#1}]}
\newcommand\SETINJ{\mathbf{Inj}}
\newcommand\EMB{\FIN_{\mathrm e}}

\newcommand\SNat{\mathsf N}
\newcommand\Snat{\mathsf N}

\newcommand\Iter[1]{\mathsf{It}_{#1}}
\newcommand\Piter[2]{\Iter{#1}^{(#2)}}
\newcommand\Case[1]{\mathsf{Case}_{#1}}
\newcommand\Fix[1]{\mathsf{Y}_{#1}}
\newcommand\Pfix[2]{\mathsf{Y}_{#1}^{(#2)}}
\newcommand\Ifthenelse[3]{\mathtt{if}\,{#1}\,\mathtt{then}\,{#2}\,%
\mathtt{else}\,{#3}}

\newcommand\Trace[1]{\mathsf{Tr}{#1}}
\newcommand\Tracelin[1]{\mathsf{tr}{#1}}
\newcommand\Ptrace[1]{\|{#1}\|}
\newcommand\Enat[1]{\overline{#1}}
\newcommand\FA[1]{\forall{#1}}
\newcommand\Trcl[2]{\langle {#1},{#2}\rangle}
\newcommand\Faproj[1]{\epsilon^{#1}}
\newcommand\Faintro[3]{\Lambda^{{#1},{#2}}({#3})}

\newcommand\Bool{\mathbf{Bool}}
\newcommand\True{\mathbf t}
\newcommand\False{\mathbf f}

\newcommand\Tarrow{\arrow}
\newcommand\Diffsymb{\mathsf D}

\newcommand\Diff[3]{\mathrm D_{#1}{#2}\cdot{#3}}
\newcommand\Diffexp[4]{\mathrm D_{#1}^{#2}{#3}\cdot{#4}}
\newcommand\Diffvar[3]{\mathrm D_{#1}{#2}\cdot{#3}}
\newcommand\Diffterm[3]{\mathrm D_{#1}{#2}\cdot{#3}}
\newcommand\Zero{0}
\newcommand\Appv[2]{\App{#1}{\Tofval{#2}}}
\newcommand\Appp[2]{({#1}\,{#2})}
\newcommand\Apppv[2]{\Appp{#1}{\Tofval{#2}}}
\newcommand\Applp[2]{({#1}){#2}}
\newcommand\Applpv[2]{\Applp{#1}{\Tofval{#2}}}
\newcommand\Apprp[2]{{#1}({#2})}
\newcommand\Apprpv[2]{\Apprp{#1}{\Tofval{#2}}}
\newcommand\Applrp[2]{({#1})({#2})}
\newcommand\Diffp[3]{\frac{\partial{#1}}{\partial{#2}}\cdot{#3}}
\newcommand\Derp[3]{\frac{\partial{#1}}{\partial{#2}}\cdot{#3}}
\newcommand\Derplist[4]{%
\frac{\partial^{#4}{#1}}{\partial#2_1\cdots\partial#2_{#4}}%
\cdot\left(#3_1,\dots,#3_{#4}\right)}
\newcommand\Derplistexpl[6]{%
\frac{\partial^{#6}{#1}}{\partial{#2}\cdots\partial{#3}}%
\cdot\left({#4},\dots,{#5}\right)}
\newcommand\Derpmult[4]{%
\frac{\partial^{#4}{#1}}{\partial{#2}^{#4}}%
\cdot\left(#3_1,\dots,#3_{#4}\right)}
\newcommand\Derpm[4]{%
\frac{\partial^{#4}{#1}}{\partial{#2}^{#4}}%
\cdot\left({#3}\right)}
\newcommand\Derpmultgros[4]{%
\frac{\partial^{#4}}{\partial{#2}^{#4}}\left({#1}\right)%
\cdot\left(#3_1,\dots,#3_{#4}\right)}
\newcommand\Derpmultbis[4]{%
\frac{\partial^{#4}{#1}}{\partial{#2}^{#4}}%
\cdot{#3}}
\newcommand\Derpmultbisgros[4]{%
\frac{\partial^{#4}}{\partial{#2}^{#4}}\left({#1}\right)%
\cdot{#3}}
\newcommand\Derppar[3]{\left(\Derp{#1}{#2}{#3}\right)}
\newcommand\Derpgros[3]{\frac{\partial}{\partial{#2}}\Bigl({#1}\Bigr)%
\cdot{#3}}
\newcommand\Derpdeux[4]{\frac{\partial^2{#1}}{\partial{#2}\partial{#3}}%
\cdot{#4}}
\newcommand\Derpdeuxbis[5]{\frac{\partial^{#5}{#1}}{\partial{#2}\partial{#3}}%
\cdot{#4}}
\newcommand\Diffpv[3]{\frac{\partial}{\partial{#2}}{#1}\cdot{#3}}
\newcommand\Diffpd[5]{\frac{\partial^2{#1}}{\partial{#2}\partial{#3}}%
\cdot\left({#4},{#5}\right)}

\newcommand\Paragraph[1]{\smallbreak\noindent\textbf{#1}}

\newcommand\Redamone{\mathrel{\beta^{0,1}}}
\newcommand\Redonecan{\mathrel{{\bar\beta}^1_{\mathrm D}}}
\newcommand\Redpar{\mathrel{\rho}}
\newcommand\Redparcan{\mathrel{\bar\rho}}

\newcommand\Sat[1]{#1^*}

\newcommand\Can[1]{\left\langle{#1}\right\rangle}
\newcommand\Candiff[3]{\Delta_{#1}{#2}\cdot{#3}}


\newcommand\List[3]{#1_{#2},\dots,#1_{#3}}
\newcommand\Listbis[3]{#1_{#2}\dots #1_{#3}}
\newcommand\Listc[4]{{#1}_{#2},\dots,{#4},\dots,{#1}_{#3}}
\newcommand\Listbisc[4]{{#1}_{#2}\dots{#4}\dots{#1}_{#3}}
\newcommand\Absent[1]{\widehat{#1}}
\newcommand\Kronecker[2]{\delta_{{#1},{#2}}}

\newcommand\Eqindent{\quad}

\newcommand\Subst[3]{{#1}\left[{#2}/{#3}\right]}
\newcommand\Substz[2]{{#1}\left[0/{#2}\right]}
\newcommand\Substpar[3]{\left({#1}\right)\left[{#2}/{#3}\right]}
\newcommand\Substbis[2]{{#1}\left[{#2}\right]}

\newcommand\Span[1]{\overline{#1}}
\newcommand\SN{\mathcal{N}}
\newcommand\WN{\mathcal{N}}
\newcommand\Extred[1]{\mathop{\mathrm R}^{\textrm{ext}}{(#1)}}
\newcommand\Onered[1]{\mathop{\mathrm R}^1{(#1)}}
\newcommand\NO[1]{\mathop{\mathrm N}({#1})}
\newcommand\NOD[1]{\mathop{\mathrm N_0}({#1})}

\newcommand\Freemod[2]{{#1}\left\langle{#2}\right\rangle}
\newcommand\Mofl[1]{{#1}^\#}
\newcommand\Mlext[1]{\widetilde{#1}}
\newcommand\Difflamb{\Lambda_D}
\newcommand\Terms{\Lambda_{\mathrm D}}
\newcommand\Diffmod{\Freemod R\Difflamb}

\newcommand\Factor[1]{{#1}!}
\newcommand\Binom[2]{\left({{#1}\atop{#2}}\right)}
\newcommand\Multinom[2]{\left[{{#1}\atop{#2}}\right]}
\newcommand\Suite[1]{\bar#1}
\newcommand\Head[1]{\mathrel{\tau^{#1}}}

\newcommand\Headlen[2]{{\mathrm{L}}({#1},{#2})}
\newcommand\Betaeq{\mathrel{\mathord\simeq_\beta}}
\newcommand\Betadeq{\mathrel{\mathord\simeq_{\beta_{\mathrm D}}}}

\newcommand\Vspace[1]{E_{#1}}
\newcommand\Real{\mathbf{R}}

\newcommand\Ring{R}
\newcommand\Linapp[2]{\left\langle{#1}\right\rangle{#2}}
\newcommand\Fmod[2]{{#1}\langle{#2}\rangle}
\newcommand\Fmodr[1]{{\Ring}\langle{#1}\rangle}
\newcommand\Imod[2]{{#1}\langle{#2}\rangle_\infty}
\newcommand\Imodr[1]{{\Ring}\langle{#1}\rangle_\infty}
\newcommand\Res[2]{\langle{#1},{#2}\rangle}
\newcommand\Funofmat[1]{\widehat{#1}}
\newcommand\Transp[1]{{}^{\mathrm{t}}\!{#1}}
\newcommand\Idmat{\operatorname{I}}
\newcommand\FINMOD[1]{\operatorname{\mathbf{Fin}\,}({#1})}
\newcommand\Bcanon[1]{e_{#1}}

\newcommand\Mfinc[2]{\mathcal M_{#1}({#2})}
\newcommand\Expt[2]{\operatorname{exp}_{#1}({#2})}
\newcommand\Dexpt[2]{\operatorname{exp}'_{#1}({#2})}

\newcommand\Semtype[1]{{#1}^*}
\newcommand\Semterm[2]{{#1}^*_{#2}}

\newcommand\Elem[1]{\operatorname{Elem}({#1})}
\newcommand\Fcard[1]{\operatorname{Card}({#1})}
\newcommand\Linhom[2]{\cL({#1},{#2})}
\newcommand\Linhombil[2]{\cL^2({#1},{#2})}
\newcommand\Linhommulti[3]{\cL^{#1}({#2},{#3})}
\newcommand\Hom[2]{\operatorname{Hom}(#1,#2)}
\newcommand\Compana{\circ_\cA}

\newcommand\Der[1]{\mathsf d_{#1}}
\newcommand\Dernat{\mathsf d}
\newcommand\Derm[2]{\mathsf d_{#1}^{(#2)}}
\newcommand\Digg[1]{\mathsf p_{#1}}
\newcommand\Diggnat{\mathsf p}
\newcommand\Coder[1]{{\overline{\mathsf d}}_{#1}}
\newcommand\Codernat{{\overline{\mathsf d}}}
\newcommand\Coderm[2]{{\overline{\mathsf d}}_{#1}^{(#2)}}
\newcommand\Contr[1]{\mathsf c_{#1}}
\newcommand\Contrnat{\mathsf c}
\newcommand\Contrm[2]{\mathsf c_{#1}^{(#2)}}
\newcommand\Weakm[2]{\mathsf w_{#1}^{(#2)}}
\newcommand\Weak[1]{\mathsf w_{#1}}
\newcommand\Weaknat{\mathsf w}
\newcommand\Cocontr[1]{{\overline{\mathsf{c}}}_{#1}}
\newcommand\Cocontrnat{{\overline{\mathsf{c}}}}
\newcommand\Cocontrm[2]{{\overline{\mathsf{c}}}_{#1}^{(#2)}}
\newcommand\Coweak[1]{{\overline{\mathsf{w}}}_{#1}}
\newcommand\Coweaknat{{\overline{\mathsf{w}}}}

\newcommand\Conv[1]{\operatorname{c}^{#1}}
\newcommand\IConv{\mathrel{*}}
\newcommand\Exclun[1]{\operatorname{u}^{#1}}
\newcommand\Derzero[1]{\partial^{#1}_0}
\newcommand\Dermorph[1]{\partial^{#1}}
\newcommand\Dirac[1]{\delta_{#1}}

\newcommand\Lintop[1]{\lambda({#1})}
\newcommand\Neigh[1]{\mathsf V(#1)}
\newcommand\Bnd[1]{\mathsf D(#1)}
\newcommand\Matrix[1]{\mathsf{M}({#1})}

\newcommand\Ev{\operatorname{ev}}
\newcommand\Evlin{\mathsf{ev}}
\newcommand\REL{\operatorname{\mathbf{Rel}}}
\newcommand\RELI{\REL^{\mathord\subseteq}}
\newcommand\POLR{\operatorname{\mathbf{Pol}}}
\newcommand\POLRI{\POLR^{\mathord\subseteq}}
\newcommand\POP{\operatorname{\mathbf{Pop}}}
\newcommand\POPI{\POP^{\mathord\subseteq}}
\newcommand\SCOTTLIN{\mathbf{ScottL}}
\newcommand\RELS{\mathsf R}
\newcommand\POLS{\mathsf S}
\newcommand\POPS{\mathsf P}

\newcommand\Diag[1]{\Delta_{#1}}
\newcommand\Codiag[1]{\overline\Delta_{#1}}
\newcommand\Final[1]{\mathsf t_{#1}}
\newcommand\Initial[1]{\overline{\mathsf t}_{#1}}

\newcommand\Norm[1]{\|{#1}\|}

\newcommand\Tpower[2]{{#1}^{\otimes{#2}}}

\newcommand\Termty[3]{{#1}\vdash{#2}:{#3}}
\newcommand\Polyty[3]{{#1}\vdash_!{#2}:{#3}}

\newcommand\Ruleskip{\quad\quad\quad\quad}

\newcommand\Sterms{\Delta}
\newcommand\Pterms{\Delta^!}
\newcommand\Nsterms{\Delta_0}

\newcommand\Relspan[1]{\overline{#1}}
\newcommand\Rel[1]{\mathrel{#1}}

\newcommand\Redone{\leadsto^1}
\newcommand\Redzone{\leadsto}
\newcommand\Red{\leadsto}
\newcommand\Redalt{\leadsto^*_0}
\newcommand\Redgen{\leadsto_{\mathsf{g}}}
\newcommand\Redpart{\leadsto_{\mathsf{w}}}

\newcommand\Multn{\mathrm{m}}
\newcommand\Shape[1]{\mathcal{T}(#1)}
\newcommand\Tay[1]{{#1}^*}

\newcommand\Deg[1]{\mathrm{deg}_{#1}}
\newcommand\Linearize[2]{\mathcal L^{#1}_{#2}}
\newcommand\Fmodrel{\Fmod}

\newcommand\Redeq{=_\Delta}
\newcommand\Kriv{\mathsf K}
\newcommand\Dom{\operatorname{\mathsf{D}}}
\newcommand\Domp[1]{\Dom(#1)}

\newcommand\Codom{\operatorname{\mathsf{Codom}}}
\newcommand\Cons{::}
\newcommand\Addtofun[3]{{#1}[{#2}\mapsto{#3}]}
\newcommand\Tofclos{\mathsf T}
\newcommand\Tofstate{\mathsf T}
\newcommand\BT{\operatorname{\mathsf{BT}}}
\newcommand\NF{\operatorname{\mathsf{NF}}}

\newcommand\Msubst[3]{\partial_{#3}(#1,#2)}
\newcommand\Symgrp[1]{\frak S_{#1}}

\newcommand\Tcoh{\mathrel{\frownsmile}}
\newcommand\Tcohs{\mathrel{\frown}}
\newcommand\Size[1]{|{#1}|}
\newcommand\Psize[1]{|{#1}|_{\mathrm P}}
\newcommand\Ssize[1]{|{#1}|_{\mathrm S}}

\newcommand\Symbofcell[1]{\mathsf l({#1})}
\newcommand\Portsofwire{\partial}
\newcommand\V{\vdots}

\newcommand\Oppofwire[1]{{#1}^*}

\newcommand\Exclo[1]{\Int{\Orth A}}
\newcommand\Typeq[1]{\cE_{#1}}

\newcommand\Orthat[1]{\bar{#1}}

\newcommand\Dedeq{\mathrel{\mathord{\vdash}_{\operatorname{eq}}}}

\newcommand\Typing[1]{\vdash{#1}\mid}

\newcommand\Eq{\simeq}

\newcommand\Grsw[2]{\cG({#1},{#2})}

\newcommand\Struct[1]{\mathsf{S}({#1})}

\newcommand\Meas[1]{\mathsf{M}({#1})}

\newcommand\In{\iota}
\newcommand\Out{o}
\newcommand\Diffn[3]{{#1}^*_{#2,#3}}

\newcommand\Exp[1]{\exp_{#1}}

\newcommand\Coname[1]{\overline{#1}}
\newcommand\Namesof[1]{\textsf N({#1})}
\newcommand\ProcEmpty{*}
\newcommand\ProcPar[2]{{#1}\mid{#2}}
\newcommand\ProcParI{\mid}
\newcommand\ProcNu[2]{\nu{#1}\cdot{#2}}
\newcommand\ProcIn[3]{{#1}({#2})\cdot{#3}}
\newcommand\ProcInLab[4]{{#1}^{#4}({#2})\cdot{#3}}
\newcommand\ProcOut[3]{\overline{#1}\langle{#2}\rangle\cdot{#3}}
\newcommand\ProcOutLab[4]{\overline{{#1}^{#4}}\langle{#2}\rangle\cdot{#3}}
\newcommand\SoloOut[2]{\overline{#1}\langle{#2}\rangle}
\newcommand\FreeNames{\operatorname{\mathsf{FV}}}
\newcommand\ProcCong{\sim}
\newcommand\ProcRed{\mathrel\leadsto}
\newcommand\StateRed{\mathrel\leadsto}
\newcommand\NetRed{\leadsto}
\newcommand\StateRedCan{\leadsto_{\mathsf{can}}}
\newcommand\StateCan{\operatorname{\mathsf{Can}}}
\newcommand\ProcTrad[2]{[{#1}]_{#2}}
\newcommand\ClosTrad[2]{[{#1}]_{#2}}
\newcommand\SoupTrad[2]{[{#1}]_{#2}}
\newcommand\StateTrad[3]{[{#1,#2}]_{#3}}
\newcommand\StateTradOpen[2]{[{#1}]^o_{#2}}
\newcommand\StateOfProc{\operatorname{\mathsf{St}}}
\newcommand\ProcOfState{\operatorname{\mathsf{Pr}}}

\newcommand\PortIn[1]{#1^+}
\newcommand\PortOut[1]{#1^-}
\newcommand\PortPair[1]{\PortIn{#1},\PortOut{#1}}
\newcommand\NetBroadcast[1]{\mathsf{Br}_{#1}}

\newcommand\Into[1]{\Int{\Orth{#1}}}

\newcommand\FreePortP[1]{{#1}^+}
\newcommand\FreePortN[1]{{#1}^-}

\newcommand\Soup{\operatorname{\mathsf{Soup}}}
\newcommand\Private{\operatorname{\mathsf{Priv}}}

\newcommand\Labels{\cL}
\newcommand\NameOfLabel{\operatorname{\mathsf{Name}}}
\newcommand\LabelsOfState{\operatorname{\Labels}}
\newcommand\LabelsOfProc{\operatorname{\Labels}}
\newcommand\StateLTS{\mathbb{S}_{\Labels}}
\newcommand\StateTrans[2]{\mathrel{\mathop\rightarrow^{#1/\overline{#2}}}}

\newcommand\LabelsOfNet{\operatorname{\Labels}}
\newcommand\NetLTS{\mathbb{D}_{\Labels}}
\newcommand\NetTrans[2]{\mathrel{\mathop\rightarrow^{#1/\overline{#2}}}}
\newcommand\NetCom{\operatorname{\mathsf{Com}}}

\newcommand\LTStrad{\Phi}
\newcommand\RedCom[2]{D_{#1,#2}}

\newcommand\PCNF{\Delta}
\newcommand\PCNFreducible[2]{\PCNF_{#1,#2}}
\newcommand\PCNFreduced[2]{\operatorname{\mathsf{red}}_{#1,#2}}

\newcommand\NetNorm[1]{\|{#1}\|}

\newcommand\Num{\operatorname{\mathsf{N}}}
\newcommand\StrNum[1]{\Num_{\mathrm{str}}({#1})}
\newcommand\NDNum[1]{\Num_{\mathrm{ND}}({#1})}
\newcommand\SNDNum[1]{\Num({#1})}

\newcommand\FormSem[2]{[{#1}]_{#2}}
\newcommand\FormPureSem[1]{[{#1}]}
\newcommand\ProofSem[2]{[{#1}]_{#2}}

\newcommand\MsetAdd{+}

\newcommand\NetOfProof[2]{{#1}^*_{#2}}
\newcommand\Experiments[2]{\operatorname{\mathsf{exper}}_{#2}({#1})}
\newcommand\ExpRes[1]{\operatorname{\mathsf{res}}({#1})}
\newcommand\NetSem[3]{[{#1}]_{#2}^{#3}}
\newcommand\ProcSem[2]{[{#1}]_{#2}}

\newcommand\Trees{\mathsf{D}}

\newcommand\NetCombine[2]{{#1}\cdot{#2}}
\newcommand\NetDual[2]{{#1}\perp{#2}}

\newcommand\NetRedEq{\mathrel{\mathord\leadsto^{\mathord=}}}
\newcommand\NetRedC{\mathord\leadsto_{\mathrm{c}}}
\newcommand\NetRedND{\mathord\leadsto_{\mathrm{nd}}}
\newcommand\NetRedS{\mathord\leadsto_{\mathrm{s}}}
\newcommand\NetRedSND{\mathord\leadsto_{\mathrm{snd}}}
\newcommand\NetRedSNDW{\mathord\leadsto_{\mathrm{sndw}}}
\newcommand\NetRedNSNDW{\mathord\leadsto_{\mathrm{nw}}}
\newcommand\NetRedSNDMax{\mathord\leadsto^{\mathrm{max}}_{\mathrm{snd}}}
\newcommand\NetRedStr{\mathord\leadsto^{\mathord+}}
\newcommand\NetRedN{\mathord\leadsto_{\mathrm{n}}}
\newcommand\NetRedCan{\mathord\leadsto_{\mathrm{Can}}}

\newcommand\NetRedM{\mathord\leadsto_{\mathrm{m}}}
\newcommand\NetRedA{\mathord\leadsto_{\mathrm{a}}}
\newcommand\NetRedBox{\mathord\leadsto_{\mathrm{bx}}}
\newcommand\NetRedBB{\mathord\leadsto_{\mathrm{bb}}}

\newcommand\NetEqAC{\mathrel{\mathord\sim_{\mathrm{ac}}}}

\newcommand\GenContrCell{?^*}
\newcommand\GenCocontrCell{!^*}

\newcommand\ComRedPart[3]{\operatorname{\textsf{Red}}_{#1,#2}({#3})}
\newcommand\ComReduced[3]{\operatorname{\textsf{red}}_{#1,#2}({#3})}

\newcommand\TransCl[1]{#1^*}
\newcommand\ReflCl[1]{#1^-}
\newcommand\TransClStr[1]{#1^+}

\newcommand\Atoms{\cA}
\newcommand\Atorth[1]{\overline{#1}}
\newcommand\Forall[2]{\forall{#1}\,{#2}}
\newcommand\Exists[2]{\exists{#1}\,{#2}}

\newcommand\Rulename[1]{\quad(${#1})$}
\newcommand\Proofseparation{\quad\quad\quad\quad}

\newcommand\Subproof[2]{%
\noLine%
\AxiomC{\raggedleft{\hbox{\begin{tabular}{cc}$\vdots$&{#1}\hspace{-2em}\ \end{tabular}}}}%
\UnaryInfC{#2}}

\newcommand\Psupp[2]{\textsf{supp}^{#2}({#1})}
\newcommand\Quant[1]{\mathcal{T}{#1}}
\newcommand\Symgroup[1]{\mathsf{S}_{#1}}
\newcommand\Symgroupf[1]{\mathsf{S}^{\mathrm{fin}}_{#1}}
\newcommand\Orbit[3]{(#1)^{#2}_{#3}}
\newcommand\Vvec[1]{\vec{#1}}

\newcommand\Satovt[1]{\widehat{#1}}

\newcommand\Trovt[1]{\Lambda({#1})}
\newcommand\Absso[1]{\Lambda({#1})}
\newcommand\Soproj[1]{\epsilon^{#1}}
\newcommand\Projso[1]{\epsilon^{#1}}
\newcommand\Coprojso[1]{{\bar\epsilon}^{#1}}

\newcommand\RELSO[1]{\mathbf{Rel}^{(#1)}}
\newcommand\RELACTION[1]{\mathbf{Rel}[{#1}]}
\newcommand\FINSO[1]{\FIN^{(n)}}

\newcommand\Substso[1]{{#1}^*}

\newcommand\Tupleid[1]{\vec{#1}}

\newcommand\Invar[2]{{#1}:{#2}}
\newcommand\Moved[2]{{#1}_{#2}}

\newcommand\Melange[1]{\eta^{#1}}
\newcommand\Act[2]{{#1}\cdot{#2}}
\newcommand\Forallf[1]{\forall\,{#1}}
\newcommand\Existsf[1]{\exists\,{#1}}
\newcommand\Tsem[2]{[{#1}]^{#2}}
\newcommand\Tsempol[2]{[{#1}]^{#2}_\POLS}
\newcommand\Tsempop[2]{[{#1}]^{#2}_\POPS}
\newcommand\Tsemrel[2]{[{#1}]^{#2}_\RELS}
\newcommand\Psem[2]{[{#1}]^{#2}}

\newcommand\Psemp[1]{[{#1}]}

\newcommand\Sirpring{\mathsf S}

\newcommand\Sig{\Sigma}
\newcommand\Arity{\mathop{\mathsf{ar}}}
\newcommand\Celltype{\mathop{\mathsf{symb}}}
\newcommand\Wireports{\partial}

\newcommand\Plusl{\mathop{\mathord\oplus\mathsf l}}
\newcommand\Plusr{\mathop{\mathord\oplus\mathsf r}}

\newcommand%
\begin{minipage}[c]{[\textwidth}\input{2}\end{minipage}]{%
\begin{minipage}[c]{#1\textwidth}\input{#2}\end{minipage}}

\newcommand%
\begin{minipage}[c]{[\textwidth}\scalebox{3}{\input{]}}\end{minipage}{%
\begin{minipage}[c]{#1\textwidth}\scalebox{#2}{\input{#3}}\end{minipage}}

\newcommand\Figspace[1]{\hspace{#1em}}

\newcommand\Emptynet{\epsilon}

\newcommand\ARG{s}

\newcommand\Mark{\vbox{\hbox{$\bullet$}\vspace{-4mm}\hbox{}}}

\newcommand\Forallcell[1]{\forall{#1}}
\newcommand\Existscell[2]{\exists{#1},{#2}}
\newcommand\Promcell[2]{{#1}^!,{#2}}
\newcommand\Wirerev[1]{{#1}^*}

\newcommand\Sexcl{\scriptstyle{\mathord!}}
\newcommand\Sint{\scriptstyle{\mathord?}}

\newcommand\Finsub{\sqsubseteq}

\newcommand\Field{\mathbf k}
\newcommand\Simple[1]{\hat{#1}}

\newcommand\Cut[2]{\langle{#1}\mid{#2}\rangle}

\newcommand\CONTR{\mathsf c}
\newcommand\DER{\mathsf d}
\newcommand\WEAK{\mathsf w}

\newcommand\Ttens[2]{\otimes(#1,#2)}
\newcommand\Tpar[2]{\IPar(#1,#2)}
\newcommand\Tcocontr[2]{\CONTR_\oc(#1,#2)}
\newcommand\Tcontr[2]{\CONTR_\wn(#1,#2)}
\newcommand\Tcoder[1]{\DER_\oc(#1)}
\newcommand\Tder[1]{\DER_\wn(#1)}
\newcommand\Tweak{\WEAK_\wn}
\newcommand\Tcoweak{\WEAK_\oc}
\newcommand\Tprom[1]{{#1}^\oc}
\newcommand\Tpromn[2]{{#1}^\oc_{#2}}
\newcommand\Netofseq[1]{{#1}^\bullet}

\newcommand\Net[2]{[#1,#2]}

\newcommand\Vecb[2]{\vec{#1}_{<#2}}
\newcommand\Veca[2]{\vec{#1}_{>#2}}

\newcommand\Ann{\mathop{\mathsf{ann}}}
\newcommand\Canbasis[1]{\mathsf e_{#1}}

\newcommand\Op[1]{{#1}^{\mathsf{op}}}

\newcommand\Derc[1]{\partial_{#1}}
\newcommand\Dercm[2]{\partial_{#1}^{#2}}
\newcommand\Coderc[1]{\overline\partial_{#1}}
\newcommand\Codercm[2]{\overline\partial_{#1}^{#2}}
\newcommand\Partexcl[2]{\oc_{#1}#2}

\newcommand\Shuffle{\mathsf{Shuffle}}
\newcommand\Funofmatm{\theta}

\newcommand\Primint[1]{I_{#1}}
\newcommand\Stermsh[2]{\Sterms_{#1}^{(#2)}}

\newcommand\Primmor[1]{I_{#1}}

\newcommand\Curlin[1]{\mathsf{cur}(#1)}

\newcommand\Taym[2]{\mathrm T^{#2}_{#1}}

\newcommand\Named[2]{#1\cdot#2}
\newcommand\Names{\cN}
\newcommand\Truenames{\cN}
\newcommand\Noname{\tau}
\newcommand\Interface[1]{\mathsf{Int}(#1)}
\newcommand\Freeinterface[1]{\mathsf{Int}^-(#1)}
\newcommand\Foliage[1]{\mathsf{Fol}(#1)}
\newcommand\Freefoliage[1]{\mathsf{Fol}^{-}(#1)}
\newcommand\Freecontext[1]{{#1}^-}
\newcommand\Freeset[1]{{#1}^-}
\newcommand\Cutnet[1]{{#1}_{\textsf c}}
\newcommand\Freenet[1]{{#1}_{\textsf f}}

\newcommand\Rolam{\mathsf{lam}}
\newcommand\Roapp{\mathsf{app}}

\newcommand\Tenspow[2]{{#1}^{\mathord\otimes #2}}
\newcommand\Complin{\,}
\newcommand\Compmat{\,}
\newcommand\Compl{\Complin}
\newcommand\Beta{\beta}
\newcommand\Dbeta{\delta}

\newcommand\Dapp[2]{\Diffsymb{#1}\cdot{#2}}
\newcommand\Dappm[3]{\Diffsymb^{#1}{#2}\cdot{(#3)}}
\newcommand\Dsubst[3]{\frac{\partial #1}{\partial #3}\cdot{#2}} 
\newcommand\Combapp[4]{\App{\Dappm{#1}{#2}{#3}}{#4}} 

\newcommand\FINV{\FIN(\Field)}
\newcommand\FINVK{\FIN_!(\Field)}
\newcommand\Fun[1]{\mathsf{Fun}(#1)}

\newcommand\Monoidal{\mu}
\newcommand\Monoidaln[1]{\mu^{(#1)}}
\newcommand\Seely{\mathsf m}

\newcommand\Polyhom[2]{\mathbf{Pol}_\Field(#1,#2)}
\newcommand\Anahom[2]{\widetilde{\mathbf{Pol}}_\Field(#1,#2)}

\newcommand\LL{\textsf{LL}}
\newcommand\DILL{\textsf{DiLL}}
\newcommand\MLL{\textsf{MLL}}
\newcommand\MALL{\textsf{MALL}}

\newcommand\Excllab{\scriptstyle\oc}
\newcommand\Intlab{\scriptscriptstyle\wn}
\newcommand\Parlab{\scriptscriptstyle\IPar}
\newcommand\Tenslab{\scriptscriptstyle\ITens}

\newcommand\Catnet[2]{[#1,#2]}

\newcommand\Qcoeff{\mathbb B}

\newcommand\Kl[1]{#1_{\oc}}

\newcommand\Sem[1]{[#1]}

\newcommand\CBVLAM{call-by-value lambda-calculus}
\newcommand\CBV{call-by-value}
\newcommand\Let[3]{\App{\Abs{#1}{#3}}{\Tofval{#2}}}
\newcommand\Letbis[3]{\textsf{let }#1=#2\textsf{ in }#3}

\newcommand\Valsymb{\mathsf V}

\newcommand\Redv{\beta_{\Valsymb}}
\newcommand\Redvz{{\hat\beta}_{\Valsymb}}
\newcommand\Redvtr{\beta^*_{\Valsymb}}
\newcommand\Redvztr{{\hat\beta}^*_{\Valsymb}}
\newcommand\Redrv{\delta}
\newcommand\Rednsrv{\tilde\delta}
\newcommand\Redrvtr{\beta^*_{\Valsymb}}

\newcommand\Parv{\rho_{\Valsymb}}
\newcommand\Parvtr{\rho^*_{\Valsymb}}
\newcommand\Lamval{\cV}
\newcommand\Lamvall{\cV_\lambda}

\newcommand\Supertens[2]{#1^{\mathord\otimes #2}}
\newcommand\Supertensup[2]{#1^{\mathord\otimes(#2)}}
\newcommand\Supertensp[2]{(#1)^{\mathord\otimes #2}}
\newcommand\Supertenspup[2]{(#1)^{\mathord\otimes(#2)}}

\newcommand\Tofval[1]{\left\langle#1\right\rangle}
\newcommand\Tofvalr[1]{\left\langle#1\right\rangle}
\newcommand\Tofvalrs[1]{\left\langle#1\right\rangle}
\newcommand\Tredv{\cN}
\newcommand\Simpl[2]{{#1}\multimap{#2}}

\newcommand\Softype[1]{#1^\bullet}

\newcommand\Cow{1}
\newcommand\Cbunch[2]{[#1]\cdot#2}
\newcommand\Cbunchp[2]{([#1]\cdot#2)}

\newcommand\Init[1]{\cI(#1)}
\newcommand\Downcl[1]{\mathord\downarrow#1}
\newcommand\Upcl[1]{\mathord\uparrow#1}
\newcommand\Downclr[2]{\mathord\downarrow_{#1}#2}
\newcommand\Upclr[2]{\mathord\uparrow_{#1}#2}

\newcommand\Substr{\sqsubseteq}

\newcommand\ScottU{U_{\mathsf S}}

\newcommand\Intint[1]{\underline{#1}}

\newcommand\Dsubstv[3]{\partial_{#3}(#1;#2)} 

\newcommand\Proofvskip{\\[2mm]}
\newcommand\Resofder[1]{{#1}^\circ}

\newcommand\Tlam{\Lambda_{\textsf t}}
\newcommand\Vlam{\Lambda_{\textsf v}}
\newcommand\Elam{\Lambda_{\textsf e}}
\newcommand\Qlam{\Lambda_{\textsf q}}

\newcommand\Canadd{\mathsf a}
\newcommand\Setofint[1]{\overline{#1}}
\newcommand\Permoflist[1]{\left\langle#1\right\rangle}

\newcommand\Termm{\mathsf t}
\newcommand\Family[3]{(#1)_{#2\in #3}}

\newcommand\Extiso{\mathsf e}

\newcommand\Poweb[1]{\langle#1\rangle}
\newcommand\Wpoweb[1]{\Web{\!\Poweb{#1}\!}}
\newcommand\Realize[1]{\Vdash_{#1}}

\newcommand\Relcbv{\mathcal U}
\newcommand\Polrcbv{\mathcal U_{\mathsf S}}

\newcommand\Relofpop{\rho}
\newcommand\Polofpop{\sigma}

\newcommand\Appro{\mathsf{app}}
\newcommand\Lamro{\mathsf{lam}}

\newcommand\Mix{\mathsf{mix}}
\newcommand\Mixn[1]{\Mix^{(#1)}}

\section{Categorical semantics of linear logic}\label{sec:thomas}
Before introducing the syntax of our resource \lam-calculus with tests, we
describe the general categorical structures needed to interpret this
calculus. Our goal here is to give general motivations for our syntactic
constructs. In the sequel, we consider a particular model, based on the category
of sets and relations, and it is not hard to check that this particular
category is an instance of the general setting we present here. In
Section~\ref{sec:rela-sem}, we shall present this relational
interpretation concretely in order to avoid the admittedly heavy categorical
formalism.

Our main reference for categorical models of linear logic (LL) is~\cite{Mellies09}.
We denote by $\nat$ the set of natural numbers.

Let $\scC$ be a Seely category. We recall briefly that such a structure consists
of a category $\scC$, whose morphisms should be thought of as linear maps,
equipped with a symmetric monoidal structure for which it is closed and
$*$-autonomous with respect to a dualizing object $\Bot$. The monoidal product, called
tensor product, is denoted as $\ITens$, the linear function space object from
$X$ to $Y$ is denoted as $\Limpl XY$, the composition of morphisms in $\scC$ is simply denoted 
as juxtaposition. 
We use $\Evlin\in\scC(\Tens{(\Limpl
  XY)}X,Y)$ for the linear evaluation morphism and $\Curlin f\in\scC(Z,\Limpl
XY)$ for the ``linear currying'' of a morphism $f\in\scC(\Tens ZX,Y)$. The
dual object $\Limpl X\Bot$ is denoted as $\Orth X$. 

We also assume that $\scC$ is cartesian, with a cartesian product denoted as
$\IWith$ and a terminal object $\Top$. By $*$-autonomy, this implies that $\scC$
is also cocartesian; we use $\IPlus$ for the coproduct and $\Zero$ for the
initial object. In any cartesian and cocartesian category, there is a canonical
morphism $\Canadd\in\scC(\Zero,\Top)$ and a canonical natural transformation
$\Canadd_{X,Y}\in\scC(\Plus XY,\With XY)$. One says that the category is
\emph{additive} if these morphisms are isomorphisms. In that case, each homset
$\scC(X,Y)$ is equipped with a structure of commutative monoid, and all
operations defined so far (composition, tensor product, linear currying)
are linear with respect to this structure.

If $\scC$ has cartesian products of all countable families $(X_i)_{i\in I}$ of
objects, we say that it is \emph{countably cartesian}, and in that case, $\scC$ is also
\emph{countably cocartesian}. If the canonical morphism $\Canadd_{\Family
 {X_i}iI}\in\scC(\bigoplus_{i\in I}X_i,\BWith_{i\in I}X_i)$ is an isomorphism, we
say that $\scC$ is \emph{countably additive}. In that case, homsets have countable
sums and composition as well as all monoidal operations commute with these sums.

Last, we assume that $\scC$ is equipped with an endofunctor $\Excl\_$ which has
a structure of comonad (unit $\Der X\in\scC(\Excl X,X)$ called
\emph{dereliction}, multiplication $\Digg X\in\scC(\Excl X,\Excl{\Excl X})$
called \emph{digging}). Moreover, this functor must be equipped with a monoidal
structure which turns it into a symmetric monoidal functor from the symmetric
monoidal category $(\scC,\IWith)$ to the symmetric monoidal category
$(\scC,\ITens)$: the corresponding isomorphisms $\Seely:\Excl\Top\to\One$ and
$\Seely_{X,Y}:\Excl{(\With XY)}\to\Tens{\Excl X}{\Excl Y}$ are often called
\emph{Seely isomorphisms}. The following diagram is moreover required to be
commutative.
\begin{equation*}
\xymatrix @R=1.5em @C=3pc
  {\Tens{\Excl X}{\Excl Y}
    \ar[r]^{\Seely_{X,Y}}
    \ar[dd]^{\Tens{\Digg X}{\Digg Y}}
    & \Excl{(\With XY)}\ar[d]^{\Digg{\With XY}}\\
    &\Excl{\Excl{(\With XY)}}
    \ar[d]^{\Excl{\Pair{\Excl{\Proj 1}}{\Excl{\Proj 2}}}}\\
    \Tens{\Excl{\Excl X}}{\Excl{\Excl Y}}\ar[r]^{\Seely_{\Excl X,\Excl Y}}
    &\Excl{(\With{\Excl X}{\Excl Y})}
  }
  \end{equation*}
  Using this monoidal structure, we can equip the $\oc$ functor with a lax
  symmetric monoidal structure from the symmetric monoidal category (SMC)
  $(\scC,\One,\ITens)$ to itself. In other words, one can define a morphism
  $\Monoidal:\One\to\Excl\One$ and a natural transformation
  $\Monoidal_{X,Y}:\Tens{\Excl X}{\Excl Y}\to\Excl{(\Tens XY)}$ which satisfy
  compatibility conditions with respect to the structure isomorphisms of the SMC
  $(\scC,\One,\ITens)$. Given an object $X$ of $\scC$ and $k\in\nat$, this allows
  to define a morphism $\Monoidaln k:\Supertensp{\Excl X}k\to\Excl{(\Supertens
    Xk)}$ which is essential in the interpretation of \lam-terms.

\subsection{Structural natural transformations}
Using these structures, we can define a \emph{weakening} natural transformation
$\Weak X\in\scC(\Excl X,\One)$ and a \emph{contraction} natural transformation
$\Contr X\in\scC(\Excl X,\Tens{\Excl X}{\Excl X})$ as follows. Since $\Top$ is
terminal, there is a canonical morphism $\Termm_X\in\scC(X,\Top)$ and we set
$\Weak X=\Seely\Complin\Excl{\Termm_X}$. Similarly, we have a diagonal
natural transformation $\Diag X\in\scC(X,\With XX)$ and we set $\Contr
X={\Seely_{X,X}}\Complin\Excl{\Diag X}$.

This contraction morphism $\Contr X:\Excl X\to\Tens{\Excl X}{\Excl X}$ is
associative, and therefore can be generalized to a unique morphism $\Contrm
Xn:\Excl X\to\Supertensp{\Excl X}{n}$. We have $\Contrm X0=\Weak X$, $\Contrm
X1=\Id{\Excl X}$ and $\Contrm X2=\Contr X$.

More generally we can define a morphism
$\Contrm X{k,n}:
{\Supertensp{\Excl X}k}\to\Supertensp{\Supertensp{\Excl X}k}{n}$ for
the generalized contraction morphism which is defined as the following
composition
\begin{equation*}
  \xymatrix @R=0.6em @C=4pc
  {\Supertensp{\Excl X}{k}\ar[r]^-{\Supertensp{\Contrm Xn}k}
    & \Supertensp{\Supertensp{\Excl X}{n}}k\ar[r]^-{\sigma}
    & \Supertensp{\Supertensp{\Excl X}{k}}n
  }
\end{equation*}
where 
$\sigma$ is the obvious isomorphism, defined using associativity and symmetry
of $\ITens$.

Similarly, we define a generalized weakening morphism $\Weakm Xk$ as the
composition
\begin{equation*}
  \xymatrix @R=0.8em @C=3pc
  {\Supertensp{\Excl X}{k}\ar[r]^-{\Supertensp{\Weak X}k}
    & \Supertensp{\One}k\ar[r]^-\lambda
    & \One
  }
\end{equation*}
where $\lambda$ is the unique canonical isomorphism induced by the monoidal
structure. 

As usual the (co)Kleisli category $\Kl\scC$ of the comonad $\Excl\_$ is defined as the category that has the same objects as $\scC$ and $\Kl\scC(X,Y) = \scC(\Excl X,Y)$, with composition denoted as $\circ$ and defined using the comonad. One can prove $\Kl\scC$ is cartesian closed, with $\IWith$ as cartesian product and $\Limpl{\Excl X}{Y}$ as function
 space object: this is a categorical version of Girard's translation of intuitionistic logic into linear logic.

Given $f\in\scC(\Supertensp{\Excl X}k,Y)$, it is standard to define $\Prom
f\in\scC(\Supertensp{\Excl X}k,\Excl Y)$, this operation is usually called
\emph{promotion} in linear logic. This morphism is defined as the following
composition:
\begin{equation*}
  \xymatrix @R=0.6em @C=3pc
  {\Supertensp{\Excl X}k\ar[r]^{\Supertensp{\Digg X}k}
    & \Supertensp{\Excl{\Excl X}}{k}\ar[r]^-{\Monoidaln k_X}
    & \Excl{(\Supertensp{\Excl X}{k})}\ar[r]^-{\Excl f}
    & \Excl Y
  }  
\end{equation*}



\subsection{Differential LL models}
The notion of categorical model recalled above allows to interpret standard
classical linear logic. If one wishes to interpret differential constructs as
well (in the spirit of the differential \lam-calculus or of differential
linear logic), more structure and hypotheses are required. Basically, we need that:
\begin{iteMize}{$\bullet$}
\item the cartesian and cocartesian category $\scC$ is additive, and
\item the model is equipped with a \emph{codereliction} natural
  transformation $\Coder X\in\scC(X,\Excl X)$ such that $\Der X\Complin\Coder
  X=\Myid_X$. 
\end{iteMize}
More conditions are required if one wants to interpret the full differential
\lam-calculus of~\cite{EhrhardR03} or full differential linear logic as presented
in e.g.~\cite{Pagani09}: these conditions represent a categorical axiomatization of
the usual chain rule of calculus and are well explained in~\cite{Fiore07}. When these conditions, that we give explicitely now, hold,
we say that \emph{the chain rule holds in $\scC$}.

The first condition is the following commutation.
\begin{equation*}\label{eq:der-coder-monoidal}
\vcenter{\hbox{\xymatrix @R=3em @C=3pc
{
  \Tens{X}{\Excl Y}\ar[r]^{\Tens{\Coder X}{\Excl Y}}\ar[d]^{\Tens{X}{\Der Y}}
  & \Tens{\Excl X}{\Excl Y}\ar[d]^{\Monoidal_{X,Y}}\\
  \Tens XY\ar[r]^{\Coder{\Tens XY}}
  & \Excl{(\Tens XY)}
}}}
\end{equation*}
It would be interesting to know if this condition can be reduced to a more
primitive one, involving $\Coder X$ and the isomorphism $\Seely$ (of course,
one can replace $\Monoidal$ by its expression in terms of $\Seely$ in the
diagram above, so that this diagram is actually a condition on
$\Seely$, but we would like to find a simpler and more elegant commuting
diagram involving $\Seely$).

Last we have to provide a commutation relating $\Coder X$ and $\Digg X$. We
have of course $\Coder{\Excl X}\Compl\Coder X:X\to\Excl{\Excl X}$. Also,
$\Monoidal_\One:\One\to\Excl\One$ and therefore $\Excl{\Coweak
  X}\Compl\Monoidal_\One:\One\to\Excl{\Excl X}$. Keeping implicit the
isomorphism $\Tens X\One\Isom X$, we get $\Tens{(\Coder{\Excl X}\Compl\Coder
  X)}{(\Excl{\Coweak X}\Compl\Monoidal_\One)}:X\to\Tens{\Excl{\Excl
    X}}{\Excl{\Excl X}}$, and we require the following diagram to commute:
\begin{equation*}\label{eq:coder-digg}
\vcenter{\hbox{\xymatrix @R=3em @C=3pc
{
  X\ar[r]^{\Coder X}\ar[d]_{\Tens{(\Coder{\Excl X}\Compl\Coder
    X)}{(\Excl{\Coweak X}\Compl\Monoidal_\One)}}
  & \Excl X\ar[d]^{\Digg X}\\
  \Tens{\Excl{\Excl X}}{\Excl{\Excl X}}\ar[r]^{\Cocontr{\Excl X}}
  & \Excl{\Excl X}
}}}
\end{equation*}

If $\scC$ is a weak differential LL model, we can define a coweakening morphism
$\Coweak X\in\scC(\One,\Excl X)$ and a cocontraction morphism $\Cocontr
X\in\scC(\Tens{\Excl X}{\Excl X},\Excl X)$ as we did for $\Weak X$ and $\Contr
X$. Similarly we also define $\Cocontrm Xn\in\scC(\Supertens{(\Excl X)}n,\Excl
X)$. Due to the naturality of $\Coder X$ we have $\Weak X\Compl\Coder X=0$ and
$\Contr X\Complin\Coder X=\Tens{\Coder X}{\Coweak X}+\Tens{\Coweak X}{\Coder
  X}$. We also define $\Derm Xn=\Supertens{\Der X}n\Complin\Contrm
Xn\in\scC(\Excl X,\Supertens Xn)$ and $\Coderm Xn=\Cocontrm
Xn\Complin\Supertens{\Coder X}n\in\scC(\Supertens Xn,\Excl X)$.

\subsection{The Taylor formula}
Let $\scC$ be a weak differential LL model which is countably additive. Remember
that each homset $\scC(X,Y)$ is endowed with a canonical structure of
commutative monoid in which countable families are summable. We assume moreover
that these monoids are idempotent. This means that, if $f\in\scC(X,Y)$, then
$f+f=f$. 

We say that the Taylor formula holds in $\scC$ if, for any morphism
$f\in\scC(X,Y)$, we have
\begin{equation*}
  \Excl f=\sum_{n=0}^\infty \Coderm Yn\Compl\Supertens fn\Compl\Derm Xn
\end{equation*}
\begin{rem}
  If the idempotency condition does not hold, one has to require the
  homsets to have a rig structure over the non-negative real numbers, and the
  Taylor condition must be written in the more familiar way $\Excl
  f=\sum_{n=0}^\infty \frac 1{\Factor n}\Coderm Yn\Compl\Supertens
  fn\Compl\Derm Xn$. To give a precise meaning to this kind of expressions, we
  need of course more structure on homsets: they need to have some completeness
  properties, typically expressible in topological terms. 
\end{rem}
\begin{rem}
  If the chain rule holds in $\scC$, the Taylor condition reduces to the
  particular case of identity morphisms: one has just to require that
  $\Excl{\Myid_X}=\sum_{n=0}^\infty\Coderm Xn\Derm Xn$ (in the idempotent
  setting).
\end{rem}

\subsection{Models of the pure differential \lam-calculus}
A model of the pure differential \lam-calculus of~\cite{EhrhardR03} or of the
$\dlam$-calculus to be introduced below, is simply a reflexive object in
$\Kl\scC$, where $\scC$ is a model of differential linear logic in which the
chain rule holds. More precisely, it consists of such a category $\scC$ and of a
triple $(U,\Roapp,\Rolam)$ where $U$ is an object of $\scC$ and
$\Roapp\in\scC(U,\Limpl{\Excl U}U)$ and $\Lamro\in\scC(\Limpl{\Excl U}U,U)$
satisfy $\Appro\Comp\Lamro=\Myid_{\Limpl{\Excl U}U}$ in $\scC$. It is crucial to
take $\Appro$ and $\Lamro$ in the ``linear'' category $\scC$ and not in $\Kl\scC$.

In the present paper, we concentrate on the case where $U$ satisfies a stronger
condition. We assume that $\scC$ is countably cartesian, and, given an object
$X$, we denote as $X^\nat$ the cartesian product $\BWith_{i\in\nat}X_i$ where
$X_i=X$ for each $i\in\nat$. We consider an object $U$ of $\scC$ together with
an isomorphism $\phi\in\scC(U,\Orth{(\Excl{U^\nat})})$. We have clearly
$\Orth{(\Excl{U^\nat})}\Isom\Orth{(\Excl{(U\IWith U^\nat)})}$, hence
$\Orth{(\Excl{U^\nat})}\Isom\Orth{(\Tens{\Excl U}{\Excl{U^\nat}})}$ by the
Seely isomorphism and therefore $\Orth{(\Excl{U^\nat})}\Isom\Limpl{\Excl
  U}{\Orth{(\Excl{U^\nat})}}$ by $*$-autonomy of $\scC$. Using $\phi$, we get
finally that $U\Isom\Limpl{\Excl U}{U}$ and we define $\Roapp$ and $\Rolam$
using this isomorphism.

We also assume that $\scC$ is a model of the MIX rule of linear logic
(see~\cite{Girard87}). This means that $\Bot$ is equipped with a structure of
commutative monoid in the SMC $\scC$. We use $\Mixn n$ to denote the
corresponding morphism $\Supertens\Bot n\to\Bot$ so that in particular $\Mixn
0:\One\to\Bot$ and $\Mixn 1=\Id\Bot$.

The interpretation of the calculi presented in this paper is based on the
following toolbox. The first constructions we give deal with ``terms'', which
are represented here by morphisms $\Supertens{(\Excl U)}k\to U$ (the number
$k\in\nat$ corresponds intuitively to the number of variables on which the term
depends).
\begin{iteMize}{$\bullet$}
\item Given a family of terms $\List f1n:\Supertens{(\Excl U)}k\to U$, we can
  define a morphism $\Mset{\List f1n}:\Supertens{(\Excl U)}k\to\Excl U$ as
  $\Mset{\List f1n}=\Coderm Un\Comp(f_1\ITens\cdots\ITens f_n)\Comp\Contrm
  U{k,n}$ (a morphism of this type will be called a ``bag'').
\item Let $f:\Supertens{(\Excl U)}k\to U$ be a further term. Remember that
  we have defined the promotion of $f$, which is a bag $\Prom
  f:\Supertens{(\Excl U)}k\to\Excl U$. Therefore we can define
  $\Tensp{\Mset{\List f1n}}{\Prom f}\Comp\Contrm U{k,2}:\Supertens{(\Excl
    U)}k\to\Tens{\Excl U}{\Excl U}$ and we introduce a new bag
  \begin{equation*}
    \Mset{\List f1n,\Prom f}=
    \Cocontr U\Comp\Tensp{\Mset{\List f1n}}{\Prom f}\Comp\Contrm
    U{k,2}:\Supertens{(\Excl U)}k\to\Excl U\,.
  \end{equation*}
\item Let $f:\Supertens{(\Excl U)}k\to U$ and $g:\Supertens{(\Excl U)}k\to
  \Excl U$. Then $\Roapp\Comp f:\Supertens{(\Excl U)}k\to(\Limpl{\Excl U}U)$
  and hence $\Tensp{(\Roapp\Comp f)}g\Comp\Contrm U{k,2}:\Supertens{(\Excl
    U)}k\to\Tens{(\Limpl{\Excl U}U)}{U}$. Finally we define the application of
  $f$ to $g$ as the term $f\,g=\Evlin\Comp\Tensp{(\Roapp\Comp
    f)}g\Comp\Contrm U{k,2}:\Supertens{(\Excl U)}k\to U$.
\item Let $f:\Tens{\Supertens{(\Excl U)}k}{\Excl U}\to U$, we define the
  abstraction of $f$ (with respect to its last parameter) as the term
  $\lambda(f)=\Rolam\Comp\Curlin f:\Supertens{(\Excl U)}k\to U$.
\end{iteMize}

A ``test'' (with $k$ variables) is a morphism $\Supertens{(\Excl
  U)}k\to\Bot$. Here we present the categorical constructions required for dealing
with such tests.
\begin{iteMize}{$\bullet$}
\item Let $\List h1n:\Supertens{(\Excl U)}k\to\Bot$ be tests. Then we can
  define their parallel composition, using the mix structure of $\Bot$, as the
  test
  \begin{equation*}
    (h_1\mid\cdots\mid h_n)=\Mixn n\Comp(h_1\ITens\cdots\ITens h_n)\Comp\Contrm
    U{k,n}:\Supertens{(\Excl U)}k\to\Bot\,.
  \end{equation*}
\item Let $f:\Supertens{(\Excl U)}k\to U$ be a term. We have $\phi\Comp
  f:\Supertens{(\Excl U)}k\to\Orth{(\Excl{U^\nat})}$ and hence we can define a
  test $\tau(f)=\Orth{\Coweak{U^\nat}}\Comp\phi\Comp f:\Supertens{(\Excl
    U)}k\to\Bot$ since $\Coweak{U^\nat}:\One\to\Excl{U^\nat}$ and
  $\Orth\One=\Bot.$
\item Last, let $h:\Supertens{(\Excl U)}k\to\Bot$ be a test. We have
  $\Orth{\Weak{U^\nat}}\Comp h:\Supertens{(\Excl U)}k\to\Orth{(\Excl{U^\nat})}$
  since $\Weak{U^\nat}:\Excl{U^\nat}\to\One$, and hence
  $\bar\tau(h)=\Funinv\phi\Comp\Orth{\Weak{U^\nat}}\Comp h:\Supertens{(\Excl
    U)}k\to U$ is a term.
\end{iteMize}

\section{The \dzlam-Calculus with Tests}\label{sec:dzlam-calculus+tests}

The definition of the \dzlam-calculus with tests requires some preliminary notations that we give below.

\subsection{\bf Sets and modules.}\label{subs:sets} 
We denote by $\nat$ the set of natural numbers and by $\one$ an arbitrary singleton set. Given a set $S$, we write $\Pow{S}$ (resp.\ $\Powf{S}$) for the set of all
 (resp.\ all finite) subsets of $S$. Given $k\in\nat$, we denote by $\perm{k}$ the \emph{set of all permutations of $\{1,\ldots,k\}$}.

Let $\bool$ be the semiring $\{0,1\}$ with $1+1=1$ and multiplication defined in the obvious way. For any set $S$, we write $\bool\langle S \rangle$ for the free
 $\bool$-module generated by $S$, so that $\bool\langle S \rangle\cong\Powf{S}$ with addition corresponding to union, and scalar multiplication defined in the
 obvious way. However we prefer to keep the algebraic notations for elements of $\bool\langle S \rangle$, hence set unions will be denoted by $+$ and the empty set
 by $0$.

\subsection{\bf Multisets.}\label{subs:msets} 
Let $S$ be a set. A \emph{multiset} $a$ over $S$ can be defined as an unordered list $a = [\ga_1,\ga_2,\ldots]$ with repetitions such that $\ga_i\in S$ for all
 indices $i$. 
 A multiset $a$ is called \emph{finite} if it is a finite list; we denote by $\card a$ its cardinality. We write $\Mfin{S}$ for the set of all finite
 multisets over $S$. Given two multisets $a = [\ga_1,\ga_2,\ldots]$ and $b = [\gb_1,\gb_2,\ldots]$ the \emph{multiset union} of $a,b$ is defined by
 $a\mcup b = [\ga_1,\gb_1,\ga_2,\gb_2,\ldots]$; summing up, $\nat\langle S \rangle\cong\Mfin{S}$. Given two finite sequences of multisets $\seq a,\seq b$ of the same length $n$ we define
 $\seq a\mcup\seq b = (a_1\mcup b_1,\ldots,a_n\mcup b_n)$. Given a strict order $>$ on $S$, the \emph{multiset ordering} \cite[Def.~A.6.2]{terese03}
 is the smallest transitive relation $\mge$ on $\Mfin{S}$ such that $(\forall \gb \in b.\ \ga > \gb) \Rightarrow (a \mcup [\ga] \mge a \mcup b)$, for all
 $\ga \in S$ and all $a,b \in \Mfin{S}$.
Intuitively, $a \mge b$ holds if $b$ can be obtained from $a$ by replacing some of its elements by finitely many (possibly zero) smaller elements.

We now introduce the \emph{\dzlam-calculus with tests} which is the promotion-free fragment of the \dlam-calculus with tests we will present in
 Section~\ref{sec:FullRC}.

\subsection{Syntax.} The \dzlam-calculus with tests has four syntactic categories: 
\emph{terms} that are in functional position,
\emph{bags} that are in argument position and represent multisets of linear resources,
\emph{tests} that are ``corked'' multisets  of terms having only two possible outcomes
and \emph{finite formal sums} representing all possible results of a computation.
{\em Expressions} are either terms, bags or tests and will be used to state results holding for all categories.

\begin{figure}[t]
\centering
\textbf{The  \dzlam-calculus with tests}\\[2ex]
\subfigure[Grammar of terms, bags, tests, expressions, sums.]{\label{fig:grammar}
    \begin{tabular}{@{}p{.1\linewidth}@{}p{.125\linewidth}@{}p{.525\linewidth}@{}p{.15\linewidth}@{}}
    $\Set{\gto}$:       &$M,N,L,H$         &$\gramm x \mid \lambda x.M \mid MP \mid \gto(V)$        &\hfill terms\\[3pt]
    $\Set{b}$:       &$P,Q$         &$\gramm [L_1,\ldots,L_k]$        &\hfill bags\\[3pt]    
    $\Set{\gt}$:       &$V,W$         &$\gramm \gt[L_1,\ldots,L_k]$        &\hfill tests\\[3pt]        
    $\Set{e}$:       &$A,B$         &$\gramm M\mid V\mid P$        &\hfill expressions\\[3pt]     
    \\   
    \hline
    \\
    \multicolumn{3}{p{.80\linewidth}}{
    \hspace{-6pt}$ \sM, \sN,\sL,\sH\in \sums{\gto}$} &sums of terms\\[3pt]
    \multicolumn{3}{p{.80\linewidth}}{
    \hspace{-6pt}$ \sP,\sQ\in \sums{b}$} &\hfill sums of bags\\[3pt]    
    \multicolumn{3}{p{.80\linewidth}}{
    \hspace{-6pt}$ \sV,\sW\in\sums{\gt}$} &\hfill sums of tests\\[3pt]        
    \multicolumn{3}{p{.80\linewidth}}{
    \hspace{-6pt}$ \sA,\sB\in \sums{e} :=\sums{\gt}\cup\sums{\gto}\cup\sums{b}$} &\hfill sums of expr.\\[3pt]        
    \end{tabular}
}
\\[3pt]
\hrulefill\\[4pt]
\subfigure[Notation on parallel composition of tests.]{
\label{fig:notpar}
\begin{minipage}{\linewidth}
$$
\parallel_{i=1}^n V_i \ass V_1\paral\cdots\paral V_n,\qquad \textrm{(in particular }\parallel_{i=1}^0 V_i \ass \varepsilon)
$$
\vspace{-3pt}
\end{minipage}
}
\\[3pt]
\caption{\footnotesize Syntax and notations of  \dzlam-calculus with tests.}
\label{fig:statics}
\end{figure}
\begin{defi} The formal grammars defining \emph{terms, bags, tests} and \emph{sums} are given in Figure~\ref{fig:grammar}.
\end{defi}

\emph{Terms} are the real protagonists of the \dzlam-calculus with tests. The term $\lam x.M$ represents the \emph{\lam-abstraction} of the variable $x$ in the
 term $M$ and $MP$ the \emph{application} of a term $M$ to a bag $P$ of linear resources. Thus, in $(\lam x.M)P$, each resource in $P$ is available exactly once for
 $\lam x.M$ and if the number of occurrences of $x$ in $M$ ``disagrees'' with the cardinality of $P$ then the result is 0 (see later, when sums of expressions are
 introduced). The operator $\gto(\cdot)$  will be discussed later on, after the notion of test is explained.

As usual we assume that application associates to the left and lambda abstraction to the right.
Therefore we will write $\lam x_1\dots x_n.MP_1\cdots P_k$ for $\lam x_1.(\cdots(\lam x_n.(\cdots(MP_1)\cdots P_k))\cdots)$.
Moreover, the notation $MP^{\sim n}$ will stand for $MP\cdots P$ ($n$ times).

\begin{nota}
Concerning specific terms, we set 
$$
	\bold{I}\ass \lam x.x,\quad \bold{T}\ass \lam xy.x,\quad \bold{F} \ass \lam xy.y,\quad \bold{D}:=\lam x.x[x],
$$ 
$$
	\Xi_{n_1,\ldots,n_m} \ass \lam x_1\dots x_m.\bold{I}[x_1]^{\sim n_1}\cdots [x_m]^{\sim n_m},\textrm{ for all }n_1,\dots,n_m\in\nat,
$$
where the symbol `$\ass$' denotes definitional equality.
\end{nota}
 
{\em Tests} are multisets of terms, the ``$\gt$'' being a tag for distinguishing them from bags. 
Intuitively, they are expressions that can produce two results: either {\em success}, represented by $\varepsilon$, or {\em failure}, represented by $0$.

Throughout the paper, we will enforce the distinction between bags and tests by using systematically the following notational conventions.

\begin{nota}
For bags, we use the usual multiset notation: 
\begin{iteMize}{$\bullet$}
\item $[]$ is the empty bag,
\item $P\mcup P'$ is the union of bags.
\end{iteMize}
For tests we write: 
\begin{iteMize}{$\bullet$}
\item $\varepsilon$ for the empty multiset, 
\item $V \paral W$ for the multiset union of $V$ and $W$. 
\end{iteMize}
In other words, $\varepsilon=\gt[]$ and $\gt [L_1,\ldots,L_k]\  \paral\  \gt [L_{k+1},\ldots,L_n]=\gt [L_1,\ldots,L_n]$.
Other notations on parallel composition of tests are introduced in Figure~\ref{fig:notpar}.
\end{nota}

The test $V\paral W$ represents the (must-)parallel composition of $V$ and $W$ (i.e., $V\paral W$ succeeds if both $V$ and $W$ succeed and the order of evaluation is
 inessential). We prefer to use the parallel notation as syntactic sugar in order to avoid both the explicit treatment of associativity and commutativity axioms
 (plus neutrality of $\varepsilon$). This is perfectly coherent with the implementation of tests as multisets of terms.

\label{pageref:gto} The operator $\gto(\cdot)$ allows to build a term out of a test:
intuitively, the term $\gto(V)$ may be thought of as $V$ preceded by 
an infinite sequence of dummy $\lambda$-abstractions. 
Dually, the ``cork construction'' $\gt[L_1,\ldots,L_k]$ may be thought of as 
an operator applying to all its arguments an infinite sequence of empty bags.
This suggests in particular that it is sound to reduce $\gt[\gto(V)]$
to $V$.

Hence the term $\gto(V)$ raises an exception 
encapsulating $V$ and the test $\gt[L_1,\ldots,L_k]$
catches the exception possibly raised by, say, $L_i$ and replaces $L_i$
by the multiset of terms encapsulated in that exception.
The context of the exception is thrown away by the 
dummy abstractions of $\gto$ and the dummy applications of  $\gt$.
A test needs to catch an exception in order to succeed; for instance,
$\gt[M]$ fails as soon as $M$ is a $\gto$-free, closed term.

{\em Sums.} 
Remember from Subsection~\ref{subs:sets} that $\sums{\gto}$ (resp.\ $\sums{\gt}$, $\sums{b}$) 
denotes the set of finite formal sums of terms (resp.\ tests, bags)
with an idempotent sum.
We also set $\sums{e} \ass \sums{\gt}\cup\sums{\gto}\cup\sums{b}$.
This is an abuse of notation as $\sums{e}$ here does not denote the $\bool$-module generated over 
$\Set{\gt}\cup\Set{\gto}\cup\Set{b}$, but rather the union of the three  $\bool$-modules;
this means that sums should be taken only in the same sort.
The typical metavariables to denote sums are given in Figure~\ref{fig:grammar}.

The \emph{$\alpha$-equivalence} relation and the set $\FV(\sA)$ of \emph{free variables of $\sA$} are defined 
as usual, like in the ordinary \lam-calculus~\cite{Bare}.
Hereafter, (sums of) expressions are considered up to $\alpha$-equivalence.

Because of the absence of promotion the number of linear resources that a term $\lam x.M$ is expecting is just the number of occurrences of $x$ in $M$
 (the degree of $x$ in $M$).

\begin{defi}\label{def:degree} Let $A\in\Set{e}$. The \emph{degree of $x$ in $A$}, written $\dg{x}{A}$, is the number of free occurrences of $x$ in $A$ 
and is defined by induction as follows:
\begin{iteMize}{$\bullet$}
\item $\dg{x}{x} = 1$,
\item $\dg{x}{y} = 0$, for $x\neq y$,
\item $\dg{x}{\lam y.M} = \dg{x}{M}$, where we assume wlog $x\neq y$,
\item $\dg{x}{MP} = \dg{x}{M} + \dg{x}{P}$,
\item $\dg{x}{\gto(V)} = \dg{x}{V}$,
\item $\dg{x}{\bag{L_1, \ldots, L_k}} = \sum_{i=1}^{k} \dg{x}{L_i}$,
\item $\dg{x}{\gt[L_1, \ldots, L_k]} = \sum_{i=1}^{k} \dg{x}{L_i}$.
\end{iteMize}
\end{defi}

\subsection{Two Kinds of Substitutions}\label{subs:Subst}

In this subsection we introduce two kinds of substitutions: the usual \lam-calculus substitution and a linear one, 
which is proper to  differential and resource calculi (see\ 
\cite{Boudol93,EhrhardR03,Tranquilli10}).

In order to proceed, we first need to introduce some notational conventions concerning the sums.
Indeed the grammar for terms and tests does not include any sums, 
so they may arise only on the ``surface''. 
For instance, $\bold{I} + \bold{I}$ is a legal sum of expressions, 
while $\lam x.(x+x)$ cannot be generated using the grammar of Figure~\ref{fig:grammar}.
\begin{figure}[t]
\centering
\textbf{Extended Syntax}\\[2ex]
\subfigure[Notation on $\sums{e}$.]{
\label{fig:notsums}
\begin{minipage}{\linewidth}
\begin{gather*}
	\textstyle \lam x.(\sum_i M_i) = \sum_i \lam x . M_i\qquad
	\textstyle \sM(\sum_i P_i) = \sum_i \sM P_i\qquad
	\textstyle (\sum_i M_i)\sP = \sum_i M_i \sP\\[2pt]
	\textstyle\gto(\sum_i V_i) = \sum_i \gto(V_i)\qquad
	\textstyle(\sum_i P_i)\mcup \sP = \sum_i P_i\mcup \sP\qquad
	\textstyle[\sum_i L_i]= \sum_i [L_i]\\[2pt]
	\textstyle\gt[\sum_i M_i] = \sum_i \gt[M_i]\qquad
	\textstyle(\sum_i V_i)\ \paral\ \sV = \sum_i\  V_i\ \paral\ \sV\\[-6pt]
\end{gather*}
\end{minipage}
}
\\[3pt]
\hrulefill\\[4pt]
\centering
\textbf{Linear Substitution}\\[2ex]
\subfigure[Definition of linear substitution. In the abstraction case we assume wlog $x\neq y$.]{\label{fig:linsubst}
\begin{minipage}{\linewidth}
$$
\begin{array}{rcl}
y\lsubst{x}{N} &=&
\begin{cases}
N  & \textrm{if $y = x$,} \\
0  & \textrm{otherwise,} \\
\end{cases}\\[2ex]
\bag{L_1,\ldots,L_k}\lsubst{x}{N} &=&\Sigma_{i = 1}^k [L_1,\ldots,L_i\lsubst{x}{N}\ldots,L_k],\\[1ex]
\tau[L_1,\ldots,L_k]\lsubst{x}{N} &=&\Sigma_{i = 1}^k \tau[L_1,\ldots,L_i\lsubst{x}{N},\ldots,L_k],\\[1ex]
(MP)\lsubst{x}{N} &=& M\lsubst{x}{N}P + M(P\lsubst{x}{N}),\\[1ex]
\gto(V)\lsubst{x}{N} &=& \gto(V\lsubst{x}{N}),\\[1ex]
(\lam y.M)\lsubst{x}{N} &=& \lam y.M\lsubst{x}{N}.\\[1ex]
\end{array}
$$
\end{minipage}
}
\caption{\footnotesize Notations on sums and definition of linear substitution.}
\label{fig:statics2}
\end{figure}

\begin{conv} As a syntactic sugar -- and \emph{not} as actual syntax --
we extend all the constructors to sums by multilinearity, setting for instance
$$
	(\textstyle\sum_i M_i)(\textstyle\sum_j P_j) := \textstyle\sum_{i,j} M_iP_j,
$$ 
in such a way that the equations in Figure~\ref{fig:notsums} hold.
\end{conv}

This kind of meta-syntactic notation is discussed thoroughly in \cite{EhrhardR08}. 

\begin{rem}
In the  particular case of empty sums, we get
 $\lam x.0 \ass 0$, $M0 \ass 0$, $0P \ass 0$, $\gt[0] \ass 0$, $\gto(0) \ass 0$, $V\paral 0 \ass 0$, $[0] \ass 0$ and $0\mcup P \ass 0$.
 Therefore $0$ annihilates any term, bag or test (but not the sums).
\end{rem}

We now give some examples of this extended (meta-)syntax.

\begin{exa} We have:
\begin{enumerate}[1.] 
\item $\lam xy.(x + y)\ass \lam xy.x + \lam xy.y\ass \bold{T} + \bold{F}$,
\item $\lam x.(x + x)\ass \lam x.x + \lam x.x\ass \bold{I}+ \bold{I} = \bold{I}$ by sum idempotency,
\item $(x_1+x_2)[y] \ass x_1[y] + x_2[y]$,
\item $x[y_1 + y_2] \ass x([y_1] + [y_2])\ass x[y_1] + x[y_2]$, therefore:
\item $(x_1+x_2)[y_1 + y_2]\ass x_1[y_1] + x_1 [y_2] + x_2[y_1] + x_2[y_2]$.
\end{enumerate}
\end{exa}

In the following two definitions we make an essential use of the extended syntax.
We recall that an operator $F(-)$ is \emph{extended by linearity} by setting $F(\Sigma_{i} x_i) = \Sigma_{i} F(x_i)$.

\begin{defi}[Substitution] Let $A\in\Set{e}$ and $N\in\Set{\gto}$.
The \emph{(capture-free) substitution of $N$ for $x$ in $A$}, denoted by $A\subst{x}{N}$, is defined as usual.
Accordingly,  $A\subst{x}{\sN}$ denotes an expression of the extended  syntax.
Finally, we extend this operation to sums as in $\sA\subst{x}{\sN}$ by linearity in $\sA$.
\end{defi}

\begin{defi}[Linear Substitution] The \emph{linear (capture-free) substitution of $N$ for $x$ in $A$}, denoted by $A\lsubst{x}{N}$, is defined 
in Figure~\ref{fig:linsubst}.
The expression $A\lsubst{x}{\sN}$ belongs to the extended syntax. We extend this operation 
to sums as in $\sA\lsubst{x}{\sN}$ by linearity in $\sA$, as we did for usual substitution.
\end{defi}

Roughly speaking, the linear substitution $A\lsubst{x}{N}$ replaces 
\emph{exactly one} free occurrence of $x$ in $A$ with the term $N$.
If there is no occurrence of $x$ in $A$ then the result is 0.
In presence of multiple occurrences, all possible choices are made and the result is the sum of terms corresponding to them.

\begin{rem}
Observe that  $\sA\lsubst{x}{\sN}$ is linear in $\sA$ and in $\sN$, 
whereas  $\sA\subst{x}{\sN}$ is linear in $\sA$ but not in  $\sN$.
\end{rem}

We now give some examples of linear and classic substitution.

\begin{exa} Let $A\in\Set{e}$ and $M,N\in\Set{\gto}$.
\begin{enumerate}[1.]
\item If $A$ is closed, then $A\lsubst{x}{M} =0$,
\item $(\lam y.y[y][x])\lsubst{x}{\bold{I}} = \lam y.y[y][\bold{I}]$,
\item $(\lam y.y[x][x])\lsubst{x}{\bold{I}} = \lam y.y[\bold{I}][x] + \lam y.y[x][\bold{I}]$,
\item $(\lam y.y[x][x])\subst{x}{\bold{I}} = \lam y.y[\bold{I}][\bold{I}]$,
\item $(x[x])\subst{x}{(y + z)} =  y[y] + y[z] + z[y] + z[z]$.
\end{enumerate}
\end{exa}

Linear substitutions commute in the sense expressed by the next theorem, 
whose proof is rather classic and thus omitted.

\begin{thm}[Schwarz's Theorem, cf.\ \cite{EhrhardR03}]\label{thm:Schwarz}
For $\sA\in\sums{e}$, $\sM,\sN\in\sums{\gto}$ and $y\notin\FV(\sM)\cup\FV(\sN)$ we have:
$$
	\sA\lsubst{y}{\sM}\lsubst{x}{\sN} = \sA\lsubst{x}{\sN}\lsubst{y}{\sM} + \sA\lsubst{y}{\sM\lsubst{x}{\sN}}.
$$
In particular, if $x\notin\FV(\sM)$ the two substitutions commute.\qed
\end{thm}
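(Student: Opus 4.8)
The plan is to reduce the statement to single expressions by multilinearity and then argue by structural induction on the expression, the only genuinely delicate clause being application.

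First I would observe that both sides of the claimed identity are multilinear in the triple $(\sA,\sM,\sN)$: the operation $\sB\lsubst{z}{\sL}$ is linear in $\sB$ and in $\sL$ by the definition of linear substitution and its extension to sums, and $\sM\lsubst{x}{\sN}$ is in turn linear in $\sM$ and in $\sN$. Hence it suffices to prove
$$
	A\lsubst{y}{M}\lsubst{x}{N} = A\lsubst{x}{N}\lsubst{y}{M} + A\lsubst{y}{M\lsubst{x}{N}}
$$
for a single expression $A\in\Set{e}$ and single terms $M,N\in\Set{\gto}$. Throughout I assume $x\neq y$ and, using $\alpha$-equivalence, that every bound variable occurring in $A$ is distinct from $x,y$ and does not occur free in $M$ or $N$; this keeps all substitutions below capture-free and lets the clauses of Figure~\ref{fig:linsubst} commute with the binders.

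The base cases are the variable cases. If $A=z$ with $z\notin\{x,y\}$, all three substitutions annihilate it and both sides are $0$. If $A=x$, then $x\lsubst{y}{M}=0$ and $x\lsubst{y}{M\lsubst{x}{N}}=0$ (since $x\neq y$), while $x\lsubst{x}{N}\lsubst{y}{M}=N\lsubst{y}{M}=0$ because $y\notin\FV(N)$; so both sides vanish. If $A=y$, the left-hand side is $M\lsubst{x}{N}$, and on the right $y\lsubst{x}{N}=0$ kills the first summand while $y\lsubst{y}{M\lsubst{x}{N}}=M\lsubst{x}{N}$ supplies the second, so the two sides agree. This is precisely where the hypothesis $y\notin\FV(N)$ is essential; the remaining freshness conditions serve to keep the substitutions capture-free and are simply carried along the induction.

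For the inductive step, the clauses for $\lam y.(-)$ and $\gto(-)$ commute with linear substitution on the single immediate subexpression, so the induction hypothesis applies directly. For bags $\bag{L_1,\ldots,L_k}$ and tests $\gt[L_1,\ldots,L_k]$, expanding $A\lsubst{y}{M}\lsubst{x}{N}$ produces a ``diagonal'' sum $\sum_i\bag{\ldots,(L_i\lsubst{y}{M})\lsubst{x}{N},\ldots}$ together with an ``off-diagonal'' sum $\sum_{i\neq j}\bag{\ldots,L_i\lsubst{y}{M},\ldots,L_j\lsubst{x}{N},\ldots}$; the off-diagonal sum is invariant under swapping the two substitutions (relabel $i\leftrightarrow j$), hence matches the off-diagonal part of $A\lsubst{x}{N}\lsubst{y}{M}$, while the diagonal part is handled termwise by the induction hypothesis on each $L_i$ together with multilinearity of the constructor in position $i$.

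The crux, which I expect to be the main bookkeeping obstacle, is the application case $A=RP$, where the product rule $(RP)\lsubst{z}{L}=R\lsubst{z}{L}\,P+R\,(P\lsubst{z}{L})$ creates cross terms. Applying it twice on the left gives
$$
	(RP)\lsubst{y}{M}\lsubst{x}{N}=R\lsubst{y}{M}\lsubst{x}{N}\,P+R\lsubst{y}{M}\,(P\lsubst{x}{N})+R\lsubst{x}{N}\,(P\lsubst{y}{M})+R\,(P\lsubst{y}{M}\lsubst{x}{N}),
$$
and the induction hypothesis applied to $R$ and to $P$ rewrites the first and last summands into a commuted part plus a correction part. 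Expanding the right-hand side $(RP)\lsubst{x}{N}\lsubst{y}{M}+(RP)\lsubst{y}{M\lsubst{x}{N}}$ by the same product rule and comparing the resulting six summands, one checks that the mixed terms $R\lsubst{y}{M}(P\lsubst{x}{N})$ and $R\lsubst{x}{N}(P\lsubst{y}{M})$ coincide and that the two occurrences of the correction term (one from $R$, one from $P$) land in their expected places, yielding equality. Finally, the ``in particular'' clause is immediate: if $x\notin\FV(\sM)$ then $\sM\lsubst{x}{\sN}=0$, so the correction term $\sA\lsubst{y}{\sM\lsubst{x}{\sN}}$ vanishes and the two iterated substitutions commute.
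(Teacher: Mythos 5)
Your proof is correct. The paper itself omits the argument, declaring it ``rather classic'' and deferring to Ehrhard--Regnier, and what you give is exactly that classic proof: reduce to single expressions by multilinearity, then induct on the structure of $A$, with the variable cases isolating where $y\notin\FV(N)$ is used, the bag/test cases split into diagonal and off-diagonal parts, and the application case resolved by the Leibniz product rule plus the induction hypothesis on both factors.
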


\begin{nota}\
\begin{iteMize}{$\bullet$}
\item Given a bag $P=[L_1,\ldots,L_k]$ and $x\notin\FV(P)$ we set $\sA\lsubst{x}{P}\ass \sA\lsubst{x}{L_1}\cdots\lsubst{x}{L_k}$.

\item Given bags $P_1,\ldots,P_n$ and $\seq x\notin\FV(\seq P)$ we set $\sA\lsubst{\seq x}{\seq P}\ass \sA\lsubst{x_1}{P_1}\cdots\lsubst{x_n}{P_n}$.
\end{iteMize}
In particular, $\sA\lsubst{x}{[]} = \sA$.
\end{nota}

The above notation $\sA\lsubst{x}{P}$ makes sense because, by Theorem~\ref{thm:Schwarz}, the expression $\sA\lsubst{x}{L_1}\cdots\lsubst{x}{L_k}$ 
 is actually independent from the enumeration of $L_1,\ldots,L_k$ in $P$. Moreover recall that we use $\ga$-equivalence, so that bound variables can 
 be renamed in order to avoid capture of free variables during substitution.

\subsection{The Operational Semantics}\label{subs:OpSem}

In this section we are going to introduce the reduction rules defining the operational semantics of the \dzlam-calculus with tests.

\begin{defi}\label{def:reductionrules} The \emph{reduction semantics} of the \dzlam-calculus with tests is generated by the 
rules in Figure~\ref{fig:dzlamopsem}. \begin{figure}[t]
\centering
\textbf{Reduction Semantics}\\[2ex]
\subfigure[Reduction rules.
In the $(\beta)$ rule we assume wlog $x\not\in\FV(P)$.]{
\label{fig:dzlamopsem}
\begin{minipage}{\linewidth}
\begin{center}
\begin{tabular}{lcl}
$(\lam x.M)P\to_{\beta} M\lsubst{x}{P}\subst{x}{0}$,&\qquad\qquad&
$\gto(V)P\to_{\gto} 
\begin{cases}
\gto(V)  & \textrm{if $P= []$,} \\
0  & \textrm{otherwise,} \\
\end{cases}$\\
\\
$\gt[\lam x.M]\paral V \to_{\gt} \gt[M\subst{x}{0}]\paral V$,&&
$\gt[\gto(V)]\paral W  \to_{\gamma} V\paral W $.\\~\\
\end{tabular}
\end{center}
\end{minipage}
}
\\[3pt]
\hrulefill\\[4pt]
\centering
\textbf{Context Closure}\\[2ex]
\subfigure[Context closure of a relation $\texttt{R}\subseteq \Set{e}\times\sums{e}$.]{\label{fig:context_closure}
\begin{minipage}{\linewidth}
\begin{gather*}
\infer[\mathrm{lam}]{\lam x. M \rel{R} \lam x.\sM}{M\rel{R} \sM}
\qquad\qquad\qquad
\infer[\mathrm{appl}]{M\sP\rel{R}  \sM \sP}{M \rel{R}  \sM}
\qquad\qquad\qquad
\infer[\mathrm{appr}]{\sM P\rel{R} \sM \sP}{P\rel{R} \sP}\\
\infer[\mathrm{linres}]{\bag{M} \mcup \sP \rel{R} \bag{\sM}\mcup \sP}{M\rel{R} \sM}
\qquad\qquad
\infer[\mathrm{taubar}]{\gto(V) \rel{R} \gto(\sW)}{V\rel{R} \sW}
\qquad\qquad
\infer[\mathrm{tau}]{\gt[M]\paral V \rel{R} \gt[\sM]\paral V}{M\rel{R} \sM}\\
\infer[\mathrm{sum}]{A + \sB \rel{R} \sA + \sB}{A \rel{R} \sA}\\[-1ex]
\end{gather*}
\end{minipage}
}
\caption{\footnotesize Operational Semantics.}
\end{figure}
\end{defi}

The reduction preserves the sort of an expression in the sense that terms rewrite to (sums of) terms and tests to (sums of) tests.

The left side of a reduction rule in Figure~\ref{fig:dzlamopsem} is called a \emph{redex} while the right side is its \emph{contractum}.
Redexes are classified, depending on their kind, as follows.

\begin{defi}\label{def:redex}\hfill
\begin{iteMize}{$\bullet$}
\item
	A \emph{term-redex} is any term of the form $(\lam x.M)P$ or $\gto(V)P$.
\item
	A \emph{test-redex} is any test of the form $\gt[\lam x.M] \paral V$ or $\gt[\gto(V)]\paral W$.
\end{iteMize}
\end{defi}

The following remark gives a more explicit characterization of a $\beta$-contractum. 
Remember that the degree of $x$ in $M$ has been defined in Definition~\ref{def:degree}.

\begin{rem}\label{rem:beta-contr}
If $M$ has $k$ free occurrences of $x$ (represented by $x^1,\dots,x^k$) then we have 
$$ M\lsubst{x}{L_1}\cdots\lsubst{x}{L_k}\subst{x}{0} = \Sigma_{\sigma \in \perm{k}} M\{L_{\sigma(1)}/x^1,\ldots,L_{\sigma(k)}/x^k\}; $$
if $\dg{x}{M} \neq k$, then $M\lsubst{x}{L_1}\cdots\lsubst{x}{L_k}\subst{x}{0} = 0$.
\end{rem}

From Remark~\ref{rem:beta-contr} it is clear that, because of the presence of linear substitution, 
the $\beta$-reduction is a relation from terms to sums of terms, namely $\to_\beta\ \subseteq\ \Set{\gto}\times\sums{\gto}$.

\begin{defi} \
\begin{enumerate}[1.]
\item The \emph{contextual closure} of a relation $\texttt{R}\subseteq\Set{e}\times\sums{e}$ is 
the smallest relation in $\sums{e}\times\sums{e}$ containing $\texttt{R}$ and respecting the rules of Figure~\ref{fig:context_closure}.
\item The \emph{reduction} $\to\ \subseteq\ \sums{e}\times\sums{e}$ is the contextual closure of $\to_{\beta}\cup\to_{\gto}\cup\to_{\gt}\cup\to_{\gamma}$.
\item The \emph{multistep reduction} $\msto\ \subseteq\ \sums{e}\times\sums{e}$ is the transitive and reflexive closure of $\to$.
\end{enumerate}
\end{defi}

We now provide some examples of reduction.
Note that parallel composition is treated asynchronously, indeed $V\to \sV$ entails $V\paral W\to \sV\paral W$. 

\begin{exa}\label{ex:beta-reduction}\
\begin{enumerate}[1.] 
\item $\bold{D}[\bold{I},\bold{F}]\to_\beta \bold{I}[\bold{F}] + \bold{F}[\bold{I}]\to_\beta \bold{F} + \bold{F}[\bold{I}]\to_\beta \bold{F}$,
\item $\gt[\bold{D}[\bold{D},\bold{D}]]\to_\beta \gt[\bold{D}[\bold{D}]] \to_\beta 0$,
\item $\gt[\bold{I}[\gto(\varepsilon)],\bold{T}[\gto(\varepsilon)]]\to_\beta 
	  \gt[\gto(\varepsilon),\bold{T}[\gto(\varepsilon)]]\to_\beta
	  \gt[\gto(\varepsilon),\lam x.\gto(\varepsilon)]\to_\tau
	  \gt[\gto(\varepsilon),\gto(\varepsilon)]\to_\gamma
	  \gt[\gto(\varepsilon)]\to_\gamma
	  \varepsilon$,  
\item\label{ex:beta-reduction4} $\Xi_{n_1,\ldots,n_m}[\underbrace{\bold{I},\dots,\bold{I}}_{n_1}]\cdots[\underbrace{\bold{I},\dots,\bold{I}}_{n_m}]\msto_\beta \bold{I}$, for all $n_1,\dots,n_m\in\nat$.
\end{enumerate}
\end{exa}

\begin{defi}\label{def:nf}
An expression $A$ is \emph{in normal form} if there is no $\sB$ such that $A\to \sB$.
A sum of expressions $\sA$ \emph{is in normal form} if all its summands are in normal form.
\end{defi}

From Definition~\ref{def:nf} we have that $0$ is in normal form.

The following lemma gives an explicit characterization of terms in normal form.

\begin{lem} If a term $M\in\Set{\gto}$ is in normal form then
\begin{enumerate}[1.] 
\item either $M = \lam \seq x.yP_1\cdots P_n$ for some $n\ge 0$ and each $P_i$ is a bag of terms in normal form, 
\item or $M = \lam \seq x.\gto(\parallel_{i=1}^n\gt[y_i P_{i,1}\cdots P_{i,k_i}])$
where $n\ge 0$, $k_i\ge 0$ and each $P_{i,j}$ is a bag of terms in normal form.\qed
\end{enumerate}
\end{lem}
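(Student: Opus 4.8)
The plan is to prove, by induction on the size of the expression, a statement that simultaneously characterizes terms in normal form (the two shapes in the claim) and tests in normal form. Concretely, I would strengthen the statement with the auxiliary clause that \emph{a test $V\in\Set{\gt}$ in normal form has the shape $\parallel_{i=1}^{n}\gt[y_iP_{i,1}\cdots P_{i,k_i}]$, with every $P_{i,j}$ a bag of normal-form terms}; this is exactly the test occurring inside shape~2, and proving terms and tests together is forced, since $\gto(\cdot)$ turns a test into a term while a test is built from terms. Throughout I would use the trivial fact, read off from the \textrm{linres} rule, that a bag $[L_1,\dots,L_k]$ is in normal form iff each $L_i$ is; so ``bag of terms in normal form'' just means ``normal-form bag''.

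For the term statement I would first peel off the leading abstractions: by the \textrm{lam} rule, $\lam x.M'$ is in normal form iff $M'$ is, and an abstraction is never itself a redex, so it suffices to analyze a normal-form term $N$ that is not an abstraction; prepending the $\lam\seq x$ afterwards preserves both shapes. If $N=y$ we are in shape~1 with $n=0$. If $N=M'P$, the \textrm{appl} and \textrm{appr} rules force $M'$ and $P$ to be in normal form, and the absence of a term-redex (Definition~\ref{def:redex}) forbids $M'$ from being an abstraction or a $\gto(\cdot)$; by the induction hypothesis the only remaining possibility for such an $M'$ is shape~1 with no leading abstraction, i.e.\ $M'=yP_1\cdots P_{n-1}$, whence $N=yP_1\cdots P_n$ is again shape~1. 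If $N=\gto(V)$, the \textrm{taubar} rule forces $V$ to be a normal-form test, and the induction hypothesis for tests delivers exactly the test appearing in shape~2.

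The crux is the test clause, and it is where the asymmetry between bags and tests matters. Writing $V=\parallel_{i=1}^{k}\gt[L_i]$ and using that $\paral$ is commutative (tests are multisets), any component $\gt[L_i]$ can be brought to the front, so: the \textrm{tau} rule forces each $L_i$ to be a normal-form term; the $\to_{\gt}$ rule forbids any $L_i$ from being an abstraction; and the $\to_\gamma$ rule forbids any $L_i$ from being a $\gto(\cdot)$. Thus each $L_i$ is a normal-form term that is neither an abstraction nor a throw, so by the induction hypothesis it is shape~1 with no leading abstraction, namely $L_i=y_iP_{i,1}\cdots P_{i,k_i}$; this is precisely the announced shape of $V$. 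The only point requiring care --- the ``main obstacle'', such as it is --- is exactly this contrast: inside a \emph{test} the constructors $\lam$ and $\gto$ create the $\gt$- and $\gamma$-redexes and are therefore excluded from the elements, whereas inside a \emph{bag} no such rule fires, so bags may legitimately contain arbitrary normal forms; keeping this distinction straight, and invoking commutativity of $\paral$ to reduce an \emph{arbitrary} component rather than only the displayed one, is what makes the two shapes come out correctly.
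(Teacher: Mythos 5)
Your proof is correct. The paper in fact states this lemma without proof (the \qed is attached directly to the statement), so there is nothing to compare against; your argument --- a mutual induction on size covering terms and tests simultaneously, peeling off abstractions, and using commutativity of $\paral$ so that the $\gt$-, $\gamma$- and tau-rules exclude abstractions and throws from \emph{any} element of a test while bags remain unconstrained --- is exactly the routine case analysis the authors left implicit, and it is complete.
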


\subsection{Operational properties.} In this subsection we show that the  \dzlam-calculus enjoys Church-Rosser 
and strong normalization, even in the untyped version of the calculus.

The proof of strong normalization is purely combinatorial, based on a measure given in the following definition.

\begin{defi} The \emph{size} of an expression $A$, written $\size{A}$, is defined by induction as follows:
\begin{iteMize}{$\bullet$}
\item $\size{x} = 1$,
\item $\size{\lam y.M} = \size{M} +1$, 
\item $\size{MP} = \size{M} + \size{P} + 1$,
\item $\size{\gto(V)} = \size{V} + 1$,
\item $\size{\bag{L_1, \ldots, L_k}} = \sum_{i=1}^{k} \size{L_i} + 1$,
\item $\size{\gt[L_1, \ldots, L_k]} = \sum_{i=1}^{k} \size{L_i} + 1$.
\end{iteMize}
The \emph{size} of a sum of expressions $\sA=A_1+\cdots+A_k$, written $\msize{\sA}$, is the multiset of the sizes of the summands of $\sA$,
 namely $\msize{\sA} = [\size{A_1},\ldots,\size{A_k}]$.
\end{defi}

The intuition behind strong normalization is that $\msize \sA$ becomes smaller by replacing one (or more) of its elements by an arbitrary number of smaller elements, i.e.,
with respect to the multiset ordering $\mge$ induced on $\Mfin{\nat}$ by the usual order $>$ of $\nat$.
It is well known that $\mge$ is well-founded.

\begin{thm}\label{thm:CR+SN} 
The \dzlam-calculus with tests is strongly normalizing and Church-Rosser.
\end{thm}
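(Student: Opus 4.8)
The plan is to prove strong normalization and Church--Rosser separately, deriving the latter from the former by Newman's lemma. For \textbf{strong normalization}, I would show that every single $\to$-step strictly decreases the measure $\msize{\cdot}$ with respect to the well-founded multiset order $\mge$ on $\Mfin{\nat}$. Since the $\mathrm{sum}$ rule exhibits any step $\sA\to\sA'$ as the rewriting of one summand $A$ of $\sA=A+\sB$ into a sum $\sum_j B_j$ with $\sB$ untouched, it suffices to prove that whenever a single expression $A\to\sum_j B_j$ in one step, every summand satisfies $\size{B_j}<\size A$: then the element $\size A$ of $\msize{\sA}$ is replaced by finitely many strictly smaller ones, so $\msize{\sA}\mge\msize{\sA'}$ strictly, and well-foundedness of $\mge$ forbids infinite reductions. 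The base cases are direct size computations on the four rules. For $(\beta)$, Remark~\ref{rem:beta-contr} writes each summand as $M\{L_{\sigma(1)}/x^1,\dots,L_{\sigma(k)}/x^k\}$ with $k=\dg{x}{M}=\card P$, of size $\size M-k+(\size P-1)$, strictly below $\size{(\lam x.M)P}=\size M+\size P+2$ (the contractum being $0$ otherwise). The rules $(\gto)$, $(\gt)$, $(\gamma)$ are immediate, using for $(\gt)$ that $M\subst{x}{0}$ is either $0$ or a single term of size at most $\size M$. The contextual-closure cases then follow by induction, because every constructor adds a fixed positive offset to $\size{\cdot}$ and the multilinear conventions of Figure~\ref{fig:notsums} distribute a reduction-in-context over the summands it produces, preserving the strict inequality summand-wise.

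For \textbf{Church--Rosser}, with strong normalization available it is enough, by Newman's lemma, to establish local confluence: if $\sA\to\sB_1$ and $\sA\to\sB_2$, then $\sB_1\msto\sC$ and $\sB_2\msto\sC$ for some $\sC$. I would argue by analysing the two contracted redex-occurrences. If they lie in different summands, or in disjoint positions of the same summand, the two contractions commute, each closing in a single step. Crucially, the left-hand sides of the four rules never overlap at a common root position: a term is at most one of $(\lam x.M)P$ or $\gto(V)P$ since the heads differ, and inside a test a $\lam$-headed element and a $\gto$-headed element occupy distinct positions of the multiset, so the $(\gt)$- and $(\gamma)$-redexes never genuinely coincide. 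Hence the only remaining overlaps are \emph{nestings}: a redex sitting inside the body $M$ or the bag $P$ of a $\beta$-redex, and the analogous situation for the $\lam$-body affected by the $\subst{x}{0}$ of the $(\gt)$-rule.

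These nesting diagrams are where the real work lies, and I expect them to be the main obstacle. Contracting the outer $\beta$-redex distributes the linear resources of $P$ over the occurrences of $x$ and copies each inner redex, as a residual, into every summand; contracting the inner redex first and then the outer one must reach the same sum. The key tool to close these diagrams is Schwarz's theorem (Theorem~\ref{thm:Schwarz}), which guarantees that the linear substitutions commute, so that the order of contracting nested redexes is immaterial up to $\msto$; the linearity conventions of Figure~\ref{fig:notsums} handle the distribution over sums, and idempotency of $+$ can only merge summands and hence never obstructs joinability. Combining the commuting disjoint case with this residual analysis for nestings yields local confluence, and Newman's lemma then upgrades it to full confluence, completing the proof.
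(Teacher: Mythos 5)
Your proposal is correct and follows essentially the same route as the paper: strong normalization via the strict decrease of $\msize{\cdot}$ under the well-founded multiset order, and Church--Rosser via local confluence plus Newman's lemma. The paper states both halves in two sentences and leaves the details (the per-rule size computations and the critical-pair analysis you sketch) to the reader, so your elaboration is a faithful expansion rather than a different proof.
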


\begin{proof} 
The fact that there are no infinite reduction chains is trivial, since every reduction step decreases the size of an expression. In other words $\sA \to \sB$ entails
 $\msize{\sA}\mge \msize{\sB}$.

For the Church-Rosser property just check local confluence and conclude by Newman's lemma.
\end{proof}

The following lemma formalizes our intuition behind the behaviour of the cork $\gt(\cdot)$.
As a corollary we get that a closed test can only reduce either to $\varepsilon$ or to 0.

\begin{lem}\label{lem:closed-go-to-eps-zero}
For any closed term $M$,  either $\gt[M]\msto\varepsilon$ or $\gt[M]\msto 0$.
\end{lem}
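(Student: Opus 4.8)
The plan is to reduce the statement, via strong normalization and confluence, to a purely syntactic fact about closed normal forms. Since $M$ is closed, so is the test $\gt[M]$, and by Theorem~\ref{thm:CR+SN} it has a unique normal form $\sV\in\sums{\gt}$. First I would record two easy invariants of $\to$: reduction preserves the sort (so $\sV$ is genuinely a sum of tests), and reduction never introduces free variables — indeed, inspecting the four rules of Figure~\ref{fig:dzlamopsem}, each contractum has free variables among those of its redex, because the substitutions involved are capture-free and substitute either a bag already present or $0$. Hence every summand of $\sV$ is a \emph{closed} test in normal form (recall from Definition~\ref{def:nf} that a sum is normal precisely when all its summands are). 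It therefore suffices to prove the claim: the only closed test in normal form is $\varepsilon$.

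For the claim I would argue by contradiction. Let $\gt[L_1,\ldots,L_k]$ be a closed normal test with $k\ge 1$. Using the multiset presentation of tests we may write, for each $i$, $\gt[L_1,\ldots,L_k]=\gt[L_i]\paral\gt[L_1,\ldots,\widehat{L_i},\ldots,L_k]$, so the absence of any test-redex forces every $L_i$ to be a normal-form term that is neither a $\lambda$-abstraction (else a $\to_{\gt}$ redex would appear) nor of the shape $\gto(W)$ (else a $\to_{\gamma}$ redex would appear). By the preceding lemma characterizing terms in normal form, $L_i$ is either $\lam\seq x.yP_1\cdots P_n$ or $\lam\seq x.\gto(\cdots)$; excluding abstractions forces the prefix $\seq x$ to be empty, and excluding throws leaves only $L_i=yP_1\cdots P_n$. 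But then the head variable $y$ occurs free in $L_i$, hence free in $\gt[L_1,\ldots,L_k]$, contradicting closedness. Thus $k=0$ and the test is $\varepsilon$.

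Combining the two observations, every summand of $\sV$ equals $\varepsilon$, so $\sV$ is either the empty sum $0$ or a nonempty sum of copies of $\varepsilon$; by idempotency of the $\bool$-module $\sums{\gt}$ the latter collapses to $\varepsilon$. Hence $\gt[M]\msto\varepsilon$ or $\gt[M]\msto 0$, as required. I expect the only real content to be the claim of the second paragraph — namely the observation that a closed $\beta/\gto/\gt/\gamma$-normal term sitting inside a cork must expose a free head variable, and so cannot survive in a closed normal test; the surrounding steps (preservation of closedness and sort, and the idempotent collapse of the final sum) are routine bookkeeping.
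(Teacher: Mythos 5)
Your proof is correct and rests on exactly the same two ingredients as the paper's own argument, namely strong normalization with confluence (Theorem~\ref{thm:CR+SN}) and the characterization of normal-form terms, so it is essentially the same proof. The only difference is organizational: the paper first normalizes $M$ inside the cork and explicitly reduces each resulting $\gt[M_i]$ to $\varepsilon$ or $0$, whereas you normalize $\gt[M]$ as a whole and isolate the clean intermediate fact that $\varepsilon$ is the only closed test in normal form; the underlying case analysis (head variable annihilated by the substitution of $0$ for the abstracted variables, versus an empty parallel composition yielding $\varepsilon$) is the same in both.
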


\begin{proof} As \dzlam-calculus with tests is strongly normalizing, 
we have that $M\msto \Sigma_{i = 1}^k M_i$,
where each $M_i$ is a closed  normal form. 
If $k =0$ then $\gt[M]\msto 0$ since $\gt[0] = 0$.
Otherwise for each $M_i$ there are two possibilities:
\begin{iteMize}{$\bullet$}
\item $M_i = \lam\seq x.x_j  P_1\cdots P_n$ with $x_j\in\seq x$ and $n\ge 0$.
	Then $\gt[M_i] \msto \gt[(x_j P_1\cdots P_n)\subst{\seq x}{0}] = \gt[0] = 0$.
\item $M_i = \lam \seq x.\gto(\parallel_{j=1}^n \gt[x_j P_{j,1}\cdots P_{j,k_j}])$ with $n\ge 0$ and $x_j\in\seq x$.
	If $n = 0$ then we have $\parallel_{j=1}^n \gt[x_j P_{j,1}\cdots P_{j,k_j}] = \varepsilon$ and 
	$\gt[\lam \seq x.\gto(\varepsilon)]\msto \gt[\gto(\varepsilon)]\to
	\varepsilon$.
	If $n > 0$ then $\gt[M_i]\msto \gt[\gto(\parallel_{j=1}^n \gt[0 P_{j,1}\subst{\seq x}{0}\cdots P_{j,k_j}\subst{\seq x}{0}])] = 0$.
\end{iteMize}
We conclude since $\gt[M]\msto \Sigma_{i = 1}^k \gt[M_i]$, and this latter 
expression reduces to a finite (possibly empty)  sum of $\varepsilon$'s, which is 
thus equal either to $0$ or to $\varepsilon$.
\end{proof}

\begin{cor}\label{cor:closed-tests-go-to-eps-zero}
If $V$ is a closed test then either $V\msto\varepsilon$ or $V\msto 0$. 
\end{cor}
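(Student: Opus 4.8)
The plan is to reduce the statement for an arbitrary closed test to the single-term case already settled by Lemma~\ref{lem:closed-go-to-eps-zero}. First I would write the closed test in the flattened form $V = \gt[L_1,\ldots,L_k]$ and, using the parallel-composition notation of Figure~\ref{fig:notpar}, observe that $V = \parallel_{i=1}^k \gt[L_i]$. Since $\FV(V) = \bigcup_{i=1}^k \FV(L_i) = \emptyset$, each $L_i$ is closed, so each $\gt[L_i]$ is a closed single-term test; Lemma~\ref{lem:closed-go-to-eps-zero} then gives, for every $i$, either $\gt[L_i] \msto \varepsilon$ or $\gt[L_i] \msto 0$. Write $X_i \in \{\varepsilon,0\}$ for the corresponding outcome.

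Next I would lift these component reductions to the whole parallel composition. The tool is the asynchronous behaviour of parallel composition recorded just after Example~\ref{ex:beta-reduction}, namely that $V' \to \sV'$ entails $V' \paral W \to \sV' \paral W$ (this is immediate from the $\mathrm{tau}$, $(\gt)$ and $(\gamma)$ rules, since each carries a spectator component that can absorb $W$). Iterating this along each multistep reduction $\gt[L_i] \msto X_i$ and over the $k$ components yields $V \msto \parallel_{i=1}^k X_i$, the reductions of distinct components being independent.

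Finally I would conclude by inspecting the $X_i$. If $X_i = \varepsilon$ for all $i$, then by neutrality of $\varepsilon$ for $\paral$ we have $\parallel_{i=1}^k X_i = \varepsilon$, so $V \msto \varepsilon$; this also covers the degenerate case $k = 0$, where $V = \varepsilon$ from the outset. If instead $X_j = 0$ for some $j$, then since $0$ is absorbing for parallel composition (the identity $W \paral 0 = 0$ noted in the remark on empty sums), $\parallel_{i=1}^k X_i = 0$, whence $V \msto 0$.

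I expect the only genuinely delicate point to be the justification of the componentwise reduction, i.e.\ that $\gt[L_i] \msto X_i$ may be performed in parallel with the remaining components; but this is precisely the asynchrony property above, obtained from the $\mathrm{tau}$ context rule of Figure~\ref{fig:context_closure}. Everything else is bookkeeping with the monoid laws of $\paral$ (commutativity, neutrality of $\varepsilon$, absorption of $0$), so no further difficulty is anticipated.
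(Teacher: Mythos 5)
Your proof is correct and follows exactly the route the paper intends for this corollary (which it leaves implicit): decompose the closed test as $V=\gt[L_1]\paral\cdots\paral\gt[L_k]$, apply Lemma~\ref{lem:closed-go-to-eps-zero} to each closed single-term component, lift the component reductions via the asynchrony of $\paral$, and conclude using neutrality of $\varepsilon$ and absorption by $0$. No gaps; the delicate point you flag is indeed covered by the stated asynchrony property and the spectator components in the $(\gt)$, $(\gamma)$ and $\mathrm{tau}$ rules.
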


Therefore, it makes sense to define the convergence of a test as follows.

\begin{defi}\label{def:convergence-no-promotion}
A test $V$ \emph{converges} if and only if $V \msto \varepsilon$.
\end{defi}

It is easy to check that a test $V$ can converge only if it is closed; 
indeed, a free variable $x$ occurring in $V$ cannot be erased during the reduction.

\subsection{\bf Operational Pre-order.} A \emph{term-context} $D\hole{\cdot}$ is a term having one occurrence 
of a \emph{hole}, denoted by $\hole{\cdot}$, appearing in term-position;
a \emph{test-context} $C\hole{\cdot}$ is a test having one occurrence 
of a \emph{hole}, still appearing in term-position.

\begin{defi} Term-contexts $D\hole{\cdot}$ and test-contexts $C\hole{\cdot}$ are defined by the following grammar:
\begin{gather*}
	D\hole{\cdot}\gramm \hole{\cdot} \mid \lam x.D \mid DP \mid M[D,\seq L] \mid \gto(C)\\
	C\hole{\cdot}\gramm \gt[D,\seq L]
\end{gather*}
The set of term-contexts is denoted by $\Set{\gto}_{\hole{\cdot}}$ 
and the set of test-contexts by $\ContSet$.
\end{defi}

Given $M\in\Set{\gto}$ we indicate by $C\hole{M}$ the test resulting 
by blindly replacing $M$ for the hole (allowing capture of free variables) in $C\hole{\cdot}$.
Similarly, given a term-context $D\hole{\cdot}$, $D\hole{M}$ denotes the term obtained 
by blindly substituting $M$ for the hole in $D\hole{\cdot}$.

\begin{exa}\
\begin{enumerate}[1.] 
\item Let $D\hole{\cdot} \ass \lam xy.\hole{\cdot}$, then $D\hole{x[x][y]} = \lam xy.x[x][y]$,
\item let $D\hole{\cdot} \ass \hole{\cdot}[\gto(\varepsilon),\bold{I}]$, then $D\hole{\bold{D}} = \bold{D}[\gto(\varepsilon),\bold{I}],
		\to \gto(\varepsilon)[\bold{I}] + \bold{I}[\gto(\varepsilon)] \to_\beta  \bold{I}[\gto(\varepsilon)]\to_\beta \gto(\varepsilon)$,
\item the simplest test context is $C\hole{\cdot} \ass \gt[\hole{\cdot}]$, 
we have $C\hole{\bold{I}} = \gt[\bold{I}]\to_{\tau} \gt[x\subst{x}{0}] = 0$,
\item let $C\hole{\cdot} \ass \gt[(\lam x.\hole{\cdot})[\gto(\varepsilon)]]$ then $C\hole{x} = \gt[\bold{I}[\gto(\varepsilon)]]\to \gt[\gto(\varepsilon)]\to \varepsilon$.
\end{enumerate}

\end{exa}

We say that a test-context $C\hole{\cdot}$ (resp.\ a term-context $D\hole{\cdot}$) is \emph{closed} if it contains no free variable;
it is \emph{closing $M$} if  $C\hole{M}$ (resp.\ $D\hole{M}$) is closed.

\begin{defi} \label{def:obsle}
The \emph{operational pre-order} $\Tobsle$ on the \dzlam-calculus with tests is defined as follows (for all $M,N \in\Set{\gto}$):
$$
	M\Tobsle N \iff \forall C\hole{\cdot}\in\ContSet\textrm{ closing $M,N$ }(C\hole{M}\msto \varepsilon\ \imp C\hole{N}\msto \varepsilon).
$$
We set $M\Tobseq N$ iff $M\Tobsle N$ and $N\Tobsle M$.
\end{defi}

This coincides with a standard idea of operational preorder. 
The restriction of observations to test-contexts deserves however a discussion. 
First, note that tests provide a canonical notion of observation since -- by design -- they either converge (to $\varepsilon$) or reduce to 0. 
Hence, the choice of test-convergence as the basic observation in our calculus is very natural.

A second motivation comes {\em a posteriori}. Indeed, as we will prove in Section~\ref{sec:FA-notests-nobang} (Theorem~\ref{thm:main1-1}), 
for test-free terms $M,N$ we have $M\Tobsle N$ exactly when, for all test-free term-contexts $D\hole{\cdot}$, $D\hole{M}$ is solvable entails $D\hole{N}$ is solvable 
(the notion of solvability for test-free terms is given in Definition~\ref{def:dzlam-solvable}).

\section{A Relational Semantics}\label{sec:rela-sem}	                         %

This section is devoted to build a relational model $\cD$ of \dzlam-calculus with tests,
that has been first introduced in \cite{BucciarelliEM07} as a model of the ordinary \lam-calculus. 

We first give a sketchy presentation of the Cartesian closed category where $\cD$ lives. 
We recall that the definitions and notations concerning multisets have been introduced in Subsection~\ref{subs:msets}.

\subsection{The Category $\MRel$}
The category $\MRel$ is the co-Kleisli category for the finite-multiset comonad on the category $\bold{Rel}$ of sets and relations.

This category can be described directly as follows:
\begin{iteMize}{$\bullet$}
\item 
    The objects of $\MRel$ are all the sets.
\item 
    A  morphism from $S$ to $T$ is a relation from $\Mfin S$ to $T$; in other words, $\MRel(S,T)=\Pow{\Mfin S\times T}$.
\item 
    The identity of $S$ is the relation $\Id{S}=\{([\alpha],\alpha)\st \alpha\in S\} : S\to S$.
\item 
    The composition of $s : S\to T$ and $t : T\to U$ is defined by:
    $$
    \begin{array}{ll}
    t\comp s=\{(a,\beta)\quad \st&\exists k\in\nat,\ \exists (a_1,\alpha_1),\dots,(a_k,\alpha_k)\in s\textrm{ such that } \\
					       &a = a_1\mcup\dots\mcup a_k\ \text{and}\ ([\alpha_1,\dots,\alpha_k],\beta)\in t\qquad \}.\\
    \end{array}
    $$
\end{iteMize}

\noindent Given two sets $S,T$, we denote by $\With{S}{T}$ their disjoint union $(\{1\}\times S) \cup (\{2\}\times T)$. 
Hereafter we adopt the following convention.

\begin{conv} 
We consider the canonical bijection between $\Mfin{S}\times\Mfin{T}$ and $\Mfin{\With{S}{T}}$ 
as an equality.
Therefore, we will still denote by $(a_1,a_2)$ the corresponding element of $\Mfin{\With{S}{T}}$.
\end{conv}

\begin{thm}\label{thm:MRel-ccc} The category $\MRel$ is a Cartesian closed category.
\end{thm}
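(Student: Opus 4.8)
The plan is to exhibit the Cartesian closed structure of $\MRel$ by hand, since the explicit description of products and function spaces will be needed repeatedly in the sequel; I note in passing that one could instead observe that $\bold{Rel}$ with the finite-multiset comonad $\Mfin{-}$ is a Seely category and invoke the general fact, recalled in Section~\ref{sec:thomas}, that the co-Kleisli category of such a category is Cartesian closed with $\with$ as product and $\Funint{S}{T}$ as function space object. I follow the concrete route, treating the category axioms for $\MRel$ as already granted by its description as a co-Kleisli category.

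First I would treat the terminal object and products. The empty set is terminal: $\MRel(S,\emptyset)=\Pow{\Mfin{S}\times\emptyset}=\Pow{\emptyset}$ is a singleton, so the empty relation is the unique morphism $S\to\emptyset$. For binary products I take the disjoint union $\With{S}{T}$, using the stated convention $\Mfin{S}\times\Mfin{T}\cong\Mfin{\With{S}{T}}$. I define $\Proj{1}=\{(([\alpha],[]),\alpha)\st\alpha\in S\}\in\MRel(\With{S}{T},S)$ and symmetrically $\Proj{2}$, and for $s\in\MRel(U,S)$, $t\in\MRel(U,T)$ the pairing $\langle s,t\rangle=\{(a,(1,\alpha))\st(a,\alpha)\in s\}\cup\{(a,(2,\beta))\st(a,\beta)\in t\}$. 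Unfolding the composition formula, $\Proj{1}\comp\langle s,t\rangle$ forces every pair drawn from $\langle s,t\rangle$ to land in the $S$-summand and every component multiset to be a singleton, which collapses the decomposition and yields exactly $s$; likewise for $\Proj{2}$, and uniqueness of the mediating morphism is then immediate.

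Next I would establish the closed structure with function space $\Funint{S}{T}=\Mfin{S}\times T$. The evaluation is $\eval=\{(([(a,\beta)],a),\beta)\st a\in\Mfin{S},\ \beta\in T\}\in\MRel(\With{\Funint{S}{T}}{S},T)$. Given $f\in\MRel(\With{U}{S},T)$, viewed through the convention as a set of triples $((b,a),\beta)$ with $b\in\Mfin{U}$ and $a\in\Mfin{S}$, the currying is the re-bracketing $\curry(f)=\{(b,(a,\beta))\st((b,a),\beta)\in f\}\in\MRel(U,\Funint{S}{T})$. That $\curry$ is a bijection between the homsets $\MRel(\With{U}{S},T)$ and $\MRel(U,\Funint{S}{T})$ is purely formal, both being powersets of $(\Mfin{U}\times\Mfin{S})\times T$ up to associativity; the substance is that this bijection is natural in $U$, equivalently that $\eval\comp(\With{\curry(f)}{\Id{S}})=f$ and that $\curry(f)$ is the unique such morphism.

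The main obstacle is precisely this last naturality check, because it forces one to compute the functorial action $\With{\curry(f)}{\Id{S}}$ inside $\MRel$: unlike its namesake in $\bold{Rel}$, this morphism is assembled from the projections and pairing above, hence from the comonadic (contraction and weakening) structure of $\Mfin{-}$, so in the composite the argument multiset over $S$ is split and then reassembled through the multiset-decomposition clause of composition. The computation of $\eval\comp(\With{\curry(f)}{\Id{S}})$ thus reduces to checking that reading a single pair $(a,\beta)$ off the function component and matching it with the multiset $a$ recovered on the argument side reproduces exactly the triples of $f$. This bookkeeping is routine but is the combinatorial core of the theorem; by contrast the terminal and product verifications are essentially immediate.
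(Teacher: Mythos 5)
Your proposal is correct and follows essentially the same route as the paper: the empty set as terminal object, disjoint union $\With{S}{T}$ as product with the same projections (yours written through the convention $\Mfin{\With{S}{T}}\cong\Mfin{S}\times\Mfin{T}$) and pairing, and $\Funint{S}{T}=\Mfin{S}\times T$ with the same evaluation morphism and currying bijection. The paper likewise leaves the final equation $\eval\comp(\curry(s)\times\Id{S})=s$ as an easy check, so your closing remark on where the combinatorial bookkeeping lies only makes explicit what the paper elides.
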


\begin{proof} The terminal object $\Termobj$ is the empty set $\emptyset$, and the unique element of
$\MRel(S,\emptyset)$ is the empty relation.

Given two sets $S$ and $T$, their categorical product in $\MRel$ is
their disjoint union $\With{S}{T}$ and the corresponding projections are given by:
\begin{gather*} 
	 \Proj{1}=\{([(1,\ga)],\ga)\st \ga\in S \} : \With{S}{T}\to S,\\
	 \Proj{2}=\{([(2,\ga)],\ga)\st \ga\in T \} : \With{S}{T} \to T.
\end{gather*}

It is easy to check that this is actually the categorical product of $S$ and $T$ in $\MRel$; 
given $s : U\to S$ and $t : U\to T$, the corresponding morphism $\Pair st : U\to\With{S}{T}$ is given by:
$$
  \Pair st=\{(a,(1,\ga))\st(a,\ga)\in s\}\cup\{(b,(2,\gb))\st(b,\gb)\in t\}\,.
$$
Given two objects $S$ and $T$, the exponential object $\Funint ST$ is $\Mfin S\times T$ and the evaluation morphism is given by:
$$
\eval_{ST} =\{(([(a,\gb)],a),\gb)\st a\in\Mfin S\ \text{and}\ \gb\in T\} : \With{\Funint ST}S\to T\,.
$$
Again, it is easy to check that in this way we defined an exponentiation. 
Indeed, given any set $U$ and any morphism $s : \With US\to T$, there is exactly one morphism $\curry(s) :  U\to \Funint ST$ such that:
$$
  \eval_{ST}\comp (\curry(s)\times \Id{S}) = s.
$$
which is $\curry(s)=\{(a,(b,\gb))\st((a,b),\gb)\in s\}$. 
\end{proof}

As shown in \cite{Manzonetto10}, $\MRel$ is actually a Cartesian closed differential category \cite{BluteCS09}.
It is not difficult to check that it is moreover an instance of the categorical framework presented in Section~\ref{sec:thomas}.

\subsection{An Extensional Reflexive Object}

We build a reflexive object $\cD$, which is extensional in the sense that $\cD \cong \Funint{\cD}{\cD}$.
The elements of $\cD$ are infinite sequences of multisets, that are quasi-finite in the following sense.

\begin{defi}An infinite sequence $\ga = (a_1,a_2,\dots)$ of multisets is \emph{quasi-finite} 
if $a_i = []$ holds for all but a finite number of indices $i$.
\end{defi}

Given a set $S$, we denote by $\Omegatuple S$ the set of all quasi-finite 
$\nat$-indexed sequences of finite multisets over $S$.

\begin{defi} We build a family of sets $(D_n)_{n\in\nat}$ as follows:  
\begin{iteMize}{$\bullet$}
\item 
	$D_0=\emptyset$, 
\item 
	$D_{n+1}=\Omegatuple{D_n}$.
\end{iteMize}
Since the operation mapping a set $S$ into $\Omegatuple S$ is monotonic with respect to inclusion\footnote{
This means that $S\subseteq S'$ entails $\Omegatuple S \subseteq \Omegatuple{S'}$.
} and $D_0\subseteq D_1$, we have $D_n\subseteq D_{n+1}$ for all $n\in\nat$. 
Finally, we set $\cD=\bigcup_{n\in\nat}D_n$.
\end{defi}

To define an isomorphism between $\cD$ and $\Mfin{\cD}\times{\cD}$
just note that every element $\ga = (a_1, a_2, a_3,\ldots)\in \cD$ stands for the pair
$(a_1, (a_2, a_3,\ldots))$ and \emph{vice versa}. 
From this simple remark, it follows that $\cD\cong\Funint{\cD}{\cD}$ (we have a canonical bijection between these
two sets, and therefore an isomorphism in $\MRel$).

\begin{nota}
Given $\ga = (a_1, a_2, a_3,\ldots)\in\cD$ and $a\in\Mfin{\cD}$, we write $a\at \ga$ for the element $(a,a_1,a_2, a_3,\ldots)\in\cD$.
We denote by $*$ the element  $([],[],\ldots,[],\ldots)\in\cD$. 
\end{nota}

Remark that $[] :: * = *$.

\subsection{Interpreting the \dzlam-calculus with tests}\label{subsec:dzlam-interpretation}
We now define the interpretation of an expression $A$ of the  \dzlam-calculus with tests
in the model $\cD$. As usual, an expression $A$ will be interpreted by a morphism of the category $\MRel$.

For all terms $M$, bags $P$, tests $Q$ and repetition-free se\-quen\-ces $\seq x, \seq y, \seq z$
respectively containing the free variables of $M,P,Q$, we define by mutual induction the interpretations
 $\Int{M}_{\seq x}:\cD^n \to \cD$,
 $\Int{P}_{\seq y}:\cD^m \to \Mfin{\cD}$ and
 $\Int{Q}_{\seq z}:\cD^k \to \one$ ($\one$ is a singleton set and $n,m,k$ are the lengths of $\seq x,\seq y,\seq z$) as follows\footnote{%
Since $\Mfin{\With ST}\cong\Mfin{S}\times\Mfin{T}$ we have, up to isomorphism, $\Int{M}_{\seq x}\subseteq\Mfin{\cD}^n\times \cD$,
$\Int{P}_{\seq y}\subseteq\Mfin{\cD}^{m+1}$ and $\Int{Q}_{\seq z}\subseteq\Mfin{\cD}^k\times\one \cong \Mfin{\cD}^k$.}:

\begin{iteMize}{$\bullet$}
\item $\Int{x_i}_{\seq x} = \{(([],\ldots,[],[\alpha],[],\ldots,[]),\alpha) \st \alpha \in \cD\}$, where $[\alpha]$ stands in $i$-th position, and
 $\seq x = x_1,\ldots,x_i,\ldots,x_n$,
\item $\Int{\lambda y.M}_{\seq x} = \{(\seq{a},b\at \alpha) \st ((\seq a,b), \alpha)\in\Int{M}_{\seq x,y}\}$, where we suppose wlog that $y\notin\seq x$,
\item $\Int{MP}_{\seq x} = \{ (\seq a_1 \mcup \seq a_2,\alpha) \st \exists b \in \Mfin\cD\  (\seq a_1,b \at \alpha)\in \Int{M}_{\seq x},\
(\seq a_2,b)\in\Int{P}_{\seq x}\}$,
\item $\Int{\gto(V)}_{\seq x} = \{ (\seq{a},*)\st \seq a\in\Int{V}_{\seq x}\}$,
\item $\Int{[L_1,\ldots,L_k]}_{\seq y} = \{ (\mcup_{i=1}^k \seq a_i,[\beta_1,\ldots,\beta_k]) \st (\seq a_i,\beta_i)\in \Int{L_i}_{\seq y},\ 1\le i\le k\}$,
\item $\Int{\gt[M]}_{\seq z} = \{\seq{a} \st (\seq{a}, *)\in\Int{M}_{\seq z}\}$,
\item $\Int{V_1\paral V_2}_{\seq z} = \{\seq a_1\mcup \seq a_2 \st \seq a_1\in\Int{V_1}_{\seq z}, \seq a_2\in\Int{V_2}_{\seq z}\}$,
\item $\Int{\varepsilon}_{\seq z} = \{ ([],\ldots,[]) \}$.
\end{iteMize}
In particular $\Int{[]}_{\seq x} = \{([],\ldots,[])\}\in\Mfin{\cD}^{n+1}$. \\

The interpretation is then extended to the elements of $\sums{e}$ by setting 
$\Int{\Sigma_{i=1}^k A_i}_{\seq x} = \cup_{i=1}^k \Int{A_i}_{\seq x}$.

\begin{rem} 
Since every test $V$ is of the form $\gt[L_1,\ldots,L_k]$ we might define its interpretation directly by setting
$\Int{V}_{\seq x} = \{ \mcup_{i=1}^k \seq a_i \st (\seq a_i,*)\in \Int{L_i}_{\seq x}, 1\le i \le k\}$.
Closed terms (resp.\ tests, bags)  are interpreted by relations between the singleton $\Mfin{\emptyset}$ and $\cD$ (resp. $\one$, $\Mfin{\cD}$),
we denote them plainly as subsets of $\cD$ (resp. $\one$, $\Mfin{\cD}$).
\end{rem}

The following are examples of interpretations.
\begin{exa}\label{ex:interp}\
\begin{enumerate}[1.]
\item\label{ex:interp1} $\Int{\gto(\varepsilon)} = \{*\}$,
\item\label{ex:interp2} $\Int{\bold{I}} = \{[\alpha]\at\alpha \st \alpha \in \cD \}$,
\item\label{ex:interp3} $\Int{\bold{D}}=\{ [[\alpha]::\beta,\alpha]::\beta\st\alpha,\beta\in {\cD} \}$,
\item\label{ex:interp4} $\Int{\bold{D}[\bold{I}]}=\emptyset$. Indeed the elements of $\Int{[\bold{I}]}$ can never match $[[\alpha]::\beta,\alpha]$ because they are singleton multisets.
\end{enumerate}
\end{exa}

\begin{conv} Hereafter, whenever we write $\Int{\sA}_{\seq x}$ we suppose that $\seq x$ 
is a repetition-free list of variables of length $n$ containing $\FV(\sA)$.
Moreover, we will sometimes silently use the fact $\Int{M}_{\seq x,y}=  \{((\seq a,[]),\alpha) \st (\seq a,\alpha)\in \Int{M}_{\seq x}\}$
whenever $y\notin\FV(M)$.
\end{conv}

Clearly the interpretation is monotonic, in the sense expressed by the following lemma.

\begin{lem}\label{lemma:monotonicity}
For any test-context $C\hole{\cdot}$ (resp.\ term-context $D\hole{\cdot}$) with free variables $\seq y$,
if $\Int{M}_{\seq x}\subseteq\Int{N}_{\seq x}$ then
$\Int{C\hole{M}}_{\seq x,\seq y}\subseteq\Int{C\hole{N}}_{\seq x,\seq y}$ (resp.\ $\Int{D\hole{M}}_{\seq x,\seq y}\subseteq\Int{D\hole{N}}_{\seq x,\seq y}$).
\end{lem}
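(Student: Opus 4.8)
The plan is to prove both inclusions simultaneously by structural induction on the term-context $D\hole{\cdot}$ and the test-context $C\hole{\cdot}$, keeping the hypothesis $\Int{M}_{\seq x}\subseteq\Int{N}_{\seq x}$ fixed throughout. The simultaneous induction is forced by the mutual recursion in the grammar of contexts: a term-context may contain a test-context through $\gto(C)$, and a test-context contains a term-context through $\gt[D,\seq L]$, so neither statement can be established in isolation. The only base case is $D\hole{\cdot}=\hole{\cdot}$, for which $D\hole{M}=M$ and $D\hole{N}=N$ and the list $\seq y$ of variables of the context is empty, so that the required inclusion $\Int{M}_{\seq x}\subseteq\Int{N}_{\seq x}$ is exactly the hypothesis.

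For each inductive case I would unfold the corresponding clause of the interpretation of Subsection~\ref{subsec:dzlam-interpretation} and observe that it presents $\Int{D\hole{M}}_{\seq x,\seq y}$ (resp.\ $\Int{C\hole{M}}_{\seq x,\seq y}$) as a set built from $\Int{D'\hole{M}}$ (resp.\ $\Int{C'\hole{M}}$) for the immediate subcontext $D'$ (resp.\ $C'$), together with the interpretations of the fixed subexpressions occurring in the context. The key point is that every such clause uses only inclusion-preserving operations: set union, existential quantification over an auxiliary multiset $b$ (application clause), multiset union $\mcup$ (bag, cork and parallel clauses), prepending $b\at\alpha$ (abstraction clause), and appending the distinguished point $*$ (the $\gto$ clause, and the direct description of $\Int{V}_{\seq x}$ recorded in the Remark after the interpretation). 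Hence the inductive hypothesis $\Int{D'\hole{M}}\subseteq\Int{D'\hole{N}}$ (resp.\ for $C'$) propagates monotonically to the whole context. For example, in the case $M'[D',\seq L]$ the induction hypothesis gives $\Int{D'\hole{M}}\subseteq\Int{D'\hole{N}}$, whence the bag interpretation $\Int{[D'\hole{\cdot},\seq L]}$ grows with the hole, and composing with $\Int{M'}$ through the application clause preserves the inclusion.

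The only point requiring real care — and the one I expect to be the main obstacle — is the threading of the variable lists across binders and across the change of syntactic sort, rather than the monotonicity itself. In the case $\lam z.D'$ one must pass to the extended list $\seq x,\seq y,z$, apply the induction hypothesis there, and then move the $z$-component of each tuple into the codomain via $b\at\alpha$; this requires keeping the bound variable $z$ disjoint from $\seq x,\seq y$, using $\alpha$-equivalence together with the padding convention $\Int{M}_{\seq x,z}=\{((\seq a,[]),\alpha)\st(\seq a,\alpha)\in\Int{M}_{\seq x}\}$ recorded just before the lemma, so that the abstraction clause applies verbatim. Similarly, the cases $\gto(C')$ and $\gt[D',\seq L]$ switch between interpretations valued in $\cD$ and interpretations valued in $\one$, which is precisely where the two halves of the simultaneous induction feed into one another. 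Once this bookkeeping is set up correctly, the content of each step is transparent, since no clause of the interpretation ever uses complementation or any other inclusion-reversing operation.
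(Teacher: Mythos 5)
Your proposal is correct and matches the paper's proof, which is exactly the "straightforward mutual induction on $C\hole{\cdot}, D\hole{\cdot}$" you describe; the paper simply leaves the case analysis and the variable-list bookkeeping implicit. The observations you add — that every interpretation clause is built from inclusion-preserving operations and that the $\gto(C)$ and $\gt[D,\seq L]$ cases are where the two halves of the induction interlock — are precisely the details the paper elides.
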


\begin{proof} By a straightforward mutual induction on $C\hole{\cdot},D\hole{\cdot}$.
\end{proof}

The following substitution lemmas are needed for proving the invariance
of the interpretation under reduction.
The proofs are lengthy but not difficult, and are provided in Appendix~\ref{app:tech_app}.

\begin{lem}[Linear Substitution Lemma]\label{lemma:lsubst}
Let $M\in\Set{\gto}, V\in\Set{\gt}$ and $P = [L_1,\ldots,L_k]\in\Set{b}$
such that $\dg{y}{M} = \dg{y}{V} = k$. 
We have:
\begin{enumerate}[(i)]
\item 
$(\seq a, \alpha)\in \Int{M\lsubst{y}{P}}_{\seq x}$ iff there exist
$(\seq a_i,\beta_i)\in\Int{L_i}_{\seq x}$ (for $1\le i \le k$) and
$\seq a_0\in\Mfin{\cD}^n$ such that
$((\seq a_0,[\beta_1,\ldots,\beta_k]), \alpha)\in\Int{M}_{\seq x,y}$
and 
$\mcup_{i=0}^{k} \seq a_i=\seq a$.
\item 
$\seq a\in \Int{V\lsubst{y}{P}}_{\seq x}$  iff there exist
$(\seq a_i,\beta_i)\in\Int{L_i}_{\seq x}$ (for $1\le i \le k$) and
$\seq a_0\in\Mfin{\cD}^n$ such that
$(\seq a_0,[\beta_1,\ldots,\beta_k])\in\Int{V}_{\seq x,y}$ and 
$\mcup_{i=0}^{k} \seq a_i=\seq a$.
\end{enumerate}
\end{lem}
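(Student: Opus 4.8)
The plan is to prove (i) and (ii) simultaneously, together with a third, analogous statement for bags: for $P' = [H_1,\ldots,H_m]\in\Set{b}$ with $\dg{y}{P'}=k$, one has $(\seq a,b)\in\Int{P'\lsubst{y}{P}}_{\seq x}$ iff there exist $(\seq a_i,\beta_i)\in\Int{L_i}_{\seq x}$ ($1\le i\le k$) and $\seq a_0$ with $((\seq a_0,[\beta_1,\ldots,\beta_k]),b)\in\Int{P'}_{\seq x,y}$ and $\mcup_{i=0}^{k}\seq a_i=\seq a$. The three statements are proved by mutual induction on the structure of the expression; the bag statement is indispensable because the application clause of (i) feeds a bag into the inductive hypothesis, and (i) and (ii) are genuinely interdependent through the constructs $\gto(\cdot)$ and $\gt[\cdot]$. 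Before starting, I would record one preliminary \emph{multiplicity} observation: whenever $((\seq a,e),\alpha)\in\Int{M}_{\seq x,y}$ (and likewise for tests and bags) the $y$-component satisfies $\card e=\dg{y}{M}$. This is a routine structural induction, and it is exactly where the promotion-free (linear) nature of the calculus is used, each occurrence of $y$ contributing precisely one element of $e$. Its role is to guarantee that, since $\card P=k=\dg{y}{M}$, the only points of $\Int{M}_{\seq x,y}$ relevant to $M\lsubst{y}{P}$ are those whose $y$-component has cardinality exactly $k$, which is what the multiset $[\beta_1,\ldots,\beta_k]$ in the statement records.

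The base and unary cases are direct unfoldings of the definitions. For $M=x_i$ I would check both subcases $x_i=y$ (so $k=1$, $P=[L_1]$, $M\lsubst{y}{P}=L_1$) and $x_i\neq y$ (so $k=0$, $P=[]$) against the clause defining $\Int{x_i}_{\seq x}$. The abstraction case uses $(\lambda z.M')\lsubst{y}{P}=\lambda z.(M'\lsubst{y}{P})$ and the inductive hypothesis for $M'$ in the context $\seq x,z$, with $z$ chosen fresh so that the $z$-components of the $\Int{L_i}_{\seq x,z}$ are empty. The case $M=\gto(V)$ reduces (i) to (ii) via $\gto(V)\lsubst{y}{P}=\gto(V\lsubst{y}{P})$ together with $\Int{\gto(V)}_{\seq x}=\{(\seq a,*)\st \seq a\in\Int{V}_{\seq x}\}$, and parallel composition of tests is handled the same way.

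The real work is the application case $M=M'Q$, and, mutatis mutandis, the cork $\gt[L_1,\ldots,L_m]$ and the bag constructor. Writing $k'=\dg{y}{M'}$ and $k''=\dg{y}{Q}$, so $k=k'+k''$, the key syntactic fact is that iterating the Leibniz-style clause $(M'Q)\lsubst{x}{N}=M'\lsubst{x}{N}\,Q+M'(Q\lsubst{x}{N})$ yields
\[
(M'Q)\lsubst{y}{P}=\textstyle\sum M'\lsubst{y}{P_1}\,(Q\lsubst{y}{P_2}),
\]
the sum ranging over the splittings $P=P_1\mcup P_2$ with $\card{P_1}=k'$ and $\card{P_2}=k''$: every unbalanced split vanishes, since over-substituting one sub-expression annihilates it while the complementary under-substitution forces a degree mismatch on the other. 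Order-independence of these iterated substitutions, which is what makes the notation $\lsubst{y}{P}$ well defined in the first place, is Theorem~\ref{thm:Schwarz}. Applying the inductive hypothesis (i) to $M'\lsubst{y}{P_1}$ and the bag statement to $Q\lsubst{y}{P_2}$ and then recombining through the clause for $\Int{M'Q}_{\seq x}$, one sees that a split $P_1\mcup P_2$ of the resource bag corresponds exactly to a split $[\beta_i\st i\in I_1]\mcup[\beta_i\st i\in I_2]$ of the witnessing multiset into the $y$-components $d_1,d_2$ of the $M'$- and $Q$-parts of a point of $\Int{M'Q}_{\seq x,y}$, with $\card{d_1}=k'$ and $\card{d_2}=k''$ forced by the multiplicity observation; summing over splits on the left then matches taking the union over index partitions on the right.

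I expect this multi-ary case to be the only real obstacle: the difficulty is purely bookkeeping, namely keeping the combinatorics of index partitions in register with the multiset-union structure of the relational interpretation, and verifying that the degree hypothesis — through the multiplicity observation and the vanishing of unbalanced splits — makes the two sides enumerate exactly the same points. Any double counting that arises from distinct partitions producing the same point is harmless, since the formal sums are idempotent. Once this correspondence is set up, the test and bag cases are formally identical to the application case, and (ii) follows from (i) through the clauses for $\Int{\gt[M]}_{\seq z}$ and $\Int{V_1\paral V_2}_{\seq z}$.
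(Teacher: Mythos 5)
Your proposal is correct and follows essentially the same route as the paper's proof: a mutual structural induction whose only substantive case is the application, handled by splitting the resource bag into degree-balanced sub-bags (unbalanced splits vanishing), applying the inductive hypothesis componentwise, and recombining through the interpretation clause, with the cardinality constraint $\card{d}=\dg{y}{M}$ playing the same bookkeeping role. The only difference is organizational: you carry an explicit third induction statement for bags, whereas the paper absorbs it by treating the application case directly in the form $N_0[N_1,\ldots,N_h]$ and summing over sequences of sub-bags $(P_0',\ldots,P_h')$ partitioning $P$.
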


\begin{lem}[Regular Substitution Lemma]\label{lemma:subst}
Let $M\in\Set{\gto}$, $V\in\Set{\gt}$ and $\sN\in\sums{\gto}$. 
We have:
\begin{enumerate}[\em(i)]
\item $(\seq a, \alpha)\in \Int{M\subst{y}{\sN}}_{\seq x}$ iff 
	$\exists k\in\nat,\ \exists\beta_1,\ldots,\beta_k\in \cD$, 
	$\exists\seq a_0,\ldots,\seq a_{k}\in\Mfin{\cD}^n$ such that
	$(\seq a_i,\beta_i)\in\Int{\sN}_{\seq x}$ (for $1\le i \le k$),
	$((\seq a_0,[\beta_1,\ldots,\beta_k]), \alpha)\in\Int{M}_{\seq x,y}$ and 
	$\seq a = \mcup_{j=0}^{k} \seq a_j$,
\item $\seq a\in \Int{V\subst{y}{\sN}}_{\seq x}$ iff 
	$\exists k\in\nat,\ \exists\beta_1,\ldots,\beta_k\in \cD$, 
	$\exists\seq a_0,\ldots,\seq a_{k}\in\Mfin{\cD}^n$ such that
	$(\seq a_i,\beta_i)\in\Int{\sN}_{\seq x}$ (for $1\le i \le k$) and
	$(\seq a_0,[\beta_1,\ldots,\beta_k])\in\Int{V}_{\seq x,y}$ and 
	$\seq a = \mcup_{j=0}^{k} \seq a_j$.
\end{enumerate}
\end{lem}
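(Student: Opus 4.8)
The plan is to prove both statements --- together with the analogous statement for bags, which the mutual induction forces us to carry along --- by a single structural induction on the expression being substituted, following the inductive definition of $\Int{\cdot}$ given in Subsection~\ref{subsec:dzlam-interpretation}. As usual we work up to $\alpha$-equivalence, so we may assume $y\notin\seq x$ and that $y$ is neither captured by a binder of $M$ or $V$ nor bound in any component of $\sN$. One might hope to derive the statement from the Linear Substitution Lemma (Lemma~\ref{lemma:lsubst}) by summing over the degree of $y$, but the relationship between full and linear substitution is delicate in the idempotent setting and does not interact cleanly with a \emph{sum} $\sN$ being substituted, since full substitution distributes $\sN$ independently over each occurrence of $y$, as in the conventions of Figure~\ref{fig:notsums}. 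I therefore prefer the direct route: the key observation is that $\subst{y}{\sN}$ commutes with every syntactic constructor once terms, bags and tests are read in the extended syntax, so each case reduces to unfolding the definition of $\Int{\cdot}$ and invoking the induction hypothesis. Throughout, $\sN$ is handled through the single relation $\Int{\sN}_{\seq x}=\bigcup_i\Int{N_i}_{\seq x}$, so nothing special is needed for $\sN$ being a proper sum.

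The base cases and the ``transparent'' inductive cases are immediate. For a variable, if $M=y$ then $M\subst{y}{\sN}=\sN$ and the right-hand side forces $k=1$, $\beta_1=\alpha$ and $\seq a_0=([],\dots,[])$, so it reads $(\seq a,\alpha)\in\Int{\sN}_{\seq x}$, as wanted; if $M=x_i\neq y$ then $M\subst{y}{\sN}=x_i$ and the right-hand side forces $k=0$. For $\lambda z.M$, for $\gto(V)$ and for $\gt[M]$ one unfolds the corresponding clause of $\Int{\cdot}$, matches the output component (e.g.\ $\alpha=*$ in the $\gto$ and $\gt[\cdot]$ clauses), and applies the induction hypothesis of the appropriate sort; the multiset $[\beta_1,\dots,\beta_k]$ of ``uses of $y$'' is passed through unchanged. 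These cases are exactly where parts (i) and (ii) talk to each other, so the induction is genuinely mutual: a term can sit under a $\gto$ inside a test, and a test under a $\gt[\cdot]$ inside a term.

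The substantive cases are those in which a constructor has several immediate subexpressions, namely application $MP$, the bag $[L_1,\dots,L_k]$, and parallel composition $V_1\paral V_2$; these carry the combinatorial content. Consider $MP$. On one side, $\Int{(MP)\subst{y}{\sN}}_{\seq x}=\Int{M\subst{y}{\sN}\,P\subst{y}{\sN}}_{\seq x}$ is unfolded by the application clause, and both factors are rewritten by the induction hypotheses for $M$ and for $P$. On the other side, an element $((\seq a_0,[\beta_1,\dots,\beta_k]),\alpha)\in\Int{MP}_{\seq x,y}$ is, by the same application clause read in the context $\seq x,y$, a pair whose $y$-component $[\beta_1,\dots,\beta_k]$ splits as $m_1\mcup m_2$, with $m_1$ feeding the occurrences of $y$ inside $M$ and $m_2$ those inside $P$. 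Matching the two descriptions amounts to partitioning the index set $\{1,\dots,k\}$ into two blocks realizing the split $m_1\mcup m_2$ and routing each witnessing experiment $(\seq a_i,\beta_i)\in\Int{\sN}_{\seq x}$ to the corresponding factor; the resource vectors then recombine by associativity and commutativity of $\mcup$, giving $\seq a=\mcup_{j=0}^{k}\seq a_j$. The bag and parallel cases are identical in spirit, with the $k$-fold union $\mcup_{i}\seq a_i$ of those clauses playing the role of the binary split.

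I expect the bookkeeping of these multiset partitions, carried out in \emph{both} directions of each ``iff'', to be the only real obstacle: one must check that ``combine the uses of $y$ coming from the subexpressions, then present them as a single multiset'' and ``take a single multiset of uses and distribute it back to the subexpressions'' are mutually inverse, and that the attached experiments of $\sN$ follow the same partition. Once the correspondence between index partitions and multiset splits is set up carefully --- using only commutativity and associativity of $\mcup$, together with $\Int{\sN}_{\seq x}=\bigcup_i\Int{N_i}_{\seq x}$ to absorb the fact that $\sN$ is a sum --- every case closes routinely, completing the mutual induction and hence both (i) and (ii).
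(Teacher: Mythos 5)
Your proposal is correct and follows essentially the same route as the paper's own proof: a mutual structural induction that unfolds the definition of the interpretation constructor by constructor, with the only substantive work being the bidirectional correspondence between partitions of the index set $\{1,\dots,k\}$ and splittings of the multiset $[\beta_1,\dots,\beta_k]$ among the immediate subexpressions (the paper carries this out in detail for the application case $N_0[N_1,\dots,N_h]$, absorbing the bag case there rather than stating it separately). The details you defer --- the routing of the experiments $(\seq a_i,\beta_i)\in\Int{\sN}_{\seq x}$ along the chosen partition and the recombination of resource vectors by associativity and commutativity of $\mcup$ --- are exactly the bookkeeping the paper performs.
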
\medskip

\noindent The substitution lemmas above generalize straightforwardly to sums.
Although Lem\-ma~\ref{lemma:subst} is stated in full generality, 
for the \dzlam-calculus with tests is only useful for $\sN = 0$.
We keep this formulation since it is closer to the one needed in Section~\ref{sec:FullRC}
for the full \dlam-calculus with tests.

\begin{thm}\label{thm:Dmodel}
$\cD$ is a model of the \dzlam-calculus with tests, i.e., if ${\sA}\msto{\sB}$ then $\Int\sA_{\seq x}=\Int\sB_{\seq x}$.
\end{thm}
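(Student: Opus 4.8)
The plan is to reduce the statement to single reduction steps and then proceed by induction on the derivation of $\sA\to\sB$ in the context-closure calculus of Figure~\ref{fig:context_closure}. Since $\msto$ is the reflexive--transitive closure of $\to$, and set equality is reflexive and transitive, it suffices to show that $\sA\to\sB$ implies $\Int{\sA}_{\seq x}=\Int{\sB}_{\seq x}$. The base cases of the induction are the four redex contractions of Figure~\ref{fig:dzlamopsem}, and the inductive cases are the seven context-closure rules.

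First I would dispatch the inductive cases. Each clause defining $\Int{\cdot}$ in Section~\ref{subsec:dzlam-interpretation} writes the interpretation of a compound expression as a union, over all compatible choices, of combinations of the interpretations of its immediate subexpressions; hence it is monotone and in fact commutes with unions in each of its arguments. Consequently, if the interpretation of the redex-carrying subexpression is preserved (the induction hypothesis), so is that of the whole expression. For the rule $\mathrm{(sum)}$ this is immediate, since $\Int{A+\sB}_{\seq x}=\Int{A}_{\seq x}\cup\Int{\sB}_{\seq x}$; for $\mathrm{(lam)}$, $\mathrm{(appl)}$, $\mathrm{(taubar)}$, $\mathrm{(tau)}$ it is exactly Lemma~\ref{lemma:monotonicity} applied in both directions, while for $\mathrm{(appr)}$ and $\mathrm{(linres)}$ the same routine argument applied to the bag position gives the equality.

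It then remains to treat the four redexes, for which I would compute both interpretations directly. The rule $\to_{\gamma}$ is the simplest: since $\Int{\gto(V)}_{\seq z}=\{(\seq a,*)\st \seq a\in\Int{V}_{\seq z}\}$ and the cork clause keeps exactly the elements whose right component is $*$, one gets $\Int{\gt[\gto(V)]}_{\seq z}=\Int{V}_{\seq z}$, and the parallel-composition clause then yields the claim (the common factor $\paral W$ being irrelevant). For $\to_{\gto}$, an element of $\Int{\gto(V)P}_{\seq z}$ requires some $b\in\Mfin{\cD}$ with $b\at\alpha=*$, forcing $b=[]$ and $\alpha=*$; since the right component of any element of $\Int{[L_1,\dots,L_k]}_{\seq z}$ has cardinality exactly $k$, such a matching $b$ exists iff $P=[]$, which gives precisely the two branches of the rule. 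For $\to_{\gt}$, the abstraction and cork clauses reduce $\Int{\gt[\lam x.M]}_{\seq z}$ to $\{\seq a\st((\seq a,[]),*)\in\Int{M}_{\seq z,x}\}$, whereas the Regular Substitution Lemma~\ref{lemma:subst} with $\sN=0$ gives $\Int{M\subst{x}{0}}_{\seq z}=\{(\seq a,\alpha)\st((\seq a,[]),\alpha)\in\Int{M}_{\seq z,x}\}$, and the two descriptions coincide.

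The main obstacle is the rule $\to_{\beta}$, where $(\lam x.M)P\to M\lsubst{x}{P}\subst{x}{0}$ with $P=[L_1,\dots,L_k]$ and $x\notin\FV(P)$. Unfolding the application and abstraction clauses and using that the right component of every element of $\Int{P}_{\seq z}$ is a multiset $[\beta_1,\dots,\beta_k]$ of cardinality $k$, one finds
\[
\Int{(\lam x.M)P}_{\seq z}=\{(\textstyle\mcup_{i=0}^{k}\seq a_i,\alpha)\st ((\seq a_0,[\beta_1,\dots,\beta_k]),\alpha)\in\Int{M}_{\seq z,x},\ (\seq a_i,\beta_i)\in\Int{L_i}_{\seq z}\ (1\le i\le k)\},
\]
so the whole point is to identify this set with $\Int{M\lsubst{x}{P}\subst{x}{0}}_{\seq z}$. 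Here the two substitution lemmas must be combined: Lemma~\ref{lemma:subst} with $\sN=0$ peels off $\subst{x}{0}$ by selecting the elements of $\Int{M\lsubst{x}{P}}_{\seq z,x}$ with empty $x$-component, and the Linear Substitution Lemma~\ref{lemma:lsubst} then rewrites those in terms of $\Int{M}_{\seq z,x}$. The delicate bookkeeping is the degree condition: Lemma~\ref{lemma:lsubst} applies only when $\dg{x}{M}=k$, which is exactly the case in which $M\lsubst{x}{P}$ is already $x$-closed (as $x\notin\FV(P)$), so that $\subst{x}{0}$ acts trivially and the lemma produces the displayed set verbatim. When $\dg{x}{M}\neq k$ the syntactic side collapses to $0$ by Remark~\ref{rem:beta-contr}, and I would match this semantically by observing that the $x$-component of every element of $\Int{M}_{\seq z,x}$ has cardinality $\dg{x}{M}$ (an immediate induction on $M$), so that no element can carry the cardinality-$k$ component $[\beta_1,\dots,\beta_k]$ demanded above and the displayed set is empty as well. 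Assembling the matching and mismatching subcases closes the $\beta$-step, and with it the theorem.
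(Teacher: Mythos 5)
Your proposal is correct and follows the same route as the paper: the paper's proof is exactly the two-line observation that the interpretation is contextual and that invariance under the redex rules follows from Lemmas~\ref{lemma:lsubst} and~\ref{lemma:subst}, which is what you carry out in detail (including the degree-mismatch case for $\to_\beta$, which the paper leaves implicit). No gaps.
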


\begin{proof} It is easy to check that the interpretation is contextual. 
The fact that the semantics is invariant under reduction follows from 
Lemmas~\ref{lemma:lsubst} and \ref{lemma:subst}.
\end{proof}

\section{Full Abstraction for $\dzlam$-Calculus with Tests}\label{sec:FA-dzlam-with-Tests}

A model is \emph{equationally fully abstract} if the equivalence 
induced on terms by their interpretations is exactly $\Tobseq$; 
it is \emph{inequationally fully abstract} if the induced preorder is $\Tobsle$.
Obviously, every inequationally fully abstract model is also equationally fully abstract.

In this section we prove that $\cD$ is inequationally fully abstract for the \dzlam-calculus with tests (Theorem~\ref{thm:main1}), i.e., that
$\Int{M}_{\seq x}\subseteq\Int{N}_{\seq x}$ if and only if $M\Tobsle N$.

\subsection{Building Separating Test-Contexts}

We are going to associate a 
test-context $\cont{\alpha}{\cdot}$ with each element $\alpha\in\cD$, 
the idea being that -- for every closed term $M$ --
we have $\alpha\in\Int{M}$ if and only if $\cont{\alpha}{M}$ converges. 

\begin{defi}
Let $\alpha\in \cD$.
The {\em rank of $\alpha$}, written $\rank(\alpha)$, is the least $n\in\nat$ such that $\alpha\in D_{n+1}$;
the \emph{length of $\alpha$}, written $\len(\alpha)$, is $0$ if
$\alpha=*$, and it is the unique  $r$ such that
$\alpha = a_1 \at \cdots \at a_r\at *$
with $a_r \neq []$, otherwise.
\end{defi}
Note that if $\alpha = a_1 \at \cdots \at a_r \at *$ then for all $1\le i\le r$ and $\alpha_i\in a_i$ 
we have $\rank(\alpha) > \rank(\alpha_i)$. 
Hence $\rank(\alpha) = 0$ entails $\alpha = *$ and the following definition is well-founded.

\begin{defi}\label{def:alpha+-}
For $\alpha\in\cD$ of the form
$\alpha = [\alpha_{1,1},\ldots,\alpha_{1,k_1}] \at \cdots \at [\alpha_{r,1},\ldots,\alpha_{r,k_r}]\at *$ with $\len(\alpha) = r$, define by mutual induction  
a closed term $\trm{\alpha}$ and a test-context $\cont{\alpha}{\cdot}$ as follows:
\begin{iteMize}{$\bullet$}
\item $\trm{\alpha} = \lambda x_1\ldots x_r.\gto(\parallel_{i=1}^r (\cont{(\alpha_{i,1})}{x_i}\paral\cdots\paral\cont{(\alpha_{i,k_i})}{x_i}))$,
\item $\cont{\alpha}{\cdot} = \gt[\hole{\cdot}[\trm{(\alpha_{1,1})},\ldots,\trm{(\alpha_{1,k_1})}] \cdots [\trm{(\alpha_{r,1})},\ldots,\trm{(\alpha_{r,k_r})}]]$.
\end{iteMize}
Given $a = [\ga_1,\ldots,\ga_k]$ we set $\trm{a} = [\trm{\ga_1},\ldots,\trm{\ga_k}]$.
\end{defi}

\begin{exa} We have:
\begin{enumerate}[1.] 
\item $\trm{*} = \gto(\varepsilon)$ (as the empty parallel composition is equal to $\varepsilon$),
\item $\cont{*}{\cdot} = \gt[\hole{\cdot}]$, hence:
\item $\cont{([*]\at *)}{\cdot} = \gt[\hole{\cdot}[\gto(\varepsilon)]]$.
\end{enumerate}
\end{exa}

The next lemma, along with its corollaries, shows the interplay between the 
elements of $\cD$ and the terms/tests of Definition \ref{def:alpha+-}.
It provides  the main motivation for our  extension
of the \dzlam-calculus.

\begin{lem}\label{lem:int_alphas} Let $\alpha\in \cD$. Then:
\begin{enumerate}[\em(i)]
\item\label{lem:int_alphas1} 
	$\Int{\trm{\alpha}}= \{ \alpha \}$,
\item\label{lem:int_alphas2} 
	$\Int{\cont{\alpha}{x}}_{x} = \{[\alpha]\}$.
\end{enumerate}
\end{lem}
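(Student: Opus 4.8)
The plan is to prove parts (i) and (ii) \emph{simultaneously}, by well-founded induction on $\rank(\alpha)$. A mutual induction is forced on us by the definitions themselves: $\trm{\alpha}$ is assembled from the test-contexts $\cont{\alpha_{i,j}}{x_i}$, whereas $\cont{\alpha}{\cdot}$ is assembled from the terms $\trm{\alpha_{i,j}}$, so each clause must call upon the \emph{other} clause. This is legitimate because, as observed just before Definition~\ref{def:alpha+-}, every $\alpha_{i,j}$ occurring in $\alpha = a_1 \at \cdots \at a_r \at *$ satisfies $\rank(\alpha_{i,j}) < \rank(\alpha)$; hence the cross-calls are always at strictly smaller rank. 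For the base case $\rank(\alpha)=0$ we have $\alpha=*$, $\trm{*}=\gto(\varepsilon)$ and $\cont{*}{\cdot}=\gt[\hole{\cdot}]$: part (i) is the computation $\Int{\gto(\varepsilon)}=\{*\}$ (Example~\ref{ex:interp}(\ref{ex:interp1})), and part (ii) follows by unwinding the clauses for $\gt[\cdot]$ and for a variable, giving $\Int{\gt[x]}_x=\{a \st (a,*)\in\Int{x}_x\}=\{[*]\}$.

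For the inductive step write $a_i=[\alpha_{i,1},\ldots,\alpha_{i,k_i}]$ and $r=\len(\alpha)$, and first treat part (ii). Since each $\trm{\alpha_{i,j}}$ is closed, the induction hypothesis (i) together with the padding convention of Section~\ref{subsec:dzlam-interpretation} gives $\Int{\trm{\alpha_{i,j}}}_x=\{([],\alpha_{i,j})\}$, whence $\Int{\trm{a_i}}_x=\{([],a_i)\}$ by the bag clause. An auxiliary induction on the number of bags, repeatedly applying the application clause and using that the interpretation of $x$ is supported on singletons, then yields $\Int{x\,\trm{a_1}\cdots\trm{a_r}}_x=\{([a_1 \at \cdots \at a_r \at \gamma],\gamma) \st \gamma\in\cD\}$. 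Finally the clause for $\gt[\cdot]$ forces the output component to be $*$, i.e.\ $\gamma=*$, leaving exactly $\Int{\cont{\alpha}{x}}_x=\{[a_1 \at \cdots \at a_r \at *]\}=\{[\alpha]\}$.

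Part (i) is dual. Abbreviate $V=\parallel_{i=1}^r(\cont{\alpha_{i,1}}{x_i}\paral\cdots\paral\cont{\alpha_{i,k_i}}{x_i})$. By the induction hypothesis (ii), padded with empty multisets in the slots of the other variables, $\Int{\cont{\alpha_{i,j}}{x_i}}_{x_1,\ldots,x_r}$ is the singleton tuple carrying $[\alpha_{i,j}]$ in position $i$ and $[]$ elsewhere; the clauses for $\paral$ and $\varepsilon$ accumulate these in position $i$ to give $\Int{V}_{x_1,\ldots,x_r}=\{(a_1,\ldots,a_r)\}$, so $\Int{\gto(V)}_{x_1,\ldots,x_r}=\{((a_1,\ldots,a_r),*)\}$. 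Abstracting $x_r,\ldots,x_1$ in turn, each application of the $\lambda$-clause transfers one multiset $a_i$ from the tuple into the $\at$-prefix, producing $\Int{\trm{\alpha}}=\{a_1 \at \cdots \at a_r \at *\}=\{\alpha\}$.

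I expect no real conceptual obstacle here: once the mutual induction on rank is set up, the whole argument is a matter of unfolding the interpretation clauses. The care required is entirely bookkeeping — tracking the tuple of multisets indexed by the free variables, systematically invoking the padding convention for variables that do not occur, and checking the degenerate cases $r=0$ and $k_i=0$, where the empty products and empty parallel compositions collapse to $\gto(\varepsilon)$ and $\varepsilon$. The one computation worth isolating is the iterated-application formula for $\Int{x\,\trm{a_1}\cdots\trm{a_r}}_x$, since it is precisely there that the singleton shape of the variable's interpretation must be threaded cleanly through all $r$ applications.
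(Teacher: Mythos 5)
Your proposal is correct and follows essentially the same route as the paper's proof: a simultaneous induction on $\rank(\alpha)$, with each clause invoking the other induction hypothesis at strictly smaller rank, and the rest a direct unfolding of the interpretation clauses. The only (immaterial) difference is that you treat (ii) before (i) in the inductive step, whereas the paper does the reverse.
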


\begin{proof} 
The points (\ref{lem:int_alphas1}) and (\ref{lem:int_alphas2}) are 
proved simultaneously by induction on $\rank(\alpha)$.
We write IH(\ref{lem:int_alphas1}) and IH(\ref{lem:int_alphas2}) 
for the induction hypotheses concerning (\ref{lem:int_alphas1}) and (\ref{lem:int_alphas2}), respectively. 

If $\rank(\alpha) = 0$ then $\alpha=*$, hence $\Int{\trm{*}} = \Int{\gto(\varepsilon)} = \{*\}$ and $\Int{\cont{*}{x}}_{x} =\Int{\gt[x]}_{x} = \{[*]\}$.

If $\rank(\alpha)> 0$ and $\len(\alpha) = r$, we have $\alpha = a_1 \at \cdots\at a_r\at *$
with $a_i = [\alpha_{i,1},\ldots,\alpha_{i,k_i}]$ for $1\le i\le r$.

We prove  (\ref{lem:int_alphas1}). Remember that by definition  
$\Int{\trm{\alpha}} = \Int{\lam y_1\ldots y_r.\gto(\parallel_{i = 1}^r 
\parallel_{j= 1}^{k_i} \cont{(\alpha_{i,j})}{y_i})}$.
So we have  $\beta\in \Int{\trm{\alpha}}$ if and only if 
$\beta = b_1\at\cdots\at b_r\at *$ and for all $1\le i \le r,\ 1\le j\le k_i$ there is 
$\seq{d}_{i,j}\in\Int{\cont{(\alpha_{i,j})}{y_i}}_{\seq y}$ such that 
$\seq b = \mcup_{i =1}^r \mcup_{j =1}^{k_i}
\seq d_{i,j}$.
By IH(\ref{lem:int_alphas2}) we have $\seq d_{i,j}\in\Int{\cont{(\alpha_{i,j})}{y_i}}_{\seq y}$ 
iff $\seq d_{i,j} = (\seq{[]},[\alpha_{i,j}],\seq{[]})$ 
where $[\alpha_{i,j}]$ appears in $i$-th position.
Therefore $\mcup_{j =1}^{k_i} \seq d_{i,j}=(\seq{[]},a_i,\seq{[]})$ and
$b_i = a_i$ for every index $i$. Thus $\beta = \alpha$.

We prove (\ref{lem:int_alphas2}). By definition we have $\Int{\cont{\alpha}{x}}_{x} = \Int{\gt[x\trm{a_1}\cdots\trm{a_r}]}_{x}$.
Therefore $ c\in\Int{\cont{\alpha}{x}}_{x}$ if and only if there are 
$b_i = [\beta_{i,1},\ldots,\beta_{i,k_i}]$,
$ c_0, c_{i,1},\ldots, c_{i,k_i}\in\Mfin{\cD}$ (for $1\le i \le r$)
such that $( c_0,b_1\at\cdots\at b_{r}\at *)\in\Int{x}_{x}$, 
$( c_{i,j}, \beta_{i,j}) \in\Int{\trm{(\alpha_{i,j})}}_{x}$ (for all $1\le i\le r$ and $1\le j\le k_i$) and 
$ c =  c_0\mcup (\mcup_{i =1}^{r} \mcup_{j =1}^{k_i} c_{i,j})$.
As, by IH(\ref{lem:int_alphas1}), $\Int{\trm{(\alpha_{i,j})}}_{ x} = \{ ({[]},\alpha_{i,j})\}$ we get 
$c_{i,j} =  []$ and $\beta_{i,j} = \alpha_{i,j}$. 
Thus $ c =  c_0$, $\alpha = b_1\at\cdots\at b_r\at *$ and from this it follows that $( c,\alpha)\in\Int{x}_{ x}$.
We conclude that $c=[\alpha]$.
\end{proof}

\begin{cor}\label{cor:general-test}
$\Int{\cont{\alpha}{M}}_{\seq x} = \{\seq c \st (\seq c, \alpha)\in\Int{M}_{\seq x}\}$.
\end{cor}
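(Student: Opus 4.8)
The plan is to reduce the statement to a direct computation with the interpretation clauses of Section~\ref{subsec:dzlam-interpretation}, feeding in Lemma~\ref{lem:int_alphas}(i) to control the closed terms appearing in $\cont{\alpha}{\cdot}$. Write $\alpha = a_1 \at \cdots \at a_r \at *$ with $a_i = [\alpha_{i,1},\ldots,\alpha_{i,k_i}]$ and $\len(\alpha) = r$; the degenerate case $r = 0$ (i.e.\ $\alpha = *$) is the instance with no argument bags, where $\cont{\alpha}{M} = \gt[M]$ and the claim is immediate from the cork clause. By Definition~\ref{def:alpha+-} we have in general
$$
\cont{\alpha}{M} = \gt[M\,\trm{a_1}\cdots\trm{a_r}], \qquad \trm{a_i} = [\trm{(\alpha_{i,1})},\ldots,\trm{(\alpha_{i,k_i})}].
$$

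First I would compute the denotation of each argument bag. Since every $\trm{(\alpha_{i,j})}$ is closed, Lemma~\ref{lem:int_alphas}(i) gives $\Int{\trm{(\alpha_{i,j})}}_{\seq x} = \{(\seq{[]},\alpha_{i,j})\}$, so the bag clause yields $\Int{\trm{a_i}}_{\seq x} = \{(\seq{[]},a_i)\}$: the multiset-context is empty and the witnessed multiset is exactly $a_i$. Next I would peel off the applications one at a time. In the application clause for $M'\trm{a_i}$ the factor $\Int{\trm{a_i}}_{\seq x}$ forces, in the existential, the contributed context to be $\seq{[]}$ and the bag $b$ to be $a_i$; using fresh names $\seq e_1,\seq e_2$ for the two context tuples to avoid clashing with the $a_i$, this collapses the union $\seq e_1\mcup\seq e_2$ to $\seq e_1$ and fixes $b = a_i$. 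An easy induction on $i$ then gives
$$
\Int{M\,\trm{a_1}\cdots\trm{a_i}}_{\seq x} = \{(\seq c,\gamma) \st (\seq c, a_1 \at \cdots \at a_i \at \gamma)\in\Int{M}_{\seq x}\}.
$$
Taking $i = r$ and applying the cork clause $\Int{\gt[M']}_{\seq x} = \{\seq c \st (\seq c,*)\in\Int{M'}_{\seq x}\}$, which selects exactly the pairs whose right component is $*$, I obtain
$$
\Int{\cont{\alpha}{M}}_{\seq x} = \{\seq c \st (\seq c, a_1 \at \cdots \at a_r \at *)\in\Int{M}_{\seq x}\} = \{\seq c \st (\seq c,\alpha)\in\Int{M}_{\seq x}\},
$$
which is the desired equality.

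There is no genuine obstacle here, since all the real content is carried by Lemma~\ref{lem:int_alphas}(i), which pins each $\Int{\trm{(\alpha_{i,j})}}$ to the singleton $\{\alpha_{i,j}\}$; once this is in hand the corollary is a mechanical unfolding of the application and cork clauses. The only point demanding a little care is the bookkeeping of the quasi-finite sequences through the iterated application, namely checking that the multisets $a_1,\ldots,a_r$ accumulate in the correct leading positions of $a_1 \at \cdots \at a_r \at \gamma$ and that the empty contexts contributed by the closed bags do not perturb the context $\seq c$ coming from $M$.
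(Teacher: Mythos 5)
Your proof is correct, but it takes a different route from the paper's. The paper's proof is a three-line reduction: by Lemma~\ref{lem:int_alphas}(ii) one knows $\Int{\cont{\alpha}{y}}_{y}=\{[\alpha]\}$; since $\cont{\alpha}{\cdot}$ has no outer $\lam$-abstractions, plugging $M$ into the hole is literally the linear substitution $\cont{\alpha}{y}\lsubst{y}{[M]}$; the Linear Substitution Lemma~\ref{lemma:lsubst} then converts that single occurrence of $y$ of degree $1$ into exactly the condition $(\seq c,\alpha)\in\Int{M}_{\seq x}$. You instead bypass both part (ii) of Lemma~\ref{lem:int_alphas} and the Linear Substitution Lemma: you use only part (i) to pin down $\Int{\trm{a_i}}_{\seq x}=\{(\seq{[]},a_i)\}$ and then unfold the application and cork clauses by hand, accumulating $a_1\at\cdots\at a_r\at\gamma$ by induction on the number of argument bags. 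Your computation is sound — the empty contexts of the closed bags indeed leave $\seq c$ untouched and force $b=a_i$ at each application step — but it is essentially a re-run, with $M$ in place of the variable $x$, of the computation already carried out inside the paper's proof of Lemma~\ref{lem:int_alphas}(ii). What the paper's route buys is economy and a conceptual point (hole-filling in an abstraction-free context is a linear substitution, so the substitution lemma does all the work); what your route buys is self-containedness: it needs neither Lemma~\ref{lem:int_alphas}(ii) nor Lemma~\ref{lemma:lsubst}, only the interpretation clauses and the definability of the points $\alpha_{i,j}$.
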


\begin{proof}
By Lemma \ref {lem:int_alphas}(ii) we have that $\Int{\cont{\alpha}{y}}_{\seq x,y}=\{([],\ldots,[],[\alpha])\}$. 
As $\cont{\alpha}{\cdot}$ does not have outer $\lam$-abstractions we have  $\cont{\alpha}{M}=\cont{\alpha}{y}\lsubst{y}{[M]}$.
We then apply Lemma \ref{lemma:lsubst} to conclude.
\end{proof}

\begin{cor}
All finite subsets of $\cD$ are definable.
\end{cor}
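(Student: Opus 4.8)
The plan is to observe that this is an immediate consequence of Lemma~\ref{lem:int_alphas}(\ref{lem:int_alphas1}) together with the fact that the interpretation sends formal sums to unions. Here a finite subset $A\subseteq\cD$ is \emph{definable} when there is a closed sum of terms $\sM\in\sums{\gto}$ with $\Int{\sM}=A$; recall from Subsection~\ref{subsec:dzlam-interpretation} that $\Int{\Sigma_{i=1}^k A_i}_{\seq x}=\bigcup_{i=1}^k\Int{A_i}_{\seq x}$.

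Given a nonempty finite subset $A=\{\alpha_1,\dots,\alpha_n\}\subseteq\cD$, I would take the witness to be the closed sum $\sM=\trm{\alpha_1}+\cdots+\trm{\alpha_n}$. Each $\trm{\alpha_i}$ is a closed term by Definition~\ref{def:alpha+-}, so $\sM$ is indeed a closed element of $\sums{\gto}$, and by Lemma~\ref{lem:int_alphas}(\ref{lem:int_alphas1}) we have $\Int{\trm{\alpha_i}}=\{\alpha_i\}$. Hence
\[
\Int{\sM}=\bigcup_{i=1}^n\Int{\trm{\alpha_i}}=\bigcup_{i=1}^n\{\alpha_i\}=A,
\]
as required. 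For the empty set it suffices to take $\sM=0$, whose interpretation is the empty union $\emptyset$.

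There is essentially no obstacle here: the whole content has been packed into Lemma~\ref{lem:int_alphas}, which produces for each single point $\alpha$ a closed term $\trm{\alpha}$ naming exactly $\{\alpha\}$, and the passage from singletons to arbitrary finite sets comes for free from the idempotent sum of the calculus. The only points worth stating explicitly are that $\sums{\gto}$ really does contain these finite formal sums of closed terms, and that idempotency of the sum (so that repeated occurrences among the $\alpha_i$ cause no trouble) is harmless, since the union on the semantic side is likewise idempotent.
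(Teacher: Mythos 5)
Your proof is correct and is essentially the paper's own argument: take the sum $\trm{\alpha_1}+\cdots+\trm{\alpha_n}$, apply Lemma~\ref{lem:int_alphas}(\ref{lem:int_alphas1}) to each summand, and use that the interpretation of a sum is the union of the interpretations. The explicit treatment of the empty set via $\sM=0$ and the remark on idempotency are harmless additions the paper leaves implicit.
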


\begin{proof} By Lemma~\ref{lem:int_alphas}(\ref{lem:int_alphas1}), 
for every finite set $u = \{\alpha_1,\ldots,\alpha_k\}$
we have $\Int{\trm{\alpha_1}+\cdots + \trm{\alpha_k}}=u$.
\end{proof}

Lemma~\ref{lem:int_alphas} reveals
the behaviour of a test-context $\cont{\alpha}{\cdot}$ when applied to a term $\trm{\beta}$.

\begin{cor}\label{lem:alpha-beta-to-eps-zero}
Let $\alpha,\beta\in \cD$. If $\alpha = \beta$ then $\cont{\alpha}{\trm{\beta}} \msto \varepsilon$,
otherwise $\cont{\alpha}{\trm{\beta}} \msto 0$.
\end{cor}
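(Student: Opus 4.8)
The plan is to combine the semantic computation of $\Int{\cont{\alpha}{\trm{\beta}}}$ with the dichotomy for closed tests. First I would observe that $\cont{\alpha}{\trm{\beta}}$ is a closed test: the term $\trm{\beta}$ is closed by Definition~\ref{def:alpha+-}, and the test-context $\cont{\alpha}{\cdot}$ contains no free variable, so filling the hole with $\trm{\beta}$ yields a closed test. By Corollary~\ref{cor:closed-tests-go-to-eps-zero} it therefore reduces either to $\varepsilon$ or to $0$, and these are the only two possible normal forms to consider.

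Next I would compute its interpretation. By Corollary~\ref{cor:general-test} we have $\Int{\cont{\alpha}{\trm{\beta}}} = \{\seq c \st (\seq c,\alpha)\in\Int{\trm{\beta}}\}$; since $\trm{\beta}$ is closed, the list $\seq x$ of free variables is empty and this condition collapses (under the closed-term convention, which identifies $\Int{\trm{\beta}}$ with a subset of $\cD$) to $\alpha\in\Int{\trm{\beta}}$. By Lemma~\ref{lem:int_alphas}(\ref{lem:int_alphas1}) we have $\Int{\trm{\beta}}=\{\beta\}$, so $\Int{\cont{\alpha}{\trm{\beta}}}$ equals the full singleton $\one$ when $\alpha=\beta$ and equals $\emptyset$ when $\alpha\neq\beta$.

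The final step is to invoke invariance of the interpretation under reduction (Theorem~\ref{thm:Dmodel}). Since $\Int{\varepsilon}=\one\neq\emptyset$ whereas $\Int{0}=\emptyset$, the two possible reducts are distinguished by their interpretations, and the interpretation of $\cont{\alpha}{\trm{\beta}}$ agrees with that of its reduct. Hence, if $\alpha=\beta$, the nonempty interpretation rules out the reduct $0$ and forces $\cont{\alpha}{\trm{\beta}}\msto\varepsilon$; and if $\alpha\neq\beta$, the empty interpretation rules out the reduct $\varepsilon$ and forces $\cont{\alpha}{\trm{\beta}}\msto 0$. There is no genuine obstacle in this argument: the only point requiring care is to match the closed-term interpretation convention with the statement of Corollary~\ref{cor:general-test}, so that the membership $(\seq c,\alpha)\in\Int{\trm{\beta}}$ with empty $\seq c$ is read exactly as $\alpha\in\Int{\trm{\beta}}$.
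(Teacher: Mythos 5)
Your proof is correct and follows essentially the same route as the paper's: compute $\Int{\cont{\alpha}{\trm{\beta}}}$ via Lemma~\ref{lem:int_alphas} (you route this through Corollary~\ref{cor:general-test}, which is itself an immediate consequence of that lemma), then combine the $\varepsilon$-or-$0$ dichotomy of Corollary~\ref{cor:closed-tests-go-to-eps-zero} with soundness (Theorem~\ref{thm:Dmodel}). No gaps.
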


\begin{proof} 
By Lemma  \ref{lem:int_alphas},
$\Int{\cont{\alpha}{\trm{\beta}}}=\{()\}\subseteq\Mfin{\cD}^0$ 
if  $\alpha = \beta$, $\emptyset$ otherwise.
By Corollary~\ref{cor:closed-tests-go-to-eps-zero}, we know  that
$\cont{\alpha}{\trm{\beta}}$  reduces either to $\varepsilon$ or to $0$.
The result follows by soundness (Theorem~\ref{thm:Dmodel}).
\end{proof}

\subsection{(In)equational Full Abstraction}\label{subsec:FAres1}

In this subsection, we show that the operational preorder $\Tobsle$ 
(see Definition~\ref{def:obsle}) coincides with the inclusion of interpretations in $\cD$. 
We first need a couple of preliminary results.

Recall from Definition~\ref{def:alpha+-} that, given $a = [\ga_1,\ldots,\ga_k]$, 
$\trm{a}$ denotes the bag $[\trm{\ga_1},\ldots,\trm{\ga_k}]$.

\begin{lem}\label{lemma:newlsubst}
Let $V\in \Set{\gt}$, $\FV(V)\subseteq \seq x$ and $\seq a\in\Mfin{\cD}^n$ where $n$ is the length of $\seq x$. 
Then $\seq{a} \in \Int{V}_{\seq{x}}$ if and only if $ \Int{V\lsubst{\seq{x}}{\trm{\seq{a}}}}\neq \emptyset$ and $\dg{x_i}{V} = \card a_i$.
\end{lem}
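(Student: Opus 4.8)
The plan is to factor the equivalence through two facts. The semantic input is Lemma~\ref{lem:int_alphas}(\ref{lem:int_alphas1}), $\Int{\trm{\ga}} = \{\ga\}$: writing $a_i = [\ga_{i,1},\ldots,\ga_{i,k_i}]$ so that $\trm{a_i} = [\trm{\ga_{i,1}},\ldots,\trm{\ga_{i,k_i}}]$, each $\trm{\ga_{i,j}}$ is closed and is interpreted, in any context $\seq x$, by the singleton $\{(([],\ldots,[]),\ga_{i,j})\}$ whose residual multisets are all empty. This rigidity is exactly what allows the substitution lemma to read the tuple $\seq a$ back out of an interpretation. The syntactic input is an elementary \emph{multiplicity} property of the interpretation.

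First I would record that for every test $V$ with $\FV(V)\subseteq\seq x$, membership $\seq b\in\Int{V}_{\seq x}$ forces $\card{b_i}=\dg{x_i}{V}$ for all $i$. This is a routine induction on $V$ matching the additive clauses of Definition~\ref{def:degree} against the multiset-union clauses of the interpretation: a variable puts a singleton in its own slot and empty multisets elsewhere, while application and parallel composition take componentwise multiset unions, so multiplicities add exactly as degrees do. This immediately gives the degree half of the left-to-right implication: if $\seq a\in\Int{V}_{\seq x}$ then $\dg{x_i}{V}=\card{a_i}$ for all $i$.

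Second, assuming the degree conditions $\dg{x_i}{V}=\card{a_i}$, I would prove $\seq a\in\Int{V}_{\seq x}$ iff $\Int{V\lsubst{\seq x}{\trm{\seq a}}}\neq\emptyset$ by substituting the variables one at a time and invoking Lemma~\ref{lemma:lsubst}(ii), extended to sums. Since $\card{\trm{a_i}}=\card{a_i}=\dg{x_i}{V}$, and since substituting the closed bags for the other variables leaves the number of occurrences of $x_i$ untouched, the degree hypothesis of Lemma~\ref{lemma:lsubst}(ii) holds at each stage. Unwinding it once for the bag $\trm{a_i}$ forces, through $\Int{\trm{\ga_{i,j}}}=\{(([],\ldots,[]),\ga_{i,j})\}$, the witnesses to be $\beta_{i,j}=\ga_{i,j}$ with empty residual tuples; hence that step reconstructs precisely the multiset $a_i$ in the $x_i$-slot and leaves the remaining slots alone. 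Iterating over all variables shows that $()\in\Int{V\lsubst{\seq x}{\trm{\seq a}}}$ holds exactly when the reconstructed tuple $(a_1,\ldots,a_n)=\seq a$ belongs to $\Int{V}_{\seq x}$. Together with the multiplicity property, this yields both directions of the lemma.

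The main obstacle is the bookkeeping of the iterated substitution over several variables: one must verify that the degree hypothesis required by Lemma~\ref{lemma:lsubst}(ii) survives into each summand produced by the previous substitutions, and that the empty residual multisets coming from the closed terms $\trm{\ga_{i,j}}$ aggregate so that the only tuple surviving at the end is $\seq a$ itself. The order-independence granted by Schwarz's Theorem (Theorem~\ref{thm:Schwarz}) and the purely additive behaviour of degrees under linear substitution keep this manageable rather than genuinely hard.
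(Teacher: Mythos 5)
Your proof is correct and follows essentially the same route as the paper's, which simply applies Lemma~\ref{lemma:lsubst}(ii) once per variable together with $\Int{\trm{\ga}}=\{\ga\}$ from Lemma~\ref{lem:int_alphas} to force the witnesses. The only difference is that you isolate explicitly the multiplicity property ($\seq b\in\Int{V}_{\seq x}$ forces $\card{b_i}=\dg{x_i}{V}$), which the paper leaves implicit in the degree hypotheses of the substitution lemma; this is a sound and arguably clearer way to organize the same argument.
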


\begin{proof}
The result follows by applying $n$ times (one for each variable in $\seq{x}$) Lemma~\ref{lemma:lsubst} and Corollary~\ref{cor:general-test}.
\end{proof}

\begin{rem}\label{rem:subst}
For $M\in\Set{\gto}$ we have $(\cont{\ga}{M})\lsubst{\seq x}{\trm{\seq{a}}} = \cont{\ga}{M\lsubst{\seq x}{\trm{\seq{a}}}}$.
\end{rem}

The ensuing proposition is the key argument for proving that the model $\cD$ is inequationally fully abstract.

\begin{prop}\label{prop:context}
Let $M \in \Set{\gto}$, $\seq{x}\supseteq\FV(M)$, $\alpha\in \cD$ and $\seq a\in\Mfin{\cD}$. 
The following are equivalent:
\begin{enumerate}[\em(i)]
\item $(\seq{a},\alpha) \in \Int{M}_{\seq{x}}$,
\item $\cont{\alpha}{M\lsubst{\seq{x}}{\trm{\seq{a}}}}\msto \varepsilon$.
\end{enumerate}
\end{prop}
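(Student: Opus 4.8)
The plan is to reduce the biconditional to a statement about convergence of a \emph{closed} test, and then to decide that convergence in the model. By construction $\cont{\alpha}{\cdot}$ is a test-context whose only non-hole material consists of the closed terms $\trm{(\alpha_{i,j})}$, so $\cont{\alpha}{M}$ is a test with $\FV(\cont{\alpha}{M})\subseteq\seq x$ and with $\dg{x_i}{\cont{\alpha}{M}}=\dg{x_i}{M}$ for every $i$. First I would rewrite (i) through Corollary~\ref{cor:general-test}: $(\seq a,\alpha)\in\Int{M}_{\seq x}$ holds exactly when $\seq a\in\Int{\cont{\alpha}{M}}_{\seq x}$. Applying Lemma~\ref{lemma:newlsubst} to the test $V=\cont{\alpha}{M}$, this is equivalent to the conjunction of $\Int{(\cont{\alpha}{M})\lsubst{\seq x}{\trm{\seq a}}}\neq\emptyset$ and $\dg{x_i}{M}=\card{a_i}$ for all $i$. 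Finally Remark~\ref{rem:subst} pushes the linear substitution inside the context, turning the first conjunct into $\Int{\cont{\alpha}{M\lsubst{\seq x}{\trm{\seq a}}}}\neq\emptyset$.

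It then remains to show that, \emph{assuming} $\dg{x_i}{M}=\card{a_i}$ for all $i$, the test $T:=\cont{\alpha}{M\lsubst{\seq x}{\trm{\seq a}}}$ reduces to $\varepsilon$ if and only if $\Int{T}\neq\emptyset$; this is what links the conjunction to condition (ii). When the degrees match, each substitution $\lsubst{x_i}{\trm{a_i}}$ exhausts all occurrences of $x_i$ with closed terms, so $M\lsubst{\seq x}{\trm{\seq a}}$ is a finite sum of closed terms and $T$ is a finite sum $\sum_j T_j$ of closed tests. By Corollary~\ref{cor:closed-tests-go-to-eps-zero} each $T_j$ reduces to $\varepsilon$ or to $0$; using sum-idempotency and the neutrality of $0$, the sum $T$ reduces to $\varepsilon$ exactly when some $T_j$ does. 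By soundness (Theorem~\ref{thm:Dmodel}) together with $\Int{\varepsilon}\neq\emptyset$ and $\Int{0}=\emptyset$, a closed test reduces to $\varepsilon$ iff its interpretation is nonempty; hence $T\msto\varepsilon$ iff $\Int{T}=\bigcup_j\Int{T_j}\neq\emptyset$. Concatenating this with the equivalences of the first paragraph gives $(\seq a,\alpha)\in\Int{M}_{\seq x}\iff T\msto\varepsilon$, i.e.\ (i)$\iff$(ii).

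The only delicate point is that the degree condition occurs as an extra conjunct in Lemma~\ref{lemma:newlsubst} but is absent from the statement of (ii), so I must check that convergence of $T$ already forces $\dg{x_i}{M}=\card{a_i}$. I would argue by analysing a failure. If $\card{a_i}>\dg{x_i}{M}$ for some $i$, some linear substitution for $x_i$ hits a term with no remaining occurrence of $x_i$, whence $M\lsubst{\seq x}{\trm{\seq a}}=0$ and $T=0\not\msto\varepsilon$ (and indeed $\Int{T}=\emptyset$). If instead $\card{a_i}<\dg{x_i}{M}$ for some $i$ while no variable is over-supplied, then every summand of $M\lsubst{\seq x}{\trm{\seq a}}$ keeps $\dg{x_i}{M}-\card{a_i}>0$ free occurrences of $x_i$, so every summand of $T$ has $x_i$ free; since a free variable is never erased along reduction (the remark following Definition~\ref{def:convergence-no-promotion}), no summand can reach the closed normal form $\varepsilon$ and $T\not\msto\varepsilon$. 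Thus convergence of $T$ is equivalent to the full conjunction obtained in the first paragraph. I expect this degree bookkeeping---keeping the over-supplied case, which collapses to $0$, apart from the under-supplied case, a genuinely stuck open test---to be the only real obstacle; the rest is a mechanical assembly of the cited results.
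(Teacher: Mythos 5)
Your proof is correct and follows essentially the same route as the paper's: the same chain of equivalences through Corollary~\ref{cor:general-test}, Lemma~\ref{lemma:newlsubst} with Remark~\ref{rem:subst}, and then Corollary~\ref{cor:closed-tests-go-to-eps-zero} plus soundness (Theorem~\ref{thm:Dmodel}). Your third paragraph, checking that convergence of the test already forces $\dg{x_i}{M}=\card{a_i}$ (over-supply collapsing to $0$, under-supply leaving an unerasable free variable), is a detail the paper's proof leaves implicit, and it is handled correctly.
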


\proof We have the following chain of equivalences: 
\begin{iteMize}{$\bullet$}
\item[\ ]$(\seq{a},\alpha) \in \Int{M}_{\seq{x}}$ 
$\iff$ $\seq a \in \Int{\cont{\ga}{M}}_{\seq{x}}$, by Corollary~\ref{cor:general-test},
\item[$\iff$] $\Int{\cont{\ga}{M\lsubst{\seq x}{\trm{\seq{a}}}}}\neq\emptyset$ and $\dg{x_i}{M} = \card a_i$, by Lem\-ma~\ref{lemma:newlsubst}, using Remark~\ref{rem:subst},
\item[$\iff$] $\cont{\ga}{M\lsubst{\seq x}{\trm{\seq a}}}\msto\varepsilon$, by Corollary~\ref{cor:closed-tests-go-to-eps-zero}, i.e. the fact that closed tests can only reduce to either $\varepsilon$ or $0$, and Theorem~\ref{thm:Dmodel}, i.e. the soundness of the model.\qed
\end{iteMize}

We are now able to prove the main result of the section. 

\begin{thm}\label{thm:main1} 
$\cD$ is inequationally fully abstract for the \dzlam-calculus with tests (for all $M,N\in\Set{\gto}$):
$$
\Int{M}_{\seq x}\subseteq \Int{N}_{\seq x} \iff M\Tobsle N
$$
\end{thm}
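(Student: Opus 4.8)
The plan is to prove the two inclusions separately. The direction $\Int{M}_{\seq x}\subseteq\Int{N}_{\seq x}\Rightarrow M\Tobsle N$ is the ``soundness of the preorder'' direction and is essentially a formal consequence of the tools already assembled (monotonicity, invariance under reduction, and the dichotomy for closed tests). The converse $M\Tobsle N\Rightarrow\Int{M}_{\seq x}\subseteq\Int{N}_{\seq x}$ is the completeness (adequacy) direction, and it is the heart of the matter: it is exactly here that the definability machinery of Lemma~\ref{lem:int_alphas} and Proposition~\ref{prop:context} is needed, since each point of the model must be turned into a separating test-context.

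For the first direction, assume $\Int{M}_{\seq x}\subseteq\Int{N}_{\seq x}$ and let $C\hole{\cdot}\in\ContSet$ be any test-context closing both $M$ and $N$, with $C\hole{M}\msto\varepsilon$. By invariance under reduction (Theorem~\ref{thm:Dmodel}), $\Int{C\hole{M}}=\Int{\varepsilon}=\{()\}\neq\emptyset$. Monotonicity of the interpretation (Lemma~\ref{lemma:monotonicity}) propagates the hypothesis through the context, giving $\Int{C\hole{M}}\subseteq\Int{C\hole{N}}$, so $\Int{C\hole{N}}\neq\emptyset$. Now $C\hole{N}$ is a closed test, hence by Corollary~\ref{cor:closed-tests-go-to-eps-zero} it reduces either to $\varepsilon$ or to $0$; were it to reduce to $0$, Theorem~\ref{thm:Dmodel} would force $\Int{C\hole{N}}=\emptyset$, a contradiction. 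Therefore $C\hole{N}\msto\varepsilon$, and since $C\hole{\cdot}$ was arbitrary we conclude $M\Tobsle N$.

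For the converse, assume $M\Tobsle N$ and take any $(\seq a,\alpha)\in\Int{M}_{\seq x}$ with $\seq a=(a_1,\dots,a_n)$; I must show $(\seq a,\alpha)\in\Int{N}_{\seq x}$. Writing $\alpha=c_1\at\cdots\at c_r\at *$ for its decomposition, I will use the single test-context
$$
  C\hole{\cdot}\ :=\ \gt[(\lam\seq x.\hole{\cdot})\,\trm{a_1}\cdots\trm{a_n}\,\trm{c_1}\cdots\trm{c_r}],
$$
which is a legitimate test-context closing both $M$ and $N$ (the outer $\lam\seq x$ binds all free variables, and each $\trm{a_i},\trm{c_j}$ is a closed term by Lemma~\ref{lem:int_alphas}(i)). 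Setting $T_M:=(\lam\seq x.M)\,\trm{a_1}\cdots\trm{a_n}$, a closed term, one has $C\hole{M}=\cont{\alpha}{T_M}$. A direct computation of $\Int{T_M}$ using the interpretation clauses for abstraction and application together with $\Int{\trm{a_i}}=\{a_i\}$ (the bag version of Lemma~\ref{lem:int_alphas}(i)) yields $\Int{T_M}=\{\beta\mid(\seq a,\beta)\in\Int{M}_{\seq x}\}$, so in particular $\alpha\in\Int{T_M}\iff(\seq a,\alpha)\in\Int{M}_{\seq x}$. Applying Proposition~\ref{prop:context} to the \emph{closed} term $T_M$ (empty variable list, empty $\seq a$, where the linear substitution is vacuous) gives $\cont{\alpha}{T_M}\msto\varepsilon\iff\alpha\in\Int{T_M}$. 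Combining, $C\hole{M}\msto\varepsilon\iff(\seq a,\alpha)\in\Int{M}_{\seq x}$, and identically $C\hole{N}\msto\varepsilon\iff(\seq a,\alpha)\in\Int{N}_{\seq x}$. Since $(\seq a,\alpha)\in\Int{M}_{\seq x}$ we get $C\hole{M}\msto\varepsilon$, whence $C\hole{N}\msto\varepsilon$ by $M\Tobsle N$, and therefore $(\seq a,\alpha)\in\Int{N}_{\seq x}$, as required.

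The conceptually hard part is entirely the completeness direction, and it has been pre-digested by the earlier definability results: the only genuinely new step is realizing the linear substitution semantically encoded in Proposition~\ref{prop:context} by means of an \emph{ordinary} context. The device is the $\beta$-redex $(\lam\seq x.\hole{\cdot})\,\trm{a_1}\cdots\trm{a_n}$, whose reduction installs exactly the resources $\trm{a_i}$ into the free occurrences of the $x_i$; the delicate bookkeeping (that mismatched degrees $\dg{x_i}{M}\neq\card{a_i}$ collapse the result to $0$, via the $\subst{\seq x}{0}$ hidden in the $\beta$-rule) is precisely what is absorbed by computing $\Int{T_M}$ and by invoking Proposition~\ref{prop:context} on the already-closed term $T_M$ rather than trying to match $C\hole{M}$ syntactically against $\cont{\alpha}{M\lsubst{\seq x}{\trm{\seq a}}}$. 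Finally, equational full abstraction for $\Tobseq$ follows immediately by applying the statement in both directions.
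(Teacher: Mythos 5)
Your proof is correct and follows essentially the same route as the paper: the forward direction is the identical monotonicity/dichotomy argument, and the backward direction uses exactly the paper's separating context $\cont{\alpha}{(\lam\seq x.\hole{\cdot})\trm{\seq a}}$ together with Proposition~\ref{prop:context}. The only (harmless, arguably cleaner) variation is that you apply Proposition~\ref{prop:context} in its closed-term instance to $T_M=(\lam\seq x.M)\trm{\seq a}$ after computing $\Int{T_M}$ denotationally, whereas the paper first reduces $C\hole{M}$ syntactically to $\cont{\alpha}{M\lsubst{\seq x}{\trm{\seq a}}}$ and invokes the proposition for the open term $M$ directly.
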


\begin{proof} 
$(\imp)$ 
Assume that $\Int{M}_{\seq x}\subseteq\Int{N}_{\seq x}$, 
and let  $C\hole{\cdot}$ be a test-context closing both $M$ and 
$N$ and such that $C\hole{M}\msto\varepsilon$.
By Theorem~\ref{thm:Dmodel}, $\Int{C\hole{M}} = \Int{\varepsilon} = \{()\}$.
By monotonicity of the interpretation we get $\Int{C\hole{M}}\subseteq\Int{C\hole{N}}$,
thus $\Int{C\hole{N}}\neq\emptyset$. 
By Corollary~\ref{cor:closed-tests-go-to-eps-zero} this entails that $C\hole{N}\msto \varepsilon$. 

$(\Leftarrow)$ Suppose, by the way of contradiction, that $M\Tobsle N$ holds but 
there is an element $(\seq a,\alpha)\in\Int{M}_{\seq x}-\Int{N}_{\seq x}$.
Then the test-context $C\hole{\cdot} = \cont{\alpha}{(\lam\seq x.\hole{\cdot})\trm{\seq{a}}}$
is such that $C\hole{M}\msto \cont{\alpha}{M\lsubst{\seq{x}}{\trm{\seq{a}}}}\msto \varepsilon$
and $C\hole{N}\not\msto \varepsilon$ by Proposition~\ref{prop:context},
which is a contradiction.
\end{proof}

\begin{cor}\label{cor:main1}  $\cD$ is equationally fully abstract for the \dzlam-calculus with tests.
\end{cor}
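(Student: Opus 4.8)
The plan is to derive this corollary directly from the inequational full abstraction established in Theorem~\ref{thm:main1}, exploiting that both the set-theoretic equality of interpretations and the observational equivalence $\Tobseq$ arise as the symmetric part of their respective preorders. First I would recall from Definition~\ref{def:obsle} that $M\Tobseq N$ holds exactly when $M\Tobsle N$ and $N\Tobsle M$, and, by antisymmetry of set inclusion, that $\Int{M}_{\seq x}=\Int{N}_{\seq x}$ holds exactly when $\Int{M}_{\seq x}\subseteq\Int{N}_{\seq x}$ and $\Int{N}_{\seq x}\subseteq\Int{M}_{\seq x}$. Thus each of the two ``equivalence'' relations in play is the intersection of a preorder with its converse.

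The chain of equivalences is then immediate: applying Theorem~\ref{thm:main1} in both directions, $\Int{M}_{\seq x}=\Int{N}_{\seq x}$ iff $\Int{M}_{\seq x}\subseteq\Int{N}_{\seq x}$ and $\Int{N}_{\seq x}\subseteq\Int{M}_{\seq x}$, iff $M\Tobsle N$ and $N\Tobsle M$, iff $M\Tobseq N$. This is precisely the assertion of equational full abstraction, namely that the equivalence on terms induced by equality of interpretations coincides with $\Tobseq$.

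There is no genuine obstacle here: this corollary is the routine observation, already flagged in the text, that an inequationally fully abstract model is a fortiori equationally fully abstract. The only minor point requiring care is to keep the free-variable list $\seq x$ fixed uniformly on both sides of every comparison; since $\Tobsle$ quantifies over all test-contexts closing both $M$ and $N$, while the interpretation convention takes $\seq x$ to be a repetition-free list containing $\FV(M)\cup\FV(N)$, the two preorders are compared on exactly the same data, so no subtlety arises and the argument closes in a single line.
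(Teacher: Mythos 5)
Your proposal is correct and is exactly the argument the paper intends: the corollary is stated without proof precisely because, as noted at the start of Section~\ref{sec:FA-dzlam-with-Tests}, every inequationally fully abstract model is a fortiori equationally fully abstract, and your derivation via the symmetric parts of the two preorders from Theorem~\ref{thm:main1} is that routine observation spelled out. Nothing is missing.
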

The reader who is only interested in the extension of Theorem~\ref{thm:main1} (and of its corollary) to the full \dlam-calculus with tests 
can skip safely the next section.

\section{Full Abstraction for $\dzlam$-Calculus without Tests}\label{sec:FA-notests-nobang}

In this section we are going to prove that tests do not add any discriminatory power 
to the contexts already present in the \dzlam-calculus.
This means that whenever there is a test-context $C\hole{\cdot}$ separating two test-free terms $M,N$ (sending, say, $M$ to $\varepsilon$ and $N$ to 0)
there exists also a term-context $D\hole{\cdot}$ that is still able to separate $M$ from $N$, without using the operators $\gt$ and $\gto$.
(As we will discuss in Section~\ref{sec:conclusions}, this is not the case for the full $\dlam$-calculus with tests.)

From this syntactic result and the full abstraction for the \dzlam-calculus with tests (Theorem~\ref{thm:main1})
we conclude that the model $\cD$ is also inequationally fully abstract for its test-free fragment (Theorem~\ref{thm:main1-1}, below).

\subsection{The $\dzlam$-Calculus (Without Tests)} 
The \emph{$\dzlam$-calculus} is a restriction of the $\dzlam$-calculus with tests presented in Section~\ref{sec:dzlam-calculus+tests}. 
The restriction is obtained by erasing from the syntax the constructors $\gt$ and $\gto$ and the corresponding reduction rules, i.e.\ $(\gt),(\gto)$ and $(\gamma)$.
In other words the tests are no longer part of the language and $\to_\beta$ is the only reduction rule of the system.

This description is enough to completely characterize the system --- for a more detailed description, see \cite{EhrhardR06bis,EhrhardR08}.

\begin{nota} We write $\Set{r}$ (resp.\ $\sums{r}$) for the set of (resp.\ finite sums of) terms of the $\dzlam$-calculus.
The set of all (term-)contexts of the $\dzlam$-calculus will be denoted by $\Set{r}_{\hole{\cdot}}$.
\end{nota}

We still write $M,N,L,H$ for terms in $\Set{r}$, $\sM,\sN,\sL,\sH$ for sums of terms in $\sums{r}$, $P,Q$ for bags
and $D\hole{\cdot}$ for contexts. 
This will not create confusion because we will always specify the set they belong to.

In order to properly define the operational pre-order in this setting, we first need to introduce the notion of solvable term.

\subsection{Solvability in the $\dzlam$-Calculus}
In \lam-calculus \cite{Bare} a term $M$ is \emph{solvable} whenever there exist suitable arguments that, once supplied to $M$, make it reduce to the identity ---
this means that $M$ it is able to interact operationally with the environment.

In resource calculi solvability has been thoroughly studied by Pagani and Ronchi Della Rocca in \cite{PaganiR11,PaganiR10}. 
 Their work needs to be adapted because of the absence of promotion in our system. For the \dzlam-calculus the good notion of solvable term is the following.

\begin{defi}\label{def:dzlam-solvable}
A term $M\in\Set{r}$ is \emph{solvable} if there is a term-context $D\hole{\cdot}$ such that $D\hole{M}\msto_\gb \bold{I}$. We say that $M$ is \emph{unsolvable}
 otherwise.
\end{defi}

Reading \cite{PaganiR11,PaganiR10} one may wonder why in the previous definition we do not ask more generally that $D\hole{M}\msto_\gb \bold{I} + \sN$ for some $\sN\in\sums{r}$.
This is due to the fact that in our \dzlam-calculus the two definitions are equivalent, as shown in the next lemma. 
(So we choose the easier formulation.)

\begin{lem}\label{lemma:easylemma}
Let $M\in\Set{r}$ be a closed term. If $M\msto_\gb \bold{I} + \sM$ for some $\sM\in\sums{r}$, then there exists a sequence $\seq P$ of closed bags such that
 $M\seq P\msto_\gb \bold{I}$.
\end{lem}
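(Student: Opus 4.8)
The plan is to reduce $M$ to its normal form and then feed it the trivial argument $[\bold{I}]$ sufficiently many times. Since the \dzlam-calculus is strongly normalizing and Church--Rosser (Theorem~\ref{thm:CR+SN}), $M$ has a unique normal form, which by confluence agrees with the normal form of $\bold{I}+\sM$; as $\bold{I}$ is already normal and the sum is idempotent, this normal form has the shape $\bold{I}+N_1+\cdots+N_p$ with each $N_i$ a closed normal form. By the characterisation of normal forms (and since the test constructors are absent), every closed normal form has the shape $\lam x_1\cdots x_m.x_j Q_1\cdots Q_n$ with $m\ge 1$ and $1\le j\le m$. Because $\msto_\gb$ is closed under the context $\hole{\cdot}P_1\cdots P_s$, it suffices to exhibit closed bags $\seq P$ with $(\bold{I}+\sum_i N_i)\seq P\msto_\gb\bold{I}$, and I will take every component of $\seq P$ equal to $[\bold{I}]$.

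The heart of the matter is the following claim, to be proved by well-founded induction: \emph{for every closed normal form $N$ there is $s\ge 0$ with $N[\bold{I}]^{\sim s}\msto_\gb\bold{I}$ or $N[\bold{I}]^{\sim s}\msto_\gb 0$} (call such an $N$ \emph{good}). The induction is on the lexicographic measure $\mu(N)=(\size{N},m,m-j)$ on $\nat^3$. The key computation is to analyse $\mathsf{NF}(N[\bold{I}])$: the outermost redex $N[\bold{I}]$ must fire, and by Remark~\ref{rem:beta-contr} it yields $0$ unless $\dg{x_1}{x_jQ_1\cdots Q_n}=1$, in which case it yields the single term obtained by replacing that one occurrence of $x_1$ with $\bold{I}$. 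I would then show that every nonzero summand $N'$ of $\mathsf{NF}(N[\bold{I}])$ is either $\bold{I}$ or satisfies $\mu(N')<\mu(N)$: each $\beta$-step strictly lowers $\size{}$ (a redex $(\lam x.B)Q$ with $\card Q=k$ shrinks the size by $3+k$), so any summand reached in two or more steps already has $\size{N'}<\size{N}$; the only summands with $\size{N'}=\size{N}$ arise when the substitution of $\bold{I}$ creates no new redex, namely when $N=\lam x_1\cdots x_m.x_1$ (where $m-j$ drops from $m-1$ to $0$) or when the consumed variable occurred in argument position (where the number $m$ of leading abstractions drops). Given the claim, good-ness of $N$ follows, because good-ness is preserved by $\msto_\gb$, by appending further $[\bold{I}]$'s (since $\bold{I}[\bold{I}]\msto_\gb\bold{I}$ and $0[\bold{I}]=0$), and by summation modulo idempotency.

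Finally I would assemble the pieces: each $N_i$ is good, with some bound $s_i$, and $\bold{I}$ is good; putting $s=\max_i s_i$ and letting $\seq P$ consist of $s$ copies of $[\bold{I}]$, every summand of $(\bold{I}+\sum_i N_i)[\bold{I}]^{\sim s}$ normalises to $\bold{I}$ or $0$, and these collapse by idempotency to exactly $\bold{I}$ (the $\bold{I}$ summand survives because $\bold{I}[\bold{I}]^{\sim s}\msto_\gb\bold{I}$). Hence $M\seq P\msto_\gb(\bold{I}+\sum_i N_i)\seq P\msto_\gb\bold{I}$, as required.

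The main obstacle is precisely the termination/measure argument inside the claim: a naive induction on $\size{}$ or on the number of leading abstractions fails, because feeding $[\bold{I}]$ can preserve the size (for instance $\bold{T}[\bold{I}]\msto_\gb\lam y.\bold{I}$, with both sides of size $3$) or even raise the number of leading abstractions (for instance $(\lam x.x[\bold{T}])[\bold{I}]\msto_\gb\bold{T}$). The lexicographic refinement $(\size{N},m,m-j)$ is what reconciles these two phenomena, the first being absorbed by the drop of $m-j$ and the second by the simultaneous drop of $\size{}$.
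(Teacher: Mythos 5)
Your proof is correct and follows essentially the same strategy as the paper's: normalize, then repeatedly apply the bag $[\bold{I}]$ until every summand collapses to $\bold{I}$ or $0$, with termination driven by a size argument. The only real difference is bookkeeping: the paper applies a block of $h$ bags per round (with $h$ the maximal number of leading abstractions among the summands) so that the plain size $\size{\cdot}$ strictly decreases at each round, whereas you apply one bag at a time and absorb the size-preserving steps into the lexicographic refinement $(\size{N},m,m-j)$ — a somewhat more careful account of the same phenomenon.
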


\begin{proof}
Suppose $M$ closed such that $M\msto_\gb \bold{I} + \sM$.
Then $\sM$ is also closed and normalizes to a sum $\sM'=\Sigma_{j=1} \lambda \seq y_j.M_j'$
such that each $M'_j$ is not an abstraction itself. 
Now, if $\sM' = \bold{I}$ then we are done as $M\msto_\gb \bold{I} + \sM\msto_\gb \bold{I} + \bold{I} = \bold{I}$.
Otherwise, let $h$ be the maximum among the lengths of the sequences $\seq y_j$. 
Then $\sM'[\bold{I}]^{\sim h}$ is again a sum of closed terms
 and normalizes to a sum $\sM''$ of closed abstraction terms whose size is strictly smaller than $\sM'$. The reason is that for each summand
 $(\lambda \seq y_j.M_j')[\bold{I}]^{\sim h}$ which does not reduce to $0$, $M_j'$ must contain exactly one occurrence of each variable in $\seq y_j$.
 Hence $M_j'\lsubst{{\seq y_j}}{[\bold{I}]}\subst{\seq y_j}{0}$ has the same size as $(\lambda \seq y_j.M_j')$ but it reduces (via contraction of the $\bold{I}$ that has replaced the head
 variable of $M_j'$) to a term having a strictly smaller size, unless $\lambda \seq y_j.M_j' \equiv \bold{I}$. Iterating this reasoning for at most a number of times 
 equal to $k = \size{\sM}+1$ one obtains that either $\sM[\bold{I}]^{\sim k}\msto_\gb 0$ or $\sM[\bold{I}]^{\sim k}\msto_\gb \bold{I}$.
 
Therefore $M[\bold{I}]^{\sim k}\msto_\gb \bold{I}[\bold{I}]^{\sim k} + \sM[\bold{I}]^{\sim k}\msto_\gb \bold{I}$, as desired.
\end{proof}

As in \cite{PaganiR11,PaganiR10} we are going to characterize solvability from both a syntactic and a semantic point of view (Theorem~\ref{thm:equivalence-solvable-normalizable}).

\begin{prop}\label{prop:solvability}
Let $M\in\Set{r}$ and let $ \FV(M) = \seq x$.
If $M$ reduces to a normal form different from 0, then there are two sequences $\seq P, \seq P'$ of closed bags such that:
$$ (M\seq P)\lsubst{\seq x}{\seq P'}\subst{\seq x}{0}\msto_\beta \bold{I} + \sM,\textrm{ for some }\sM\in\sums{r}. $$
\end{prop}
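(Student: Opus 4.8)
The plan is to pass to a normal form, cut down to a single head-normal summand, and then build the closing bags by induction on its size, exploiting that in the resource calculus arguments cannot be erased but a bag of copies of $\bold{I}$ can always be absorbed. First I would observe that application of closed bags, linear substitution and the substitution $\subst{\seq x}{0}$ all preserve $\msto_\gb$ (this is a standard commutation, following from reduction being contextual together with the substitution lemmas); hence it suffices to prove the statement for the normal form $N$ to which $M$ reduces. Writing $N = N_1 + \cdots + N_k$ with $k\ge 1$ and each $N_i\in\Set{r}$ a normal-form term, the three operations distribute over the sum by multilinearity, so $(N\seq P)\lsubst{\seq x}{\seq{P'}}\subst{\seq x}{0} = \sum_i (N_i\seq P)\lsubst{\seq x}{\seq{P'}}\subst{\seq x}{0}$. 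Thus it is enough to produce $\seq P,\seq{P'}$ for which a single chosen summand $N_{i_0}$ reduces to $\bold{I}+\sM'$: the remaining summands only enlarge $\sM$, and by idempotency of sums the copy of $\bold{I}$ is preserved (summands killed by a degree mismatch simply vanish through $\subst{\seq x}{0}$).

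Fix such a summand, of the shape $N_{i_0} = \lam z_1\cdots z_p.\,u\,R_1\cdots R_q$ with $R_j = [S_{j,1},\ldots,S_{j,c_j}]$ and head $u$ either some $z_\ell$ or some free $x_i$; I would then argue by induction on $\size{N_{i_0}}$. The leading abstractions are consumed by the applied bags $\seq P = P_1,\ldots,P_p$ (with $\card{P_\ell} = \dg{z_\ell}{\cdot}$) and the free variables by $\lsubst{\seq x}{\seq{P'}}$ (with $\card{P'_i} = \dg{x_i}{N_{i_0}}$). Since a bag is distributed over all occurrences of its variable in every possible order, I only need each $P_\ell$ (resp. $P'_i$) to contain exactly the multiset of closed replacement terms required, occurrence by occurrence, and then argue that one distribution realises the intended assignment. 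At the occurrence standing in head position I supply an absorbing gadget $G$; at each occurrence lying inside a resource $S_{j,k}$ I supply, using the induction hypothesis, a closed term that makes that resource reduce to $\bold{I}$.

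Concretely, each $S_{j,k}$ is a strictly smaller normal form, so by the induction hypothesis it admits a closing head context reducing it to $\bold{I}+\cdots$: its free-variable part is merged into the global bags $P_\ell,P'_i$, while the bags to which it must be applied are hard-wired into the head gadget $G$ placed at $u$. The role of $G$ is to feed each incoming resource its induction-hypothesis arguments, so that under the favourable distribution every $S_{j,k}$ reduces to $\bold{I}$, and then to fuse these identities through the cascade $\bold{I}[\bold{I}]^{\sim n}\msto_\gb \bold{I}$ (the behaviour already exploited for $\Xi_{n_1,\ldots,n_m}$) into the single output $\bold{I}$. All the terms introduced ($\bold{I}$, the $\Xi$'s and the evaluator $G$) are test-free, so the construction stays inside $\Set{r}$, and collecting $\seq P$ and $\seq{P'}$ yields $(N_{i_0}\seq P)\lsubst{\seq x}{\seq{P'}}\subst{\seq x}{0}\msto_\gb \bold{I}+\sM'$.

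The main obstacle is exactly the strict linearity of the calculus: unlike in the ordinary $\lambda$-calculus one cannot use a $\bold{K}$-like term to discard the arguments $R_1,\ldots,R_q$, so every resource must be actively consumed. This forces exact matching of cardinalities with degrees at every step and turns the head gadget $G$ into an evaluator that must route the correct induction-hypothesis arguments to each resource; because a bag is substituted into all occurrences of a variable simultaneously and in all orders, the construction must be set up so that at least one summand of the resulting (large) sum follows the intended reduction. Verifying that this favourable distribution genuinely occurs, and that the collected identities fuse via the $\bold{I}$-cascade, is the delicate combinatorial core, and it is idempotency of sums that makes it enough to exhibit this one path rather than to control the entire sum.
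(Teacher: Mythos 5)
Your proposal follows essentially the same route as the paper's proof: reduce $M$ to a sum containing a head-normal summand $\lam \seq y.yQ_1\cdots Q_q$, apply the induction hypothesis (on size) to the terms inside the argument bags, merge their closing bags variable-by-variable into global bags, and insert into the head variable's bag an evaluator built from $\Xi_{k_1,\ldots,k_q}$ that routes each resource its arguments and fuses the resulting identities via the $\bold{I}$-cascade, relying on idempotency to track only the one favourable distribution. The paper's gadget $H = \lam z_1\dots z_q.\Xi_{k_1,\ldots,k_q}[z_1\seq P_{1,1},\dots]\cdots$ is exactly your $G$, so this is the same argument, just sketched at a higher level.
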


\begin{proof}
By induction on the size of $M$.
Let $\seq x = x_1,\ldots,x_n$ and suppose that $M\msto_\gb \lam y_1\dots y_m.y Q_1\cdots Q_q + \sM$ where 
$m,q\in\nat$, 
$Q_i = [M_{i,1},\dots, M_{i,k_i}]$ for all $1\le i \le q$, 
each $M_{i,j}$ is in normal form for every $1\le j\le k_i$ and $\sM\in\sums{r}$.
For the sake of simplicity, assume $y = y_h$ for some $1\le h\le m$ (the proof is analogous when $y\in\seq x$).

By induction hypothesis, for all  $1\le i \le q$  and $1\le j \le k_i$
 there are sequences $\seq P_{i,j},\seq P'_{i,j},\seq P''_{i,j}$ of closed bags such that $M_{i,j}\seq P_{i,j}\lsubst{\seq y}{\seq P'_{i,j}}\lsubst{\seq x}{\seq P''_{i,j}}\subst{\seq y,\seq x}{0}\msto_\gb \bold{I} + \sM_{i,j}$ for some $\sM_{i,j}\in\sums{r}$.
In the following, we will denote by $\sigma_{i,j}$ the substitution $\lsubst{\seq y}{\seq P'_{i,j}}\lsubst{\seq x}{\seq P''_{i,j}}\subst{\seq y,\seq x}{0}$.

We start by defining the closed term $H$ that will be plugged in head position:
$$
	H = \lam z_1\dots z_q.\Xi_{k_1,\ldots,k_q}[z_1\seq P_{1,1},\dots,z_1\seq P_{1,k_1}]\cdots[z_q\seq P_{q,1},\dots,z_q\seq P_{q,k_q}].
$$
In the rest of the proof we shorten $\Xi_{k_1,\ldots,k_q}$ to $\Xi$.
In addition, we set:
$$
\begin{array}{ll}
Q'_k = \mcup^q_{i = 1} \mcup^{k_i}_{j = 1} P'_{i,j,k}&\textrm{ for all }1\le k \le m\\
Q''_k = \mcup^q_{i = 1} \mcup^{k_i}_{j = 1} P''_{i,j,k}&\textrm{ for all }1\le k \le n\\
\end{array}
$$
We now prove that $(MQ'_1\cdots Q'_{h-1}(Q'_{h}\mcup [H])Q'_{h+1} \cdots Q'_m) \lsubst{\seq x}{\seq Q''}\subst{\seq x}{0}\msto_\gb \bold{I} + \sM$. In the reduction path we will only focus on the term reducing to $\bold{I}$ by collecting all the others into generic $\sM$'s.
Indeed, we have:
$$
\begin{array}{l}
(MQ'_1\cdots Q'_{h-1}(Q'_{h}\mcup [H])Q'_{h+1} \cdots Q'_m) \lsubst{\seq x}{\seq Q''}\subst{\seq x}{0}\msto_\gb\\
((\lam y_1\dots y_m.y Q_1\cdots Q_q)Q'_1\cdots Q'_{h-1}(Q'_{h}\mcup [H])Q'_{h+1} \cdots Q'_m) \lsubst{\seq x}{\seq Q''}\subst{\seq x}{0} + \sM_1\msto_\gb\\
(HQ_1\cdots Q_q)\lsubst{\seq y}{\seq Q'}\lsubst{\seq x}{\seq Q''}\subst{\seq y,\seq x}{0} + \sM_2\msto_\gb\\
(\Xi [M_{1,1}\seq P_{1,1},\dots,M_{1,k_1}\seq P_{1,k_1}]\cdots[M_{q,1}\seq P_{q,1},\dots,M_{q,k_q}\seq P_{q,k_q}])\lsubst{\seq y}{\seq Q'}\lsubst{\seq x}{\seq Q''}\subst{\seq y,\seq x}{0} + \sM_3\msto_\gb\\
\Xi [M_{1,1}\seq P_{1,1}\sigma_{1,1},\dots,M_{1,k_1}\seq P_{1,k_1}\sigma_{1,k_1}]\cdots[M_{q,1}\seq P_{q,1}\sigma_{q,1},\dots,M_{q,k_q}\seq P_{q,k_q}\sigma_{q,k_q}] + \sM_4\\
\end{array}
$$
By induction hypothesis, this reduces to $\Xi [\underbrace{\bold{I},\dots,\bold{I}}_{k_1}]\cdots[\underbrace{\bold{I},\dots,\bold{I}}_{k_q}] + \sM_5\msto_\gb \bold{I} + \sM$ for $\sM \ass \sM_5$ (see Example~\ref{ex:beta-reduction}(\ref{ex:beta-reduction4})).
\end{proof}

Note that in the statement above $\sM$ must be closed because $M\seq P\lsubst{\seq x}{\seq P'}\subst{\seq x}{0}$ is.

\begin{thm}\label{thm:equivalence-solvable-normalizable}
Let $M\in\Set{r}$, then the following three sentences are equivalent.
\begin{enumerate}[\em(i)]
\item\label{thm:equivalence-solvable-normalizable1} 
	$M$ is solvable,
\item\label{thm:equivalence-solvable-normalizable2} 
	$M \msto_\gb N + \sN$ for some $N$ in normal form and $\sN\in\sums{r}$,
\item\label{thm:equivalence-solvable-normalizable3} 
	$\Int{M}_{\seq x}\neq\emptyset$.
\end{enumerate}
\end{thm}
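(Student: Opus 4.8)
The plan is to prove the cyclic chain of implications (i)$\Rightarrow$(iii)$\Rightarrow$(ii)$\Rightarrow$(i), routing the two easy directions through the denotational semantics and reserving the genuine syntactic work for (ii)$\Rightarrow$(i), where I can lean on Proposition~\ref{prop:solvability} and Lemma~\ref{lemma:easylemma}.

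For (i)$\Rightarrow$(iii), suppose $M$ is solvable, so $D\hole{M}\msto_\gb\bold{I}$ for some term-context $D\hole{\cdot}$. By soundness (Theorem~\ref{thm:Dmodel}, which holds \emph{a fortiori} for the test-free fragment) $\Int{D\hole{M}}=\Int{\bold{I}}=\{[\alpha]\at\alpha\st\alpha\in\cD\}\neq\emptyset$. I then argue by contraposition: if $\Int{M}_{\seq x}$ were empty, then $\Int{M}_{\seq x}=\Int{0}_{\seq x}$, and since the interpretation is compositional it would follow that $\Int{D\hole{M}}=\Int{D\hole{0}}$. But $0$ annihilates every constructor of the test-free calculus ($\lam x.0=0$, $0P=0$, $M[0,\seq L]=0$), so $D\hole{0}=0$ and $\Int{D\hole{0}}=\emptyset$, a contradiction. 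Hence $\Int{M}_{\seq x}\neq\emptyset$.

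For (iii)$\Rightarrow$(ii) I use strong normalization (Theorem~\ref{thm:CR+SN}): $M\msto_\gb\sM'$ with $\sM'$ in normal form. By soundness $\Int{\sM'}_{\seq x}=\Int{M}_{\seq x}\neq\emptyset$, so $\sM'\neq 0$ (as $\Int{0}_{\seq x}=\emptyset$); writing $\sM'=N+\sN$ for one of its normal summands $N$ and the rest $\sN$ gives exactly (ii). The real content is (ii)$\Rightarrow$(i). From $M\msto_\gb N+\sN$ with $N$ a genuine term in normal form, confluence and strong normalization give that the normal form of $M$ is different from $0$ (the normal summand $N$ survives), so Proposition~\ref{prop:solvability} applies: there are sequences $\seq P,\seq{P'}$ of closed bags with $(M\seq P)\lsubst{\seq x}{\seq{P'}}\subst{\seq x}{0}\msto_\beta\bold{I}+\sM$, where $\seq x=\FV(M)$. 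The key observation is that the combined substitution $\lsubst{\seq x}{\seq{P'}}\subst{\seq x}{0}$ is exactly a $\beta$-contractum: setting $D\hole{\cdot}=(\lam\seq x.(\hole{\cdot}\,\seq P))\seq{P'}$ yields a \emph{closed} term-context (all the $\seq P,\seq{P'}$ are closed and $\lam\seq x$ binds every free variable of $M$) with $D\hole{M}\msto_\beta(M\seq P)\lsubst{\seq x}{\seq{P'}}\subst{\seq x}{0}\msto_\beta\bold{I}+\sM$. Since $D\hole{M}$ is closed, Lemma~\ref{lemma:easylemma} supplies a further sequence $\seq{P''}$ of closed bags with $(D\hole{M})\seq{P''}\msto_\beta\bold{I}$; the term-context $D\hole{\cdot}\seq{P''}$ then witnesses the solvability of $M$.

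The only place requiring care, since the heavy combinatorial lifting is already carried out in Proposition~\ref{prop:solvability} and Lemma~\ref{lemma:easylemma}, is this final assembly in (ii)$\Rightarrow$(i): one must present $(M\seq P)\lsubst{\seq x}{\seq{P'}}\subst{\seq x}{0}$ as the reduct of a genuine term-context applied to $M$ — checking via Schwarz's Theorem (Theorem~\ref{thm:Schwarz}) that the nested abstractions over the distinct $x_i$ and the closed bags $P'_i$ produce precisely the iterated linear-plus-zero substitution — and then eliminate the residual summand $\sM$, passing from ``$\msto_\beta\bold{I}+\sM$'' to ``$\msto_\beta\bold{I}$'' by invoking closedness and Lemma~\ref{lemma:easylemma}.
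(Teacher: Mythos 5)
Your proof is correct, but it is organized differently from the paper's. You close the cyclic chain (i)$\Rightarrow$(iii)$\Rightarrow$(ii)$\Rightarrow$(i), whereas the paper proves the two biconditionals (i)$\Leftrightarrow$(ii) and (ii)$\Leftrightarrow$(iii) separately. The implication (ii)$\Rightarrow$(i) is identical in both (Proposition~\ref{prop:solvability} plus Lemma~\ref{lemma:easylemma}, assembled into the context $(\lam\seq x.\hole{\cdot}\seq P)\seq{P'}\seq{P''}$), and your (iii)$\Rightarrow$(ii) is just the contrapositive of the paper's. The genuine divergence is in how non-emptiness of the interpretation is obtained: the paper's (ii)$\Rightarrow$(iii) carries out a structural induction proving that \emph{every} normal form $\lam\seq z.yP_1\cdots P_p$ has a non-empty interpretation, a small adequacy lemma with real content; you bypass this entirely by going (i)$\Rightarrow$(iii) directly, extracting $\Int{M}_{\seq x}\neq\emptyset$ from $\Int{D\hole{M}}=\Int{\bold I}\neq\emptyset$ together with $D\hole{0}=0$ and compositionality of the interpretation. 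Your route is more economical; what you lose is that in the paper the equivalence (i)$\Leftrightarrow$(ii) is established purely syntactically, with no appeal to the model, and the induction on normal forms is a reusable semantic fact. Two small points to tighten: the ``compositionality'' step is exactly Lemma~\ref{lemma:monotonicity} applied in both directions, and since $0$ is an empty sum rather than a term you should either note that the monotonicity proof extends verbatim to sums or argue directly by induction on $D\hole{\cdot}$ that $\Int{M}_{\seq x}=\emptyset$ forces $\Int{D\hole{M}}=\emptyset$ (using that the hole sits inside a non-empty bag, so the empty-bag clause never rescues you); neither is a gap, just a citation to make explicit.
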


\begin{proof} (i $\Rightarrow$ ii) Suppose by contradiction that there is no normal $N$ such that $M \msto_\gb N + \sN$ for some $\sN\in\sums{r}$.
Since the \dzlam-calculus is strongly normalizing, the only possibility is that $M\msto_\gb 0$.
Therefore, for every term-context $D\hole{\cdot}$ we would have $D\hole{M}\msto_\gb D\hole{0} = 0$.
This is a contradiction since the calculus is Church-Rosser and by hypothesis there should be a term-context $D\hole{\cdot}$ such that $D\hole{M}\msto_\beta \bold{I}$.

(ii $\Rightarrow$ i) Let $M \msto_\gb N + \sN$ and $\seq x = \FV(M)$. 
By Proposition~\ref{prop:solvability} there are sequences $\seq P,\seq P'$ of closed bags such that 
$M\seq P\lsubst{\seq x}{\seq P'}\subst{\seq x}{0}\msto_\gb \bold{I} + \sM$ for some $\sM\in\sums{r}$.
By Lemma~\ref{lemma:easylemma} there is a sequence $\seq P''$ of closed bags such that 
$M\seq P\lsubst{\seq x}{\seq P'}\subst{\seq x}{0}\seq P''\msto_\gb \bold{I}$. 
It is then enough to set $D\hole{\cdot} = (\lam \seq x.\hole{\cdot}\seq P)\seq P'\seq P''$
to have $D\hole{M}\msto_\gb \bold{I}$.

(ii $\Rightarrow$ iii) By Theorem~\ref{thm:Dmodel} $\Int{M}_{\seq x} = \Int{N}_{\seq x} \cup \Int{\sN}_{\seq x}$ for $\seq x = x_1,\dots,x_n\subseteq\FV(M)$.
We now prove, by structural induction on $N$, that $\Int{N}_{\seq x}\neq\emptyset$ for all $N$ in normal form. 
Let $N = \lambda z_1\dots z_m.yP_1\cdots P_p$ where $p\ge 0$ and each $P_i = [L_{i,1},\dots,L_{i,k_i}]$ 
contains all normal terms.
For the sake of simplicity we assume $y = z_h\in\seq z$ (the proof is analogous for $y\in\seq x$).

Base case $p=0$. Trivial since for all $\alpha\in\cD$ we have $((\seq{[]},[\alpha],\seq{[]}),\alpha)\in\Int{N}_{\seq x}$ where $[\alpha]$ occurs in $h$-th position.

Induction case $p>0$. By induction hypothesis, there exist 
$(\seq c_{i,j},\gb_{i,j}) \in \Int{L_{i,j}}_{\seq x,\seq z}$ for each $1\le i\le p$ and $1\le j\le k_p$.
Let $b_i = [\gb_{i,1},\dots,\gb_{i,k_i}]$ for every $1\le i \le p$
and
$\seq a_0 = (\seq{[]},[b_1\at \cdots\at b_p\at \alpha],\seq{[]})\in\Mfin{\cD}^{n+m}$ where
the only non-empty multiset is in $n+h$ position.
Then $(\seq a_0,b_1\at \cdots\at b_p\at \alpha)\in \Int{z_h}_{\seq x,\seq z}$ and $(\seq a_i,b_i) \in\Int{P_i}_{\seq x,\seq z}$ for $\seq a_i = \mcup_{j = 1}^{k_i} \seq c_{i,j}$.
It follows that $(\seq a_0\mcup\seq a_1\mcup\cdots\mcup \seq a_p,\alpha)\in\Int{z_hP_1\cdots P_p}_{\seq x,\seq z}$.
We conclude since $\Int{z_hP_1\cdots P_p}_{\seq x,\seq z}\neq\emptyset$ if and only if $\Int{\lam z_1\dots z_m.z_hP_1\cdots P_p}_{\seq x}\neq\emptyset$.

(iii $\Rightarrow$ ii) Suppose that $M\msto_\gb 0$. Then by Theorem~\ref{thm:Dmodel} we have $\Int{M}_{\seq x} = \Int{0}_{\seq x} = \emptyset$,
which is a contradiction.
\end{proof}

\begin{defi}\label{def:obsle-notest-nobangs} 
The \emph{operational pre-order} $\obsle$ on the $\dzlam$-calculus is defined as follows (for all $M,N\in\Set{r}$): 
$$
	M \obsle N \iff \forall D\hole{\cdot}\in\Set{r}_{\hole{\cdot}} \textrm{ closing $M,N$} (D\hole{M} \mbox{ is solvable } \imp D\hole{N} \mbox{ is solvable}). 
$$
We set $M \obseq N$ if and only if $M \obsle N$ and $N \obsle M$. 
\end{defi}

Let us consider the restriction of the preorder $\obsle$ (see Definition \ref{def:obsle}) to the terms of the $\dzlam$-calculus (without tests).
 Theorem \ref{thm:main1} shows that for all terms $M,N$ of the $\dzlam$-calculus (without tests) we have
 $\Int{M}_{\seq x}\subseteq \Int{N}_{\seq x} \iff M \Tobsle N$. Later in this section (Theorem~\ref{thm:main1-1}) we will prove that
 $\Int{M}_{\seq x}\subseteq \Int{N}_{\seq x} \iff M\obsle N$. Hence the preorder $\obsle$ coincides, on the test-free language, with $\Tobsle$.
 This is an \emph{a fortiori} justification of Definition \ref{def:obsle-notest-nobangs}, which was anyway supported by the intuition that
 solvable $\dzlam$-terms are a kind of arenas over which the solvability game can be successfully played and simulated by the throw/catch game of the 
 test constructions.

\subsection{Full Abstraction via Test Expansion}
As mentioned in Section~\ref{sec:dzlam-calculus+tests}, the term $\gto(V)$ roughly corresponds to $V$ preceded by an infinite sequence of dummy $\lambda$-abstractions; 
dually, the test $\gt[L_1]\paral\cdots\paral\gt[L_k]$ corresponds to providing each $L_i$ with an infinite sequence of empty bags.
(This is also clear from the reduction rules $(\gt)$ and $(\gto)$.)
In this section we show that the infinite nature of these sequences is not essential in the \dzlam-calculus.
Roughly speaking, one can find an $n$ such that $\lam x_1\dots x_n.V$ has the same behaviour of $\gto(V)$
and $n_i$'s such that each $L_i[]^{\sim n_i}$ has the same behaviour of $\gt[L_i]$.
The parallel composition $V = V_1\paral\cdots\paral V_k$ can be simulated in the $\dzlam$-calculus by $M = \lam x.x[V_1,\dots,V_k]$
in the sense that $V$ converges iff each $V_i$ converges and, similarly, $M$ is solvable iff each $V_i$ is solvable.

We then define a test-expansion (Definition~\ref{def:labelling-expansion}), from terms of the \dzlam-calculus with tests to test-free terms,
formalizing this intuition.
In order to expand the correct number of times the occurrences of $\gto$ and of the elements of a test, we first 
need to ``name'' each occurrence in a different way.
For this reason we label such occurrences with pairwise distinct indices. 

\begin{defi}
A \emph{labelled} expression $A$ is an expression of the \dzlam-calculus with tests such that 
every occurrence of a $\gto$ and every element of a test have been decorated with distinct natural numbers (called \emph{indices}).
We denote by $(\Set{\gto})^\lab,(\Set{b})^\lab,(\Set{\gt})^\lab,(\Set{e})^\lab,(\Set{\gt}_{\hole{\cdot}})^\lab$ 
the set of labelled terms, labelled bags, labelled tests, labelled expressions, labelled term-contexts, respectively.
\end{defi}

Let $\sA\in\bool\langle(\Set{e})^\lab\rangle$ be a sum of labelled expressions. 
We write $\nak{\sA}$ for its underlying expression;
in other words $\nak{\sA}$ is obtained stripping off all indices from $\sA$.
We write $\dom(\sA)$ for the set of indices occurring in $\sA$.
Note that the domains of two summands $A,A'\in\sA$ may have a non-empty intersection. 

\begin{exa}\label{ex:labelled} \ 
\begin{enumerate}[1.]
\item\label{ex:labelled1} $M = \lam xy.\gto_{1}(\gt[(\bold{I})_2,(x)_3, (y[\gto_5(\gt[])])_4])$ is a labelled term. 
Its domain is $\dom(M) = \{ 1,2,3,4,5\}$ and its underlying term is $\nak M = \lam xy.\gto(\gt[\bold{I},x, y[\gto(\gt[])]])$.
\item\label{ex:labelled2} $\bold{I}$ is a labelled term because it does not contain any occurrence of $\gt$ nor $\gto$. 
In this case we have $\dom(\bold{I}) = \emptyset$ and $\nak{\bold{I}} = \bold{I}$.
\item\label{ex:labelled3} $\lam x.\gto_1(\gt[(x)_2,(x)_3,(\bold{I})_1])$ is \emph{not} a labelled term, because the labels of $\gto$ and of $\bold{I}$ are both~1 
(they are not distinct).
\item\label{ex:labelled4} $V = \gt[(\bold{I})_2,(\bold{D})_5,(\gto_{7}(\gt[(\bold{I})_{11}]))_{13}]$ is a labelled test. 
Its domain is $\dom(V) = \{2,5,7,11,13\}$ and its underlying test is $\nak V = \gt[\bold{I},\bold{D},\gto(\gt[\bold{I}]))]$.
\item\label{ex:labelled5} $P = [ \lam x.\gto_1(\gt[(x)_2]),\bold{D}]$ is a labelled bag.
Its domain is $\dom(P) = \{1,2\}$ and its underlying bag is $\nak P =  [\lam x.\gto(\gt[x]),\bold{D}]$.
\item\label{ex:labelled6} 
$\sM = \lam x.\gto_3(\gt[(x)_1]) + \lam x.\gto_3(\gt[(x)_1,(x)_2])$ is a sum of labelled terms. 
Its domain is $\dom(\sM) = \{1,2,3\}$ and its underlying sum of terms is $\nak {\sM} =  \lam x.\gto(\gt[x]) + \lam x.\gto(\gt[x,x])$.
\end{enumerate}
\end{exa}

\noindent From (\ref{ex:labelled2}) we note that $\nak A = A$ for all test-free labelled expressions.
From (\ref{ex:labelled5}) we note that in a labelled bag the labels actually occur \emph{within} its elements.

\begin{defi} 
The reduction semantics for labelled expressions is inherited straightforwardly from the $\dzlam$-calculus with tests. 
In the $\beta$-rule, the terms are substituted together with their indices.
\end{defi}

Since there is no duplication during the reduction, if $A$ is a labelled expression reducing to $\sA$ then $\sA$
is a sum of labelled expressions (that is, all the indices occurring within each $A\in\sA$ are pairwise distinct).

\begin{defi}\label{def:labelling-expansion}
Let $A\in(\Set{e})^\lab$ be a labelled expression and $\ell$ be a function from $\nat$ to $\nat$.
The \emph{$\ell$-expansion}  $A^\ell$ of $A$ is an expression of the \dzlam-calculus without tests, defined by induction on $A$ as follows:
\begin{gather*}
x^\ell = x,\qquad (\lam x.M)^\ell = \lam x.M^\ell,\qquad (MP)^\ell = M^\ell P^\ell,\\
[L_1,\dots,L_k]^\ell = [L_1^\ell,\dots,L_k^\ell],\qquad
\gto_i(V)^\ell = \lam x_1\ldots x_{\ell(i)}.V^\ell\textrm{ where $\seq x\notin \FV(V^\ell)$},\\
(\gt[(L_1)_{i_1},\dots,(L_k)_{i_k}])^\ell =  \lam x.x[L_1^\ell[]^{\sim\ell(i_1)},\dots,L_k^\ell[]^{\sim\ell(i_k)}]\textrm{ where $x\notin\FV(\seq L^\ell)$}.
\end{gather*}
In particular $\varepsilon^\ell = \lam x.x[]$ for all $\ell$.
This is extended to sums by setting $(\sum_i A_i)^\ell = \sum_i A_i^\ell$ and to contexts by setting $\hole{\cdot}^\ell = \hole{\cdot}$.
\end{defi} 

Obviously, for all test-free labelled expressions $A$ we have $A = A^\ell$ for all $\ell$.

\begin{rem}\label{rem:eq-dom-entails-eq} \
\begin{enumerate}[1.]
\item\label{rem:eq-dom-entails-eq1} 
	For all labelled test-contexts $C\hole{\cdot}\in(\ContSet)^\lab$ and labelled terms $M\in(\Set{\gto})^\lab$ we have $(C\hole{M})^\ell = C^\ell\hole{M^\ell}$ for any $\ell$,
\item\label{rem:eq-dom-entails-eq2} 
	For all labelled expressions $A\in(\Set{e})^\lab$, if $\ell\restr\!\dom{(A)} = \ell'\restr\! \dom{(A)}$ then $A^\ell = A^{\ell'}$ for any $\ell,\ell'$.
\item\label{rem:eq-dom-entails-eq3} 
	In general, $A\to B$ \emph{does not} entail $A^\ell\msto_\gb B^\ell$. 
	For instance, let $A = (\gt[(\gto_1[\varepsilon])_2,(\bold{I})_3])$ and $B=\gt[(\bold{I})_3]$; 
	obviously $A\to_\gamma B$ while, if $\ell_0$ is the constant function with value $0$, we have 
	$A^{\ell_0} = \lam x.x[\lam z.z[],\bold{I}]$, $ B^{\ell_0} = \lam x.x[\bold{I}]$ and $A^{\ell_0}\not\msto_\gb B^{\ell_0}$.
\end{enumerate}
\end{rem}

The proofs of the following lemmas are given in the technical Appendix~\ref{app:tech_app}.

\begin{lem}\label{lemma:exists-forall} 
Let $\sV\in\bool\langle(\Set{\gt})^\lab\rangle$ be a sum of labelled closed tests. 
If $\sV\msto\varepsilon$ then there exists a map $\ell:\nat\to\nat$ such that $\sV^{(\ell+k)}$ is solvable for all $k\in\nat$.
\end{lem}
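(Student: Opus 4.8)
The plan is to prove solvability through its semantic characterisation: by Theorem~\ref{thm:equivalence-solvable-normalizable} a test-free term is solvable iff its interpretation is non-empty, and by Theorem~\ref{thm:Dmodel} this interpretation is invariant under $\to_\beta$. This lets me avoid any reduction-for-reduction simulation of the test rules by the $\beta$-reductions of the expansion (which genuinely fails, as Remark~\ref{rem:eq-dom-entails-eq}(\ref{rem:eq-dom-entails-eq3}) shows) and instead merely track non-emptiness of $\Int{\cdot}$. Two routine preliminary facts, both by structural induction, will be used freely: first, the expansion commutes with substitution, $(A\subst{x}{0})^\ell = A^\ell\subst{x}{0}$ and $(A\lsubst{x}{P})^\ell = A^\ell\lsubst{x}{P^\ell}$; second, unwinding the interpretation of a bag, $\lam y.y[M_1,\ldots,M_m]$ has non-empty interpretation iff each $M_i$ does, so it is solvable iff every $M_i$ is. As a first reduction: since $\sV\msto\varepsilon$ and sums reduce componentwise, and each closed summand goes to $\varepsilon$ or $0$ (Corollary~\ref{cor:closed-tests-go-to-eps-zero}), some summand $V$ of $\sV$ already satisfies $V\msto\varepsilon$; as $\Int{\sV^{(\ell+k)}}\supseteq\Int{V^{(\ell+k)}}$ and $V^{(\ell+k)}$ depends only on $\ell\restr\dom(V)$ (Remark~\ref{rem:eq-dom-entails-eq}(\ref{rem:eq-dom-entails-eq2})), it suffices to build $\ell$ making a single closed labelled test $V$ with $V\msto\varepsilon$ expand to a solvable term.

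I will argue by induction on the length $N$ of a fixed reduction $V\msto\varepsilon$. Write $V=\gt[(L_1)_{j_1},\ldots,(L_m)_{j_m}]$. If $m=0$ then $V=\varepsilon$ and $\varepsilon^{(\ell+k)}=\lam x.x[]$ is solvable for every $\ell,k$. If $m>0$, note that $V\msto\varepsilon$ forces $\gt[L_i]\msto\varepsilon$ for each $i$ (convergence of a parallel composition requires convergence of each component: semantically $\Int V=\emptyset$ as soon as some $\Int{\gt[L_i]}=\emptyset$, and one concludes by Corollary~\ref{cor:closed-tests-go-to-eps-zero}). Fix $i$. Since the calculus is strongly normalising and $\gt[L_i]\msto\varepsilon$, the term $L_i$ $\beta$-reduces inside the test so as to expose, along the thread reaching $\varepsilon$, a summand of the shape $\lam z_1\cdots z_{p_i}.\gto_{\iota_i}(W_i)$ (the characterisation of normal terms guarantees that a convergent thread must reach a $\gto$-headed summand, a head-variable summand being annihilated by the $\subst{\seq z}{0}$ of rule $(\gt)$); the rule $(\gt)$ then peels the $p_i$ leading abstractions and $(\gamma)$ fires, yielding the closed labelled test $W_i':=W_i\subst{\seq z}{0}$ with $W_i'\msto\varepsilon$ by a strictly shorter reduction. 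By the induction hypothesis there is a map $\ell_i$ on $\dom(W_i')$ with $(W_i')^{(\ell_i+k)}$ solvable for all $k$.

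Now I assemble $\ell$ on $\dom(V)$, using that all indices of the single labelled expression $V$ are pairwise distinct and are preserved along the reduction, so the sets $\dom(W_i')$, the catch-labels $\iota_i$ and the element-labels $j_i$ are mutually disjoint. Set $\ell$ to agree with $\ell_i$ on each $\dom(W_i')$, put $\ell(\iota_i)=0$, and — the crucial choice — put $\ell(j_i)=p_i$. The point is that the identity $\ell(j_i)=p_i+\ell(\iota_i)$ is affine with the \emph{constant} $p_i$, hence survives the global shift: $(\ell+k)(j_i)=p_i+(\ell+k)(\iota_i)$ for every $k$. Consequently the good summand of $L_i^{(\ell+k)}$ reduces to $\lam z_1\cdots z_{p_i}x_1\cdots x_{(\ell+k)(\iota_i)}.W_i^{(\ell+k)}$ (the $x$'s being the $\ell(\iota_i)+k$ dummy abstractions of $\gto_{\iota_i}$, not free in $W_i^{(\ell+k)}$), and the $(\ell+k)(j_i)=p_i+(\ell+k)(\iota_i)$ empty bags supplied by the $i$-th element of the expanded test consume exactly these abstractions. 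Using the substitution-commutation facts this gives $L_i^{(\ell+k)}[]^{\sim(\ell+k)(j_i)}\msto_\beta (W_i')^{(\ell+k)}+\sN_i$ for some $\sN_i$. By Theorem~\ref{thm:Dmodel} and the induction hypothesis (and $(\ell+k)\restr\dom(W_i')=\ell_i+k$) we get $\Int{L_i^{(\ell+k)}[]^{\sim(\ell+k)(j_i)}}\supseteq\Int{(W_i')^{(\ell+k)}}\neq\emptyset$, so each bag component of $V^{(\ell+k)}=\lam y.y[L_1^{(\ell+k)}[]^{\sim(\ell+k)(j_1)},\ldots,L_m^{(\ell+k)}[]^{\sim(\ell+k)(j_m)}]$ is solvable; by the second preliminary fact $V^{(\ell+k)}$ is solvable for every $k$, as required.

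The main obstacle is exactly the phenomenon of Remark~\ref{rem:eq-dom-entails-eq}(\ref{rem:eq-dom-entails-eq3}): the expansion cannot mimic a $(\gt)$- or $(\gamma)$-step, since after catching, the test discards the caught term whereas its expansion retains a harmless $\lam y.y[\cdots]$ wrapper, and any mismatch between dummy abstractions and empty bags leaves a residual abstraction or surplus bags that alter the normal form. The resolution is twofold: (i) to phrase everything through non-emptiness of interpretation, which is insensitive to such wrappers; and (ii) to isolate the bookkeeping $\ell(j_i)=p_i+\ell(\iota_i)$ making empty bags match abstractions \emph{exactly}, together with the observation that this identity involves only additive constants and therefore survives the uniform shift to $\ell+k$ that the statement demands. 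The care lies in verifying that the peel count $p_i$ and the well-foundedness of the recursion are genuinely governed by the fixed reduction $V\msto\varepsilon$.
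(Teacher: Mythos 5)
Your architecture is appealing --- routing solvability through $\Int{\cdot}\neq\emptyset$, splitting the test into its elements, and recursing into the caught test --- and your observation that the matching condition survives the uniform shift $\ell\mapsto\ell+k$ is exactly the right invariant. But the construction of $\ell$ has a genuine gap: it never accounts for the rule $\to_{\gto}$. You assert that $L_i$ ``$\beta$-reduces'' so as to expose a summand $\lam z_1\cdots z_{p_i}.\gto_{\iota_i}(W_i)$ and then set $\ell(\iota_i)=0$; but exposing that shape may require firing $\gto(W)[\,]\to_{\gto}\gto(W)$ some number $q_i>0$ of times, and the expansion can only simulate each such step by a $\beta$-step that consumes one of the dummy abstractions generated by $\gto_{\iota_i}$. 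With $\ell(\iota_i)=0$ the surplus empty bags land on $W_i^{(\ell+k)}=\lam y.y[\cdots]$ and the whole element collapses to $0$. Concretely, take $V=\gt[(\gto_\iota(\varepsilon)[\,])_j]$: it converges, since $V\to_{\gto}\gt[(\gto_\iota(\varepsilon))_j]\to_\gamma\varepsilon$, yet your recipe gives $p_i=0$, hence $\ell(\iota)=\ell(j)=0$, and $V^{(\ell+0)}=\lam x.x[(\lam y.y[\,])[\,]]\to_\beta\lam x.x[0]=0$, which is unsolvable (the correct choice is $\ell(\iota)=1$; one checks the failure persists for every $k$). The same phenomenon arises for occurrences of $\gto$ and of cork elements sitting in non-head positions of $L_i$ that are consumed before its normal form is reached: their indices receive no constraint from the recursive call on $W_i'$, yet their expansions must also have empty bags matching dummy abstractions exactly. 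The point is that the value of $\ell$ at an index is history-dependent --- it counts the empty bags discharged against that occurrence along the reduction --- and cannot be read off the normal form of $L_i$. This is precisely why the paper's proof inducts on the head-reduction sequence of the whole test (via Lemma~\ref{lemma:RgoeiffRgohe}, which serializes matters so that every test-rule step occurs at the head) and includes a dedicated subcase for $\gt[(\gto_j(W)[\,]\seq P)_i,\ldots]\to\gt[(\gto_j(W)\seq P)_i,\ldots]$ setting $\ell=\ell'[j:=\ell'(j)+1]$; your big-step shortcut loses exactly the information those subcases record, and repairing it essentially forces you back to a step-by-step bookkeeping of that kind.

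A secondary, repairable point: you induct on ``the length $N$ of a fixed reduction $V\msto\varepsilon$,'' but the reduction $W_i'\msto\varepsilon$ fed to the inductive hypothesis is obtained after rearranging the fixed one (normalize, peel, catch), so its length is not obviously smaller than $N$. Since the calculus is strongly normalizing, you should instead induct on $\size{V}$, or on the length of the longest reduction issuing from $V$; the multiset-ordering argument behind Theorem~\ref{thm:CR+SN} then gives $\size{W_i'}<\size{V}$ and the recursion is well founded.
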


\begin{lem}\label{lemma:forall-exists} 
Let $\sV\in\bool\langle(\Set{\gt})^\lab\rangle$ be a sum of labelled closed tests. 
If $\sV\msto 0$ then there exists a natural number $k$ such that $\sV^{(\ell+k)}\msto 0$ for all $\ell:\nat\to\nat$.
\end{lem}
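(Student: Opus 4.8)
The plan is to reduce the statement to a single closed cork and then exploit the fact that a \emph{uniform} padding constant, chosen larger than an $\ell$-independent bound read off from $\nak{\sV}$, forces exactly the throw/catch mismatches responsible for failure to collapse to $0$ in the expansion.

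First I would decompose the problem. Writing $\sV=\sum_i V_i$, by Church--Rosser and strong normalization (Theorem~\ref{thm:CR+SN}) we have $\sV\msto0$ iff each $V_i\msto0$, and since $\sV^{(\ell+k)}=\sum_i V_i^{(\ell+k)}$ and a sum reduces to $0$ exactly when every summand does, it suffices to treat a single closed labelled test $V=\gt[(L_1)_{i_1},\dots,(L_n)_{i_n}]$. By Corollary~\ref{cor:closed-tests-go-to-eps-zero} and the must-semantics of parallel composition, $V\msto0$ forces some component $\gt[L_{j_0}]\msto0$. On the expansion side Definition~\ref{def:labelling-expansion} gives $V^{(\ell+k)}=\lam x.x[\,L_1^{(\ell+k)}[]^{\sim(\ell+k)(i_1)},\dots\,]$, and an elementary solvability argument shows that $\lam x.x[N_1,\dots,N_n]\msto0$ as soon as one $N_j\msto0$ (using $[0]=0$ and $M0=0$). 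Hence everything reduces to the single-cork claim: for a closed labelled term $L$ with $\gt[\nak{L}]\msto0$ and cork-index $p$, there is a $k$, independent of $\ell$, with $L^{(\ell+k)}[]^{\sim\ell(p)+k}\msto0$ for every $\ell$.

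Second, I would prove this claim by induction on $\size{\nak{L}}$, first using that pure $(\beta)$-steps commute with $\ell$-expansion (they act only on variables, abstractions, applications and bags) to reduce $L$ to head form, whose head is either a bound variable or a throw $\gto_m(W)$. The arithmetic rests on one observation: feeding an empty bag to $\lam z.M$ yields $M$ if $z$ does not occur in $M$ and yields $0$ if it does, by the degree condition of the $(\beta)$-rule (Remark~\ref{rem:beta-contr}). Consequently, letting $s$ be the ($\ell$-independent) number of genuine outer abstractions exposed at this level and taking $k\ge s$, the count $\ell(p)+k\ge k\ge s$ guarantees we peel all outer binders: (b) if the head is a bound variable $z_j$, peeling its occurring binder yields $0$; (c) if the head is a throw, its expansion prepends $\ell(m)+k$ vacuous dummy abstractions to $W^{(\ell+k)}=\lam y.y[\dots]$, and feeding the remaining empty bags either hits an occurring outer binder (giving $0$), or overshoots all dummies and reaches $\lam y.y[\dots]$ whose head $y$ occurs (giving $0$), or stops among the dummies, leaving $\lam(\cdots).W^{(\ell+k)}\subst{\seq{z}}{0}$, upon which we recurse into a failing sub-cork of $W$ (strictly smaller) with its own cork-index, mirroring $\gt[\gto(W)]\to_\gamma W$. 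Choosing $k_0$ larger than the finitely many outer-abstraction counts and bag-applications occurring along the (terminating, by Theorem~\ref{thm:CR+SN}) reduction of $\gt[\nak{L}]$ to $0$ makes every instance of (b) and (c) fire.

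I expect the main obstacle to be precisely the failure of $\ell$-expansion to commute with the test reductions, recorded in Remark~\ref{rem:eq-dom-entails-eq}(\ref{rem:eq-dom-entails-eq3}): one cannot transport the reduction $\gt[\nak{L}]\msto0$ through the expansion, so a reduction of $L^{(\ell+k)}[]^{\sim\ell(p)+k}$ to $0$ must be re-derived step by step, matching each $(\tau)$- or $(\gamma)$-step of the original by a deliberately off-by-$k$ expansion reduction, and checking that this discrepancy is benign at vacuous binders but fatal at the failing cork, \emph{uniformly in} $\ell$. The delicate point is that the dummy-abstraction counts $\ell(m)+k$ vary with $\ell$ while $k$ is kept fixed at every nesting level; verifying that the bound extracted from $\nak{L}$ dominates all the $\ell$-independent genuine-abstraction counts arising in the recursion, so that the overshoot argument of (c) is robust across all maps $\ell$, is where the real care is needed.
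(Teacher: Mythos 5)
Your argument is correct, but it is organized quite differently from the paper's. The paper proves the lemma by induction on the length of a head reduction $\sV\mstoh 0$ (available by Lemma~\ref{lemma:RgoeiffRgohe}), simulating each head step --- $(\gamma)$, $(\gt)$, $(\gto)$, $(\beta)$ and the sum cases --- inside the expansion, and absorbing the off-by-one discrepancies by re-indexing $\ell$ (e.g.\ $\ell'=(\ell+1)[j:=\ell(j)]$) while accumulating $k$ along the reduction. You instead first decompose the sum and the parallel composition down to a single failing cork, and prove the resulting single-cork claim by induction on $\size{\nak{L}}$, normalizing the head by $\beta$-steps (which do commute with $\ell$-expansion, since labels travel with subterms under linear substitution) and then performing the empty-bag arithmetic at the head: a bound-variable head is killed as soon as $k$ dominates the number of genuine binders; a throw head either produces a degree mismatch, overshoots into $\lam y.y[\dots]$ whose head occurs, or parks among the dummies, in which case you recurse into the strictly smaller failing component of $W$ and transport the inner constant via the substitution $\ell\mapsto\ell+(k-k')$. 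The two proofs share all the essential content (the $\ell$-independence of the genuine-abstraction counts, the fact that an empty bag fed to an occurring binder yields $0$, and the re-indexing trick for composing constants across levels and summands); what yours buys is a purely syntactic induction measure that dispenses with the head-reduction machinery of Section~\ref{subsec:HeadRed}, at the price of the extra decomposition steps and of having to justify separately that $\beta$-normalization of the head commutes with expansion. One point you should make fully explicit when writing this up is the treatment of the bags applied to a throw \emph{inside} $L$ (the redexes $\gto_m(W)P\seq{P}$): in the expansion these consume dummy abstractions before the cork's padding arrives, and a non-empty such bag must be killed by a degree mismatch at a dummy; since their number is read off from $\nak{L}$, your uniform bound does cover them, but this is where the bookkeeping is easiest to get wrong.
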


We are now ready to state and prove the main theorem of this section, from which immediately follows the equational full abstraction result for the
 $\dzlam$-calculus.

\begin{thm}\label{thm:main1-1}
$\cD$ is inequationally fully abstract for the \dzlam-calculus (for all $M,N\in\Set{r}$):
$$
\Int{M}_{\seq x}\subseteq \Int{N}_{\seq x} \iff M\obsle N
$$
\end{thm}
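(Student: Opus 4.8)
The plan is to reduce everything to Theorem~\ref{thm:main1}, which already establishes $\Int{M}_{\seq x}\subseteq\Int{N}_{\seq x}\iff M\Tobsle N$ for all terms of the calculus with tests, in particular for the test-free ones. It therefore suffices to prove, for $M,N\in\Set{r}$, the equivalence $\Int{M}_{\seq x}\subseteq\Int{N}_{\seq x}\iff M\obsle N$, and I would treat the two implications separately. Throughout I take $\seq x$ to contain $\FV(M)\cup\FV(N)$.

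For the forward implication I would assume $\Int{M}_{\seq x}\subseteq\Int{N}_{\seq x}$ and fix a test-free term-context $D\hole{\cdot}\in\Set{r}_{\hole{\cdot}}$ closing $M,N$ with $D\hole{M}$ solvable. By Theorem~\ref{thm:equivalence-solvable-normalizable}, solvability is equivalent to non-emptiness of the interpretation, so $\Int{D\hole{M}}\neq\emptyset$. Since $\Set{r}_{\hole{\cdot}}$ is a subset of the term-contexts $\Set{\gto}_{\hole{\cdot}}$, monotonicity (Lemma~\ref{lemma:monotonicity}) gives $\Int{D\hole{M}}\subseteq\Int{D\hole{N}}$, whence $\Int{D\hole{N}}\neq\emptyset$ and $D\hole{N}$ is solvable by Theorem~\ref{thm:equivalence-solvable-normalizable}. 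Thus $M\obsle N$.

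The reverse implication is the substantial direction and is where the test-expansion apparatus enters; I would argue by contraposition. Assuming $\Int{M}_{\seq x}\not\subseteq\Int{N}_{\seq x}$, I pick $(\seq a,\alpha)\in\Int{M}_{\seq x}-\Int{N}_{\seq x}$ and form the separating test-context $C\hole{\cdot}=\cont{\alpha}{(\lam\seq x.\hole{\cdot})\trm{\seq a}}$ exactly as in the proof of Theorem~\ref{thm:main1}; it closes $M,N$ and satisfies $C\hole{M}\msto\varepsilon$ and $C\hole{N}\not\msto\varepsilon$, so $C\hole{N}\msto 0$ by Corollary~\ref{cor:closed-tests-go-to-eps-zero}. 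Fixing an arbitrary labelling $\widehat C$ of $C$ and noting that the test-free terms $M,N$ carry no indices, the expressions $\widehat C\hole{M}$ and $\widehat C\hole{N}$ are labelled closed tests whose underlying reductions lift to $\widehat C\hole{M}\msto\varepsilon$ and $\widehat C\hole{N}\msto 0$. Applying Lemma~\ref{lemma:exists-forall} to $\widehat C\hole{M}$ yields a map $\ell_M$ with $(\widehat C\hole{M})^{(\ell_M+k)}$ solvable for every $k$, and applying Lemma~\ref{lemma:forall-exists} to $\widehat C\hole{N}$ yields a constant $k_N$ with $(\widehat C\hole{N})^{(\ell+k_N)}\msto 0$ for every $\ell$.

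The crux is to select a single expansion function that serves both ends. I would set $\ell^{*}=\ell_M+k_N$ and $D\hole{\cdot}=(\widehat C)^{\ell^{*}}\hole{\cdot}$, which is a test-free term-context closing $M,N$ by Remark~\ref{rem:eq-dom-entails-eq}(\ref{rem:eq-dom-entails-eq1}) and the definition of the $\ell$-expansion. Then $D\hole{M}=(\widehat C\hole{M})^{\ell_M+k_N}$ is solvable (Lemma~\ref{lemma:exists-forall} with $k=k_N$), while $D\hole{N}=(\widehat C\hole{N})^{\ell_M+k_N}\msto 0$ is unsolvable (Lemma~\ref{lemma:forall-exists} with $\ell=\ell_M$), so $M\not\obsle N$. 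The hard part is exactly this coordination: one lemma supplies a function good for every shift and the other a shift good for every function, and the single choice $\ell^{*}=\ell_M+k_N$ reconciles them. Everything else is bookkeeping with Theorems~\ref{thm:main1} and \ref{thm:equivalence-solvable-normalizable} and Lemma~\ref{lemma:monotonicity}.
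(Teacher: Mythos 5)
Your proof is correct and follows essentially the same route as the paper's: the forward direction via Theorem~\ref{thm:equivalence-solvable-normalizable} and monotonicity of the interpretation, and the reverse direction by labelling the separating test-context $\cont{\alpha}{(\lam\seq x.\hole{\cdot})\trm{\seq a}}$ and combining Lemmas~\ref{lemma:exists-forall} and~\ref{lemma:forall-exists}. Your single expansion function $\ell^{*}=\ell_M+k_N$ is exactly the paper's choice $(\ell+k)$, so the coordination you identify as the crux coincides with the one in the paper.
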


\begin{proof} $(\Rightarrow)$
Assume that $\Int{M}_{\seq x}\subseteq\Int{N}_{\seq x}$, 
and let  $D\hole{\cdot}\in\Set{r}_{\hole{\cdot}}$ be a context closing both $M$ and 
$N$ and such that $D\hole{M}$ is solvable.
From the characterization of solvability given in Theorem~\ref{thm:equivalence-solvable-normalizable}
we have $D\hole{M}\msto M' + \sM$ for some $M'\in\Set{r}$ in normal form and $\sM\in\sums{r}$.
By Theorem~\ref{thm:Dmodel}, $\Int{D\hole{M}} = \Int{M'}\cup\Int{\sM}$ which is different from $\emptyset$ by Theorem~\ref{thm:equivalence-solvable-normalizable}.
By monotonicity of the interpretation we get $\Int{D\hole{M}}\subseteq\Int{D\hole{N}}$,
thus $\Int{D\hole{N}}\neq\emptyset$. 
By Theorem~\ref{thm:equivalence-solvable-normalizable} this entails that $D\hole{N}$ is solvable too.

$(\Leftarrow)$ Suppose, by the way of contradiction, that $M\obsle N$ holds but 
there is an element $(\seq a,\alpha)\in\Int{M}_{\seq x}-\Int{N}_{\seq x}$.
By Proposition~\ref{prop:context} the test-context $C\hole{\cdot} = \cont{\alpha}{(\lam\seq x.\hole{\cdot})\trm{\seq{a}}}$
is such that $C\hole{M}\msto \cont{\alpha}{M\lsubst{\seq{x}}{\trm{\seq{a}}}}\msto \varepsilon$
and $C\hole{N}\not\msto \varepsilon$ (therefore $C\hole{N}\msto 0$ by Lemma~\ref{lem:closed-go-to-eps-zero}). 
Let $C'\hole{\cdot}\in(\Set{r}_{\hole{\cdot}})^\lab$ such that $\nak C'\hole{\cdot} = C\hole{\cdot}$.
By Lemma~\ref{lemma:exists-forall} there exists $\ell$ such that $(C'\hole{M})^{(\ell + k')}$ is solvable for every $k'\in\nat$. 
By Lemma~\ref{lemma:forall-exists} there exists $k\in\nat$ such that $(C'\hole{N})^{(\ell+k)}$ is unsolvable.
From Remark~\ref{rem:eq-dom-entails-eq}(\ref{rem:eq-dom-entails-eq1}) we get $(C'\hole{M})^{(\ell + k)} = C'^{(\ell + k)}\hole{M^{(\ell + k)}}$
and $(C'\hole{N})^{(\ell + k)} = C'^{(\ell + k)}\hole{N^{(\ell + k)}}$.
Since $M,N$ are test-free we have $M^{(\ell + k)} = M$ and $N^{(\ell + k)} = N$.
We conclude because we found a term-context $C'^{(\ell + k)}$ such that
$C'^{(\ell + k)}\hole{M}$ is solvable and $C'^{(\ell + k)}\hole{N}$ is unsolvable,
which is a contradiction.
\end{proof}

\begin{cor}\label{cor:main1-1}
$\cD$ is equationally fully abstract for the \dzlam-calculus.
\end{cor}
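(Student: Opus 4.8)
The plan is to obtain equational full abstraction as an immediate logical consequence of the inequational result already established in Theorem~\ref{thm:main1-1}, using the fact that both the denotational equivalence and the operational equivalence $\obseq$ are by construction the symmetric parts of the respective preorders. No new semantic or syntactic machinery is needed; the entire computational content is packaged in the inequational theorem.

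First I would spell out what equational full abstraction means in this setting: for all $M,N\in\Set{r}$ one must show the biconditional $\Int{M}_{\seq x}=\Int{N}_{\seq x}\iff M\obseq N$. I would then unfold both sides. On the denotational side, equality of interpretations is just the conjunction of the two inclusions $\Int{M}_{\seq x}\subseteq\Int{N}_{\seq x}$ and $\Int{N}_{\seq x}\subseteq\Int{M}_{\seq x}$. On the operational side, by Definition~\ref{def:obsle-notest-nobangs} we have that $M\obseq N$ holds precisely when $M\obsle N$ and $N\obsle M$. So the corollary reduces to matching up the two inclusions with the two instances of the preorder.

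The core step is then to invoke Theorem~\ref{thm:main1-1} twice, once for the ordered pair $(M,N)$ and once for $(N,M)$, giving $\Int{M}_{\seq x}\subseteq\Int{N}_{\seq x}\iff M\obsle N$ together with $\Int{N}_{\seq x}\subseteq\Int{M}_{\seq x}\iff N\obsle M$. Conjoining these two equivalences yields exactly $\Int{M}_{\seq x}=\Int{N}_{\seq x}\iff M\obseq N$, which is the claim. This is precisely the general principle, recorded at the opening of Section~\ref{sec:FA-dzlam-with-Tests}, that every inequationally fully abstract model is \emph{a fortiori} equationally fully abstract.

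I expect no genuine obstacle here, since the argument is purely propositional once Theorem~\ref{thm:main1-1} is in hand. The only minor point worth a word is bookkeeping of the variable list $\seq x$: one should fix a single repetition-free $\seq x$ containing $\FV(M)\cup\FV(N)$ and use it for both terms throughout, which is legitimate because the interpretation is insensitive to padding the context with variables not occurring free in the expression (as already used silently via the convention that $\Int{M}_{\seq x,y}$ agrees with $\Int{M}_{\seq x}$ up to an empty multiset component when $y\notin\FV(M)$).
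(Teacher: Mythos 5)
Your proposal is correct and coincides with the paper's intended argument: the corollary is stated without a separate proof precisely because, as recorded at the opening of Section~\ref{sec:FA-dzlam-with-Tests}, inequational full abstraction immediately yields equational full abstraction, and your two applications of Theorem~\ref{thm:main1-1} (to $(M,N)$ and $(N,M)$) make that explicit. The remark about fixing a common variable list $\seq x$ is a sensible bit of bookkeeping but introduces nothing beyond the paper's standing convention.
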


\begin{rem} A direct proof of Corollary~\ref{cor:main1-1} might be obtained exploiting a corollary of the B\"ohm Theorem for 
the $\dlam$-calculus proved in \cite{ManzonettoP11}.
We preferred to provide this proof based on test-expansion because it clarifies the behaviours of our test operators and works also in the inequational case.
\end{rem}

The rest of the paper is devoted to extend the full abstraction results of Subsection~\ref{subsec:FAres1} to the
\dlam-calculus with tests.
The main ingredients will be the head reduction introduced in Subsection~\ref{subsec:HeadRed} and
the Taylor expansion we define in Subsection~\ref{subsec:TaylorExp}.

\section{The \dlam-Calculus with Tests}\label{sec:FullRC}

The \emph{\dlam-calculus with tests} is an extension of the 
\dzlam-calculus with tests with a promotion operator available on resources. 
In this calculus a resource can be linear (it must be used exactly once) or not 
(it can be used \emph{ad libitum}) and in the latter case it is decorated with a ``$\bang$'' superscript.

\begin{figure}[t]
\centering
\textbf{The  \dlam-calculus with tests}\\[2ex]
\subfigure[Grammar of terms, bags, tests, expressions, sums.]{\label{fig:Fgrammar}
    \begin{tabular}{@{}p{.1\linewidth}@{}p{.125\linewidth}@{}p{.525\linewidth}@{}p{.15\linewidth}@{}}
    $\FSet{\gto}$:       &$M,N,L,H$         &$\gramm x \mid \lambda x.M \mid MP \mid \gto(V)$        &\hfill terms\\[3pt]
    $\FSet{b}$:       &$P,Q$         &$\gramm [L_1,\ldots,L_k,\sM^\oc]$        &\hfill bags\\[3pt]    
    $\FSet{\gt}$:       &$V,W$         &$\gramm \gt[L_1,\ldots,L_k]$        &\hfill tests\\[3pt]        
    $\FSet{e}$:       &$A,B$         &$\gramm M\mid V\mid P$        &\hfill expressions\\[3pt]     
    \\   
    \hline
    \\
    \multicolumn{3}{p{.80\linewidth}}{
    \hspace{-6pt}$ \sM, \sN,\sL,\sH\in \Fsums{\gto}$} &sums of terms\\[3pt]
    \multicolumn{3}{p{.80\linewidth}}{
    \hspace{-6pt}$\sP,\sQ\in \Fsums{b}$} &\hfill sums of bags\\[3pt]    
    \multicolumn{3}{p{.80\linewidth}}{
    \hspace{-6pt}$\sV,\sW\in\Fsums{\gt}$} &\hfill sums of tests\\[3pt]        
    \multicolumn{3}{p{.80\linewidth}}{
    \hspace{-6pt}$ \sA,\sB\in \Fsums{e} :=\Fsums{\gt}\cup\Fsums{\gto}\cup\Fsums{b}$} &\hfill sums of expr.\\[3pt]        
    \end{tabular}
}
\\[3pt]
\hrulefill\\[4pt]
\subfigure[Linear Substitution.]{
\label{fig:LinSubstBang}
\begin{minipage}{\linewidth}
$$
	\textrm{\bf Linear Substitution (New Rule)}
$$
\begin{align*}
[L_1,\pts,L_k,\sN^\bang]\lsubst{x}{N}= &\ \Sigma_{i=1}^{k}[L_1,\pts,L_i\lsubst{x}{N},\pts,L_k,\sN^\bang] +\ [L_1,\pts,L_k,\sN\lsubst{x}{N},\sN^\bang]\,.
\end{align*}
\vspace{-3pt}
\end{minipage}
}
\\[3pt]
\hrulefill\\[2ex]
\subfigure[Reduction rules. In the $(\beta)$ rule we assume wlog $x\not\in\FV(\bag{L_1,\dots,L_k})$.]{
\label{fig:newredrules}
\begin{minipage}{\linewidth}
\centering
\textbf{Reduction Semantics (New Rules)}
\begin{gather*}
(\lam x.M)[L_1,\pts,L_k,\sN^{\bang}] \to_{\beta} M\lsubst{x}{[L_1,\pts,L_k]}\subst{x}{\sN},\\[1ex]
\gto(V)[L_1,\ldots,L_k,\sN^{\bang}]\to_{\gto} \left\{
\begin{array}{ll}
\gto(V)  & \textrm{if $k = 0$,} \\
0  & \textrm{otherwise.} \\[1ex]
\end{array}
\right.
\end{gather*}

\end{minipage}
}
\\[3pt]
\hrulefill\\[4pt]
\centering
\textbf{Context Closure (New Rule)}\\[-0.5ex]
\subfigure[Context closure of a relation $\texttt{R}\subseteq \FSet{e}\times\Fsums{e}$.]{\label{fig:new_context_closure}
\begin{minipage}{\linewidth}
\begin{gather*}
\infer[\mathrm{bangres}]{\bag{(M+\sN)^\oc} \mcup \sP \rel{R} \bag{(\sM+\sN)^\oc}\mcup \sP}{M\rel{R} \sM}
\end{gather*}
\end{minipage}
}
\caption{\footnotesize Syntax, notations and reduction semantics of the \dlam-calculus with tests.}
\label{fig:Fstatics}
\end{figure}

\subsection{Syntax.} 
The grammar generating the terms, the tests and the expressions of the \dlam-calculus with tests,
is given in Figure~\ref{fig:Fgrammar}.
Note that such grammar is equal to the one for the \dzlam-calculus with tests (in particular tests are still plain multisets of linear resources), 
except for the rule concerning bags which  becomes: 
$$
	P \gramm [L_1,\ldots,L_k,\sN^\bang] \hspace{250pt}\text{bags}
$$
where  $\sN$ is a finite sum of terms of this new syntax.
We write 
$\FSet{\gto}$ for the set of terms generated by this new grammar,
$\FSet{\gt}$ for the set of tests, 
$\FSet{b}$ for the set of bags,
$\FSet{e}$  for the set of  expressions.

It should be clear that from now on bags are no more plain multisets
of terms: they are compound objects, consisting of a multiset of terms
$[L_1,\ldots,L_k]$ and a sum of terms $\sN$, denoted as 
$[L_1,\ldots,L_k,\sN^\bang]$. 
We shall deal with them as if they were multisets, defining union by 
 $[L_1,\ldots,L_k,\sN^\bang]\mcup [L_{k+1},\ldots,L_n,\sM^\bang]\ass
[L_1,\ldots,L_n,(\sN+\sM)^\bang]$. This operation is commutative,
associative and has $[0^\bang]$ as neutral element.

\begin{rem}
The \dzlam-calculus with tests is the sub-calculus of the \dlam-calculus with tests in which all bags have the shape $[L_1,\ldots,L_k,0^\bang]$, and this
 identification is compatible with the reduction rules.
\end{rem}

As in the  \dzlam-calculus with tests, we extend this syntax by multilinearity to sums of 
 expressions with the only exception that the bag $[L_1,\ldots,L_k,(\sN+\sM)^\bang]$ {\em is not} required to be equal to 
 $[L_1,\ldots,L_k,\sN^\bang]+ [L_1,\ldots,L_k,\sM^\bang]$. The intuition is that in the first expression $\sN+\sM$ can be used
 several times and each time one can choose non-deterministically $\sN$ or $\sM$, whereas in the second expression one has to choose once and for all 
 one of the summands, and then use it as many times as needed.

\subsection{Substitutions.}
Linear substitution is denoted and defined as in the \dzlam-calculus with tests (Figure~\ref{fig:linsubst}), except of course for bags, 
where we use the rule of Figure~\ref{fig:LinSubstBang}.
Linear substitution is extended to sums, as in $\sA\lsubst{x}{\sN}$, by bilinearity in both $\sA$ and $\sN$.

\begin{rem}
In the $\bang$-free case, that is when $\sN=0$, the above definitions and notations agree with those introduced in Subsection~\ref{subs:Subst},
because in that case we have 
$$[L_1,\ldots,L_k,\sN\lsubst{x}{N},\sN^\bang]=0,$$ 
since $0\lsubst{x}{N}=0$.
\end{rem}

We also define the regular substitution $A\subst{x}{\sN}$ 
for the \dlam-calculus with tests, by simply replacing each occurrence of 
$x$ in the expression $A$ with $\sN$ --- in that way we get an expression of the extended
syntax, since $\sN$ is a sum in general. 
This operation is then extended to sums, as in $\sA\subst{x}{\sN}$, by linearity in $\sA$.

\begin{exa} \ 
\begin{enumerate}[1.]
\item $(y[x^\bang])\lsubst{x}{z} = y[z,x^\bang] $,
\item $(y[x,0^\bang])\lsubst{x}{z} = y[z,0^\bang] $,
\item $(y[(x+z)^\bang])\lsubst{x}{z} = z[z,(x+z)^!]$,
\item $(y[x,x^\bang])\lsubst{x}{z} = y[z,x^\bang] + y[x,z,x^!]$,
\item $(x[x^\bang])\lsubst{x}{y}\lsubst{x}{z} = y[z,x^\bang] + z[y,x^\bang] + x[y,z,x^\oc]$,
\item $(x[x^\bang])\lsubst{x}{(y+z)} = y[x^!] + z[x^!] + x[y,x^!]+ x[z,x^!]$,
\item $(y[x,x^\oc])\subst{x}{(y+z)} = y[y,(y+z)^\oc] + y[z,(y+z)^\oc]$,
\item $(x[x^\oc])\subst{x}{(y+z)} = y[(y+z)^\oc] + z[(y+z)^\oc]$,
\item $(x[x,x^\oc])\subst{x}{(y+z)} = y[y,(y+z)^\oc] + y[z,(y+z)^\oc] + z[y,(y+z)^\oc] + z[z,(y+z)^\oc]$.
\end{enumerate}
\end{exa}

A Schwarz Theorem, analogous to Theorem~\ref{thm:Schwarz}, holds for the \dlam-calculus with tests. 
Hence, given a sum of expressions $\sA$ and a bag $P = [L_1,\ldots,L_k]$ with $x\notin\FV(P)$, 
it still makes sense to set $\sA\lsubst{x}{P}\ass\sA\lsubst{x}{L_1}\cdots \lsubst{x}{L_k}$ because 
this expression does not depend on the enumeration of $L_1,\ldots,L_k$ in $P$. 
In particular $\sA\lsubst{x}{[]} = \sA$.

\subsection{Operational semantics.} 
The reduction rules of $ \dlam$-cal\-cu\-lus extend those of the \dzlam-calculus with tests in the sense that they 
are equivalent on $\bang$-free expressions.

\begin{defi}
The rules $(\tau)$ and  $(\gamma)$ are exactly the same as the corresponding rules of the \dzlam-calculus,
while the $\beta$-reduction and $\gto$-reduction are rephrased as in Figure~\ref{fig:newredrules}.
In this setting, their contextual closure needs to be closed also under the rule of Figure~\ref{fig:new_context_closure}.
\end{defi}

The \dlam-calculus with tests is still Church-Rosser (just adapt the proof in \cite{PaganiT09}), while it is no more strongly normalizing. 
For instance the term $\Omega \ass \Delta[\Delta^\bang]$, for $\Delta \ass \lam x.x[x^\bang]$, has an infinite reduction chain, just like the paradigmatic homonymous unsolvable \lam-term.
Indeed, the usual \lam-calculus can be embedded into the \dlam-calculus with tests by
translating every application $MN$ into $M[N^{\bang}]$.

\begin{rem}
Reductions in the  \dlam-calculus with tests may be  tricky, due to the combination of linear and non linear resources and substitutions.
For instance, we can obtain eight $\Omega$-like terms of the \dlam-calculus with tests, of the form $M[N^{(!)}]$ where $M,N\in \{\bold{D},\Delta\}$ and $(!)$
 denotes the optional presence of the promotion. Not surprisingly all these terms, except for $\Omega$, reduce to $0$. E.g., 
 $\bold{D}[\Delta^!]\to_{\beta} \Delta[\Delta] \to_{\beta} \Delta[\Delta,0^!] \to_{\beta} \Delta[0^!] + 0[\Delta,0^!] \to_{\beta} 0$.
\end{rem}

Here are some other examples of reductions, involving tests.

\begin{exa} \
\begin{enumerate}[1.]
\item $(\lambda x. \gto(\varepsilon)[x^!])[\bold{I},0^!]\to_{\beta}
	\gto(\varepsilon)[x^!]\lsubst{x}{[\bold{I}]}\subst{x}{0}=
	\gto(\varepsilon)[\bold{I},0^!]\to_{\gto } 0,$
	\item $(\lambda x. \gto(\varepsilon)[x^!])[\bold{I},0^!]\to_{\gto }
	(\lambda x. \gto(\varepsilon))[\bold{I},0^!]\to_{\beta}
	\gto(\varepsilon)\lsubst{x}{[\bold{I}]}\subst{x}{0}=0,$
\item $(\lambda x. \gto(\varepsilon)[x^!])[\bold{I}^!]\to_{\beta}
	\gto(\varepsilon)[x^!]\subst{x}{\bold{I}}=
	\gto(\varepsilon)[\bold{I}^!]\to_{\gto } \gto(\varepsilon),$
\item $(\lambda x. \gto(\varepsilon)[x^!])[\bold{I}^!]\to_{\gto }
	(\lambda x. \gto(\varepsilon))[\bold{I}^!]\to_{\beta}
	\gto(\varepsilon)\subst{x}{\bold{I}}= \gto(\varepsilon).$
\end{enumerate}
\end{exa}

In this framework a \emph{test-context} $C\hole{\cdot}$ (resp.\ \emph{term-context} $D\hole{\cdot}$) is a test (resp.\ term) of the \dlam-calculus with tests 
having a single occurrence of its \emph{hole}, appearing in term-position.

\begin{defi} Term-contexts $D\hole{\cdot}$ and test-contexts $C\hole{\cdot}$ are defined by the following grammar:
\begin{gather*}
	D\hole{\cdot}\gramm \hole{\cdot} \mid \lam x.D \mid DP  \mid M[D,\seq L,\sN^\bang]\mid M[\seq L,(D + \sN)^{\bang}] \mid \gto(C)\\
	C\hole{\cdot}\gramm \gt[D,\seq L]
\end{gather*}
The set of term-contexts is denoted by $\Set{\gto\bang}_{\hole{\cdot}}$ 
and the set of test-contexts is denoted by $\FContSet$.
\end{defi}

\begin{defi}\label{def:convergence-with-promotion}
A test $V$ \emph{converges}, notation $\cOnv{V}$, if there exists a (possibly empty) sum $\sV$ such that $V \msto \varepsilon + \sV$.
\end{defi}

Convergence should not be confused with normalization. Note that Definition \ref{def:convergence-with-promotion} is the natural extension of Definition \ref{def:convergence-no-promotion};
 in presence of promotion, $\varepsilon$ and $0$ are not the only possible ``outcomes'' of closed tests because there are looping terms that may never interact with an outer cork $\gt[\cdot]$.
 That case represents ``failure'', i.e., a scenario where there is no possible sequence of choices (among summands of terms resulting from reduction)
 leading to the positive test $\varepsilon$.

\begin{defi} The \emph{operational pre-order} $\Fobsle$ on the \dlam-calculus with tests is defined by:
$$
	M\Fobsle N \iff \forall C\hole{\cdot}\in\FContSet\,\textrm{ closing $M,N$ }(\cOnv{C\hole{M}}\ \imp \cOnv{C\hole{N}}).
$$
We then set $M\Fobseq N$ iff $M\Fobsle N$ and $N\Fobsle M$.
\end{defi}

\subsection{Relational semantics} 

The \dlam-calculus with tests can be interpreted into $\cD$ by extending the interpretation of 
the \dzlam-calculus with tests given in Subsection~\ref{subsec:dzlam-interpretation} as follows:
$$
\hspace{-1pt}
\begin{array}{l}
\Int{[L_1,\dots,L_k,\sN^{\bang}]}_{\seq x} = \{(\mcup_{r=1}^{k+m} \seq a_r,[\beta_1,\ldots,\beta_{k + m}]) \st (\seq a_j,\beta_j)\in\Int{L_j}_{\seq x},1\le j\le k \textrm{ and } \\
\qquad\qquad\qquad\qquad\qquad\qquad\qquad\qquad\qquad\qquad\quad\ \!  (\seq a_i,\beta_i)\in \Int{\sN}_{\seq x},\ k < i\le k + m,\ m\geq 0\}.
\end{array}
$$
The following are examples of interpretations.
As shown in (\ref{ex:interp-bang4}) below, interpreting the non linear resource $\sN^{\bang}$  in $P=[\seq L,\sN^{\bang}]$ boils down to choosing an arbitrary number of elements of  $\Int{\sN}$, whereas exactly one element of each $\Int{L_i}$ is required.

\begin{exa}\label{ex:interp-bang}\
\begin{enumerate}[1.]
\item\label{ex:interp-bang1} 
	$\Int{\Delta}=\{ ([a::\alpha]\mcup a)::\alpha\st a\in\Mfin\cD ,\ \alpha \in {\cD} \}$.
\item\label{ex:interp-bang2} 
	$\Int{\Omega} = \emptyset$. Indeed, $\alpha\in\Int{\Omega}$ iff there exist $k > 0$ and $ \alpha_1,\dots,\alpha_k\in\Int{\Delta}$ such that $[\alpha_1,\dots,\alpha_k] \at \alpha\in\Int{\Delta}$.
	We prove by induction on $k$ that this is never satisfied.
	If $k = 1$ then by (\ref{ex:interp-bang1}) $[\alpha_1] :: \alpha\in\Int{\Delta}$ entails $\alpha_1 = []\at \alpha$ and it is easy to check that $\alpha_1\notin\Int{\Delta}$. 
	Otherwise, if $k>1$, $[\alpha_1,\dots,\alpha_k]\at \alpha\in\Int{\Delta}$ and each $\alpha_i\in\Int{\Delta}$ then, for some $j$, 
	$\alpha_j = [\alpha_1,\dots,\alpha_{j-1},\alpha_{j+1},\dots,\alpha_{k}]\at \alpha\in\Int{\Delta}$ and we are done by induction hypothesis.
\item\label{ex:interp-bang3} $\Int{\bold{D}[\bold{I}^!]}=
\{ [\gamma]::\gamma\st\gamma\in {\cD} \} $ since $ [\alpha]::\beta,\alpha\in\Int{\bold{I}}$ entails $\beta=\alpha=[\gamma]::\gamma$, for some $\gamma\in\cD$;
conversely, for any  $\gamma\in\cD$, both  $[\gamma]::\gamma$ and $[[\gamma]::\gamma]::[\gamma]::\gamma$ belong to $\Int{\bold{I}}$.
\item\label{ex:interp-bang4} $\Int{[\bold{D},\bold{I}^!]}= \{[[[\alpha]::\beta,\alpha]::\beta,
[\alpha_1]::\alpha_1,\ldots, [\alpha_k]::\alpha_k]]\st \alpha,\beta,\alpha_1,\ldots,\alpha_k\in \cD\textrm{ for }k\in\nat\}$.
\end{enumerate}
\end{exa}
The comparison between $\Int{\bold{D}}$ (Example~\ref{ex:interp}(\ref{ex:interp3})) and $\Int{\Delta}$ (item (\ref{ex:interp-bang1})) gives a grasp on the semantic counterpart of non-linearity.

It is easy to check that both the linear and the classic substitution lemmas generalize to this context.
While we can keep the same statement for Lemma~\ref{lemma:subst}, Lemma~\ref{lemma:lsubst} must be rephrased as follows
(indeed, $\dg{x}{M}, \dg{x}{V}$ are undefined when $M,V$ contain non linear resources).

\begin{lem}[Linear Substitution Lemma]
Let $M,L_1,\dots,L_k\in\FSet{\gto}, Q\in\FSet{\gt}$ and $P = [L_1,\ldots,L_k]$ (with $y \not\in \FV(P)$). Then we have:
\begin{enumerate}[\em(i)]
\item 
$((\seq a,b),\alpha)\in \Int{M\lsubst{y}{P}}_{\seq x,y}$ iff there exist
$(\seq a_i,\beta_i)\in\Int{L_i}_{\seq x}$ (for $i=1,\ldots,k$) and
$\seq a_0\in\Mfin{\cD}^n$ and $b \in\Mfin{\cD}$ such that
$((\seq a_0,[\beta_1,\ldots,\beta_k] \mcup b), \alpha)\in\Int{M}_{\seq x,y}$
and 
$\mcup_{i=0}^{k} \seq a_i=\seq a$.
\item 
$(\seq a,b) \in \Int{Q\lsubst{y}{P}}_{\seq x,y}$  iff there exist
$(\seq a_i,\beta_i)\in\Int{L_i}_{\seq x}$ (for $i=1,\ldots,k$) and
$\seq a_0\in\Mfin{\cD}^n$ and $b \in\Mfin{\cD}$ such that
$(\seq a_0,[\beta_1,\ldots,\beta_k]\mcup b)\in\Int{Q}_{\seq x,y}$ and 
$\mcup_{i=0}^{k} \seq a_i=\seq a$.\qed
\end{enumerate}
\end{lem}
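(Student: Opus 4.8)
The plan is to follow closely the proof of Lemma~\ref{lemma:lsubst} for the $\dzlam$-calculus, isolating the single genuinely new ingredient: the clause of linear substitution on $\bang$-bags of Figure~\ref{fig:LinSubstBang}. Since $\sA\lsubst{y}{P}$ is by definition the iterated single substitution $\sA\lsubst{y}{L_1}\cdots\lsubst{y}{L_k}$ and, by the analogue of Theorem~\ref{thm:Schwarz} for the present calculus (which holds, as noted in the text), this expression does not depend on the enumeration of $P$, I would first reduce the statement to the case $k=1$ and then recover general $k$ by an induction peeling off one resource at a time.

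More precisely, I would first establish the single-resource version of both clauses: for $M\in\FSet{\gto}$ (resp.\ $Q\in\FSet{\gt}$) and $L\in\FSet{\gto}$ with $y\notin\FV(L)$,
$$((\seq a,b),\alpha)\in\Int{M\lsubst{y}{L}}_{\seq x,y} \iff \exists (\seq c,\gamma)\in\Int{L}_{\seq x},\ \exists\seq a',\ ((\seq a',[\gamma]\mcup b),\alpha)\in\Int{M}_{\seq x,y}\ \text{and}\ \seq a=\seq a'\mcup\seq c,$$
the point being that linear substitution consumes exactly one occurrence of $y$, turning it into the semantic value $\gamma$ of $L$, while the residual occurrences of $y$ are recorded in the leftover component $b$ of the $y$-slot. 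The passage from $k=1$ to general $k$ is then routine: writing $M\lsubst{y}{P}=(M\lsubst{y}{L_1}\cdots\lsubst{y}{L_{k-1}})\lsubst{y}{L_k}$, I would apply the single-resource statement to peel off $L_k$ and the induction hypothesis to the remaining $k-1$ substitutions, merging the extra $y$-components $[\beta_k]$ and $[\beta_1,\ldots,\beta_{k-1}]$ into $[\beta_1,\ldots,\beta_k]$ and accumulating the $\seq a_i$ via $\mcup$.

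The single-resource statement I would prove by structural induction on $M$ (mutually with $Q$ and with bags), exactly as in the proof of Lemma~\ref{lemma:lsubst}. The variable cases ($M=y$ yielding $L$, and $M=z\neq y$ yielding $0$), the abstraction case, the $\gto(\cdot)$ and test cases, and the application case (via the Leibniz-style rule $(M'P')\lsubst{y}{L}=M'\lsubst{y}{L}\,P'+M'(P'\lsubst{y}{L})$) are handled verbatim as before, since none of them interacts with promotion. All the new content is concentrated in the bag case $P'=[L_1,\ldots,L_m,\sN^\bang]$, whose linear substitution is given by the new rule $\sum_i[\ldots,L_i\lsubst{y}{L},\ldots,\sN^\bang]+[L_1,\ldots,L_m,\sN\lsubst{y}{L},\sN^\bang]$.

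The main obstacle is precisely this bag case, where one must reconcile the combinatorics of the rule with the interpretation of $[L_1,\ldots,L_m,\sN^\bang]$, which selects one element from each $\Int{L_i}_{\seq x}$ and arbitrarily many (say $p\geq 0$) elements from $\Int{\sN}_{\seq x}$. A single occurrence of $y$ being substituted by $L$ sits either inside some $L_i$ --- accounted for by the first summand, via the induction hypothesis on $L_i$ --- or inside one of the $p$ semantic copies of $\sN$, accounted for by the second summand: the copy into which the substitution falls is recorded as the fresh linear resource $\sN\lsubst{y}{L}$ (whose interpretation, by the induction hypothesis applied to the terms of $\sN$, is exactly the set of values obtained by substituting $L$ into one $y$-occurrence of an element of $\sN$), while the remaining $p-1$ copies are still provided by $\sN^\bang$. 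Here one must verify that, because the $\bool$-module is idempotent and the interpretation records only membership and not multiplicity (so that the choice of \emph{which} of the $p$ identical copies is distinguished collapses), the union of the interpretations of the two summands is exactly the set described in the statement, with the leftover $y$-values correctly collected into $b$. I expect this idempotency/counting reconciliation, rather than any single structural case, to be the delicate part of the argument.
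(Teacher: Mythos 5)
Your proof is correct, but it is worth noting that the paper itself does not prove this lemma: it is stated with a \qed{} and the remark that the substitution lemmas ``generalize to this context'', the implicit argument being a direct adaptation of the appendix proof of Lemma~\ref{lemma:lsubst}. That adaptation is a mutual structural induction in which the \emph{whole} bag $P$ is handled at once, so that in the application case $M=N_0[N_1,\ldots,N_h]$ one must range over all partitions of $P$ into sub-bags $P'_0,\ldots,P'_h$ distributed among the subterms. Your route is genuinely different: you first prove the single-resource case $k=1$ and then recover general $k$ by peeling off one $L_i$ at a time, using the fact that $\sA\lsubst{y}{P}$ is \emph{defined} as the iterated substitution $\sA\lsubst{y}{L_1}\cdots\lsubst{y}{L_k}$ (with the Schwarz theorem guaranteeing independence of the enumeration) and that both linear substitution and the interpretation are linear in $\sA$, so the single-resource statement lifts to sums. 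This buys you a cleaner application case --- the single semantic token $\gamma$ goes either into the function or into the argument, with no partition combinatorics --- at the cost of the extra outer induction on $k$, where one must check that the $y$-components $[\beta_k]$ and $[\beta_1,\ldots,\beta_{k-1}]$ merge correctly with the residue $b$; your sketch of that step is right. Your treatment of the only genuinely new case, the bag $[L_1,\ldots,L_m,\sN^\bang]$, is also correct: the substituted occurrence lands either in some $L_i$ (first family of summands) or in one of the $p$ semantic copies drawn from $\Int{\sN}$, which is then promoted to the distinguished linear resource $\sN\lsubst{y}{N}$ while the remaining $p-1$ copies are still supplied by $\sN^\bang$. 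One small correction of emphasis: the ``idempotency/counting reconciliation'' you flag as delicate is actually a non-issue here, because the relational interpretation is purely qualitative --- both directions of the bag case are plain set inclusions, and the question of \emph{which} of several identical copies is distinguished never produces a multiplicity to reconcile. Idempotency of the syntactic sum plays no role in the semantic argument.
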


\noindent From these lemmas it ensues that $\cD$ is also a model of the \dlam-calculus with tests.

\begin{thm}
$\cD$ is a  model of \dlam-calculus with tests.\qed
\end{thm}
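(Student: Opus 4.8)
The plan is to reproduce verbatim the argument of Theorem~\ref{thm:Dmodel}: since $\msto$ is the reflexive--transitive closure of the contextual closure of the four basic rules, it suffices to check (a) that the interpretation is \emph{contextual}, i.e.\ compatible with every rule of Figures~\ref{fig:context_closure} and~\ref{fig:new_context_closure}, and (b) that each basic reduction rule preserves $\Int{\cdot}_{\seq x}$. The rules $(\tau)$ and $(\gamma)$ are literally those of the \dzlam-calculus with tests, so their soundness is established exactly as before; all the context-closure rules of Figure~\ref{fig:context_closure} are handled by the same straightforward mutual induction used for Lemma~\ref{lemma:monotonicity} and Theorem~\ref{thm:Dmodel}. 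Hence only three points are genuinely new: the closure rule $(\mathrm{bangres})$, the rule $(\beta)$ and the rule $(\gto)$.

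For $(\mathrm{bangres})$, I would observe that, by the clause defining $\Int{[L_1,\dots,L_k,\sN^\bang]}_{\seq x}$, the interpretation of a bag depends on its non-linear component only through the set $\Int{\sN}_{\seq x}$, and that $\Int{M+\sN}_{\seq x}=\Int{M}_{\seq x}\cup\Int{\sN}_{\seq x}$. Thus if $\Int{M}_{\seq x}=\Int{\sM}_{\seq x}$ then $\Int{(M+\sN)^\bang}$ and $\Int{(\sM+\sN)^\bang}$ contribute identically to the bag, and $\Int{[(M+\sN)^\bang]\mcup\sP}_{\seq x}=\Int{[(\sM+\sN)^\bang]\mcup\sP}_{\seq x}$ follows.

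The heart of the proof is the soundness of the new $(\beta)$-rule, for which I would compute $\Int{(\lam x.M)[L_1,\dots,L_k,\sN^\bang]}_{\seq x}$ from the clauses for application, abstraction and bags, and match it against $\Int{M\lsubst{x}{[L_1,\dots,L_k]}\subst{x}{\sN}}_{\seq x}$. On the right I would first apply the Regular Substitution Lemma (Lemma~\ref{lemma:subst}) to the outer $\subst{x}{\sN}$, producing a multiset $[\beta_{k+1},\dots,\beta_{k+m}]$ of elements of $\Int{\sN}_{\seq x}$ for an \emph{arbitrary} $m\ge 0$; then the Linear Substitution Lemma, in the rephrased form stated just above this theorem, applied to $M\lsubst{x}{[L_1,\dots,L_k]}$, turns the $k$ linear resources into exactly one element $\beta_i\in\Int{L_i}_{\seq x}$ each, its ``leftover'' multiset being precisely the $[\beta_{k+1},\dots,\beta_{k+m}]$ produced above. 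Collecting the two multisets into $[\beta_1,\dots,\beta_{k+m}]$ reconstructs exactly the defining condition of $\Int{[L_1,\dots,L_k,\sN^\bang]}_{\seq x}$, and hence of $\Int{(\lam x.M)[L_1,\dots,L_k,\sN^\bang]}_{\seq x}$. The rule $(\gto)$ is then easy: since $\Int{\gto(V)}_{\seq x}$ contains only pairs of the form $(\seq a,*)$ and $*=[]\at *$, the application to the bag can be non-empty only when the bag produces the empty multiset, i.e.\ when $k=0$ (taking $m=0$ copies of $\sN$), in which case it equals $\Int{\gto(V)}_{\seq x}$; for $k>0$ it equals $\emptyset=\Int 0$.

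The main obstacle I anticipate is the bookkeeping in the $(\beta)$ case: one must track the leftover multiset of the Linear Substitution Lemma and identify it, as an unordered list, with the $m$ copies of $\sN$ produced by the Regular Substitution Lemma, checking that the unbounded choice of $m\ge0$ on the semantic side is exactly mirrored by the non-linear clause of the bag interpretation. This is also the place where the deliberate refusal to identify $[L_1,\dots,L_k,(\sN+\sM)^\bang]$ with $[L_1,\dots,L_k,\sN^\bang]+[L_1,\dots,L_k,\sM^\bang]$ is vindicated, since the former allows an independent choice of summand at each of the $m$ copies.
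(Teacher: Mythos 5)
Your proposal is correct and follows exactly the route the paper intends: the paper leaves the proof implicit, stating only that the result ``ensues'' from the generalized Linear and Regular Substitution Lemmas, and your argument is precisely the elaboration of that claim (contextuality plus soundness of each basic rule, with the new $(\beta)$ case handled by combining the two substitution lemmas so that the leftover multiset of the linear one matches the $m\ge 0$ copies produced by the regular one). Your treatment of $(\mathrm{bangres})$ and $(\gto)$ is likewise the expected one, so there is nothing to add.
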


\subsection{Head Reduction}\label{subsec:HeadRed}

We now provide a notion of \emph{head reduction} for the \dlam-calculus with tests.
Intuitively, head reduction is obtained by reducing a head redex, 
that is a redex occurring in head position in an expression $A$.
The main interest of introducing this reduction strategy is that 
it ``behaves well'' with respect to Taylor expansion 
in the sense of Proposition~\ref{prop:magic}.

The definition of term- and test-redexes is inherited from Definition~\ref{def:redex}.
Among these redexes we distinguish those that are in ``head'' position. 

\begin{defi}\label{def:head-redex}
A \emph{head redex} is defined inductively as follows: 
\begin{iteMize}{$\bullet$} 
\item[-]
        every test-redex $V$ is a head redex,
\item[-] 
	a term-redex $H$ is a head redex in both the term $\lam\seq y.H\seq P$ and the test $\gt[H\seq P]\paral V$.
\end{iteMize}
\end{defi}

\begin{defi} 
We say that $A\to \sB$ is a step of \emph{head reduction} if $\sB$ is obtained from $A$ by contracting a head redex.
If $A\to \sB$ is a step of head reduction then also $A+\sA\to \sB + \sA$ is.
\end{defi}

One-step head reduction is denoted by $\toh$, while $\mstoh$ indicates its reflexive and transitive closure.

\begin{rem}
Unlike in ordinary \lam-calculus, an expression $A$ may have more than one head redex, hence there may be more than one head reduction step starting from $A$.
\end{rem}

Head reduction induces a notion of head normal form on (sums of) expressions.

\begin{defi}
An expression $A$ \emph{is in head normal form} (\emph{hnf}, for short) if there is no $\sB$ such that $A\toh \sB$; 
a sum $\sA$ \emph{is in hnf} if each summand is in hnf.
\end{defi}

This notion of head normal form differs from that given by Pagani and Ronchi Della Rocca in \cite{PaganiR10}.
We keep this name because their definition captures the notion of ``outer-normal form'' rather than that of head normal form,
and in fact they changed terminology in \cite{PaganiR11}.

The following lemma gives a characterization of terms and tests in head normal form.

\begin{lem}\label{lemma:char-hnf}\hfill
\begin{enumerate}[\em(i)]
\item A term $M$ is in head normal form if and only if either $M\ass \lam\seq x.y\seq P$ or $M\ass \lam\seq x.\gto(V)$.
\item A test $V$ is in head normal form if and only if $V\ass \gt[x_1\seq P_1,\dots,x_n\seq P_n]$ for $n\ge 0$.
\end{enumerate}
\end{lem}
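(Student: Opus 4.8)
The plan is to settle both equivalences by a direct structural case analysis that reads head redexes off the grammar. The one preliminary fact I would make explicit is a head decomposition for raw terms: every $M\in\FSet{\gto}$ can be written uniquely as $\lam\seq x.R P_1\cdots P_m$ with $m\ge 0$, where the \emph{head} $R$ is not an application and is therefore a variable, an abstraction $\lam z.N$, or a throw $\gto(W)$. For tests I would use the multiset reading together with $\gt[L_1,\dots,L_k]=\gt[L_1]\paral\cdots\paral\gt[L_k]$, so that I can analyze a test one element at a time.

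For (i), the right-to-left direction is immediate from Definition~\ref{def:head-redex}: if $M=\lam\seq x.y\seq P$ the head is the variable $y$, which is not a term-redex, so there is no head redex; if $M=\lam\seq x.\gto(V)$ the body $\gto(V)$ is not a term-redex (a term-redex requires an application) and head reduction never descends under a $\gto$, so again there is no head redex. For left-to-right I would take $M$ in hnf, apply the decomposition $M=\lam\seq x.R P_1\cdots P_m$, and rule out the bad heads: if $R=\lam z.N$ then $(\lam z.N)P_1$ is a term-redex occupying the head position $\lam\seq x.H\seq Q$, and if $R=\gto(W)$ with $m\ge1$ then $\gto(W)P_1$ is likewise a head term-redex; either contradicts $M$ being in hnf. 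The surviving cases are $R$ a variable (form~1) and $R=\gto(W)$ with $m=0$ (form~2).

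For (ii) I would run the analogous argument on each element $L_i$ of $V=\gt[L_1,\dots,L_k]$. Right-to-left: an element of the shape $x_i\seq P_i$ has a variable head, hence is neither an abstraction nor a bare throw (no test-redex) nor a term-redex applied to bags (no head term-redex inside the cork), so $V$ carries no head redex. Left-to-right: decomposing $L_i=\lam\seq y.R P_1\cdots P_m$ with $V$ in hnf, a nonempty $\seq y$ makes $\gt[L_i]\paral W$ a $(\gt)$-test-redex; $R=\gto(W)$ with $m=0$ makes it a $(\gamma)$-test-redex; and $R=\gto(W)$ with $m\ge1$ or $R=\lam z.N$ yields a term-redex in the head position $\gt[H\seq Q]\paral W$. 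Each of these is a head redex, so all must be excluded, forcing $R=x_i$ a variable and hence $L_i=x_i\seq P_i$.

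The whole argument is a finite case split with no estimates, so I do not anticipate a genuine obstacle. The one point needing care is the bookkeeping for tests: I must justify, via the commutativity and associativity of $\paral$ (the multiset nature of tests), that a redex occurring in \emph{any} element $L_i$ can be displayed in the syntactic head pattern $\gt[H\seq Q]\paral W$ of Definition~\ref{def:head-redex}, and I should check that the empty test $\varepsilon=\gt[]$ is correctly covered by the $n=0$ instance of form (ii).
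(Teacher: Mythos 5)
Your proof is correct and follows the same route as the paper, which simply says the lemma holds ``by a simple inspection of the shape of head redexes (Definition~\ref{def:head-redex})''; your head decomposition and case split are exactly that inspection written out in full.
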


\begin{proof} By a simple inspection of the shape of head redexes (Definition~\ref{def:head-redex}).
\end{proof}

The following two lemmas concern reduction properties of \emph{promotion-free closed} tests.

\begin{lem}\label{lemma:RhasHR} 
Let $V\in\Set{\gt}$. If $V$ is closed and $V\neq \varepsilon$ then it has a head redex
(hence, $V\toh \sV$ for some $\sV$).
\end{lem}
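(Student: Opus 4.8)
The plan is to prove the statement by contraposition, leveraging the syntactic characterization of tests in head normal form already established in Lemma~\ref{lemma:char-hnf}. By the definition of head normal form, a test fails to admit a head reduction step precisely when it has no head redex, i.e.\ precisely when it is in head normal form. So it suffices to show that the only closed test in head normal form is $\varepsilon$; equivalently, that every closed test $V\neq\varepsilon$ is not in head normal form and hence has a head redex.

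First I would assume, towards the contrapositive, that $V$ is a closed test possessing no head redex, i.e.\ that $V$ is in head normal form. By Lemma~\ref{lemma:char-hnf}(ii), $V$ must then have the shape $\gt[x_1\seq P_1,\dots,x_n\seq P_n]$ for some $n\ge 0$, where the $x_i$ are the head variables of the components. The key observation is that the cork construction $\gt[\cdot]$ binds no variable, so that $\FV(V)=\bigcup_{i}\FV(x_i\seq P_i)$; in particular each head variable $x_i$ occurs free in $V$. Since $V$ is closed, this forces $n=0$, and $\gt[]=\varepsilon$ by our conventions on the empty test. This establishes that a closed test with no head redex must be $\varepsilon$, which is exactly the contrapositive of the claim.

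The remaining step is to pass from ``$V$ has a head redex'' to ``$V\toh\sV$ for some $\sV$'': this is immediate from the definition of head reduction, since contracting the head redex yields a well-defined sum $\sV$ (possibly $0$). I do not expect any genuine obstacle here; the only point deserving a moment's care is that Lemma~\ref{lemma:char-hnf} is formulated for the full \dlam-calculus with tests, so I would note that its restriction to promotion-free tests $V\in\Set{\gt}$ is immediate, as the bang-free constraint concerns only the interior of the argument bags $\seq P_i$ and not the head position on which the characterization of head normal forms depends.
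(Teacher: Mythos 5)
Your proof is correct, but it takes a different route from the paper's. The paper proves the lemma directly: by structural induction it reduces to a singleton test $V=\gt[M]$ with $M$ closed, and then inspects the possible shapes of $M$ --- since $M$ cannot be a variable, it is an abstraction (giving a $(\gt)$-redex), an application headed by an abstraction or by a $\gto(\cdot)$ (giving a $(\beta)$- or $(\gto)$-redex), or a term $\gto(W)$ (giving a $(\gamma)$-redex). You instead argue by contraposition through Lemma~\ref{lemma:char-hnf}(ii): a redex-free test must have the shape $\gt[x_1\seq P_1,\dots,x_n\seq P_n]$, each head variable $x_i$ occurs free, so closedness forces $n=0$ and $V=\varepsilon$. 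The two arguments have essentially the same mathematical content, since Lemma~\ref{lemma:char-hnf} is itself proved by the very inspection of head-redex shapes that the paper carries out inline here; what your version buys is reuse of an already-stated lemma and a cleaner handling of the multi-component case (no need for the structural induction that reduces to $\gt[M]$), at the price of an extra observation about free variables. Your closing remarks are also well placed: the passage from ``has a head redex'' to ``$V\toh\sV$'' is indeed immediate, and the restriction of Lemma~\ref{lemma:char-hnf} from $\FSet{\gt}$ to the promotion-free fragment $\Set{\gt}$ is unproblematic because the $\dzlam$-calculus with tests embeds into the full calculus compatibly with the reduction rules, so head redexes coincide.
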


\begin{proof} 
By structural induction on $V$.
It suffices to consider the case 
$V = \gt[M]$. 
We then proceed by cases on the structure of $M$ (which must be closed).
If $M = \lam x.N$ then $V$ head reduces using $(\gt)$.
If $M$ is an application then it must be written either as $M = (\lambda y.N)P_1\cdots P_k$ 
or as $M=\gto[W]P_1\cdots P_k$ (in both cases $k\geq 1$) and hence $V$ head reduces using either $(\gb)$ or $(\gto)$, respectively.
If $M = \gto(W)$ then $V$ head reduces using $(\gamma)$.
\end{proof}

\begin{lem}\label{lemma:RgoeiffRgohe} 
If $V\in\Set{\gt}$ is closed then $V\msto \varepsilon$ (resp. $V\msto 0$) if and only if $V\mstoh \varepsilon$ (resp.\ $V\mstoh 0$).
\end{lem}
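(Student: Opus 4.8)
The plan is to prove the two biconditionals by separating the (trivial) backward directions from the (substantial) forward directions, the latter reducing to a single auxiliary fact about head reduction of closed $\Set{\gt}$-tests.

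The backward implications are immediate: every head reduction step is in particular an ordinary reduction step, so $\mstoh\,\subseteq\,\msto$, whence $V\mstoh\varepsilon$ gives $V\msto\varepsilon$ and $V\mstoh 0$ gives $V\msto 0$.

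For the forward direction I would first establish the key fact that any closed $V\in\Set{\gt}$ satisfies $V\mstoh\sV$ with $\sV\in\{\varepsilon,0\}$. Since $\Set{\gt}$-expressions are promotion-free and this fragment is stable under reduction, $V$ lives in the \dzlam-calculus with tests, which is strongly normalizing (Theorem~\ref{thm:CR+SN}); hence every head reduction sequence out of $V$ terminates. By Lemma~\ref{lemma:RhasHR}, any closed test summand other than $\varepsilon$ carries a head redex and is therefore not in head normal form, so I can keep head-reducing summand by summand (via the sum-closure clause of $\toh$) until every summand is in head normal form. Now closedness is preserved along reduction, and by Lemma~\ref{lemma:char-hnf}(ii) the only closed test in head normal form is $\varepsilon$ — the head normal forms $\gt[x_1\seq P_1,\dots,x_n\seq P_n]$ with $n\ge 1$ all have free variables. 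Thus the terminal sum is an idempotent sum of copies of $\varepsilon$, equal to $\varepsilon$ when nonempty and to $0$ when empty, which is exactly the claim.

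It then remains to conclude by confluence. Both $\varepsilon$ and $0$ are normal forms, so the $\sV$ just produced is a normal form of $V$ reached by head reduction, and a fortiori by ordinary reduction. Since the calculus is Church-Rosser and strongly normalizing (Theorem~\ref{thm:CR+SN}), $V$ has a unique normal form; consequently if $V\msto\varepsilon$ then $\sV=\varepsilon$, yielding $V\mstoh\varepsilon$, and if $V\msto 0$ then $\sV=0$, yielding $V\mstoh 0$. The main obstacle is precisely the auxiliary head-reduction fact: one must check that for \emph{closed} tests the head normal form already coincides with the full normal form, i.e.\ that head reduction never gets stuck at a head normal form that is not already $\varepsilon$ or $0$. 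This is where closedness is indispensable (it rules out free-variable head normal forms such as $\gt[x\seq P]$) and where Lemma~\ref{lemma:RhasHR} and strong normalization combine to drive the reduction all the way to an outcome.
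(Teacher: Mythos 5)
Your proof is correct and follows essentially the same route as the paper's: the backward direction is dismissed as $\mstoh\,\subseteq\,\msto$, and the forward direction combines strong normalization of the promotion-free fragment (Theorem~\ref{thm:CR+SN}), Lemma~\ref{lemma:RhasHR} to show that a closed test in head normal form can only be $\varepsilon$ (so head reduction of a closed test always terminates in $\varepsilon$ or $0$), and confluence to match this outcome with that of ordinary reduction. The paper phrases this by contradiction while you argue directly, and you make explicit a point the paper leaves implicit (that closedness is preserved along reduction so Lemma~\ref{lemma:RhasHR} applies to every summand), but these are presentational differences only.
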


\begin{proof} $(\Rightarrow)$ 
Suppose, by contradiction, that $V\msto\varepsilon$ but $V\not\mstoh\varepsilon$. 
By confluence (Theorem~\ref{thm:CR+SN}), we cannot have $V\mstoh 0$. 
Thus, since $V\in\Set{\gt}$ is strongly normalizing, the only way to have $V\not\mstoh\varepsilon$ is that $V\mstoh \sV$ where $\sV\neq\varepsilon,0$ is in hnf.
This is impossible by Lemma~\ref{lemma:RhasHR}.

An analogous proof shows that $V\msto 0$ entails $V\mstoh 0$.

$(\Leftarrow)$ Trivial since $\mstoh\ \subseteq\ \msto$.
\end{proof}

\begin{rem}
One should be careful when trying to extend the above result to terms $M\in\Set{\gto}$.
For instance, it is false that $M\msto 0$ if and only if $M\mstoh 0$ as shown by this easy counterexample:
the term $M \ass \lam x.x[\bold{I}[]]$ is in hnf but $M\to_\beta \lam x.x[0] = 0$.
\end{rem}

Head reduction will play an essential role in the next section.

\section{Full Abstraction via Taylor Expansion}\label{sec:TaylorExp}

In this section we are going to define the Taylor expansion of terms 
and tests of the \dlam-calculus with tests.
We will then use this expansion, combined with head-reduction, 
to generalize the full abstraction results obtained in Subsection~\ref{subsec:FAres1} 
to the framework of \dlam-calculus with tests.

\subsection{Taylor Expansion}\label{subsec:TaylorExp}

The \emph{(full) Taylor expansion} was first introduced in \cite{EhrhardR03,EhrhardR06}, in the context of \lam-calculus.
The Taylor expansion $\TE{M}$ of an ordinary \lam-term $M$ gives an infinite formal linear combination of terms
(equivalently, a set of terms) of the \dzlam-calculus. In the case of ordinary application it looks like:
$$
	\TE{(MN)} =\sum_{n=0}^\infty \frac{1}{n!} M[\underbrace{N,\ldots,N}_{n\textrm{ times}}]
$$ 
in accordance with the intended meaning and the denotational semantics
of application in the resource calculus.
In the syntax of Ehrhard-Regnier's differential \lam-calculus the above formula looks like 
$\sum_{n=0}^\infty \frac{1}{n!} M^{(n)}(0)(N,\ldots,N)$,
hence the connection with analytical Taylor expansion is evident.

Following \cite{ManzonettoP11}, we extend the definition of Taylor expansion 
from ordinary \lam-terms to expressions of the \dlam-calculus with tests.
Since in our context the sum is idempotent, the coefficients disappear and our Taylor expansion 
corresponds to the \emph{support} of the actual Taylor expansion.

As the set $\sums{e}_{\infty}$ of possibly infinite formal sums of expressions is isomorphic to $\Pow{\Set{e}}$, 
in the following we feel free of using sets instead of sums.

\begin{defi}\label{def:Taylor}
Let $\sA\in\Fsums{e}$. 
The \emph{(full) Taylor expansion of $\sA$} is the set $\TE{\sA}\subseteq\Set{e}$ 
which is defined (by structural induction on $\sA$) in Figure~\ref{fig:TE}.%
\begin{figure*}[!t]
$$
\begin{array}{rcl}
\TE{x} &=& \{x\}, \\
\TE{(\lam x.M)} &=& \{\lam x.M'\st M'\in\TE{M}\},\\
\TE{(MP)} &=& \{M'P'\st M'\in\TE{M},\ P'\in\TE{P} \},\\
\TE{(\gto(V))} &=& \{ \gto(V')\st V'\in\TE{V}\}\\
\TE{(\gt[M_1,\ldots,M_k])} &=& \{ \gt[M_1',\ldots,M_k']\st M_i'\in\TE{M_i},\textrm{ for } 1 \leq i \leq k\},\\
\TE{[L_1,\ldots,L_k,\sN^\bang]} &=& \{[L_1',\ldots,L_k']\mcup P \st L'_i\in\TE{L_i},\textrm{ for } 1\le i\le k,\ P\in \Mfin{\TE{\sN}} \},\\
\TE{(\Sigma_{i=1}^k A_i)} &=& \cup_{i=1}^k \TE{A_i}.\\
\end{array}
$$
\caption{\textrm{The \emph{Taylor expansion} $\TE{\sA}$ of $\sA\in\module{\FSet e}$.}}\label{fig:TE}
\end{figure*}
%

\end{defi}

The following are examples of Taylor expansion of terms and tests.

\begin{exa}\label{ex:TE}\
\begin{enumerate}[1.]
\item\label{ex:TE1} $\TE{(\lam x.x[x^{\bang}])} = \{ \lam x.x[\underbrace{x,\dots,x}_n] \st n\ge 0\}$,
\item\label{ex:TE2} $\TE{(\gt[\lam x.x[x,x^{\bang}]])} = \{ \gt[\lam x.x[\underbrace{x,\dots,x}_n]] \st n>0\}$,
\item\label{ex:TE3} $\TE{(x[(z[y^{\bang}])^\bang])} = \TE{(x[(z[] + z[y,y^\bang])^\bang])} = \{x[
\overbrace{
z[\underbrace{y,\dots,y}_{n_1}],
\dots,
z[\underbrace{y,\dots,y}_{n_k}]
}^{k}
] \st k,n_1,\dots,n_k\ge 0\}.$
\end{enumerate}
\end{exa}

In (\ref{ex:TE1}) and (\ref{ex:TE2}) we see that the Taylor expansion of an expression $A$ can be infinite.
In (\ref{ex:TE3}) we have an example of two different terms sharing the same Taylor expansion.

\begin{conv} To lighten the notations, we will adopt for infinite sets of expressions the same 
abbreviations as introduced for finite sums in Subsection~\ref{subs:Subst} (including those for substitutions).
\end{conv}
For instance, if $X,Y\subseteq \Set{\gto}$ then $\lam x.X$ denotes the set $\{\lam x.M'\st M'\in X\}$
and $X\lsubst{x}{Y}=\cup_{M\in X,N\in Y} M\lsubst{x}{N}$.

In \cite{Manzonetto10} it is proved that the Taylor formula holds in $\MRel$.
This property entails that Taylor expansion preserves the meaning of an
expression in $\cD$, as expressed in the next theorem.

\begin{thm}\label{thm:DmodelsTE}
$\Int{\sA}_{\seq x} = \cup_{A\in\TE{\sA}}\Int{A}_{\seq x}$, for all $\sA\in\Fsums{e}$.
\end{thm}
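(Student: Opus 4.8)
The plan is to prove the identity by structural induction on $\sA$. Both sides distribute over finite sums --- the right-hand side because $\TE{(\Sigma_i A_i)} = \cup_i\TE{A_i}$ by definition, and the left-hand side because $\Int{\Sigma_i A_i}_{\seq x} = \cup_i\Int{A_i}_{\seq x}$ --- so I would first reduce to the case where $\sA = A$ is a single term, bag or test, applying the inductive hypothesis to the immediate subexpressions (which may themselves be sums, as with the promoted component of a bag). The engine of every step will be the observation that each clause defining $\Int{-}_{\seq x}$ builds the interpretation of a compound expression by an existential quantification over points drawn from the interpretations of its immediate subexpressions; hence each syntactic constructor acts on interpretations by an operation that is monotone and commutes with \emph{arbitrary} unions of relations (arbitrary, since Taylor expansions can be infinite). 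This is exactly what idempotency of the sum buys us: coefficients vanish and one may work with sets and unions throughout.

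Granting this union-preservation principle, I expect the cases of a variable, an abstraction $\lam y.M$, an application $MP$, a throw $\gto(V)$, a cork $\gt[M_1,\dots,M_k]$ and a parallel composition $V_1\paral V_2$ to be routine. In each I would rewrite the left-hand side using its defining clause, replace each subexpression's interpretation by the union over its Taylor expansion (inductive hypothesis), push the union back outside using union-preservation, and recognise the outcome as $\cup_{A\in\TE{-}}\Int{A}_{\seq x}$ via the matching clause of Figure~\ref{fig:TE}. For example, for $\lam y.M$ one has $\Int{\lam y.M}_{\seq x}=\{(\seq a,b\at\alpha):((\seq a,b),\alpha)\in\Int{M}_{\seq x,y}\}$, and substituting $\Int{M}_{\seq x,y}=\cup_{M'\in\TE{M}}\Int{M'}_{\seq x,y}$ yields $\cup_{M'\in\TE{M}}\Int{\lam y.M'}_{\seq x}$, as required.

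The one case carrying genuine content is the bag $P=[L_1,\dots,L_k,\sN^\bang]$, where I must reconcile promotion with the replacement of a sum by a finite multiset of its Taylor approximants. Here $\TE{P}$ consists of all promotion-free bags $[L_1',\dots,L_k']\mcup[N_1',\dots,N_m']$ with $L_j'\in\TE{L_j}$, $m\ge 0$, and each $N_i'\in\TE{\sN}$; interpreting such a plain bag in the $\dzlam$-fragment forces exactly one point from each $\Int{L_j'}_{\seq x}$ and from each $\Int{N_i'}_{\seq x}$, assembled into a multiset of length $k+m$. Taking the union over $L_j'\in\TE{L_j}$ recovers $\Int{L_j}_{\seq x}$ by the inductive hypothesis, while taking the union over $N_i'\in\TE{\sN}$ and over all $m\ge 0$ replaces each promoted slot by an arbitrary finite number of points drawn from $\cup_{N'\in\TE{\sN}}\Int{N'}_{\seq x}=\Int{\sN}_{\seq x}$. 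This is precisely the defining clause of $\Int{[L_1,\dots,L_k,\sN^\bang]}_{\seq x}$, in which the $\bang$-component contributes an arbitrary finite multiset of points of $\Int{\sN}_{\seq x}$, so the two sides coincide.

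I expect this promotion clause to be the only delicate point, and delicate only in bookkeeping: one must verify that the double union (over the approximants $N_i'$ and over the cardinality $m$) produces exactly the same family of length-$(k+m)$ multisets as the single clause for the interpretation of $\sN^\bang$; everything else is a mechanical transcription of union-preservation. Finally I would note that this syntactic identity is the term-level reflection of the categorical Taylor formula holding in $\MRel$ (cf.\ \cite{Manzonetto10}): one could alternatively deduce the statement from that formula together with the compositionality of $\Int{-}_{\seq x}$, but the direct induction is self-contained and makes the role of idempotency transparent.
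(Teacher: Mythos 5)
Your proof is correct, and it takes a genuinely different route from the paper's. The paper gives no direct argument: it invokes the fact, proved in \cite{Manzonetto10}, that the categorical Taylor formula holds in $\MRel$, and states that the theorem follows by adapting the proof of the analogous result for the differential $\lambda$-calculus. Your self-contained structural induction on $\sA$ replaces that external reference, and it correctly isolates the one point where something could go wrong: the promoted component of a bag is the unique constructor whose action on interpretations is \emph{not} union-preserving (since $\Mfin{\bigcup_i X_i}\neq\bigcup_i\Mfin{X_i}$), which is precisely why $\TE{[L_1,\dots,L_k,\sN^\bang]}$ substitutes a finite multiset $P\in\Mfin{\TE{\sN}}$ of approximants for $\sN^\bang$ rather than a single one. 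Your bookkeeping for that case is right: because the interpretation clause for $\sN^\bang$ draws its $m$ points from $\Int{\sN}_{\seq x}$ independently, the double union over $m$ and over the independent choices $N'_i\in\TE{\sN}$ yields exactly the multisets of length $k+m$ demanded by the defining clause, using the inductive hypothesis $\Int{\sN}_{\seq x}=\bigcup_{N'\in\TE{\sN}}\Int{N'}_{\seq x}$ (note that the hypothesis must indeed be stated for sums, as you do, since the promoted component is a sum). What each approach buys: the paper's route exhibits the theorem as an instance of a model-theoretic property (the Taylor formula as an identity on morphisms of the Seely category), hence transfers to any calculus interpreted in $\MRel$ and to other models satisfying the formula; yours is elementary, makes the role of idempotency of the sum explicit, and can be checked line by line against the interpretation clauses of Sections~\ref{subsec:dzlam-interpretation} and \ref{sec:FullRC} without consulting \cite{Manzonetto10}.
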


\begin{proof}
By adapting the proof in \cite{Manzonetto10} of the analogous theorem for the differential $\lam$-calculus.
\end{proof}

We now need the following technical lemma stating the commutation of 
Taylor expansion with respect to ordinary and linear substitutions.
The proof is lengthy but not difficult and is provided in Appendix~\ref{app:tech_app}.
For the sake of readability, in the next statements we use sums and unions interchangeably.

\begin{lem}\label{lemma:TEcommutesubsts} 
Let $A\in\FSet{e}$, $N\in\FSet{\gto}$ and $\sN\in\Fsums{\gto}$. Then, for $x\notin\FV(N)\cup\FV(\sN)$:
\begin{enumerate}[\em(i)]
\item\label{lemma:TEcommutesubsts1} 
	$\TE{(A\lsubst{x}{N})} = \TE{A}\lsubst{x}{\TE{N}}$,
\item\label{lemma:TEcommutesubsts2} 
	$\TE{(A\subst{x}{\sN})} = \bigcup_{P\in\Mfin{\TE{\sN}}}\TE{A}\lsubst{x}{P}\subst{x}{0}$.
\end{enumerate}
\end{lem}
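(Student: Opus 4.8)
The plan is to prove each identity by structural induction on the expression $A$, matching the clauses of the Taylor expansion (Figure~\ref{fig:TE}) against the defining clauses of the two substitution operators on each constructor. Working modulo idempotency is the key simplification: since $\TE{\sA}$ is literally a \emph{set} of $\dzlam$-expressions, every manipulation below is an equality of sets, and no multiplicities or factorial coefficients ever need to be tracked. The hypothesis $x\notin\FV(N)\cup\FV(\sN)$ will be used repeatedly to ensure that the substituted material is itself $x$-free, so that Schwarz's Theorem~\ref{thm:Schwarz} legitimises the unordered iterated linear substitutions and so that a trailing $\subst{x}{0}$ erases only the occurrences of $x$ coming from $A$.

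For (i), the variable, abstraction, application, throw, catch and test cases are routine: in each the two operations act homomorphically on the immediate subexpressions, so the identity follows by unfolding the relevant clauses and invoking the induction hypothesis. The only substantial case --- and the real content of the lemma --- is the bag $A=[L_1,\dots,L_k,\sM^{\bang}]$, where the linear-substitution rule of Figure~\ref{fig:LinSubstBang} must be reconciled with the promotion clause of the Taylor expansion. On the left that rule either lands the substitution on one of the linear resources $L_i$, or \emph{peels off} a single reusable copy of $\sM$, creating a new linear resource $\sM\lsubst{x}{N}$ while keeping $\sM^{\bang}$ in place. On the right a generic element of $\TE{A}$ has the form $[L_1',\dots,L_k',M_1,\dots,M_m]$ with $L_i'\in\TE{L_i}$ and $M_j\in\TE{\sM}$, and a single linear substitution $\lsubst{x}{N'}$ with $N'\in\TE{N}$ may hit any one of the $k+m$ slots. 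I would verify that the two descriptions agree slot by slot: substitution into an $L_i$-slot matches the first summand of the rule, via the induction hypothesis $\TE{L_i\lsubst{x}{N}}=\TE{L_i}\lsubst{x}{\TE{N}}$; substitution into one of the $m$ expanded copies matches the peeled summand, the substituted copy on the left playing the role of exactly one of the $m$ copies on the right (so that the copy count ranges over $\natp$ there and over $\nat$ in the linear case). Idempotency is precisely what allows me to disregard \emph{which} copy is hit and equate the resulting sets.

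For (ii), I would first reformulate the right-hand side using Remark~\ref{rem:beta-contr}: for fixed $A'\in\TE{A}$ and $P\in\Mfin{\TE{\sN}}$, the expression $A'\lsubst{x}{P}\subst{x}{0}$ vanishes unless $\card P=\dg{x}{A'}$, in which case it is the sum, over all bijective assignments of the elements of $P$ to the occurrences of $x$ in $A'$, of the corresponding ordinary substitution. Hence $\bigcup_{P}\TE{A}\lsubst{x}{P}\subst{x}{0}$ is exactly the set obtained by taking each $A'\in\TE{A}$ and filling \emph{every} occurrence of $x$ in it, independently, with an arbitrary element of $\TE{\sN}$. With this description the induction is again routine away from bags. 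In the bag case, regular substitution turns $\sM^{\bang}$ into $(\sM\subst{x}{\sN})^{\bang}$, whose Taylor expansion yields $p$ \emph{independent} copies, each an element of $\TE{\sM\subst{x}{\sN}}$ --- that is, a copy of some $M_j\in\TE{\sM}$ with its own occurrences of $x$ filled independently by elements of $\TE{\sN}$. This is exactly the independent filling of the $p$ expanded copies $M_1,\dots,M_p$ demanded by the reformulated right-hand side, while the linear resources are handled by the induction hypothesis for each $L_i$ and for $\sM$.

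The main obstacle in both parts is therefore the bag constructor carrying a promotion, where the unbounded multiplicity introduced by $\sM^{\bang}$ under Taylor expansion has to be matched against the single copy moved by a linear substitution in (i), and against the occurrence-by-occurrence filling produced by $\lsubst{x}{P}\subst{x}{0}$ in (ii). I expect the bulk of the work to be a careful combinatorial accounting of these multiplicities; the dividend of reasoning up to idempotency is that all symmetrisations collapse to plain set equalities, so the accounting stays manageable. The remaining, bureaucratic cases I would relegate to the appendix, as the authors do.
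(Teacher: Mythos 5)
Your proposal is correct and follows essentially the same route as the paper's appendix proof: structural induction on $A$, with all the work concentrated in the bag-with-promotion case, where the peeling summand $[\seq L,\sM\lsubst{x}{N},\sM^\bang]$ of the linear-substitution rule is matched against hitting one of the $m$ expanded copies in $\TE{A}$ for (i), and the independent occurrence-by-occurrence filling (the identity $\bigcup_{P'}\bigcup_{P_2}P'\lsubst{x}{P_2}\subst{x}{0}=\Mfin{\bigcup_{P}\TE{\sM}\lsubst{x}{P}\subst{x}{0}}$) does the job for (ii). Your elementwise phrasing and the detour through Remark~\ref{rem:beta-contr} are only cosmetic differences from the paper's chain of set-union equalities.
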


The next proposition is devoted to show how Taylor expansion interacts with
head-reduction. To ease the formulation of the next proposition we assimilate $\Fsums{e}$ to $\Powf{\FSet{e}}$.

\begin{prop}\label{prop:magic}
Let $A\in\FSet{e}$ and let $A'\in\TE{A}$ be such that $A'\toh\sB'$, for some $\sB'$. 
Then there exists $\sB$ such that $A\toh\sB$ and $\sB'\subseteq\TE{\sB}$.
\end{prop}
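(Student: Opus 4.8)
The plan is to argue by cases on the head redex contracted in the step $A'\toh\sB'$, exploiting the fact that Taylor expansion commutes with every term- and test-constructor. By Definition~\ref{def:Taylor} we have $\TE{(\lam x.M)} = \lam x.\TE{M}$, $\TE{(MP)} = \TE{M}\,\TE{P}$, $\TE{(\gto(V))} = \gto(\TE{V})$ and $\TE{(\gt[M_1,\ldots,M_k])} = \gt[\TE{M_1},\ldots,\TE{M_k}]$, so an element $A'\in\TE{A}$ has exactly the same term/test skeleton as $A$; only bags may change cardinality, since the promotion part $\sN^\bang$ of a bag in $A$ expands to an arbitrary finite multiset drawn from $\Mfin{\TE{\sN}}$. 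Consequently the head redex contracted in $A'$ sits at a position that is also occupied by a head redex of $A$, of the same kind (term-redex $(\lam x.M)Q$ or $\gto(V)Q$, or test-redex $\gt[\lam x.M]\paral W$ or $\gt[\gto(V)]\paral W$), by Definition~\ref{def:head-redex}. First I would fix this corresponding head redex of $A$ and let $\sB$ be the result of contracting it; the task then reduces to checking $\sB'\subseteq\TE{\sB}$ in each case.

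For the substitution cases --- the $(\beta)$ and $(\tau)$ rules --- the verification rests entirely on Lemma~\ref{lemma:TEcommutesubsts}. Consider $A = \lam\seq y.((\lam x.M)Q)P_1\cdots P_m$ with $Q = [L_1,\ldots,L_k,\sN^\bang]$, so that $A\toh\sB = \lam\seq y.(M\lsubst{x}{[L_1,\ldots,L_k]}\subst{x}{\sN})P_1\cdots P_m$. An expanded redex reads $(\lam x.M')Q'$ with $M'\in\TE{M}$ and $Q' = [L_1',\ldots,L_k']\mcup P_0$, where $L_i'\in\TE{L_i}$ and $P_0\in\Mfin{\TE{\sN}}$, and it $\beta$-reduces to $M'\lsubst{x}{[L_1',\ldots,L_k']}\lsubst{x}{P_0}\subst{x}{0}$. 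Using part~(\ref{lemma:TEcommutesubsts2}) followed by part~(\ref{lemma:TEcommutesubsts1}) of Lemma~\ref{lemma:TEcommutesubsts}, one has
\[
\TE{(M\lsubst{x}{[L_1,\ldots,L_k]}\subst{x}{\sN})} = \bigcup_{P\in\Mfin{\TE{\sN}}} \TE{M}\lsubst{x}{\TE{[L_1,\ldots,L_k]}}\lsubst{x}{P}\subst{x}{0},
\]
so the contractum of the expanded redex lies in this set; re-wrapping it as $\lam\seq y.(-)P_1'\cdots P_m'$, and again using that $\TE{}$ commutes with abstraction and application, yields $\sB'\subseteq\TE{\sB}$. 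The $(\tau)$ case $\gt[\lam x.M]\paral W\to\gt[M\subst{x}{0}]\paral W$ is the same computation specialised to $\sN = 0$, where $\Mfin{\TE{0}} = \{[]\}$ collapses the union to the single summand $\TE{M}\subst{x}{0}$.

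The remaining cases are the $(\gto)$ and $(\gamma)$ rules, and here lies the one genuinely delicate point. For $(\gamma)$, $\gt[\gto(V)]\paral W\to V\paral W$, the inclusion is immediate from $\TE{(V\paral W)} = \TE{V}\paral\TE{W}$. The subtlety is the $(\gto)$ rule: the original redex $\gto(V)Q$ with $Q = [L_1,\ldots,L_k,\sN^\bang]$ reduces to $\gto(V)$ when $k = 0$ and to $0$ otherwise, whereas the expanded redex $\gto(V')Q'$ reduces according to whether the \emph{expanded} bag $Q'$ is empty, and $Q'$ may be nonempty even when $k = 0$ (if a nonempty $P_0\in\Mfin{\TE{\sN}}$ was chosen). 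The main obstacle is therefore to reconcile these two tests, and I would resolve it by observing that whenever the expansion forces $\gto(V')Q'\to 0$ we simply have $\sB' = 0$, and the empty sum is a subset of every Taylor expansion, so $\sB'\subseteq\TE{\sB}$ holds vacuously; in the only case where $\sB'\neq 0$, namely $k=0$ and $P_0 = []$, the original genuinely reduces to $\sB = \lam\seq y.\gto(V)P_1\cdots P_m$ and $\sB' = \lam\seq y.\gto(V')P_1'\cdots P_m'\in\TE{\sB}$. The same analysis applies verbatim when the head redex occurs inside a test component $\gt[H\seq P]\paral W$ rather than at the head of a term, which exhausts all cases.
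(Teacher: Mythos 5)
Your proof is correct and follows essentially the same route as the paper's: map the head redex of $A'$ to the structurally corresponding head redex of $A$ (possible because Taylor expansion preserves the term/test skeleton and only varies the promoted part of bags), and discharge the $\beta$ case via Lemma~\ref{lemma:TEcommutesubsts}. The paper dismisses the remaining rules with ``all other cases are simpler,'' whereas you work them out explicitly --- in particular your observation that in the $(\gto)$ case a mismatch between the expanded bag and the original one only ever yields $\sB'=0$, which is vacuously contained in $\TE{\sB}$, is exactly the right way to close the one point the paper leaves implicit.
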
 

\begin{proof} 
The idea is that the syntactic tree of $A$ has the same structure as that of $A'$
and we can define a surjective mapping of the redexes of $A'$ into those of $A$.

We only treat the case $A' = \lam\seq x.H'P'_1\cdots P'_p$ where $H' = (\lam y.M')P'$ is a head-redex.
From $A'\in\TE{A}$ we get $A = \lam\seq x.HP_1\cdots P_p$ for some $H$ such that $H'\in\TE{H}$.
Hence, supposing wlog $P'=[\seq L',\seq N']$, we have that $H = (\lam y.M)[\seq L,\sN^\bang]$ where $M'\in\TE{M}$, 
the lengths of $\seq L'$ and $\seq L$ coincide, $L'_i\in\TE{L_i}$ for all $i$ and $[\seq N']\in\Mfin{\TE{\sN}}$.
We now know that $H'\toh M'\lsubst{y}{[\seq L']}\lsubst{y}{[\seq N']}\subst{y}{0}$ and
$H\toh M\lsubst{y}{[\seq L]}\subst{y}{\sN}$.
By Lemma~\ref{lemma:TEcommutesubsts}, 
$\TE{(M\lsubst{y}{[\seq L]}\subst{y}{\sN})} = \cup_{P\in\Mfin{\TE{\sN}}}\TE{M}\lsubst{y}{[\TE{\seq L}]}\lsubst{y}{P}\subst{y}{0}
\supseteq M\lsubst{y}{P'}\subst{y}{0}$.

We can conclude that 
$\lam\seq x.M'\lsubst{y}{P'}\subst{y}{0}P'_1\cdots P'_p \subseteq\TE{(\lam \seq x.M\lsubst{y}{[\seq L]}\subst{y}{\sN}P_1\cdots P_p)}$.

All other cases are simpler.
\end{proof}

\begin{rem}
The above proposition is false for regular $\beta$-reduction.
E.g., take $A\ass x[(\bold{I}[y])^\bang]$ and $A'\ass x[\bold{I}[y],\bold{I}[y]]\in\TE{A}$, then 
$A'\to_\beta x[y,\bold{I}[y]]$ and $A\to_\beta x[y^\bang]$ but $x[y,\bold{I}[y]]\notin\TE{( x[y^\bang])}$.
\end{rem}

\begin{cor}\label{cor:magic1} 
Let $\sA,\sB'\in\Fsums{e}$.
If $\sA'\subseteq\TE\sA$ and $\sA'\to_h\sB'$ then there exists $\sB$ 
such that $\sA\to_h\sB$ and $\sB'\subseteq\TE\sB$. \qed
\end{cor}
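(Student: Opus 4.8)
The plan is to peel off the single summand in which the head redex is contracted and to push that reduction back to a summand of $\sA$ by invoking Proposition~\ref{prop:magic}. Using the assimilation of $\Fsums{e}$ with $\Powf{\FSet{e}}$, the step $\sA'\toh\sB'$ is, by the very definition of head reduction on sums, the contraction of a head redex inside one summand $A'\in\sA'$: thus $A'\toh\sB_0'$ for some $\sB_0'\in\Fsums{e}$ and $\sB'=\sB_0'\cup(\sA'\setminus\{A'\})$. Since $A'\in\sA'\subseteq\TE{\sA}=\bigcup_{A\in\sA}\TE{A}$, I would first fix a summand $A\in\sA$ with $A'\in\TE{A}$; this $A$ is the summand of $\sA$ that will be reduced.

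Next I would apply Proposition~\ref{prop:magic} to the approximant $A'\in\TE{A}$ together with the reduction $A'\toh\sB_0'$, obtaining a sum $\sB_0$ with $A\toh\sB_0$ and $\sB_0'\subseteq\TE{\sB_0}$. Setting $\sB:=\sB_0\cup(\sA\setminus\{A\})$, the closure clause for head reduction on sums gives $\sA\toh\sB$, which is the first half of the conclusion. For the inclusion $\sB'\subseteq\TE{\sB}$ I would unfold $\TE{\sB}=\TE{\sB_0}\cup\bigcup_{D\in\sA\setminus\{A\}}\TE{D}$: the reduct part is covered since $\sB_0'\subseteq\TE{\sB_0}\subseteq\TE{\sB}$, and every summand of $\sA'\setminus\{A'\}$ that is an approximant of some \emph{unreduced} summand $D\neq A$ of $\sA$ lies in $\TE{D}\subseteq\TE{\sB}$.

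The hard part will be the remaining summands of $\sA'\setminus\{A'\}$ that are themselves approximants of the \emph{reduced} summand $A$: an unreduced approximant of $A$ need not reappear in $\TE{\sB_0}$, so a plain one-summand lifting does not by itself place it inside $\TE{\sB}$. The way I would close this gap is to treat the contracted head redex as a whole \emph{family}, contracting simultaneously every summand of $\sA'$ that descends from $A$ and carries the corresponding head redex, exactly as in the Taylor-expansion simulation of ordinary reduction. Each such summand then falls under Proposition~\ref{prop:magic} with the \emph{same} target $\sB_0$, so its reduct again lands in $\TE{\sB_0}\subseteq\TE{\sB}$; iterating this bookkeeping over the finitely many source summands of $\sA$ that are involved then yields $\sB'\subseteq\TE{\sB}$. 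The only genuinely delicate verification is this matching between the approximant-redexes of $\TE{\sA}$ and the head redexes of the source summands, which is precisely the correspondence that Proposition~\ref{prop:magic} was designed to supply.
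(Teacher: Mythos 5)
Your skeleton is the right one, and you have correctly located the only delicate point: with $\sB:=\sB_0\cup(\sA\setminus\{A\})$, an element of $\sA'\setminus\{A'\}$ whose only ancestor in $\sA$ is the reduced summand $A$ has no reason to lie in $\TE{\sB}$. But the repair you sketch in your last paragraph does not close this gap. The problematic elements occur \emph{unreduced} in the given sum $\sB'$ (they belong to $\sA'\setminus\{A'\}$, which the head step leaves untouched), so what you must show is that these unreduced approximants of $A$ belong to $\TE{\sB}$. Contracting them ``simultaneously'' is not available to you: the target $\sB'$ is fixed by the hypothesis, and reducing additional summands of $\sA'$ produces a different sum, so you would be proving the statement for some $\sB''\neq\sB'$. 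Moreover, even if it were allowed, Proposition~\ref{prop:magic} would only place their \emph{reducts} inside $\TE{\sB_0}$, which says nothing about the unreduced terms that actually sit in $\sB'$.

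The gap vanishes once you exploit the idempotency of the formal sum in the closure rule for head reduction. Since $\sA=A+\sA$, the clause ``$A\toh\sB_0$ implies $A+\sA\toh\sB_0+\sA$'' gives $\sA\toh\sB_0+\sA$: you are free to \emph{keep} the reduced summand $A$ in the result. Taking $\sB:=\sB_0+\sA$, you get $\TE{\sB}=\TE{\sB_0}\cup\TE{\sA}\supseteq\sB_0'\cup\sA'\supseteq\sB'$, where the first inclusion uses Proposition~\ref{prop:magic} and the hypothesis $\sA'\subseteq\TE{\sA}$, and the second uses $\sB'=\sB_0'+(\sA'\setminus\{A'\})$. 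No bookkeeping over families of approximant redexes is needed, which is why the paper can state the corollary without further proof.
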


\begin{cor}\label{cor:magic} 
Let $V\in \FSet{\gt}$ be a closed test. 
If there exists a $V'\in\TE{V}$ such that $V'\msto\varepsilon$,
then $\cOnv{V}$.
\end{cor}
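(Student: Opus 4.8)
The plan is to transport the convergence witness from the promotion-free Taylor expansion back up to $V$, by simulating the head reduction of an element of $\TE V$ with head reduction steps of $V$ itself, via Proposition~\ref{prop:magic} and its Corollary~\ref{cor:magic1}.

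First I would note that $V'$ is a closed, promotion-free test: by definition the elements of $\TE V$ lie in the promotion-free calculus, so $V'\in\Set\gt$, and since Taylor expansion preserves free variables while $V$ is closed, $V'$ is closed too. Thus Lemma~\ref{lemma:RgoeiffRgohe} applies, and from $V'\msto\varepsilon$ I obtain a genuine head reduction sequence $V'=W_0\toh W_1\toh\cdots\toh W_n=\varepsilon$ with each $W_i\in\sums\gt$.

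The heart of the argument is an induction on $i$ showing that there is some $\sV_i\in\Fsums\gt$ with $V\mstoh\sV_i$ and $W_i\subseteq\TE{\sV_i}$. The base case takes $\sV_0=V$, using $W_0=V'\in\TE V$. For the inductive step I would feed $W_i\subseteq\TE{\sV_i}$ and $W_i\toh W_{i+1}$ into Corollary~\ref{cor:magic1}, obtaining $\sV_{i+1}$ with $\sV_i\toh\sV_{i+1}$ and $W_{i+1}\subseteq\TE{\sV_{i+1}}$; concatenating with $V\mstoh\sV_i$ gives $V\mstoh\sV_{i+1}$. At $i=n$ this yields $V\mstoh\sV_n$ with $\varepsilon\in\TE{\sV_n}$.

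To conclude I would determine exactly which tests carry $\varepsilon$ in their Taylor expansion. Since $\varepsilon=\gt[]$ and $\TE{(\gt[L_1,\ldots,L_k])}$ consists only of tests with exactly $k$ components, $\varepsilon\in\TE B$ can hold only when $B=\varepsilon$. Hence $\varepsilon\in\TE{\sV_n}=\bigcup_{B\in\sV_n}\TE B$ forces $\varepsilon$ to be one of the summands of $\sV_n$, so $\sV_n=\varepsilon+\sV$ for some $\sV$. As $\mstoh$ is contained in $\msto$, this gives $V\msto\varepsilon+\sV$, that is $\cOnv V$. The main obstacle I anticipate is making the step-by-step lifting rigorous: one has to check that Corollary~\ref{cor:magic1} iterates cleanly along a sum-valued head reduction sequence and that the maintained invariant $W_i\subseteq\TE{\sV_i}$ is precisely the hypothesis its statement requires at each stage; granting this, the final combinatorial observation about $\varepsilon$ is immediate.
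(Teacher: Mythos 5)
Your proposal is correct and follows essentially the same route as the paper: invoke Lemma~\ref{lemma:RgoeiffRgohe} to obtain a head-reduction chain from $V'$ to $\varepsilon$, lift it step by step to $V$ via iterated application of Corollary~\ref{cor:magic1}, and conclude from the observation that $\varepsilon\in\TE{\sV_n}$ forces $\varepsilon$ to occur as a summand of $\sV_n$. Your explicit induction and the justification that $\varepsilon\in\TE{B}$ only when $B=\varepsilon$ merely spell out details the paper leaves implicit.
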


\begin{proof}
Suppose that there exists $V'\in\TE{V}$ such that $V' \msto \varepsilon$.
By Lemma~\ref{lemma:RgoeiffRgohe} there is a head-reduction chain of the form
$V' \to_h  \sV_1' \to_h \cdots \to_h \sV_n' = \varepsilon$.
By iterated application of Corollary~\ref{cor:magic1} there are tests 
$\sV_i$ (for $i=1,\ldots,n$) such that $V \to_h \sV_1 \to_h \cdots \to_h \sV_{n}$ 
with $\sV_i' \subseteq \TE{\sV_i}$. 
We conclude since $\varepsilon \in \TE{\sV_n}$ is only possible when $ \varepsilon\in\sV_n $. 
\end{proof}

\subsection{Full Abstraction for the \dlam-Calculus with Tests}

We are now going to prove that the relational model $\cD$ is inequationally fully abstract for 
the \dlam-calculus with tests.
\begin{lem}\label{lem:aboutTE}
Given $A\in\FSet{e}$ and $M\in\FSet{\gto}$ we have:
\begin{enumerate}[\em(i)]
\item\label{lem:aboutTE1}
	$\TE{(\cont{\alpha}{M})} = \cont{\alpha}{\TE{M}}$, for all $\alpha\in\cD$,
\item\label{lem:aboutTE2}
	$\TE{(A\lsubst{x}{\trm{a}})} = \TE{A}\lsubst{x}{\trm{a}}$, for all $a\in\Mfin{\cD}$.
\end{enumerate}	
\end{lem}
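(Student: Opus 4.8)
The key observation is that both $\trm{\alpha}$ and the test-context $\cont{\alpha}{\cdot}$ built in Definition~\ref{def:alpha+-} are \emph{promotion-free}: no ``$\bang$'' ever occurs in them, so they are expressions (resp.\ a context) of the \dzlam-calculus with tests. I would therefore first isolate the following auxiliary fact, proved by a straightforward structural induction on the clauses of Figure~\ref{fig:TE}: for every promotion-free expression $B\in\Set{e}$ one has $\TE{B}=\{B\}$. The only clause that is not immediately an identity is the one for bags, but a promotion-free bag has the shape $[L_1,\ldots,L_k,0^\bang]$ with each $L_i$ promotion-free, and since $\TE{0}=\emptyset$ we get $\Mfin{\TE{0}}=\{[]\}$, whence $\TE{[L_1,\ldots,L_k,0^\bang]}=\{[L_1,\ldots,L_k]\}$ by the induction hypothesis. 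In particular $\TE{\trm{\alpha}}=\{\trm{\alpha}\}$ and $\TE{\trm{a}}=\{\trm{a}\}$ for every $\alpha\in\cD$ and $a\in\Mfin{\cD}$.

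For~(\ref{lem:aboutTE1}) I would unfold the definition $\cont{\alpha}{M}=\gt[M\,\trm{a_1}\cdots\trm{a_r}]$, where $a_1,\ldots,a_r$ are the multiset components of $\alpha$. Applying the Taylor clause for (single-element) tests and, iterated $r$ times, the clause for application yields
$$\TE{(\cont{\alpha}{M})}=\{\gt[M'\,Q_1\cdots Q_r]\st M'\in\TE{M},\ Q_i\in\TE{\trm{a_i}}\}.$$
By the auxiliary fact each $\TE{\trm{a_i}}$ is the singleton $\{\trm{a_i}\}$, so the right-hand side collapses to $\{\gt[M'\,\trm{a_1}\cdots\trm{a_r}]\st M'\in\TE{M}\}=\{\cont{\alpha}{M'}\st M'\in\TE{M}\}$, which is precisely $\cont{\alpha}{\TE{M}}$ under the set-as-sum convention.

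For~(\ref{lem:aboutTE2}), recall that $\trm{a}=[\trm{\alpha_1},\ldots,\trm{\alpha_k}]$ for $a=[\alpha_1,\ldots,\alpha_k]$, and that $A\lsubst{x}{\trm{a}}$ abbreviates the iterated linear substitution $A\lsubst{x}{\trm{\alpha_1}}\cdots\lsubst{x}{\trm{\alpha_k}}$. Each $\trm{\alpha_i}$ is a closed term, so $x\notin\FV(\trm{\alpha_i})$ and Lemma~\ref{lemma:TEcommutesubsts}(\ref{lemma:TEcommutesubsts1}) applies (in its extension to sums). One step gives
$$\TE{(A\lsubst{x}{\trm{\alpha_1}})}=\TE{A}\lsubst{x}{\TE{\trm{\alpha_1}}}=\TE{A}\lsubst{x}{\trm{\alpha_1}},$$
using $\TE{\trm{\alpha_1}}=\{\trm{\alpha_1}\}$; feeding the resulting sum into the next substitution and iterating (induction on $k$) yields $\TE{(A\lsubst{x}{\trm{a}})}=\TE{A}\lsubst{x}{\trm{\alpha_1}}\cdots\lsubst{x}{\trm{\alpha_k}}=\TE{A}\lsubst{x}{\trm{a}}$, as required.

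The argument carries no real difficulty; the only points deserving care are bookkeeping. I must check that the triviality fact is genuinely an induction (the mutual recursion in Definition~\ref{def:alpha+-} between $\trm{\cdot}$ and $\cont{\cdot}{\cdot}$ is irrelevant, since both outputs are promotion-free), and that Lemma~\ref{lemma:TEcommutesubsts}(\ref{lemma:TEcommutesubsts1}), stated for a single expression, is legitimately applied to the \emph{sums} produced by successive linear substitutions --- which is licit because both $\TE{\cdot}$ and linear substitution commute with finite sums.
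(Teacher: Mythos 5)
Your proof is correct and follows essentially the same route as the paper's, which simply observes that $\cont{\alpha}{\cdot}$ and $\trm{a}$ are $\bang$-free and that $\TE{(\cdot)}$ acts as the identity on $\bang$-free expressions. You merely make explicit the bookkeeping the paper leaves implicit: the singleton value of $\TE{(\cdot)}$ on promotion-free expressions (via the $[L_1,\ldots,L_k,0^\bang]$ identification), the distribution of $\TE{(\cdot)}$ over the context structure for part~(\ref{lem:aboutTE1}), and the appeal to Lemma~\ref{lemma:TEcommutesubsts}(\ref{lemma:TEcommutesubsts1}), extended to sums, for part~(\ref{lem:aboutTE2}).
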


\begin{proof} 
Easy, as $\cont{\alpha}{\cdot}$ and $\trm{a}$ are $\bang$-free,
and the Taylor expansion $\TE{(\cdot)}$ behaves like the identity on $\bang$-free expressions.
\end{proof}

\begin{prop}\label{prop:Fcontext}
Let $M \in \FSet{\gto}$, $\seq{x}\supseteq\FV(M)$, $\alpha\in \cD$ and $\seq a\in\Mfin{\cD}$. 
Then the following statements are equivalent:
\begin{enumerate}[\em(i)]
\item\label{prop:Fcontext1} 
	$(\seq{a},\alpha) \in \Int{M}_{\seq{x}}$,
\item\label{prop:Fcontext2} 
	$\cOnv{\cont{\alpha}{M\lsubst{\seq{x}}{\trm{\seq{a}}}}}$.
\end{enumerate}
\end{prop}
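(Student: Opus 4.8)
The plan is to reduce the statement to its promotion-free counterpart, Proposition~\ref{prop:context}, by passing through the Taylor expansion. Write $V := \cont{\alpha}{M\lsubst{\seq x}{\trm{\seq a}}}$ for the test whose convergence we must characterize. Since both the test-context $\cont{\alpha}{\cdot}$ and the bags $\trm{\seq a}$ are $\bang$-free, Lemma~\ref{lem:aboutTE}(i)--(ii) combine to give $\TE{V} = \cont{\alpha}{\TE{M}\lsubst{\seq x}{\trm{\seq a}}}$, so that every element of $\TE{V}$ has the form $V' = \cont{\alpha}{M'\lsubst{\seq x}{\trm{\seq a}}}$ with $M'\in\TE{M}$; crucially each such $V'$ is a test of the promotion-free \dzlam-calculus with tests, to which Proposition~\ref{prop:context} applies verbatim.

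For $(i)\Rightarrow(ii)$, suppose $(\seq a,\alpha)\in\Int{M}_{\seq x}$. By the Taylor formula in the model (Theorem~\ref{thm:DmodelsTE}) we have $\Int{M}_{\seq x} = \bigcup_{M'\in\TE{M}}\Int{M'}_{\seq x}$, so there is a $\bang$-free $M'\in\TE{M}$ with $(\seq a,\alpha)\in\Int{M'}_{\seq x}$. Proposition~\ref{prop:context} then yields $V' := \cont{\alpha}{M'\lsubst{\seq x}{\trm{\seq a}}}\msto\varepsilon$, and $V'\in\TE{V}$ by the previous paragraph. Corollary~\ref{cor:magic}, which transfers convergence from a Taylor summand to the expanded test, now gives $\cOnv{V}$, which is $(ii)$.

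For $(ii)\Rightarrow(i)$, assume $\cOnv{V}$, i.e.\ $V\msto\varepsilon+\sV$ for some $\sV$. By soundness of the model (Theorem~\ref{thm:Dmodel}) and $\Int{\varepsilon}_{\seq x} = \{([],\dots,[])\}$, the empty tuple belongs to $\Int{V}_{\seq x}$. Applying Theorem~\ref{thm:DmodelsTE} again, there is a $\bang$-free $V'\in\TE{V}$ with $\Int{V'}_{\seq x}\neq\emptyset$. The empty tuple lying in $\Int{V'}_{\seq x}$ forces the degrees to match, so $V'=\cont{\alpha}{M'\lsubst{\seq x}{\trm{\seq a}}}$ is a closed test of the \dzlam-calculus with tests; hence it can only reduce to $\varepsilon$ or to $0$ (Corollary~\ref{cor:closed-tests-go-to-eps-zero}), and $\Int{0}_{\seq x}=\emptyset$ together with soundness rules out the second case, so $V'\msto\varepsilon$. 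Proposition~\ref{prop:context} read in the reverse direction then gives $(\seq a,\alpha)\in\Int{M'}_{\seq x}$, whence $(\seq a,\alpha)\in\Int{M}_{\seq x}$ by Theorem~\ref{thm:DmodelsTE}.

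The main obstacle is precisely what forces the detour through Taylor expansion: the full \dlam-calculus with tests is not strongly normalizing, so the direct degree-counting argument underlying Proposition~\ref{prop:context} (which relies on $\dg{x_i}{M} = \card{a_i}$ and on closed tests reducing to $\varepsilon$ or $0$) is unavailable here. The delicate point is therefore the two-way transfer of convergence between $V$ and its (possibly infinitely many) promotion-free Taylor summands: from a summand up to $V$ this is Corollary~\ref{cor:magic}, resting on the simulation of head reduction through Taylor expansion (Proposition~\ref{prop:magic} and Corollary~\ref{cor:magic1}) and on the fact that a $\bang$-free closed test reaches $\varepsilon$ head-reductively (Lemma~\ref{lemma:RgoeiffRgohe}); conversely the Taylor formula in the model isolates a closed summand already witnessing $\varepsilon$. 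One must also keep track of which summand is closed and has matching degree, but this bookkeeping is inherited unchanged from the promotion-free case.
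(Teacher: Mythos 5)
Your proof is correct and follows essentially the same route as the paper's: both directions reduce to Proposition~\ref{prop:context} via the Taylor expansion, using Theorem~\ref{thm:DmodelsTE} and Lemma~\ref{lem:aboutTE} to locate a $\bang$-free summand, Corollary~\ref{cor:magic} to lift convergence from that summand in the $(i)\Rightarrow(ii)$ direction, and the dichotomy of Corollary~\ref{cor:closed-tests-go-to-eps-zero} plus soundness in the converse direction. The extra remarks you spell out (why the chosen summand reduces to $\varepsilon$ rather than $0$) are exactly what the paper leaves implicit.
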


\begin{proof}
(\ref{prop:Fcontext1} $\imp$ \ref{prop:Fcontext2})
Suppose $(\seq a,\alpha)\in\Int{M}_{\seq x}$, then by Theorem~\ref{thm:DmodelsTE} 
there is an $M'\in\TE{M}$ such that $(\seq a,\alpha)\in\Int{M'}_{\seq x}$.
Applying Proposition~\ref{prop:context} we know that $\cont{\alpha}{M'\lsubst{\seq x}{\trm{\seq{a}}}}\msto\varepsilon$.
Now, since $\cont{\alpha}{M'\lsubst{\seq x}{\trm{\seq{a}}}}\in\TE{(\cont{\alpha}{M\lsubst{\seq x}{\trm{\seq{a}}}})}$ (by Lemma~\ref{lem:aboutTE}),
we can apply Corollary~\ref{cor:magic} and get $\cOnv{\cont{\alpha}{M\lsubst{\seq x}{\trm{\seq{a}}}}}$.

(\ref{prop:Fcontext2} $\imp$ \ref{prop:Fcontext1}) 
Suppose that $\cont{\alpha}{M\lsubst{\seq{x}}{\trm{\seq{a}}}} \msto \varepsilon + \sV$, for some $\sV$; 
then $\Int{\cont{\alpha}{M\lsubst{\seq{x}}{\trm{\seq{a}}}}}_{\seq x}\neq\emptyset$.
Hence, by Theorem~\ref{thm:DmodelsTE}, there is a closed test $V\in\TE{(\cont{\alpha}{M\lsubst{\seq{x}}{\trm{\seq{a}}}})}$ 
such that $\Int{V}\neq\emptyset$.
By Lemma~\ref{lem:aboutTE} $V = \cont{\alpha}{M'\lsubst{\seq x}{\trm{\seq{a}}}}$ for some $M'\in\TE{M}$ and
since its interpretation is non-empty we have $V\msto\varepsilon$.
By applying Proposition~\ref{prop:context} we get $(\seq a,\alpha)\in\Int{M'}_{\seq x}\subseteq\Int{M}_{\seq x}$ 
(by Theorem~\ref{thm:DmodelsTE}).
\end{proof}

\begin{thm}\label{thm:main2} 
$\cD$ is inequationally fully abstract for the \dlam-calculus with tests (for all $M,N\in\FSet{\gto}$):
$$
\Int{M}_{\seq x}\subseteq \Int{N}_{\seq x} \iff M\Fobsle N.
$$
\end{thm}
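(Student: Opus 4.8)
The plan is to follow the blueprint of the proof of Theorem~\ref{thm:main1} almost verbatim, substituting the full-calculus analogues of each ingredient: the preorder $\Fobsle$ for $\Tobsle$, Proposition~\ref{prop:Fcontext} for Proposition~\ref{prop:context}, and the convergence predicate $\cOnv{\cdot}$ of Definition~\ref{def:convergence-with-promotion} for the clean dichotomy of Corollary~\ref{cor:closed-tests-go-to-eps-zero}. The one genuinely new device needed is a semantic characterisation of convergence for arbitrary closed tests, since in the full calculus a closed test no longer reduces just to $\varepsilon$ or $0$.

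First I would isolate the following key lemma: for every closed test $V\in\FSet{\gt}$ one has $\cOnv{V}$ if and only if $\Int{V}\neq\emptyset$. The forward implication is immediate from soundness: if $V\msto\varepsilon+\sV$ then $\Int{V}=\Int{\varepsilon+\sV}\supseteq\Int{\varepsilon}\neq\emptyset$. For the backward implication I would invoke Theorem~\ref{thm:DmodelsTE} to write $\Int{V}=\bigcup_{V'\in\TE{V}}\Int{V'}$, so that $\Int{V}\neq\emptyset$ yields some $V'\in\TE{V}$ with $\Int{V'}\neq\emptyset$; since $V'$ is a closed, promotion-free test, Corollary~\ref{cor:closed-tests-go-to-eps-zero} together with soundness forces $V'\msto\varepsilon$ (the alternative $V'\msto 0$ would give $\Int{V'}=\emptyset$), and finally Corollary~\ref{cor:magic} lifts this to $\cOnv{V}$. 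This lemma is exactly the part of the argument where the Taylor machinery does its work, and I expect its backward direction to be the main obstacle.

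With the key lemma in hand, the $(\Rightarrow)$ direction is routine. Assume $\Int{M}_{\seq x}\subseteq\Int{N}_{\seq x}$ and let $C\hole{\cdot}\in\FContSet$ close $M,N$ with $\cOnv{C\hole{M}}$. By the key lemma $\Int{C\hole{M}}\neq\emptyset$; monotonicity of the interpretation (Lemma~\ref{lemma:monotonicity}, which carries over verbatim to the full calculus) gives $\Int{C\hole{M}}\subseteq\Int{C\hole{N}}$, whence $\Int{C\hole{N}}\neq\emptyset$ and, again by the key lemma, $\cOnv{C\hole{N}}$. Thus $M\Fobsle N$.

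For the $(\Leftarrow)$ direction I would argue by contraposition. If $\Int{M}_{\seq x}\not\subseteq\Int{N}_{\seq x}$, pick $(\seq a,\alpha)\in\Int{M}_{\seq x}-\Int{N}_{\seq x}$ and form the separating test-context $C\hole{\cdot}=\cont{\alpha}{(\lam\seq x.\hole{\cdot})\trm{\seq a}}$, exactly as in Theorem~\ref{thm:main1}, so that $C\hole{M}\msto\cont{\alpha}{M\lsubst{\seq x}{\trm{\seq a}}}$ and likewise for $N$. Proposition~\ref{prop:Fcontext} gives $\cOnv{\cont{\alpha}{M\lsubst{\seq x}{\trm{\seq a}}}}$ and hence $\cOnv{C\hole{M}}$, while $(\seq a,\alpha)\notin\Int{N}_{\seq x}$ gives $\neg\,\cOnv{\cont{\alpha}{N\lsubst{\seq x}{\trm{\seq a}}}}$. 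Since $\cD$ is a model, $\Int{C\hole{N}}=\Int{\cont{\alpha}{N\lsubst{\seq x}{\trm{\seq a}}}}$, so the key lemma turns the failure of convergence of the reduct into $\neg\,\cOnv{C\hole{N}}$. Hence $C\hole{\cdot}$ separates $M$ from $N$, so $M\Fobsle N$ fails, which is the desired contrapositive.
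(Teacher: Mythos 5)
Your proof is correct and follows essentially the same route as the paper: the same separating context $\cont{\alpha}{(\lam\seq x.\hole{\cdot})\trm{\seq a}}$ handled via Proposition~\ref{prop:Fcontext}, and the same combination of Theorem~\ref{thm:DmodelsTE}, Corollary~\ref{cor:closed-tests-go-to-eps-zero} and Corollary~\ref{cor:magic} to relate convergence of a closed test to non-emptiness of its interpretation. Your only departure is to package that last equivalence as an explicit key lemma, which the paper carries out inline; this is a harmless (and arguably cleaner) reorganization.
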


\begin{proof} $(\imp)$ Suppose that $\Int{M}_{\seq x}\subseteq\Int{N}_{\seq x}$ 
and there is a test-context $C\hole{\cdot}$ (closing $M,N$) such that $\cOnv{C\hole{M}}$.
Since $C\hole{M}\msto\varepsilon + \sV$, for some $\sV$, we have $\Int{C\hole{M}}\neq\emptyset$.
Thus, by monotonicity of the interpretation we get 
$\Int{C\hole{M}}\subseteq\Int{C\hole{N}} = \Int{\TE{(C\hole{N})}}\neq\emptyset$.
By Corollary~\ref{cor:closed-tests-go-to-eps-zero} there is $V\in\TE{(C\hole{N})}$ such that $V\msto\varepsilon$ and we conclude 
that $\cOnv{C\hole{N}}$ by applying Proposition~\ref{prop:Fcontext}.

$(\Leftarrow)$ Suppose by contradiction that $M\Fobsle N$, but there is an $(\seq a,\alpha)\in\Int{M}_{\seq x} - \Int{N}_{\seq x}$.
By Proposition~\ref{prop:Fcontext} $\cOnv{\cont{\ga}{M\lsubst{\seq{x}}{\trm{\seq{a}}}}}$ and since $M\Fobsle N$ we 
have $\cOnv{\cont{\ga}{N\lsubst{\seq{x}}{\trm{\seq{a}}}}}$. 
Again, by Proposition~\ref{prop:Fcontext} $(\seq a,\alpha)\in\Int{N}_{\seq x}$. Contradiction.
\end{proof}

\begin{cor}
$\cD$ is equationally fully abstract for the \dlam-calculus with tests.
\end{cor}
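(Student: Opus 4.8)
The plan is to derive this corollary as an immediate consequence of the inequational full abstraction result, Theorem~\ref{thm:main2}, using nothing more than the antisymmetric way in which both semantic equality and operational equivalence decompose into their two one-directional halves. Concretely, equational full abstraction asserts that for all $M,N\in\FSet{\gto}$ we have $\Int{M}_{\seq x}=\Int{N}_{\seq x}$ if and only if $M\Fobseq N$, and the whole content of the argument is already packaged in the biconditional for $\subseteq$ and $\Fobsle$.

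First I would observe that set equality is just mutual inclusion, so $\Int{M}_{\seq x}=\Int{N}_{\seq x}$ holds exactly when both $\Int{M}_{\seq x}\subseteq\Int{N}_{\seq x}$ and $\Int{N}_{\seq x}\subseteq\Int{M}_{\seq x}$ hold. Dually, by the very definition of $\Fobseq$ (namely $M\Fobseq N$ iff $M\Fobsle N$ and $N\Fobsle M$), the operational equivalence is the conjunction of the two preorder statements. I would then apply Theorem~\ref{thm:main2} twice: once with the pair $(M,N)$, giving $\Int{M}_{\seq x}\subseteq\Int{N}_{\seq x}\iff M\Fobsle N$, and once with the pair $(N,M)$, giving $\Int{N}_{\seq x}\subseteq\Int{M}_{\seq x}\iff N\Fobsle M$. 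Conjoining these two biconditionals yields $\Int{M}_{\seq x}=\Int{N}_{\seq x}\iff M\Fobseq N$, which is precisely the claim.

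I do not expect any genuine obstacle here: all the real work—the construction of separating test-contexts $\cont{\alpha}{\cdot}$, the Taylor-expansion machinery of Section~\ref{sec:TaylorExp}, and the convergence analysis of Proposition~\ref{prop:Fcontext}—has already been spent in establishing the inequational statement. The only point deserving a word of care is purely bookkeeping: the interpretation $\Int{\cdot}_{\seq x}$ is taken with respect to a list $\seq x$ containing the free variables of both terms, and Theorem~\ref{thm:main2} is stated for such a common list, so the same $\seq x$ may be used throughout and the two applications of the theorem are perfectly compatible. This mirrors the analogous passage from Theorem~\ref{thm:main1} to Corollary~\ref{cor:main1} in the promotion-free setting, where the identical two-line deduction was used.
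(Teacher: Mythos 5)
Your proposal is correct and matches the paper's (implicit) argument: the paper states this corollary without proof precisely because it follows immediately from Theorem~\ref{thm:main2} by decomposing $\Int{M}_{\seq x}=\Int{N}_{\seq x}$ into the two inclusions and $M\Fobseq N$ into the two instances of $\Fobsle$, exactly as you do. This is the same two-line deduction used to pass from Theorem~\ref{thm:main1} to Corollary~\ref{cor:main1}.
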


\section{Conclusions and Further Works} \label{sec:conclusions}

In this paper we defined the interpretation of several resource calculi into the relational model $\cD$
and characterized the equality induced on the terms from an operational point of view.
The analogous question for untyped \lam-calculus was addressed in \cite{Manzonetto09}, where it is shown that the \lam-theory induced by $\cD$ is $\mathcal{H}^\star$,
therefore $\cD$ is fully abstract for \lam-calculus.

In the first result of our paper we proved that the model $\cD$ is also (in)equationally fully abstract for the \dzlam-calculus with tests. 
Such a proof is simplified by the absence of promotion in the calculus, which allows us to work in a strongly normalizing framework.
The interest of this proof is that it generalizes along two directions.

The first direction aims to get rid of the tests, while remaining in the promotion-free fragment of the calculus.
To extend this result to the \dzlam-calculus without tests we defined the test-expansion --- a translation from tests to terms
replacing every occurrence of a test operator $\gt,\gto$ by a suitable number of empty applications or dummy lambda abstraction. 
By applying this translation to a test-context separating two terms, we obtain a term-context having the same discriminatory power.
This is not surprising since everything is finite in the \dzlam-calculus (finite sums, finite reduction chains) therefore the infinitary nature of 
our test operators can be simulated by terms whose size is big enough.

The second direction aims to extend the full abstraction result to the \dlam-calculus with tests (and promotion available on resources).
The main contribution of the paper is to show that this generalization can be done just by combining the properties of the
head reduction and of the Taylor Expansion.

It is worth to notice that the test expansion method cannot be applied in presence of promotion because $\cD$ is not fully abstract 
for the \dlam-calculus; in other words the tests are \emph{necessary} to obtain the last result.
This has been recently showed by Breuvart \cite{Breuvart}, who exhibited two terms of the \dlam-calculus 
being observationally equivalent, but having different interpretations in $\cD$. 
The idea of the counterexample is to build, using fixpoint combinators, a term $M$ reducing (eventually) to an infinite sum of terms whose
head variable is preceded by an increasing number of lambda abstractions. 
This term is annihilated by the context $\gt[\hole{\cdot}[\gto(\varepsilon)]]$ because the operator $\gt$ ``eats'' all the lambda abstractions and 
substitutes the head-variable of each component of the sum by 0, while we know that the same context sends $\bold{I}$ to $\varepsilon$. 
The author then proved that no context of the \dlam-calculus can simulate this behaviour.

The following table summarizes all these results.
The definition of $\obsle^\oc$ is analogous to $\obsle$ with the definition of may-solvable given in \cite{PaganiR10};
the definition of $\preceq$ is the usual one given in \cite{MorrisTh}.

\begin{center}
\begin{tabular}{|c|c|c|}
\hline
Calculus               			& Operational Preorder & $\cD$ is fully abstract \\
\hline
\dlam-calculus with tests   			& $\Fobsle$ & yes (Thm.~\ref{thm:main2} ) \\
\dzlam-calculus with tests			& $\Tobsle$  & yes (Thm.~\ref{thm:main1}) \\
\dlam-calculus     				& $\obsle^\oc$ & no \cite{Breuvart} \\
\dzlam-calculus         			& $\obsle$ & yes (Thm.~\ref{thm:main1-1}) \\
\lam-calculus             			& $\preceq$ & yes (equationally) \cite{Manzonetto09}\\
\hline
\end{tabular}
\end{center}

Breuvart's counterexample raises the problem of finding a model that is actually fully abstract for the \dlam-calculus without tests.

\begin{qu}\label{Q1} Is there a fully abstract model of \dlam-calculus living in the relational semantics?
\end{qu}

It is known that the structure of the underlying Cartesian closed category may effect the theories of all models living in it.
For instance in \cite{Manzonetto10} it is shown that terms having the same Taylor expansion are equated in all models living in $\MRel$.
It is therefore possible that Question~\ref{Q1} admits a negative answer.
If this is the case, then the following question becomes interesting.

\begin{qu}\label{Q2} Is it possible to find a new comonad $T$, such that the (co)Kliesli $\bold{Rel}_T$ contains a fully abstract model of \dlam-calculus?
\end{qu}

Indeed, the comonad $\Mfin{-}$ of finite multisets is not the only one that leads to models of \dlam-calculus.
For instance it has been shown by Carraro, Ehrhard and Salibra in \cite{CarraroES10b} that one can consider exponential functors with infinite multiplicities. 
However, their models do not even validate the Taylor expansion, therefore are not suitable to solve Question~\ref{Q2}.
The challenge is to find other kinds of comonads.

\section*{Acknowledgement}
We wish to acknowledge fruitful discussions with Flavien Breuvart, Michele Pagani and Paolo Tranquilli.

\bibliographystyle{plain}
\bibliography{include/biblio}

\begin{thebibliography}{10}

\bibitem{Bare}
H.~P. Barendregt.
\newblock {\em The Lambda Calculus: Its Syntax and Semantics}.
\newblock North-Holland, Amsterdam, 1984.

\bibitem{terese03}
M.~Bezem, J.W. Klop, and R.~Vrijer.
\newblock {\em Term rewriting systems}.
\newblock Cambridge tracts in theoretical computer science. Cambridge
  University Press, 2003.

\bibitem{BluteCS09}
R.~Blute, R.~Cockett, and R.~Seely.
\newblock Cartesian differential categories.
\newblock {\em Theory and Applications of Categories}, 22, 2009.

\bibitem{Boudol93}
G.~Boudol.
\newblock The lambda-calculus with multiplicities (abstract).
\newblock In E.~Best, editor, {\em CONCUR}, volume 715 of {\em Lecture Notes in
  Computer Science}, pages 1--6. Springer, 1993.

\bibitem{BoudolCL99}
G.~Boudol, P.-L. Curien, and C.~Lavatelli.
\newblock A semantics for lambda calculi with resources.
\newblock {\em Mathematical Structures in Computer Science}, 9(4):437--482,
  1999.

\bibitem{Breuvart}
F.~Breuvart.
\newblock On the discriminating power of tests in the resource
  $\lambda$-calculus.
\newblock Submitted. Draft available at
  http://hal.archives-ouvertes.fr/hal-00698609.

\bibitem{BucciarelliCEM11}
A.~Bucciarelli, A.~Carraro, T.~Ehrhard, and G.~Manzonetto.
\newblock Full abstraction for resource calculus with tests.
\newblock In Marc Bezem, editor, {\em Computer Science Logic (CSL'11) - 25th
  International Workshop/20th Annual Conference of the EACSL}, volume~12 of
  {\em Leibniz International Proceedings in Informatics (LIPIcs)}, pages
  97--111, Dagstuhl, Germany, 2011. Schloss Dagstuhl--Leibniz-Zentrum fuer
  Informatik.

\bibitem{BucciarelliEM07}
A.~Bucciarelli, T.~Ehrhard, and G.~Manzonetto.
\newblock Not enough points is enough.
\newblock In Jacques Duparc and Thomas~A. Henzinger, editors, {\em Computer
  Science Logic, 21st International Workshop, CSL 2007, 16th Annual Conference
  of the EACSL, Lausanne, Switzerland, September 11-15, 2007, Proceedings},
  volume 4646 of {\em Lecture Notes in Computer Science}, pages 268--282.
  Springer, 2007.

\bibitem{CarraroES10b}
A.~Carraro, T.~Ehrhard, and A.~Salibra.
\newblock Exponentials with infinite multiplicities.
\newblock In Anuj Dawar and Helmut Veith, editors, {\em Computer Science Logic,
  24th International Workshop, CSL 2010, 19th Annual Conference of the EACSL,
  Brno, Czech Republic, August 23-27, 2010. Proceedings}, volume 6247 of {\em
  Lecture Notes in Computer Science}, pages 170--184. Springer, 2010.

\bibitem{EhrhardL10}
T.~Ehrhard and O.~Laurent.
\newblock Interpreting a finitary pi-calculus in differential interaction nets.
\newblock {\em Inf. Comput.}, 208(6):606--633, 2010.

\bibitem{EhrhardR03}
T.~Ehrhard and L.~Regnier.
\newblock The differential lambda-calculus.
\newblock {\em Theor. Comput. Sci.}, 309(1-3):1--41, 2003.

\bibitem{EhrhardR06}
T.~Ehrhard and L.~Regnier.
\newblock B{\"o}hm trees, {K}rivine's machine and the {T}aylor expansion of
  lambda-terms.
\newblock In Arnold Beckmann, Ulrich Berger, Benedikt L{\"o}we, and John~V.
  Tucker, editors, {\em Logical Approaches to Computational Barriers, Second
  Conference on Computability in Europe, CiE 2006, Swansea, UK, June 30-July 5,
  2006, Proceedings}, volume 3988 of {\em Lecture Notes in Computer Science},
  pages 186--197. Springer, 2006.

\bibitem{EhrhardR06bis}
T.~Ehrhard and L.~Regnier.
\newblock Differential interaction nets.
\newblock {\em Theor. Comput. Sci.}, 364(2):166--195, 2006.

\bibitem{EhrhardR08}
T.~Ehrhard and L.~Regnier.
\newblock Uniformity and the {T}aylor expansion of ordinary lambda-terms.
\newblock {\em Theor. Comput. Sci.}, 403(2-3):347--372, 2008.

\bibitem{Fiore07}
Marcelo~P. Fiore.
\newblock Differential structure in models of multiplicative biadditive
  intuitionistic linear logic.
\newblock In Simona Ronchi~Della Rocca, editor, {\em Typed Lambda Calculi and
  Applications, 8th International Conference, TLCA 2007, Paris, France, June
  26-28, 2007, Proceedings}, volume 4583 of {\em Lecture Notes in Computer
  Science}, pages 163--177, 2007.

\bibitem{Girard87}
J.-Y. Girard.
\newblock Linear logic.
\newblock {\em Theoretical Computer Science}, 50:1--102, 1987.

\bibitem{Girard88}
J.-Y. Girard.
\newblock Normal functors, power series and lambda-calculus.
\newblock {\em Annals of Pure and Applied Logic}, 37(2):129--177, 1988.

\bibitem{Girard03}
J.-Y. Girard.
\newblock From foundations to ludics.
\newblock {\em Bulletin of Symbolic Logic}, 9(2):131--168, 2003.

\bibitem{Manzonetto09}
G.~Manzonetto.
\newblock A general class of models of $\mathcal{H}^{\star}$.
\newblock In {\em Mathematical Foundations of Computer Science (MFCS'09)},
  volume 5734 of {\em Lecture Notes in Computer Science}, pages 574--586.
  Springer, 2009.

\bibitem{Manzonetto10}
G.~Manzonetto.
\newblock What is a categorical model of the differential and the resource
  $\lambda$-calculi?
\newblock {\em Mathematical Structures in Computer Science}, 22(3):451--520,
  2012.

\bibitem{ManzonettoP11}
G.~Manzonetto and M.~Pagani.
\newblock B{\"o}hm's theorem for resource lambda calculus through {T}aylor
  expansion.
\newblock In C.-H.L. Ong, editor, {\em Typed Lambda Calculi and Applications -
  10th International Conference, TLCA 2011, Novi Sad, Serbia, June 1-3, 2011.
  Proceedings}, volume 6690 of {\em Lecture Notes in Computer Science}, pages
  153--168. Springer, 2011.

\bibitem{Mellies09}
P.-A. Melli\`es.
\newblock {Categorical semantics of linear logic}.
\newblock {\em Panoramas et Synth\`eses}, 27, 2009.

\bibitem{Milner80}
R.~Milner.
\newblock {\em A Calculus of Communicating Systems}.
\newblock Springer-Verlag, 1982.

\bibitem{Milner93}
R.~Milner.
\newblock The polyadic $\pi$-calculus: a tutorial.
\newblock {\em Logic and algebra of specification}, pages 203--246, 1993.

\bibitem{MorrisTh}
J.H. Morris.
\newblock {\em Lambda calculus models of programming languages}.
\newblock PhD thesis, Massachusets Institute of Technology, 1968.

\bibitem{Pagani09}
M.~Pagani.
\newblock The cut-elimination theorem for differential nets with promotion.
\newblock In Pierre-Louis Curien, editor, {\em TLCA}, volume 5608 of {\em
  Lecture Notes in Computer Science}, pages 219--233. Springer, 2009.

\bibitem{PaganiR11}
M.~Pagani and S.~Ronchi Della~Rocca.
\newblock Linearity, non-determinism and solvability.
\newblock {\em Fundam. Inform.}, 103(1-4):173--202, 2010.

\bibitem{PaganiR10}
M.~Pagani and S.~Ronchi Della~Rocca.
\newblock Solvability in resource lambda-calculus.
\newblock In C.-H.~Luke Ong, editor, {\em Foundations of Software Science and
  Computational Structures, 13th International Conference, FOSSACS 2010, Held
  as Part of the Joint European Conferences on Theory and Practice of Software,
  ETAPS 2010, Paphos, Cyprus, March 20-28, 2010. Proceedings}, volume 6014 of
  {\em Lecture Notes in Computer Science}, pages 358--373. Springer, 2010.

\bibitem{PaganiT09}
M.~Pagani and P.~Tranquilli.
\newblock Parallel reduction in resource lambda-calculus.
\newblock In Zhenjiang Hu, editor, {\em Programming Languages and Systems, 7th
  Asian Symposium, APLAS 2009, Seoul, Korea, December 14-16, 2009.
  Proceedings}, volume 5904 of {\em Lecture Notes in Computer Science}, pages
  226--242. Springer, 2009.

\bibitem{Tranquilli10}
P.~Tranquilli.
\newblock Intuitionistic differential nets and lambda-calculus.
\newblock {\em Theor. Comput. Sci.}, 412(20):1979--1997, 2011.

\bibitem{Vaux07}
L.~Vaux.
\newblock The differential $\lambda\mu$-calculus.
\newblock {\em Theor. Comput. Sci.}, 379(1-2):166--209, 2007.

\end{thebibliography}

\newpage
\appendix
\section{Technical Appendix}\label{app:tech_app}

This technical appendix is devoted to give the proofs of some results in the paper.
The following is an equivalent but slightly more compact version of the linear substitution lemma.

\begin{lem}[Lemma~\ref{lemma:lsubst}, Linear Substitution Lemma]
Let $M \in\Set{\gto}$, $Q\in\Set{\gt}$ and $P\in\Set{b}$. Then we have:
\begin{enumerate}[\em(i)]
\item $(\seq a,\alpha)\in \Int{M\lsubst{y}{P}}_{\seq x}$ if and only if 
	$\exists d \in \Mfin{\cD}$, $\exists \seq b,\seq c \in \Mfin{\cD}^n$ such that
        $((\seq b,d),\alpha) \in \Int{M}_{\seq x,y}$,
        $(\seq c,d) \in \Int{P}_{\seq x}$,  
        $\seq a = \seq b \mcup \seq c$, and 
        $\sharp d = \sharp P = \dg{y}{M}$.
\item $\seq a \in \Int{Q\lsubst{y}{P}}_{\seq x}$ if and only if 
	$\exists d \in \Mfin{\cD}$, $\exists \seq b,\seq c \in \Mfin{\cD}^n$ such that
        $(\seq b,d) \in \Int{Q}_{\seq x,y}$,
        $(\seq c,d) \in \Int{P}_{\seq x}$,  
        $\seq a = \seq b \mcup \seq c$, and 
        $\sharp d = \sharp P = \dg{y}{Q}$.
\end{enumerate}
\end{lem}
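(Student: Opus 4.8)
The plan is to prove a single-substitution version first and then iterate it along the elements of the bag; throughout I fix $y\notin\seq x$ and $y\notin\FV(P)$, and write $n$ for the length of $\seq x$. Before anything else I would record the auxiliary observation that in the promotion-free calculus every occurrence of a variable is consumed linearly: a straightforward mutual induction on expressions shows that $((\seq a,d),\alpha)\in\Int{M}_{\seq x,y}$ forces $\sharp d=\dg{y}{M}$, and likewise $(\seq a,d)\in\Int{Q}_{\seq x,y}$ forces $\sharp d=\dg{y}{Q}$. This settles the cardinality side-conditions $\sharp d=\dg{y}{M}$ (resp.\ $\dg{y}{Q}$), and, together with the fact that $\Int{\cdot}_{\seq x}$ is taken in a context not containing $y$, also the requirement $\sharp P=\dg{y}{M}$: the expression $M\lsubst{y}{P}$ is $\bang$-free and closed in $y$ exactly when the number $k=\sharp P$ of elements of $P$ equals $\dg{y}{M}$, and in the non-matching cases both sides of the claimed equivalence are empty.

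The core step is the \emph{single} linear substitution: for $N\in\Set{\gto}$ with $y\notin\FV(N)$, one has $((\seq a,d),\alpha)\in\Int{M\lsubst{y}{N}}_{\seq x,y}$ if and only if there exist $\beta\in\cD$ and $\seq b,\seq c\in\Mfin{\cD}^n$ with $((\seq b,[\beta]\mcup d),\alpha)\in\Int{M}_{\seq x,y}$, $(\seq c,\beta)\in\Int{N}_{\seq x}$ and $\seq a=\seq b\mcup\seq c$ (together with the analogous statement for tests). I would prove this by mutual structural induction on $M$ and the test, following the clauses defining linear substitution. The variable cases are immediate: $y\lsubst{y}{N}=N$ forces $d=[]$ and picks $\beta=\alpha$, while $z\lsubst{y}{N}=0$ for $z\neq y$ matches the empty relation since $y$ does not occur. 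In the application case $(MP)\lsubst{y}{N}=M\lsubst{y}{N}\,P+M\,(P\lsubst{y}{N})$ the two summands correspond precisely to whether the singled-out occurrence of $y$ lies in the function or in the argument; unfolding the definition of $\Int{MP}_{\seq x}$ and splitting the resource multiset accordingly, the induction hypotheses on $M$ and on $P$ assemble into the stated equivalence. The bag, test, throw and abstraction cases proceed in the same spirit, the only work being the bookkeeping of how $\seq a$, $\seq c$ and the $y$-component $d$ are distributed across subterms.

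Finally I would iterate. Writing $P=[L_1,\dots,L_k]$ and invoking Schwarz's Theorem (Theorem~\ref{thm:Schwarz}) to justify that $M\lsubst{y}{P}=M\lsubst{y}{L_1}\cdots\lsubst{y}{L_k}$ is independent of the enumeration, I apply the single-substitution equivalence $k$ times, each application peeling off one $L_i$ and extracting from the $y$-multiset one element $\beta_i$ with $(\seq c_i,\beta_i)\in\Int{L_i}_{\seq x}$. After $k$ steps the accumulated resources form $d=[\beta_1,\dots,\beta_k]$ and $\seq c=\mcup_{i=1}^k\seq c_i$, while the residual $y$-multiset is empty precisely because $M\lsubst{y}{P}$ is interpreted in the context $\seq x$ without $y$. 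By the definition of the interpretation of bags this gives $(\seq c,d)\in\Int{P}_{\seq x}$, together with $((\seq b,d),\alpha)\in\Int{M}_{\seq x,y}$, $\seq a=\seq b\mcup\seq c$ and, via the auxiliary observation, $\sharp d=\sharp P=\dg{y}{M}$; the test case (ii) is entirely parallel. I expect the main obstacle to be the multiset combinatorics in the application and bag cases of the single-substitution induction: one must match the nondeterministic choice of which occurrence of $y$ is replaced against the way composition in $\MRel$ splits a resource multiset, while keeping the $y$-component bookkeeping coherent across the mutual induction.
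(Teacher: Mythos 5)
Your proof is correct, but it takes a genuinely different route from the paper's. The paper proves the lemma by a single mutual induction on $M$ and $Q$ that substitutes the \emph{whole} bag $P$ at once: in the application case $M=N_0[N_1,\ldots,N_h]$ it enumerates all decompositions $P=P_0'\mcup\cdots\mcup P_h'$ with $\sharp P_j'=\dg{y}{N_j}$, rewrites $(N_0[N_1,\ldots,N_h])\lsubst{y}{P}$ as the sum over these decompositions, and applies the induction hypothesis to each subterm with its own sub-bag, so the combinatorics of distributing the resources over the syntax tree is carried inside the induction. You instead factor the argument through a one-resource substitution lemma stated with $y$ kept in the interpretation context (characterizing $\Int{M\lsubst{y}{N}}_{\seq x,y}$ by peeling a single $\beta$ off the $y$-component), and then iterate it $k$ times along the elements of the bag, invoking Schwarz's Theorem for order-independence and discharging $y$ at the end via the convention $\Int{A}_{\seq x,y}=\{((\seq a,[]),\alpha)\st(\seq a,\alpha)\in\Int{A}_{\seq x}\}$. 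Your route reduces each induction case to placing one occurrence rather than partitioning a bag, at the price of a slightly more general induction statement and an explicit appeal to Schwarz (which the paper needs anyway to make the notation $\sA\lsubst{x}{P}$ well defined); your auxiliary observation that $((\seq a,d),\alpha)\in\Int{M}_{\seq x,y}$ forces $\sharp d=\dg{y}{M}$ is the same linearity fact the paper uses implicitly when it notes $\card{d_j}=\card{P_j'}$. Both arguments establish the same statement, and the degenerate cases where $\sharp P\neq\dg{y}{M}$ are handled the same way (both sides empty).
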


\proof The points (i) and (ii) are proved by mutual induction on $M$ and $Q$.

\begin{enumerate}[(i)] 
\item We only treat the case $M = N_0\Bag{N_1,\ldots,N_h}$.

($\Rightarrow$) 
First, let us call $\mathfrak{P}$ the set of all sequences $\seq {P'} = (P_0',\ldots,P_h')$ of bags such that $P_0' \mcup \cdots \mcup P_h' = P$ and $\sharp P_j' = \dg{y}{N_j}$ for all $j=0,\ldots,h$. 
Also, note that by definition of linear substitution we have 
\[(N_0[N_1,\ldots,N_h])\lsubst{y}{P} = \Sigma_{\seq {P'} \in
  \mathfrak{P}}
N_0\lsubst{y}{P_0'}[N_1\lsubst{y}{P_1'},\ldots,N_h\lsubst{y}{P_h'}].
\]
 Hence, by definition of interpretation, we have that $(\seq a, \alpha)\in \Int{M\lsubst{y}{P}}_{\seq x}$ iff there exist $\seq {P'} \in \mathfrak{P}$, $\alpha_1,\ldots,\alpha_h \in \cD$, $\seq a_0,\ldots,\seq a_h \in \Mfin{\cD}^n$ such that $(\seq a_0,[\alpha_1,\ldots,\alpha_h] \at \alpha)\in \Int{N_0\lsubst{y}{P_0'}}_{\seq x}$, $(\seq a_j,\alpha_j) \in \Int{N_j\lsubst{y}{P_j'}}_{\seq x}$ (for $1\le j\le h$), and $\seq{a} = \mcup_{i=0}^h \seq a_i$. Now by applying the induction hypothesis (i) we obtain that:
\begin{iteMize}{$\bullet$}
\item $\exists d_0 \in \Mfin{\cD}$, 
	 $\exists \seq b_{0},\seq c_{0} \in \Mfin{\cD}^n$ such that
        $((\seq c_0,d_0),[\alpha_1,\ldots,\alpha_h] \at \alpha)  \in \Int{N_0}_{\seq x,y}$,
        $(\seq b_0,d_0) \in \Int{P_0'}_{\seq x}$,  
        $\seq a_0 = \seq b_0 \mcup \seq c_0$, and
        $\sharp d_0 = \sharp P_0'$.
\item $\forall j=1,\ldots,k$, $\exists d_j \in \Mfin{\cD}$, 
	 $\exists \seq b_j,\seq c_j \in \Mfin{\cD}^n$ such that
        $((\seq c_j,d_j),\alpha_j) \in \Int{N_j}_{\seq x,y}$,
        $(\seq b_j,d_j) \in \Int{P_j'}_{\seq x}$, 
        $\seq a_j = \seq b_j \mcup \seq c_j$, and
        $\sharp d_j = \sharp P_j'$.
\end{iteMize}

\noindent Now let $\seq {c} = \mcup_{j=0}^{h}\seq c_j$, $\seq {b} =
\mcup_{j=0}^{h}\seq b_j$, and $d = \mcup_{j=0}^{h} d_j$. Clearly
 $((\seq {c},d),[\alpha_1,\ldots,\alpha_h] \at
\alpha)\in\Int{M}_{\seq x,y}$, $(\seq {b},d) \in \Int{P}_{\seq x}$,
and $\seq a = \seq {b} \mcup \seq {c}$. This concludes the proof of
the ($\Rightarrow$) implication.

($\Leftarrow$) Suppose that $\exists d \in \Mfin{\cD}$, $\exists \seq b,\seq c \in \Mfin{\cD}^n$ such that $((\seq c,d),\alpha) \in \Int{M}_{\seq x,y}$, $(\seq b,d) \in \Int{P}_{\seq x}$ and $\seq b \mcup \seq c = \seq a$. Now we observe that by the definition of interpretation
\begin{iteMize}{$\bullet$}
\item $\exists d_0,\ldots,d_h \in \Mfin{\cD}$, 
        $\exists \seq c_0,\ldots,\seq c_h \in \Mfin{\cD}^n$ such that
        $((\seq c_0,d_0),[\alpha_1,\ldots,\alpha_h] \at \alpha) \in \Int{N_0}_{\seq x,y}$,
        $((\seq c_j,d_j),\alpha_j) \in \Int{N_j}_{\seq x,y}$ (for $1\le j\le h$),
        $\mcup_{j=0}^{h} \seq c_j = \seq c$, and
        $\mcup_{j=0}^{h} d_j = d$, and
        $\sharp d_j = \dg{y}{N_j}$ (for $1\!\le\!j\!\le\! h$).
\item $\exists P_0',\ldots,P_h' \in \Set{b}$, 
        $\exists \seq b_0,\ldots,\seq b_h \in \Mfin{\cD}^n$ such that 
        $\mcup_{j=0}^{h} P_j' = P$, 
        $\mcup_{j=0}^{h} \seq b_j = \seq b$, 
        $\card{P_j'} = \dg{y}{N_j}$ (for $j=0,\ldots,h$), and
        $(\seq b_j,d_j) \in \Int{P_j'}_{\seq x}$ (for $j=0,\ldots,h$).
\end{iteMize}
Note that $\card{d_j} =\card{P_j'}$ (for $1\leq j\leq h$). Now let $\seq a_j = \seq b_j\mcup\seq c_j$ (for $j=0,\ldots,h$). Then by the induction hypo\-thesis~(i) we have that $(\seq a_0,[\alpha_1,\ldots,\alpha_h] \at \alpha)\in \Int{N_0\lsubst{y}{P_0'}}_{\seq x}$ and $(\seq a_j,\alpha_j) \in \Int{N_j\lsubst{y}{P_j'}}_{\seq x}$ (for $1\le j\le h$), and finally observing that $\seq a = \mcup_{j=0}^{h} \seq a_j$, we can conclude that $(\seq a,\alpha)\in \Int{N_0\lsubst{y}{P_0'}[N_1\lsubst{y}{P_1'},\ldots,N_h\lsubst{y}{P_h'}]}_{\seq x} \subseteq \Int{M\lsubst{y}{P}}_{\seq x}$.

\item We just consider the case $Q = \gt[N]$. By definition of
  interpretation we then have $\Int{\gt[N\lsubst{y}{P}]}_{\seq x} = \{\seq a \st (\seq a,*) \in \Int{N\lsubst{y}{P}}_{\seq x} \}$. Hence applying the induction hypothesis (i) and the fact that $\gt[N]\lsubst{y}{P} = \gt[N\lsubst{y}{P}]$ we conclude that $\Int{\gt[N]\lsubst{y}{P}}_{\seq x} = \{\seq a \mcup \seq b \st \exists d \in \Mfin{\cD},\ \card{d} = \card{P}=\dg{y}{Q},\ (\seq b,d)\in\Int{P}_{\seq x},\ ((\seq a,d)) \in \Int{\gt[N]}_{\seq x,y} \}$.\qed
\end{enumerate}

\begin{lem}[Lemma~\ref{lemma:subst}, Regular Substitution Lemma]
Let $M\in\Set{\gto}$, $Q\in\Set{\gt}$ and $\sN\in\sums{\gto}$. 
We have:
\begin{enumerate}[\em(i)]
\item $(\seq a, \alpha)\in \Int{M\subst{y}{\sN}}_{\seq x}$ iff 
	$\exists\beta_1,\ldots,\beta_k\in \cD$, $k=\dg{y}{M}$,
	$\exists\seq a_0,\ldots,\seq a_{k}\in\Mfin{\cD}^n$ such that
	$(\seq a_i,\beta_i)\in\Int{\sN}_{\seq x}$ (for $1\le i \le k$),
	$((\seq a_0,[\beta_1,\ldots,\beta_k]), \alpha)\in\Int{M}_{\seq x,y}$ and 
	$\seq a = \mcup_{j=0}^{k} \seq a_j$,
\item $\seq a\in \Int{Q\subst{y}{\sN}}_{\seq x}$ iff 
	$\exists\beta_1,\ldots,\beta_k\in \cD$, $k=\dg{y}{Q}$,
	$\exists\seq a_0,\ldots,\seq a_{k}\in\Mfin{\cD}^n$ such that
	$(\seq a_i,\beta_i)\in\Int{\sN}_{\seq x}$ (for $1\le i \le k$) and
	$(\seq a_0,[\beta_1,\ldots,\beta_k])\in\Int{Q}_{\seq x,y}$ and 
	$\seq a = \mcup_{j=0}^{k} \seq a_j$.
\end{enumerate}
\end{lem}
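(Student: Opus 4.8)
The plan is to establish (i) and (ii) simultaneously, by mutual structural induction on $M$ and $Q$, following verbatim the template of the proof of Lemma~\ref{lemma:lsubst} just given. The only conceptual difference is that whereas linear substitution consumes the whole $y$-multiset $d$ through a \emph{single} bag $P$, regular substitution consumes it by attaching to each of the $\dg{y}{M}$ occurrences of $y$ an \emph{independent} witness drawn from $\Int{\sN}_{\seq x}$; accordingly the coupling multiset is forced to have cardinality exactly $k=\dg{y}{M}$, which is precisely the constraint recorded in the statement. Throughout the induction $\sN$ is kept as a fixed sum and $\Int{\sN}_{\seq x}=\bigcup_i\Int{N_i}_{\seq x}$ as a single fixed relation, so that each occurrence may independently select any element of it.

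First I would dispatch the base cases. For $M=y$ we have $M\subst{y}{\sN}=\sN$ and $\dg{y}{M}=1$; since $\Int{y}_{\seq x,y}=\{((\seq{[]},[\gamma]),\gamma)\st\gamma\in\cD\}$, the pairing $((\seq a_0,[\beta_1]),\alpha)\in\Int{y}_{\seq x,y}$ forces $\seq a_0=\seq{[]}$ and $\beta_1=\alpha$, so that $\seq a=\seq a_1$ and the right-hand side reduces exactly to $(\seq a,\alpha)\in\Int{\sN}_{\seq x}$. For $M=x_i$ with $x_i\neq y$ we have $\dg{y}{M}=0$ and $M\subst{y}{\sN}=x_i$, and the statement collapses to the convention $((\seq a,[]),\alpha)\in\Int{x_i}_{\seq x,y}\iff(\seq a,\alpha)\in\Int{x_i}_{\seq x}$. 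The cases of $\lambda$-abstraction and of $\gto(V)$ are immediate from the induction hypothesis together with the syntactic identities $(\lambda z.N)\subst{y}{\sN}=\lambda z.(N\subst{y}{\sN})$ (with $z\notin\FV(\sN)$ chosen wlog) and $\gto(V)\subst{y}{\sN}=\gto(V\subst{y}{\sN})$, using the defining clauses of $\Int{\lambda z.\_}$ and $\Int{\gto(\_)}$ and reducing the latter to part (ii). Symmetrically, for tests I would handle $Q=\gt[N]$ by reducing to (i) via $\Int{\gt[N]}_{\seq x}=\{\seq a\st(\seq a,*)\in\Int{N}_{\seq x}\}$ and $\gt[N]\subst{y}{\sN}=\gt[N\subst{y}{\sN}]$, treat parallel composition $V_1\paral V_2$ by splitting the data across the two factors, and dispose of $\varepsilon$ trivially.

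The core of the argument, exactly as in Lemma~\ref{lemma:lsubst}, is the application case $M=N_0[N_1,\dots,N_h]$. Here I would use the additivity $\dg{y}{M}=\sum_{j=0}^h\dg{y}{N_j}$ and the fact that regular substitution distributes, $M\subst{y}{\sN}=N_0\subst{y}{\sN}[N_1\subst{y}{\sN},\dots,N_h\subst{y}{\sN}]$. Unfolding the interpretation of an application, applying the induction hypothesis to each $N_j$, and reassembling, the $y$-multiset $[\beta_1,\dots,\beta_k]$ is recovered as the union of the witness multisets produced by the subterms, the context components $\seq a_0,\dots,\seq a_k$ are distributed accordingly, and all the multiset unions are merged; conversely one splits a global witness family into per-subterm families. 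The hard part will be precisely this bookkeeping — tracking how the partition of $[\beta_1,\dots,\beta_k]$ and of the $\seq a_i$'s matches the decomposition $k=\sum_j\dg{y}{N_j}$ — but it is entirely parallel to the $(\Rightarrow)$/$(\Leftarrow)$ computation carried out for the linear lemma, with the single bag $P$ of that proof replaced uniformly by the sum $\sN$ at every occurrence, so no new idea is required.

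As an alternative I could sidestep re-doing the induction by reducing to Lemma~\ref{lemma:lsubst}: by Remark~\ref{rem:beta-contr} and idempotency of the sum one has the syntactic identity $M\subst{y}{\sN}=M\lsubst{y}{\sN}\cdots\lsubst{y}{\sN}\subst{y}{0}$ with $k=\dg{y}{M}$ linear substitutions (the $\subst{y}{0}$ acting as the identity once $y\notin\FV(\sN)$), after which $k$ applications of the linear substitution lemma yield the claim. I would nonetheless prefer the direct induction, since expressing the substitution of the \emph{sum} $\sN$ as a union over choices of summands at each occurrence, and extending Remark~\ref{rem:beta-contr} by linearity, introduces its own combinatorial overhead, whereas the structural induction keeps $\Int{\sN}_{\seq x}$ as a single fixed relation throughout.
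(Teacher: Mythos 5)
Your proposal is correct and follows essentially the same route as the paper's own proof: a mutual structural induction on $M$ and $Q$ whose only substantial case is the application $M=N_0[N_1,\dots,N_h]$, where the witness multiset $[\beta_1,\dots,\beta_k]$ and the contexts $\seq a_0,\dots,\seq a_k$ are assembled from, and conversely partitioned among, the subterms according to $k=\sum_j\dg{y}{N_j}$, with the test case reduced to (i) exactly as you describe. The paper likewise carries out this bookkeeping directly rather than reducing to the linear lemma, so your preferred route is the one actually taken.
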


\begin{proof}The items (i) and (ii) are proved by mutual induction on $M$ and $Q$.

(i) We only treat the case $M = N_0\Bag{N_1,\ldots,N_h}$.

($\Rightarrow$) Suppose that $(\seq a,\alpha)\in \Int{M\subst{y}{\sN}}_{\seq x}$. By definition of classic substitution we have 
$(N_0[N_1,\ldots,N_h])\subst{y}{\sN} = N_0\subst{y}{\sN}[N_1\subst{y}{\sN},\ldots,N_h\subst{y}{\sN}]$. 
Hence, by definition, $(\seq a,\alpha)\in \Int{N_0\subst{y}{\sN}[N_1\subst{y}{\sN},\ldots,N_h\subst{y}{\sN}]}_{\seq x}$ if and only if 
there exist $\alpha_1,\ldots,\alpha_h \in \cD$, $\seq a_0,\ldots,\seq a_h \in \Mfin{\cD}^n$ such that $(\seq a_0,[\alpha_1,\ldots,\alpha_h] \at \alpha)\in \Int{N_0\subst{y}{\sN}}_{\seq x}$, $(\seq a_j,\alpha_j) \in \Int{N_j\subst{y}{\sN}}_{\seq x}$ for $1\le j\le h$, and $\seq{a} = \mcup_{i=0}^h \seq a_i$. 

By applying the induction hypothesis (i) we obtain that 
\begin{iteMize}{$\bullet$}
\item $\exists\delta_{0,1},\ldots,\delta_{0,\ell_0}\in \cD$ for $\ell_0=\dg{y}{N_0}$,
	 $\exists \seq b_{0,1},\ldots,\seq b_{0,\ell_0},\seq c_{0} \in
  \Mfin{\cD}^n$ such that \newline
        $((\seq c_0,[\delta_{0,1},\ldots,\delta_{0,\ell_0}]),[\alpha_1,\ldots,\alpha_h] \at \alpha)  \in \Int{N_0}_{\seq x,y},$
        $(\seq b_{0,i},\delta_{0,i}) \in \Int{\sN}_{\seq x}$, (for $1\leq i \leq \ell_0$),
        $(\mcup_{i=1}^{\ell_0} \seq b_{0,i}) \mcup \seq c_0 = \seq a_0$, and
\item $\forall j=1,\ldots,k$, $\exists \delta_{j,1},\ldots,\delta_{j,\ell_j} \in \Mfin{\cD}$, $\ell_j=\dg{y}{N_j}$,
	 $\exists \seq b_j,\seq c_j \in \Mfin{\cD}^n$ such that
        $((\seq c_j,[\delta_{j,1},\ldots,\delta_{j,\ell_j}]),\alpha_j) \in \Int{N_j}_{\seq x,y}$,
        $(\seq b_{j,i},\delta_{j,i}) \in \Int{\sN}_{\seq x}$, (for $1\leq i \leq \ell_j$),
        $(\mcup_{i=1}^{k_j} \seq b_{j,i}) \mcup \seq c_j = \seq a_j$.
\end{iteMize}

\noindent Now let $\seq {c} = \mcup_{j=0}^{h}\seq c_j$, $\seq {b} = \mcup_{j=0}^{h} \mcup_{i=1}^{\ell_j} \seq b_{j,i}$, 
$k = \Sigma_{j=0}^{h} \ell_j$ and $[\beta_1,\ldots,\beta_k] = \mcup_{j=0}^{h} [\delta_{j,1},\ldots,\delta_{j,\ell_j}]$. 
It is easy to see that $((\seq {c},[\beta_1,\ldots,\beta_k]),\alpha)\in\Int{M}_{\seq x,y}$ and $\seq a = \seq {b} \mcup \seq {c}$. 
This concludes the proof of the right implication.

($\Leftarrow$) Suppose that $\exists \beta_1,\ldots,\beta_k \in \cD$, $\exists \seq b_1,\ldots,\seq b_k,\seq c \in \Mfin{\cD}^n$, $k=\dg{y}{M}$, such that $((\seq c,[\beta_1,\ldots,\beta_k]),\alpha) \in \Int{M}_{\seq x,y}$, $(\seq b_i,\beta_i) \in \Int{\sN}_{\seq x}$ (for $1\le i\le k$), and $(\mcup_{i=1}^{k} b_i) \mcup \seq c = \seq a$. Now we observe that by definition of interpretation
\begin{iteMize}{$\bullet$}
\item $\exists\alpha_1,\ldots,\alpha_h \in \cD$,
        $\exists (\seq c_0,d_0),\ldots,(\seq c_h ,d_h)\in \Mfin{\cD}^{n+1}$ such that
        $((\seq c_0,d_0),[\alpha_1,\ldots,\alpha_h] \at \alpha) \in \Int{N_0}_{\seq x,y}$,
        $((\seq c_j,d_j),\alpha_j) \in \Int{N_j}_{\seq x,y}$ (for $1\le j\le h$),
        $\mcup_{j=0}^{h} (\seq c_j,d_j) = (\seq c,[\beta_1,\ldots,\beta_k])$, and
        $\sharp d_j = \dg{y}{N_j}$ (for $0\le j\le h$).
\end{iteMize}
We focus for a moment on the fact that $(\seq b_i,\beta_i) \in \Int{\sN}_{\seq x}$ (for $1\le i\le k$) and $\mcup_{j=0}^{h} d_j = [\beta_1,\ldots,\beta_k]$. Thus there exists a way of partitioning the set $\{1,\ldots,k\}$ into $h+1$ subsets $X_0, \ldots, X_h$ in such a way that for all $j =0,\ldots,h$ each  $i \in X_j$ is such that $\beta_i \in d_j$. Then we let $\seq e_j = \mcup_{i \in X_j} \seq b_i$. 

Now let $\seq a_j = \seq e_j\mcup\seq c_j$ (for $j=0,\ldots,h$). Then by induction hypothesis (i) we have that $(\seq a_0,[\alpha_1,\ldots,\alpha_h] \at \alpha)\in \Int{N_0\subst{y}{\sN}}_{\seq x}$ and $(\seq a_j,\alpha_j) \in \Int{N_j\subst{y}{\sN}}_{\seq x}$ (for $1\le j\le h$). 
Finally observing that $\seq a = \mcup_{j=0}^{h} \seq a_j$, we conclude $(\seq a,\alpha)\in \Int{N_0\subst{y}{\sN}[N_1\subst{y}{\sN},\ldots,N_h\subst{y}{\sN}]}_{\seq x} = \Int{M\subst{y}{\sN}}_{\seq x}$.

(ii) We just consider the case $Q = \gt[M]$. By definition of interpretation we have $\Int{\gt[M\subst{y}{\sN}]}_{\seq x} = \{\seq a \st (\seq a,*) \in \Int{M\subst{y}{\sN}}_{\seq x} \}$. Hence applying the induction hypothesis (i) and the fact that $\gt[M]\subst{y}{\sN} = \gt[M\subst{y}{\sN}]$ we conclude that $\Int{\gt[M]\subst{y}{\sN}}_{\seq x} = \{\seq a \mcup (\mcup_{i=1}^{k} \seq b_i) \st \exists\beta_1,\ldots,\beta_k\in \cD,\ k=\dg{y}{M},\ (\seq b_i,\beta_i)\in\Int{\sN}_{\seq x}\ (1\leq i\leq k),\ ((\seq a,[\beta_1,\ldots,\beta_k])) \in \Int{\gt[M]}_{\seq x,y} \}$.
\end{proof}

Let $\ell$ be a function from $\nat$ to $\nat$.
Given a natural number $k\in\nat$ we write $\ell[n:=k]$ for the map $\ell'$ which coincides with $\ell$, 
except on $n$, where $\ell'$ takes the value $k$.
We let $(\ell + k)$ denote the function $\ell'$ defined by $\ell'(x) = \ell(x) + k$.
 
In the following proofs we write $\sA\msto_h^n \sB$ if $\sA$ reduces to $\sB$ in $n$ steps of head reduction, which is
introduced in Section~\ref{subsec:HeadRed} for the full $\dlam$-calculus with tests.

\begin{lem}(Lemma~\ref{lemma:exists-forall})
Let $\sV\in\bool\langle(\Set{\gt})^\lab\rangle$ be a sum of labelled closed tests. 
If $\sV\msto\varepsilon$ then there exists a map $\ell:\nat\to\nat$ such that $\sV^{(\ell+k)}$ is solvable for all $k\in\nat$.
\end{lem}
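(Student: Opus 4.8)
The plan is to reduce to a single test and then induct on the length of a head reduction to $\varepsilon$, using nonemptiness of the interpretation as the working criterion for solvability (equivalent for single terms by Theorem~\ref{thm:equivalence-solvable-normalizable}, and read on a sum as ``some summand is solvable''). Since every summand of $\sV$ is a closed test, Corollary~\ref{cor:closed-tests-go-to-eps-zero} together with confluence (Theorem~\ref{thm:CR+SN}) forces at least one summand $V$ with $V\msto\varepsilon$; as $\Int{\sV^{(\ell+k)}} = \bigcup_i \Int{V_i^{(\ell+k)}}$, it suffices to build $\ell$ with $V^{(\ell+k)}$ solvable for every $k$ and then extend $\ell$ arbitrarily to $\dom(\sV)$ (harmless by Remark~\ref{rem:eq-dom-entails-eq}(\ref{rem:eq-dom-entails-eq2})). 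By Lemma~\ref{lemma:RgoeiffRgohe}, $V\msto\varepsilon$ refines to $V\mstoh\varepsilon$, and I would induct on the number $n$ of head steps. For $n=0$ we have $V=\varepsilon$, and $\varepsilon^{(\ell+k)}=\lam x.x[]$ is a nonzero normal form, hence solvable for every $\ell,k$.

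For the inductive step take the first head step $V\toh\sV'$ of a path to $\varepsilon$; then $\sV'\mstoh\varepsilon$ in $n-1$ steps and, as above, some summand $V'$ of $\sV'$ satisfies $V'\mstoh\varepsilon$ in at most $n-1$ steps, so the induction hypothesis yields $\ell$ with $\Int{{V'}^{(\ell+k)}}\neq\emptyset$ for all $k$. It remains to transfer nonemptiness back across the contracted head redex. When the redex is a term $\beta$-redex inside a cork element, $\ell$-expansion commutes with the linear and $0$-substitutions, so $V^{(\ell+k)}\to_\beta{\sV'}^{(\ell+k)}$; invariance of the interpretation (Theorem~\ref{thm:Dmodel}) and $\Int{{\sV'}^{(\ell+k)}}\supseteq\Int{{V'}^{(\ell+k)}}\neq\emptyset$ then give that $V^{(\ell+k)}$ is solvable, and I take $\ell':=\ell$. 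For a $\gt$-step $\gt[\lam x.N]\paral R\to_\gt\gt[N\subst{x}{0}]\paral R$ and for the empty-bag $\gto$-step $\gto(W)[]\to_\gto\gto(W)$ (the non-empty case would send $V$ to $0$, excluded on a path to $\varepsilon$ by confluence), the expansion of the caught abstraction (resp. of the $\gto$) produces one spurious $\lambda$; raising by one the value of $\ell$ at the relevant index (the cork-element index $i$ for the $\gt$-step, the $\gto$-index $j$ for the $\gto$-step) lets an extra empty-bag application cancel it, so with that single modification and $\ell'=\ell$ elsewhere one again gets $V^{(\ell'+k)}\to_\beta {\sV'}^{(\ell+k)}$ for every $k$.

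The delicate case is the $\gamma$-step $\gt[\gto_j(W)]\paral R\to_\gamma W\paral R$, setting $V':=W\paral R$, where $\ell$-expansion provably does \emph{not} commute with reduction (Remark~\ref{rem:eq-dom-entails-eq}(\ref{rem:eq-dom-entails-eq3})): the expansion keeps $W^{\ell'}$ packaged as one abstraction rather than flattening its cork elements into the outer cork, so no reduction $V^{(\ell'+k)}\to {V'}^{(\ell+k)}$ is available. Here I would argue directly on interpretations. In $V^{\ell'}$ the cork element reads $(\lam\seq y.W^{\ell'})[]^{\sim\ell'(i)}$, where $\seq y$ is a list of $\ell'(j)$ variables not free in $W^{\ell'}$; a direct computation of $\Int{\cdot}$ (prepending empty multisets both for the dummy $\lambda$'s and for the empty bags) shows that, as soon as $\ell'(i)\le\ell'(j)$, this element has nonempty interpretation exactly when $\Int{W^{\ell'}}\neq\emptyset$. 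Consequently, choosing $\ell'=\ell$ on $\dom(V')=\dom(W)\cup\dom(R)$ and $\ell'(i)=\ell'(j)=0$ on the consumed indices $i,j$ (which lie outside $\dom(V')$), one obtains $\Int{V^{(\ell'+k)}}\neq\emptyset$ iff $\Int{{V'}^{(\ell+k)}}\neq\emptyset$; the latter holds for all $k$ by the induction hypothesis. The crucial point, and the reason the quantifier ``for all $k$'' propagates through the induction, is that the inequality $\ell'(i)\le\ell'(j)$ is invariant under the uniform shift $\ell'\mapsto\ell'+k$, so the equivalence of interpretations holds simultaneously for every $k$.

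The main obstacle is precisely this $\gamma$-case: recognising that the $\gto$ and its hosting cork element are the indices consumed by the reduction, hence free to be retuned, and checking that the single constraint $\ell'(i)\le\ell'(j)$ both yields the interpretation equivalence and survives the shift by $k$. Reducing at the outset to a single summand is what makes this retuning unproblematic, since it removes any need to reconcile the values of $\ell'$ at $i,j$ across sibling summands that, after $\beta$-branching, may share indices.
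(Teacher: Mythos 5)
Your proof is correct and follows essentially the same route as the paper's: induction on the length of a head reduction to $\varepsilon$ (via Lemma~\ref{lemma:RgoeiffRgohe}), a case analysis on the contracted head redex, and in each case a retuning of $\ell$ at the indices consumed or exposed by that step (incrementing at $i$ or $j$ in the $\gt$ and empty-bag $\gto$ cases, and zeroing both $i$ and $j$ in the $\gamma$ case so that the dummy abstractions and the empty applications cancel uniformly under the shift by $k$). The only differences are presentational: you isolate a single summand at the outset where the paper carries the sum through the induction, and in the $\gamma$-case you phrase the cancellation semantically via $\Int{\cdot}$ where the paper exhibits the corresponding syntactic reduction --- the underlying computation is identical.
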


\begin{proof} 
In the proof we use the characterization of solvable given in Theorem~\ref{thm:equivalence-solvable-normalizable}(\ref{thm:equivalence-solvable-normalizable2}).
We proceed by induction on the length $n$ of a head reduction $\sV\msto_h \varepsilon$ (by Lemma~\ref{lemma:RgoeiffRgohe}).
For the sake of simplicity we assume that in the sum $\sV$ we first reduce a component that head reduces to $\varepsilon$ 
(only when $\sV = \varepsilon + \sW$ we start reducing within $\sW$).

Case $n=0$. Then $\sV = \varepsilon$ and $\sV^\ell = \lam x.x[]$ independently from $\ell$.

Case $n > 0$. We have $\sV\to_h \sV'\msto^{n-1}_h \varepsilon$. The proof is divided into sub-cases depending on the redex that is contracted. 

Subcase $\sV = \gt[(\gto_j(\varepsilon))_i, (\seq L)_{\seq r}]$ and $\sV' = \gt[(\seq L)_{\seq r}]$. 
By induction hypothesis there is $\ell'$ such that $\sV'^{(\ell' +k)} = \lam z.z[\seq L^{(\ell' +k)}[]^{\sim\ell'(\seq r) + k}]$ is solvable for all $k$. 
From this it follows that $\seq L^{(\ell' +k)}[]^{\sim\ell'(\seq r) + k}$ are solvable for all $k$.
It is enough to take $\ell = \ell'[i:= 0][j:=0]$ to have $\sV^{(\ell + k)} = \lam z.z[(\lam x_1\ldots x_{k}.\lam y.y[]) []^{\sim k},\seq L^{(\ell +k)}[]^{\sim\ell'(\seq r) + k}]
\msto \lam z.z[\lam y.y[],\seq L^{(\ell +k)}[]^{\sim\ell(\seq r) + k}]$ that is solvable because $\seq L^{(\ell +k)} = \seq L^{(\ell' +k)}$
by Remark~\ref{rem:eq-dom-entails-eq}(\ref{rem:eq-dom-entails-eq2}) and $\ell(\seq r) = \ell'(\seq r)$.

Subcase $\sV = \gt[(\gto_j(V))_{i}, (\seq L)_{\seq r}]$ where $V = \gt[(\seq M)_{\seq m}]\neq\varepsilon$,
and $\sV' = \gt[(\seq M)_{\seq m}, (\seq L)_{\seq r}]$. 
By induction hypothesis there is $\ell'$ such that $\sV'^{(\ell' +k)} = 
\lam z.z[\seq M^{(\ell' +k)}[]^{\sim\ell'(\seq m)+k},\seq L^{(\ell' +k)}[]^{\sim\ell'(\seq r)+k}]$ is solvable for all $k$. 
From this it follows that $\seq M^{(\ell' +k)}[]^{\sim\ell'(\seq m)+k}$ and $\seq L^{(\ell' +k)}[]^{\sim\ell'(\seq r)+k}$ are solvable for all $k$.
It is enough to take $\ell = \ell'[i:= 0][j:=0]$ to have 
$$
\begin{array}{rl}
\sV^{(\ell + k)} = &\lam z.z[(\lam x_1\ldots x_{k}.\lam y.y[\seq M^{(\ell +k)}[]^{\sim\ell(\seq m) + k}]) []^{\sim k},\seq L^{(\ell +k)}[]^{\sim\ell(\seq r) +k}]\\
\msto &\lam z.z[\lam y.y[\seq M^{(\ell +k)}[]^{\sim\ell(\seq m) + k}],\seq L^{(\ell +k)}[]^{\sim\ell(\seq r) +k}]\\
\end{array}
$$ 
that is solvable because $\ell'(\seq m) = \ell(\seq m)$, $\ell'(\seq r) = \ell(\seq r)$ and, by Remark~\ref{rem:eq-dom-entails-eq}(\ref{rem:eq-dom-entails-eq2}), 
$\seq M^{(\ell +k)}=\seq M^{(\ell' +k)}$ and $\seq L^{(\ell +k)} = \seq L^{(\ell' +k)}$.

Subcase $\sV = \gt [(\lam x. M)_i, (\seq L)_{\seq r}]$ and $\sV' = \gt[(M\subst{x}{0})_i,(\seq L)_{\seq r}]$.
Since $\sV'$ converges, we have $x\notin\FV(M)$ and $M\subst{x}{0} = M$.
By induction hypothesis there is a map $\ell'$ such that, for all $k$,  $\sV'^{(\ell'+k)} = \lam z.z[ M^{(\ell'+k)}[]^{\sim\ell'(i)+k}, \seq L^{(\ell'+k)}[]^{\sim \ell'(\seq r) + k}]$ is solvable.
It is enough to take $\ell = \ell'[i:=\ell'(i) + 1]$ to have $\sV^{(\ell+k)} = \lam z.z[(\lam x. M^{(\ell+k)})[]^{\sim\ell(i)+k+1}, \seq L^{(\ell+k)}[]^{\sim\ell(\seq r)+k}]\to
\lam z.z[M^{(\ell+k)}[]^{\sim\ell(i)+k}, \seq L []^{\sim\ell(\seq r)+k}]$ that is solvable for all $k$ by Remark~\ref{rem:eq-dom-entails-eq}(\ref{rem:eq-dom-entails-eq2})
and $\ell(\seq r) = \ell'(\seq r)$.

Subcase $\sV = \gt[(\gto_j(V)[]\seq P)_i,(\seq L)_{\seq r}]$ and $\sV' = \gt[(\gto_j(V)\seq P)_{i},(\seq L)_{\seq r}]$.
By induction hypothesis there exists $\ell'$ such that,  for all $k$, the term 
$$ 
	\sV'^{(\ell'+k)} = \lam z.z[(\lam x_1\ldots x_{\ell'(j) + k}. V^{(\ell'+k)})\seq P^{(\ell'+k)}[]^{\sim\ell'(i)+k},\seq L^{(\ell'+k)}[]^{\sim\ell'(\seq r)+k}]\textrm{ is solvable.}
$$
For $\ell = \ell'[j:= \ell'(j)+1]$ we have 
$$
	\begin{array}{rl}
	\sV^{(\ell+k)} = &\lam z.z[(\lam x_1\ldots x_{\ell'(j) + k + 1}.V^{(\ell+k)})[]\seq P^{(\ell+k)}[]^{\sim\ell(i)+k},\seq L^{(\ell+k)}[]^{\ell(\seq r)+k}]\\
	\to_\beta&\lam z.z[(\lam x_2\ldots x_{\ell'(j) + k + 1}. V^{(\ell+k)})\seq P^{(\ell+k)}[]^{\sim\ell(i)+k},\seq L^{(\ell+k)}[]^{\ell(\seq r)+k}]\\
	=_{\alpha}&\lam z.z[(\lam x_1\ldots x_{\ell'(j) + k}. V^{(\ell+k)})\seq P^{(\ell+k)}[]^{\sim\ell(i)+k},\seq L^{(\ell+k)}[]^{\ell(\seq r)+k}]\\	
	\end{array}
$$
We conclude by Remark~\ref{rem:eq-dom-entails-eq}(\ref{rem:eq-dom-entails-eq2}) and since $\ell(\seq r) = \ell'(\seq r)$.

Subcase $\sV = \gt[((\lam x.M)Q\seq P)_i,(\seq L)_{\seq r}]$ and $\sV' = \gt[(M\lsubst{x}{Q}\subst{x}{0}\seq P)_i,(\seq L)_{\seq r}]$.
By IH, there is $\ell$ such that $\sV'^{(\ell+k)}
= \lam z.z[M^{(\ell+k)}\lsubst{x}{Q^{(\ell+k)}}\subst{x}{0}\seq P^{(\ell+k)}[]^{\sim\ell(i)+k},\seq L^{(\ell+k)}[]^{\sim\ell(\seq r)+k}]$ is solvable for all $k$. 
We conclude because 
$$
	\sV^{(\ell+k)} = \lam z.z[(\lam x.M^{(\ell+k)})Q^{(\ell+k)}\seq P^{(\ell+k)}[]^{\sim\ell(i)+k},\seq L^{(\ell+k)}[]^{\sim\ell(\seq r)+k}] \to_\beta \sV'^{(\ell+k)}.
$$

Subcase $\sV = V_1+\sW$ and $\sV' = \sV_1+\sW$ with $V_1\to_h \sV_1\msto_h^m\varepsilon$ for some $m<n$. 
By induction hypothesis there is $\ell'$ such that $\sV_1^{(\ell' + k)}$ is solvable for all $k$.
From the previous case analysis, there exists an $\ell$ such that,  for all $k$, $V_1^{(\ell + k)}$ is solvable if and only if $\sV_1^{(\ell' + k)}$ is solvable.
Then $\sV^{(\ell + k)}$ is solvable for all $k$, regardless the behaviour of $\sW^{(\ell+k)}$.

Subcase $\sV = \varepsilon+\sW$ and $\sV' = \varepsilon+\sW'$ with $\sW\to_h \sW'$. Trivial.
\end{proof}

\begin{lem}(Lemma~\ref{lemma:forall-exists})
Let $\sV\in\bool\langle(\Set{\gt})^\lab\rangle$ be a sum of labelled closed tests. 
If $\sV\msto 0$ then there exists a natural number $k$ such that $\sV^{(\ell+k)}\msto 0$ for all $\ell:\nat\to\nat$.
\end{lem}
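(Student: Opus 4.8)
The plan is to follow the pattern of the proof of Lemma~\ref{lemma:exists-forall} and argue by induction on the length $n$ of a head reduction witnessing $\sV\msto 0$. Since $\sV$ is a sum of closed tests of the \dzlam-calculus with tests it is strongly normalizing (Theorem~\ref{thm:CR+SN}), and by Lemma~\ref{lemma:RgoeiffRgohe} the hypothesis yields a head reduction $\sV\mstoh 0$ of finite length $n$. I would use two elementary facts throughout. First, \emph{monotonicity in $k$}: if $\sV^{(\ell+k_0)}\msto 0$ for all $\ell$, then $\sV^{(\ell+k)}\msto 0$ for all $\ell$ and all $k\ge k_0$, since $\ell+k=(\ell+(k-k_0))+k_0$. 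Second, \emph{distributivity over sums}: because the test-free \dzlam-calculus is strongly normalizing and confluent, a sum reduces to $0$ iff each summand does, and $(\cdot)^{(\ell+k)}$ commutes with $+$. The base case $\sV=0$ is immediate with $k=0$. In the inductive step $\sV\toh\sV'\mstoh^{\,n-1}0$ I would write $\sV=V_1+\sW$, where the step reduces a summand $V_1\toh\sV_1$ and $\sV'=\sV_1+\sW$; the induction hypothesis furnishes a $k'$ with $\sV'^{(\ell+k')}\msto 0$ for all $\ell$, whence $\sV_1^{(\ell+k')}\msto 0$ and $\sW^{(\ell+k')}\msto 0$ for all $\ell$ by distributivity. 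By monotonicity it then suffices to produce some $k\ge k'$ with $V_1^{(\ell+k)}\msto 0$ for all $\ell$.

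The technical core is one observation about the shape of test-expansions. Since $(\gt[(L_1)_{i_1},\dots,(L_q)_{i_q}])^{\ell}=\lam z.z[L_1^{\ell}[]^{\sim\ell(i_1)},\dots,L_q^{\ell}[]^{\sim\ell(i_q)}]$ has a free head variable $z$, no head redex can ever be produced at the root; hence its normal form is $\lam z.z[\,\cdot\,]$ with the bag entries normalized, and this is $0$ if and only if \emph{some} entry $L_t^{\ell}[]^{\sim\ell(i_t)}$ reduces to $0$. Thus everything reduces to tracking which bag entries become $0$, and this sidesteps the failure of step-by-step simulation noted in Remark~\ref{rem:eq-dom-entails-eq}(3). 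I would then split on the head redex contracted in $V_1$, obtaining exactly the cases of the proof of Lemma~\ref{lemma:exists-forall}. For the three \emph{annihilating} redexes --- a $\beta$-redex with $\dg{x}{M}\neq\card Q$, a $\gto$-redex on a non-empty bag, and a $\tau$-redex $\gt[\lam x.M]$ with $x\in\FV(M)$ --- the relevant head bag entry of $V_1^{(\ell+k)}$ reduces to $0$ outright, for every $\ell$, as soon as $k\ge 1$: the expansion preserves free variables, degrees and bag cardinalities, so feeding the (non-empty) expanded bag to a dummy abstraction of degree $0$ yields $0$ by Remark~\ref{rem:beta-contr}. Here $\sV_1=0$ and the induction hypothesis is vacuous.

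For the remaining redexes --- a matching $\beta$-redex ($\dg{x}{M}=\card Q$), a $\gamma$-redex, a $\gto$-redex on the empty bag, and a $\tau$-redex with $x\notin\FV(M)$ --- I would carry out inside $V_1^{(\ell+k)}$ the reductions that simulate $V_1\toh\sV_1$ (using that $(\cdot)^{\ell}$ commutes with both linear and ordinary substitution, a routine induction), reaching a term whose surviving bag entries carry exactly the expansions and empty-application counts of the entries of $\sV_1^{(\ell'')}$ for a function $\ell''$ that agrees with $\ell+k$ on all labels occurring inside the entries and differs from $\ell+k$ by at most $1$ at a single label (the $\beta$ and $\gamma$ cases need no change, whereas $\tau$ and the empty-bag $\gto$ each consume one empty application). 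Taking $k=k'+1$ forces $\ell''\ge k'$ on $\dom(\sV_1)$, so by Remark~\ref{rem:eq-dom-entails-eq}(2) one has $\sV_1^{(\ell'')}=\sV_1^{(\ell'+k')}$ for the function $\ell'$ defined by $\ell'(x)=\ell''(x)-k'$ on $\dom(\sV_1)$ and $0$ elsewhere; the induction hypothesis then gives $\sV_1^{(\ell'+k')}\msto 0$. By the bag-entry principle this forces $V_1^{(\ell+k)}\msto 0$ for every $\ell$, and recombining with $\sW$ through monotonicity and distributivity closes the induction with a finite $k$.

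The main difficulty, compared with Lemma~\ref{lemma:exists-forall}, is the swapped quantifier order: there one is free to \emph{choose} $\ell$, while here $\ell$ is universally quantified and only the slack $k$ may be chosen. The step that makes this work is to construct the witness $\ell'$ for the induction hypothesis \emph{directly} from the given $\ell$, matching empty-application counts coordinatewise rather than by a uniform shift, and to absorb the unavoidable off-by-one in those counts into the single increment $k=k'+1$ together with the monotonicity fact. Verifying this coordinatewise bookkeeping in each case --- in particular that the decrement never pushes a count below $k'$ once $k=k'+1$ --- is the part requiring care.
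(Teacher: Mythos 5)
Your proposal is correct and follows essentially the same route as the paper's proof: induction on the length of a head reduction to $0$, case analysis on the contracted head redex, absorbing the off-by-one in the empty-application counts by passing to $k=k'+1$, and exploiting the universal quantification over $\ell$ in the induction hypothesis by instantiating it at a shifted label function such as $(\ell+1)[i:=\ell(i)]$. The monotonicity-in-$k$, sum-distributivity and ``bag-entry'' facts you isolate as explicit lemmas are exactly what the paper uses implicitly in its case-by-case computations.
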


\begin{proof}We proceed by induction on the length $n$ of a head reduction $\sV\msto_h 0$ (by Lemma~\ref{lemma:RgoeiffRgohe}).

Case $n=0$. Then $\sV = 0$ and $0^\ell = 0$ independently from $\ell$.

Case $n > 0$. We have $\sV\to_h \sV'\msto^{n-1}_h 0$. For the sake of simplicity we assume that in a test 
we always reduce a component head reducing to 0.
The proof is divided into subcases.

Subcase $\sV = \gt[(\gto_j(V))_i,(\seq L)_{\seq r}]$ where $V=\gt[(\seq M)_{\seq m}]$ and 
$\sV' = \gt[(\seq M)_{\seq m},(\seq L)_{\seq r}]$. 
By induction hypothesis there is $k$ such that $\sV'^{(\ell +k)} = 
\lam z.z[\seq M^{(\ell +k)}[]^{\sim\ell(\seq m) + k},\seq L^{(\ell +k)}[]^{\sim\ell(\seq r) + k}]\msto_\beta 0$ for all $\ell$. 
Then either one of the $M^{(\ell +k)}[]^{\sim\ell(\seq m) + k}$ or one of the $\seq L^{(\ell +k)}[]^{\sim\ell(\seq r) + k}$ reduces to 0.
From this it follows that, for all $\ell$, we have
$$
	\sV^{(\ell + k)} = \lam z.z[(\lam x_1\ldots x_{\ell(j) +k}.\lam y.y[\seq M^{(\ell +k)}[]^{\sim\ell(\seq m)+k}]) []^{\sim \ell(i) + k},\seq L^{(\ell +k)}[]^{\sim\ell(\seq r) +k}]\msto_\beta 0.
$$

Subcase $\sV = \gt[(\lam x. M)_i,(\seq L)_{\seq r}]$ and $\sV' = \gt[(M\subst{x}{0})_i,(\seq L)_{\seq r}]$.
By induction hypothesis there is $k$ such that $\sV'^{(\ell+k)} = \lam z.z[ (M\subst{x}{0})^{(\ell+k)}[]^{\sim\ell(i)+k}, \seq L^{(\ell+k)}[]^{\sim\ell(\seq r)+k}]\msto_\beta 0$ for all $\ell$.
For $k' = k+1$ we have 
$$
\begin{array}{rl}
\sV^{(\ell+k')} = &\lam z.z[(\lam x. M^{(\ell + k')})[]^{\sim\ell(i)+k'}, \seq L^{(\ell+k')}[]^{\sim \ell(\seq r) + k'}]\\
\to_\gb &\lam z.z[(M\subst{x}{0})^{(\ell+k')}[]^{\sim\ell(i)+k}, \seq L^{(\ell+k')}[]^{\sim \ell(\seq r) + k'}].\\
\end{array}
$$
If $x\in\FV(M)$ then $\sV^{(\ell+k')}\ass 0$ and we are done. 
Otherwise, when $x\notin\FV(M)$, we have 
$\sV^{(\ell+k')}=\lam z.z[M^{(\ell'+k)}[]^{\sim\ell'(i)+k}, \seq L^{(\ell'+k)}[]^{\sim\ell'(\seq r)+k}]$ where $\ell'=(\ell + 1)[i:=\ell(i)]$,
therefore $\sV^{(\ell+k')}\msto_\beta 0$ for all $\ell$.

Subcase $\sV = \gt[(\gto_j(V)P\seq P)_i,(\seq L)_{\seq r}]$ with $P\neq []$ and $\sV' = 0$.
For every $k>1$, we have 
$$
\begin{array}{rl}
\sV^{(\ell+k)} = &\lam z.z[(\lam x_1\dots x_{\ell(j)+k}.V^{(\ell+k)})P^{(\ell+k)}\seq P^{(\ell+k)}[]^{\sim \ell(i)+k},\seq L^{(\ell+k)}[]^{\sim\ell(\seq r)+k}]\\
\msto_\beta&\lam z.z[(\lam x_2\dots x_{\ell(j)+k}.V^{(\ell+k)}\lsubst{x}{P^{(\ell+k)}}\subst{x}{0})\seq P^{(\ell+k)}[]^{\sim \ell(i)+k},\seq L^{(\ell+k)}[]^{\sim\ell(\seq r)+k}].\\
\end{array}
$$ 
This is equal to 0 since, by definition of $(\ell+k)$-expansion, $x_1\notin\FV(V^{(\ell+k)})$.

Subcase $\sV = \gt[(\gto_j(V)[]\seq P)_i,(\seq L)_{\seq r}]$ and $\sV' = \gt[(\gto_j(V)\seq P)_i,(\seq L)_{\seq r}]$. 
By induction hypothesis there exists $k$ such that,  for all $\ell$, we have
$$ 
	\sV'^{(\ell+k)} = \lam z.z[(\lam x_1\ldots x_{\ell(j) + k}. V^{(\ell +k)})\seq P^{(\ell+k)}[]^{\sim\ell(i)+k},\seq L^{(\ell+k)}]\msto_\beta 0.
$$
For $k' = k + 1$ we have 
$$
	\begin{array}{rl}
	\sV^{(\ell + k')} = &\lam z.z[(\lam x_1\ldots x_{\ell(j) + k'}.V^{(\ell+k')})[]\seq P^{(\ell+k')}[]^{\sim\ell(i)+k'},\seq L^{(\ell+k')}[]^{\sim\ell(\seq r)+k'}]\\
	\msto_\beta&\lam z.z[(\lam x_2\ldots x_{\ell(j) + k'}. V^{(\ell+k')})\seq P^{(\ell+k')}[]^{\sim\ell(i)+k'},\seq L^{(\ell+k')}[]^{\sim\ell(\seq r)+k'}]\\
	=&\lam z.z[(\lam x_1\ldots x_{\ell'(j) + k}. V^{(\ell'+k)})\seq P^{(\ell'+k)}[]^{\sim\ell'(i)+k},\seq L^{(\ell'+k)}[]^{\sim\ell'(\seq r)+k}]\\
	\end{array}
$$
where $\ell' = (\ell+1)[j:=\ell(j)]$.

Subcase $\sV = \gt[((\lam x.M)P\seq P')_i,(\seq L)_{\seq r}]$ and $\sV' = \gt[(M\lsubst{x}{P}\subst{x}{0}\seq P')_i,(\seq L)_{\seq r}]$.
Now, if $\deg_{x}(M) \neq \card P$ then $\sV' = 0$ and we are done. Otherwise, by induction hypothesis, there is $k$ such that, for all $\ell$, we have
$$
\sV'^{(\ell+k)} = \lam z.z[M^{(\ell+k)}\lsubst{x}{P^{(\ell+k)}}\subst{x}{0}\seq P'^{(\ell+k)}[]^{\sim\ell(i)+k},\seq L^{(\ell+k)}[]^{\sim\ell(\seq r)+k}]\msto_\beta 0.
$$
This case follows since $\sV^{(\ell+k)} = \lam z.z[(\lam x.M^{(\ell+k)})P^{(\ell+k)}\seq P'^{(\ell+k)}[]^{\sim\ell(i)+k}, \seq L^{(\ell+k)}[]^{\sim\ell(\seq r)+k}]\to \sV'^{(\ell+k)}$.

Subcase $\sV = V_1+\sW$ and $\sV' = \sV_1+\sW$ with $V_1\to_h \sV'_1$. 
By induction hypothesis there is $k'$ such that $\sV'^{(\ell + k')} = \sV_1^{(\ell + k')}+\sW^{(\ell + k')}\msto_\beta 0$ for all $\ell$.
From the previous case analysis we get a $k\ge k'$ such that $V_1^{\ell+k}\msto_\beta 0$.
We conclude since $\sV^{\ell+k} = V_1^{\ell+k}+\sW^{\ell+k} = V_1^{\ell+k}+\sW^{(\ell+k-k')+k'}\msto_\beta 0$.
\end{proof}

\begin{lem} (Lemma~\ref{lemma:TEcommutesubsts})
Let $A\in\FSet{e}$, $N\in\FSet{\gto}$ and $\sN\in\Fsums{\gto}$. Then:
\begin{enumerate}[\em(i)]
\item
	$\TE{(A\lsubst{x}{N})} = \TE{A}\lsubst{x}{\TE{N}}$,
\item
	$\TE{(A\subst{x}{\sN})} = \bigcup_{P\in\Mfin{\TE{\sN}}}\TE{A}\lsubst{x}{P}\subst{x}{0}$.
\end{enumerate}
\end{lem}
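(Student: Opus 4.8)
The plan is to prove (i) and (ii) \emph{simultaneously} by structural induction on the expression $A$, ranging over terms, bags and tests, and to extend everything to sums through the defining clause $\TE{(\sum_i A_i)} = \bigcup_i \TE{A_i}$ of Figure~\ref{fig:TE} together with the (bi)linearity of the two substitutions; the hypothesis $x\notin\FV(N)\cup\FV(\sN)$ is what guarantees capture-freeness throughout. In each clause I would unfold the relevant clause of Taylor expansion (Figure~\ref{fig:TE}) on the left, unfold the matching clause of linear substitution (Figure~\ref{fig:linsubst}) or of regular substitution on the right, and close the case by the induction hypothesis. For statement (i), the variable, abstraction, application, throw $\gto(V)$, test and plain-multiset cases are all of this routine, purely definitional kind.

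The genuinely delicate case of (i) is the bag carrying a promotion, $A = [L_1,\dots,L_k,\sM^\bang]$, where the new rule of Figure~\ref{fig:LinSubstBang} must be used: $A\lsubst{x}{N}$ produces $k$ summands touching a linear $L_i$, plus the summand $[L_1,\dots,L_k,\sM\lsubst{x}{N},\sM^\bang]$ that substitutes into one fresh copy of $\sM$ while keeping $\sM^\bang$ available. On the other side, $\TE{A} = \{[L_1',\dots,L_k']\mcup P \st L_i'\in\TE{L_i},\, P\in\Mfin{\TE{\sM}}\}$ and $\lsubst{x}{\TE{N}}$ hits exactly one of the $k+\card{P}$ slots. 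I would check that the two descriptions coincide set-wise: the $k$ ``linear'' summands match using the induction hypothesis $\TE{(L_i\lsubst{x}{N})} = \TE{L_i}\lsubst{x}{\TE{N}}$, while the promotion summand matches via the combinatorial identity that a multiset over $\TE{\sM}$ carrying one distinguished, substituted element is the same datum as $\{R\lsubst{x}{N'}\}\mcup P$ with $R\in\TE{\sM}$, $N'\in\TE{N}$ and $P\in\Mfin{\TE{\sM}}$ arbitrary (applying the induction hypothesis to the structurally smaller sum $\sM$, handled through the sum clause). This bijection is exactly where the retention of $\sM^\bang$ in the substitution rule is essential.

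For statement (ii) I would again induct on $A$, invoking (i) on the linear components. The conceptual content is that, after Taylor expansion, the uniform regular substitution $\subst{x}{\sN}$ becomes the act of assigning to each copy of $x$ produced by the expansion an independently chosen element of $\TE{\sN}$, and then discarding the unused copies; this is precisely encoded by $\bigcup_{P\in\Mfin{\TE{\sN}}}\TE{A}\lsubst{x}{P}\subst{x}{0}$. The key cases are once more the promotion bag and the application, and the needed rearrangement is the Taylor-level analogue of Remark~\ref{rem:beta-contr}, namely that $\lsubst{x}{L_1}\cdots\lsubst{x}{L_{\card{P}}}\subst{x}{0}$ realizes all the ways of distributing the elements of $P$ among the occurrences of $x$.

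I expect the main obstacle to be the bookkeeping in the promotion-bag case of (i): one must keep the ``box becomes an arbitrary finite multiset under $\TE{(\cdot)}$'' phenomenon synchronized with the ``linear substitution touches exactly one occurrence'' phenomenon, and verify that the two multiset constructions genuinely enumerate the same resource-bags, with no double counting and no omission. Once this case is settled, part (ii) is a variation on the same theme together with the distribution identity of Remark~\ref{rem:beta-contr}, and all remaining cases reduce to routine unfoldings of the definitions.
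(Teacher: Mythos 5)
Your proposal is correct and follows essentially the same route as the paper's proof: structural induction on $A$ (with the sum clause handling $\Fsums{e}$), the promotion-carrying bag singled out as the only non-routine case of (i), the identification of ``one distinguished substituted element of a multiset over $\TE{\sM}$'' with $\TE{(\sM\lsubst{x}{N})}$ plus an arbitrary residual multiset, and for (ii) the multiset-level distribution identity in the spirit of Remark~\ref{rem:beta-contr}. The only cosmetic difference is that the paper spotlights the application case $M[\seq L,\sN^\bang]$ rather than the bare bag, which changes nothing.
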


\proof\hfill
\begin{enumerate}[(i)]
\item  By structural induction on $A$.
We only treat the case $A = M[\seq L,\sN^\bang]$. Observe that
$$ \TE{A} = \cup_{P' \in \Mfin{\TE{\sN}}} \TE{M}([\TE{\seq {L}}] \mcup P') $$
By definition of linear substitution we have 
$$
\hspace{-2pt}\begin{array}{ll}
& \ \ \ \ \TE{(A\lsubst{x}{N})} \\
& = \TE{(M\lsubst{x}{N}[\seq L,\sN^\bang])} \cup \\
& \ \ \ \ \cup_{i=1}^{k} \TE{(M[L_1,\pts,L_i\lsubst{x}{N},\pts,L_k,\sN^\bang])} \cup \\
& \ \ \ \ \cup\ \TE{(M[\seq L,\sN\lsubst{x}{N},\sN^\bang])} \\
& = \cup_{P\in \Mfin{\TE{\sN}}} \TE{(M\lsubst{x}{N})}([\TE{\seq {L}}]\mcup P) \cup \\
& \ \ \ \ \cup_{P'\in \Mfin{\TE{\sN}}} \cup_{i=1}^{k} \TE{M}([\TE{L_1},\pts,\TE{(L_i\lsubst{x}{N})},\pts,\TE{L_k}] \mcup P') \cup \\
& \ \ \ \ \cup_{P''\in \Mfin{\TE{\sN}}} \TE{M}([\TE{\seq {L}},\TE{(\sN\lsubst{x}{N})}] \mcup P'') \\
& = \cup_{P\in \Mfin{\TE{\sN}}} \TE{M}\lsubst{x}{\TE{N}}([\TE{\seq {L}}] \mcup P) \cup \\
& \ \ \ \ \cup_{P'\in \Mfin{\TE{\sN}}} \cup_{i=1}^{k} \TE{M}([\TE{L_1},\pts,\TE{L_i}\lsubst{x}{\TE{N}},\pts,\TE{L_k}] \mcup P') \cup \\
& \ \ \ \ \cup_{P''\in \Mfin{\TE{\sN}}} \TE{M}([\TE{\seq {L}},\TE{\sN}\lsubst{x}{\TE{N}}]\mcup P'') \\
& \ \ \ \ \textrm{by induction hypothesis,} \\
& = \cup_{P\in \Mfin{\TE{\sN}}} (\TE{M}([\TE{\seq {L}}] \mcup P))\lsubst{x}{\TE{N}} \\
& = \TE{A}\lsubst{x}{\TE{N}} \\
\end{array}
$$

\item By structural induction on $A$.
Also here we only treat one case, namely $A = M[\seq L,\sM^\bang]$ (where $\card{[\seq {L}]} = k$). In such a case we have
$$
\hspace{-2pt}
\begin{array}{ll}
& \ \ \ \ \cup_{P\in\Mfin{\TE{\sN}}} \TE{A}\lsubst{x}{P}\subst{x}{0} \\
& = \cup_{P' \in \Mfin{\TE{\sM}}} \cup_{P\in\Mfin{\TE{\sN}}} (\TE{M}([\seq {\TE{L}}] \mcup P'))\lsubst{x}{P}\subst{x}{0} \\
& = \cup_{P' \in \Mfin{\TE{\sM}}}\cup_{P_0,P_1,P_2 \in\Mfin{\TE{\sN}}} \\
& \ \ \ \ \TE{M}\lsubst{x}{P_0}\subst{x}{0}([\TE{\seq {L}}]\lsubst{x}{P_1}\subst{x}{0} \mcup P'\lsubst{x}{P_{2}}\subst{x}{0}) \\
& = \cup_{P' \in \Mfin{\TE{(\sM\subst{x}{\sN})}}} \TE{(M\subst{x}{\sN})}(\TE{([\seq {L}]\subst{x}{\sN})} \mcup P') \\
& \ \ \ \ \textrm{by induction hypothesis, using the fact that 	} \\
& \ \ \ \	\cup_{P' \in \Mfin{\TE{\sM}}}\cup_{P_2 \in\Mfin{\TE{\sN}}}P'\lsubst{x}{P_{2}}\subst{x}{0}\textrm{ is equal to }\\
& \ \ \ \	\Mfin{\cup_{P\in\Mfin{\TE{\sN}}}\TE{\sM}\lsubst{x}{P}\subst{x}{0}}\\
& = \TE{(M\subst{x}{\sN}[\seq {L}\subst{x}{\sN},\sM\subst{x}{\sN}^\bang])} \\
& = \TE{(A\subst{x}{\sN})}\rlap{\hbox to279 pt{\hfill\qEd}}
\end{array}
$$
\end{enumerate}

\end{document}